  \renewcommand{\chapnamefont}{\large\normalfont}
  \renewcommand{\printchaptername}{\large\chapnamefont \@chapapp}
\renewcommand{\partnumberline}[1]{}
\theoremstyle{plain}
  \newtheorem{theorem}{Theorem}[chapter]
  \newtheorem{proposition}[theorem]{Proposition}
  \newtheorem{lemma}[theorem]{Lemma}
  \newtheorem{corollary}[theorem]{Corollary}
\theoremstyle{definition}
  \newtheorem{definition}[theorem]{Definition}
  \newtheorem{remark}[theorem]{Remark}
\theoremstyle{remark}
\newcommand{\N}{\mathbb{N}} \newcommand{\Z}{\mathbb{Z}} 
\newcommand{\R}{\mathbb{R}} \newcommand{\C}{\mathbb{C}} \renewcommand{\H}{\mathbb{H}}
\newcommand{\F}{\mathbb{F}} \newcommand{\E}{\mathbb{E}} \renewcommand{\P}{\mathbb{P}} 
 \newcommand{\cM}{\mathcal M}
 \newcommand{\cP}{\mathcal P} 
\newcommand{\cH}{\mathcal H}  \newcommand{\cZ}{\mathcal Z}
\newcommand{\cI}{\mathcal I}  
\newcommand{\gO}{\textup{O}}     \newcommand{\gU}{\textup{U}}     \newcommand{\gUSp}{\textup{USp}}
\newcommand{\gSO}{\textup{SO}}   
\newcommand{\gGL}{\textup{GL}}      \newcommand{\gSp}{\textup{Sp}}
\newcommand{\p}{\partial} % Partial
\newcommand{\one}{\mathbbm{1}} % Identity
\renewcommand{\imath}{\mathrm{i}} % i
\newcommand{\transpose}{\mathsf{T}} % transpose
\renewcommand{\phi}{\varphi} % I prefer varphi
\renewcommand{\epsilon}{\varepsilon} % I prefer varepsilon
\renewcommand{\Re}{\operatorname{Re}} % Real part
\renewcommand{\Im}{\operatorname{Im}} % Imaginary part
\newcommand{\Ai}{\operatorname{Ai}} % Fuss--Catalan numbers
\newcommand{\qi}{\mathbbm{i}} % quaterionic units
\newcommand{\qj}{\mathbbm{j}}
\newcommand{\qk}{\mathbbm{k}}
\DeclareMathOperator{\tr}{Tr} % Trace
\DeclareMathOperator*{\pf}{Pf} % Pfaffian
\DeclareMathOperator{\diag}{diag} % Diagonal
\DeclareMathOperator{\sign}{sign} % Sign
\DeclareMathOperator*{\Span}{span} % Span
\DeclareMathOperator{\erfc}{erfc} % error function
\DeclareMathOperator{\erf}{erf} % error function
\DeclareMathOperator*{\per}{per} % permanent
\DeclareMathOperator{\vol}{vol} % volume
\DeclareMathOperator{\var}{var} % variance
\DeclareMathOperator{\cov}{cov} % covariance
\let\vec\relax
\DeclareMathOperator{\vec}{vec}
\newcommand{\deq}{\stackrel{d}{=}}
\DeclarePairedDelimiter{\abs}{\lvert}{\rvert} %Absolute value
\DeclarePairedDelimiter{\norm}{\lVert}{\rVert} %Norm
\DeclarePairedDelimiter{\inner}{\langle}{\rangle} %Inner product
\DeclarePairedDelimiter{\curly}{[}{]}
\newcommand{\average}{\E\curly}
\DeclarePairedDelimiter{\naverage}{\langle}{\rangle} % Average
\newcommand{\nn}{\nonumber}
\newcommand{\jpdf}{\textup{jpdf}}
\newcommand{\gauss}{\textup{G}}
\newcommand{\fuss}{\textup{FC}}
\newcommand{\macro}{\textup{macro}}
\newcommand{\bulk}{\textup{bulk}}
\newcommand{\edge}{\textup{edge}}
\newcommand{\origin}{\textup{origin}}
\newcommand{\meijer}{\textup{Meijer}}
\newcommand{\bessel}{\textup{Bessel}}
\newcommand{\sine}{\textup{sine}}
\newcommand{\airy}{\textup{Airy}}
\newcommand{\MP}{\textup{MP}}
\newcommand{\hard}{\textup{hard}}
\newcommand{\soft}{\textup{soft}}
\newcommand{\conn}{\textup{conn}}
\newcommand{\ev}{\textup{e.v.}}
\newcommand{\MeijerG}[8][\Big]{G^{{ #2 },{ #3 }}_{{ #4 },{ #5 }} #1( \begin{matrix} #6 \\ #7 \end{matrix}\, #1\vert\, #8 #1)}
\newcommand{\hypergeometric}[6][\Big]{\,{}_{#2} F_{#3} #1( \begin{matrix} #4 \\ #5 \end{matrix}\, #1\vert\, #6 #1)}
\newcommand{\FoxH}[8][\Big]{H^{{ #2 },{ #3 }}_{{ #4 },{ #5 }} #1( \begin{matrix} #6 \\ #7 \end{matrix}\, #1\vert\, #8 #1)}
\author{J. R. Ipsen}
\title{Products of random matrices}
\begin{document}

\frontmatter*

\firmlists

\raggedbottom

\begin{titlingpage}
\thispagestyle{empty}

\raggedright

\vspace*{.2\textheight}

{\Large J.~R.~Ipsen}

\vspace*{\baselineskip}

% \resizebox{\textwidth}{!}{\bfseries Products of independent Gaussian random matrices}
{\Huge\bfseries Products of Independent}\\

~\hfill {\Huge\bfseries Gaussian Random Matrices}

\vfill

Doctoral dissertation\\[.5\baselineskip]
% Submitted: August 2015\\[.5\baselineskip]
Department of Physics\\
Bielefeld University

\newpage

% \raggedright
% \small
% {Submitted by the author for the degree of\\
% \hspace*{1em}\textsl{Doctor rerum naturalium} (Dr. rer. nat.)\\
% at Department of Physics,
% Bielefeld University.}
% 
% \vspace*{2\baselineskip}

% Scientific committee\\[.2\baselineskip]
% \begin{tabular}{@{}ll}
% \textit{Evaluators} & G.~Akemann \\
%  & M.~Kieburg
% \end{tabular}

~

\vfill

% \textcopyright\ J.~R.~Ipsen. All rights reserved.
% 
% \medskip

\raggedright

Typeset in 11\,pt. Computer Modern using the \textsf{\itshape memoir} class.\\
All figures are made with Ti\textit{k}Z and \textsc{pgfplots}.

\medskip 

Printed in Germany on acid-free paper.

% The paper used in this publication may meet the minimum requirements of ISO~International Standard DIN-ISO~9706.
% 
% \medskip 
% 
% Printed in the World.

\end{titlingpage}

% \newpage
% 
% \thispagestyle{empty}
% 
% \vspace*{.2\textheight}
% 
% % \begin{flushright}
% % % \begin{minipage}{.5\textwidth}
% % {\Large\ttfamily to my teachers}
% % 
% % % \vspace{10em}
% % % 
% % % 
% % % There's only three difficult things in physics:\\
% % % factors of $-1$, factors of $2$, and factors $\pi$.\\
% % % 
% % % \raggedleft
% % % ---John Renner Hansen\\[.5\baselineskip]
% % % 
% % % \begin{minipage}{.48\textwidth}
% % % \itshape\small
% % % Freshman lecture on\\ 
% % % Newtonian mechanics\\
% % % \normalfont
% % % Copenhagen, 2012
% % % \end{minipage}
% % % \end{minipage}
% % 
% % \end{flushright}
% 
% \begin{center}
% \begin{minipage}{.5\textwidth}
% \itshape
% It is a mistake to think you can solve any major problems just with potatoes.
% 
% % You will need to know the difference between Friday and a fried egg. It's quite a simple difference, but an important one. Friday comes at the end of the week, whereas a fried egg comes out of a chicken. Like most things, of course, it isn't quite that simple. The fried egg isn't properly a fried egg until it's been put in a frying pan and fried. This is something you wouldn't do to a Friday, of course, though you might do it on a Friday. You can also fry eggs on a Thursday, if you like, or on a cooker. It's all rather complicated, but it makes a kind of sense if you think about it for a while.
% 
% \raggedleft\normalfont
% 
% ---Douglas Adams
% 
% % ---Douglas Adams, \itshape The Salmon of Doubt
% \end{minipage}
% \end{center}
% 
% % \begin{flushright}
% % \Huge\bfseries\scshape
% % don't panic
% % \end{flushright}
% 
% 
% \cleardoublepage

\addtocounter{page}{2}

\chapter{Preface}

The study of products of random matrices dates back to the early days of random matrix theory. Pioneering work by Bellman, Furstenberg, Kesten, Oseledec and others were contemporary to early contributions by Wigner, Dyson, Mehta and others regarding the spectrum of a single large matrix. It is not unreasonable to divide these early results into two different schools separated both by the questions asked and the techniques used. One school focused on the Lyapunov spectrum of products of finite size matrices as the number of factors tended to infinity, while the other focused on eigenvalues of a single matrix as the matrix dimension tended to infinity. 

From a physical point of view a restriction to Hermitian matrices is often natural when considering a single random matrix, since the random matrix typically is imagined to approximate the properties of a Hamiltonian or another self-adjoint operator. On the other hand, a restriction to Hermitian matrices is no longer natural when considering products. This is illustrated by the fact that a product of two Hermitian matrices is, in general, non-Hermitian. 

When considering products it is more natural to study random matrices chosen according to a probability measure on some matrix semi-group. Historically, one of the first examples was provided by considering a product of random matrices with positive entries~\cite{Bellman:1954}; the relevance of such models in physics may be realised by considering the transfer matrix representation of one-dimensional lattice models with random coupling between spins (see section~\ref{sec:moti:lattice}). As another example we could consider products of random unitary matrices describing a unitary time evolution~\cite{JW:2004} or a random Wilson loop~\cite{NR:2007,BN:2008}. We emphasise that choosing unitary matrices uniformly with respect to the Haar measure constitutes a trivial example since this corresponds to studying the evolution of a system starting in the equilibrium state. Thus, the circular unitary ensemble is rather boring when considering products. Moreover, the circular orthogonal and symplectic 
ensembles do not even qualify as semi-groups if the ordinary matrix product is used.

The semi-groups which will be important in this thesis are the space of all $N\times N$ matrices over the (skew-)field of real numbers, complex numbers and quaternions endowed with usual matrix multiplication; the threefold classification in accordance with the associative division algebras corresponds to Dyson's classification of the classical Wigner--Dyson ensembles~\cite{Dyson:1962}. An important difference between these matrix spaces and the unitary group from the previous example is that they are non-compact, thus, a priori, there is no natural equilibrium measure.

Historically, the research on products of random matrices was centred around the Lyapunov spectrum and, in particular, the largest Lyapunov exponent, which in physical models may be related to e.g. the stability of dynamical systems or the free energy of disordered lattice systems, see~\cite{CPV:1993} for a review of applications. A ``law of large numbers'' for the largest Lyapunov exponent as the number of factors tends to infinity was established early on by Furstenberg and Kesten~\cite{FK:1960} leading up to Oseledec's celebrated multiplicative ergodic theorem~\cite{Oseledec:1968,Raghunathan:1979}. However, universal laws for the fluctuations of the Lyapunov exponents are more challenging. Nonetheless, for certain classes of matrices a central limit theorem has been established for the largest Lyapunov exponent, see e.g.~\cite{CN:1984,lePage:1982}. The fact that the largest Lyapunov exponent follows a Gaussian law is rather remarkable when we compare this with our knowledge about a single random matrix. Under 
quite 
general conditions the largest singular value of a large random matrix will follow the so-called Tracy--Widom law~\cite{TW:1994}; this is expected to extend to products of independent random matrices as long as the number of factors is finite (this has been shown explicitly for products of Gaussian random matrices~\cite{LWZ:2014}). Thus, when considering products of random matrices, we are led to believe that it has fundamental importance for the microscopic spectral properties whether we first take the matrix dimensions to infinity and then the number of factors \emph{or} we first take number factors to infinity and then the matrix dimensions. Double scaling limits are undoubtedly a subtle matter. 

The more recent interest in products of random matrices (and, more generally, the algebra of random matrices) is partly due to progress in free probability, see~\cite{Burda:2013} for a short review. However, a limitation of the techniques from free probability and related methods in random matrix theory is that they only consider macroscopic spectra. It is highly desirable to extend these known results to include higher point correlations as well as microscopic spectral properties. The reasons for this is not only because such quantities are expected to be universal and are relevant for applications, but also because we are interested in the connection to older results about Lyapunov exponents in the limit where the number of factors tends to infinity.

Considerable progress on the microscopic spectral properties of finite products of random matrices has appeared very recently with the introduction of matrix models which are exactly solvable for an arbitrary number of factors as well as arbitrary matrix dimensions. The first of such models considered the eigenvalues of a product of independent square complex Gaussian random matrices~\cite{AB:2012}; this was later extended to include rectangular and quaternionic matrices~\cite{Ipsen:2013,IK:2014,ARRS:2013} and to some extent real matrices~\cite{Forrester:2014a,IK:2014}; explicit expressions for the singular values of the complex Gaussian matrix model were obtained in~\cite{AKW:2013,AIK:2013}. Subsequently, treatments of models involving products of inverse Gaussian matrices and truncated unitary matrices have followed~\cite{ARRS:2013,IK:2014,ABKN:2014,Forrester:2014b,KKS:2015}, see~\cite{AI:2015} for a review. These new integrable models reveal determinantal and Pfaffian structures much like the classical 
matrix ensembles. With the long history of research on products of random matrices and with strong traditions for exactly solvable models (including multi-matrix models) in random matrix theory, it is rather surprising that none of these models have been found earlier.

Obviously, the detailed knowledge of all eigen- and singular value correlation functions for arbitrary matrix dimensions and an arbitrary number of factors has opened up the possibility to study microscopic spectral properties, and the search for known and new universality classes.
For a finite number of matrices, new universality classes have been observed near the origin~\cite{AB:2012,Ipsen:2013,ABKN:2014,KZ:2014,Forrester:2014b,KS:2014,KKS:2015} while familiar random matrix kernels have been reobtained in the bulk and near ``free'' edges~\cite{AB:2012,Ipsen:2013,ABKN:2014,LW:2014,LWZ:2014}. The claim of universality of the new classes near the origin is justified, since several exactly solvable models (also beyond products of random matrices) have the same correlation kernels after proper rescaling. More general universality criteria are highly expected but still unproven. However, it would be a mistake to think of the new exactly solvable matrix models merely as precursors for universality theorems in random matrix theory. There are good reasons (physical as well as mathematical) for giving a prominent r\^ole to the integrable and, in particular, the Gaussian matrix models. Let us emphasise one of these: Gaussian integrations appear as an integral part of the Hubbard--Stratonovich 
transformation which is one way to establish a link between random matrix models and dual non-linear sigma models which appear as effective field theories in condensed matter theory and high energy physics, see e.g.~\cite{VW:2000,Beenakker:1997}.

The new exactly solvable models have also provided new insight to the limit where the number of factors tends to infinity~\cite{Forrester:2014a,ABK:2014,Ipsen:2015,Forrester:2015}. If the matrix dimensions are kept fixed, then it was shown that the eigen- and singular values separate exponentially compared to the interaction range. As a consequence, the determinantal and Pfaffian point processes valid for a finite number of matrices turn into permanental (or symmetrised) processes. Moreover, the Stability and Lyapunov exponents were shown to be Gaussian distributed. A surprising property presents itself when considering products of real Gaussian matrices: the eigenvalue spectrum becomes real (albeit the matrix is asymmetric) when the number of factors tends to infinity. This was first observed numerically~\cite{Lakshminarayan:2013} and was shown analytically for square Gaussian matrices~\cite{Forrester:2014a} while an alternative proof including the possibility of rectangular 
matrices was presented in~\cite{Ipsen:2015}. Numerical evidence suggests that this phenomenon extends to a much wider class of matrices~\cite{HJL:2015}. The fact that the spectrum becomes real is remarkable since when we consider finite product matrices then (under certain assumptions) the macroscopic eigenvalue spectrum becomes rotational symmetric in the complex plane in the limit of large matrix dimension. Again this shows that the two scaling limits do not commute and suggests that interesting behaviour may appear in a double scaling limit.

\section*{Outline of thesis}
\label{sec:moti:outline}

This thesis reviews recent progress on products of random matrices from the perspective of exactly solved Gaussian random matrix models. Our reason for taking the viewpoint of the Gaussian matrices is twofold. Firstly, the Gaussian models have a special status since they are both unitary invariant and have independent entries which are properties related to two typical generalisations within random matrix theory. Secondly, we believe that the Gaussian models are a good representative for the other models which are now known to be exactly solvable, since many techniques observed in the Gaussian case reappear in the description of products involving inverse and truncated unitary matrices. 

For obvious reasons, our main attention must be directed towards results published in papers where the author of this thesis is either the author or a co-author~\cite{AI:2015,AIK:2013,AIS:2014,Ipsen:2013,Ipsen:2015,IK:2014}.
However, not all results presented in this thesis can be found in these papers neither will all results from the aforementioned papers be repeated in this thesis. Proper citation will always be given, both for results originally obtained by other authors and for results from the aforementioned publications. There are several reasons for our deviation from a one-to-one correspondence between thesis and publications. Firstly (and most important), the study of product of random matrices has experienced considerable progress over the last couple of years due to the work of many authors; this thesis would be embarrassingly incomplete if we failed to mention these results. Secondly, we have tried to fill some minor gaps between known results. In particular, we have attempted to generalise to the rectangular matrices whenever these results were not given in the literature. Lastly, certain results deviating from our main theme have been left out in an attempt to keep a consistent tread throughout the thesis and to 
make the presentation as concise (and short) as possible.

The rest of thesis is devived into four main parts. (i) The two first chapters contain introductory material regarding applications of products of random matrices and the step from products of random scalars to products of random matrices; a few general concepts which are essential for the following chapters are also introduced. (ii) The next two chapters derive explicit results for products of \emph{Gaussian} random matrices and consider the asymptotic behaviour for large matrix dimensions. (iii) Chapter~\ref{chap:lyapunov} revisits the matrix models from chapter~\ref{chap:singular} and~\ref{chap:complex}, but focuses on the limit where the number of factors tends to infinity. (iv) Finally, results regarding matrix decompositions and special functions, which are used consistently throughout the thesis, are collected in two appedices.

\begin{flexlabelled}{itshape}{1.75em}{.5em}{0pt}{2.25em}{0pt}

 \item[Chapter~\ref{chap:motivation}.] We ask ``\emph{Why products of random matrices?}'' and discuss a number of application of products of random matrices in physics and beyond. Readers only interested in mathematical results may skip this chapter.
 
 \item[Chapter~\ref{chap:prologue}.] We first recall some well-known results for products of scalar-valued random variables, which will be helpful to keep in mind when considering products of matrices. Thereafter, we turn our attention towards products of random matrices and provide proofs of a weak commutation relation for so-called isotropic random matrices as well as a reduction formula for rectangular random matrices. Even though these results are not used explicitly in the proceeding chapters, they are used implicitly to provide an interpretation for general \emph{rectangular} products. Finally, we introduce definitions for the Gaussian ensembles which will be the main focus for the rest of the thesis. 
 
 \item[Chapter~\ref{chap:singular}.] The squared singular values for a product of complex Gaussian random matrices are considered and explicit formulae are obtained for the joint probability density function and the correlation functions. These exact formulae are valid for arbitrary matrix dimension as well as an arbitrary number of factors. Furthermore, the formulae are used to study asymptotic behaviour as the matrix dimension tends to infinity. In particular, we find the macroscopic density and the microscopic correlations at the hard edge, while scaling limits for the bulk and the soft edge is stated without proof. The chapter ends with a discussion of open problems.
 
 \item[Chapter~\ref{chap:complex}.] We consider the (generally complex) eigenvalues for products of real, complex, and quaternionic matrices. Explicit formulae for the joint probability density function and the correlations functions are obtained for complex and quaternionic matrices, while partial results are presented for the real case. Asymptotic behaviour for large matrix dimension are derived in the known limits; this includes a new microscopic kernel at the origin. Finally, open problems are discussed. 
 
 \item[Chapter~\ref{chap:lyapunov}.] The formulae for the joint densities for the eigen- and singular values of products of Gaussian matrices are used to obtain the asymptotic behaviour for a large number of factors. Explicit formulae are given for the stability and Lyapunov exponents as well as their fluctuations. Certain aspects of double scaling limits are discussed together with open problems.
 
 \item[Appendix~\ref{app:decompositions}.] Several matrix decompositions are discussed. In particular, we provide proofs for some recent generalised decompositions which play an important r\^ole for products of random matrices (some of these are not explicitly given in the literature).
 
 \item[Appendix~\ref{app:special}.] For easy reference, we summarise known properties for certain higher transcedental functions: the gamma, digamma, hypergeometric, and Meijer $G$-functions. The formulae stated in this appendix are frequently used throughout the thesis.
 
\end{flexlabelled}

\plainbreak{1}

Note that we have provided a summary of results and open problems in the end of each of the three main chapters (chapter~\ref{chap:singular}, \ref{chap:complex}, and~\ref{chap:lyapunov}) rather than collecting it all in a final chapter. The intention is that it should be possible to read each of these three chapters separately.

\section*{Acknowledgements}

The final and certainly the most pleasant duty is, of course, to thank friends, colleagues and collaborators who have helped me throughout my doctoral studies. First and foremost my gratitude goes to my advisor G.~Akemann for sharing his knowledge and experience with me; his guidance has been invaluable. I am also indebted to Z.~Burda, P.~J.~Forrester, A.~B.~J.~Kuijlaars, T.~Neuschel, H. Schomerus, D.~Stivigny, E.~Strahov, K.~\. Zyczkowski and, in particular, M.~Kieburg for fruitful and stimulating discussions on the topic of this thesis. 

Naturally, special thanks are owed to my co-authors: G.~Akemann, M.~Kieburg, and E.~Strahov; without them the results presented in this thesis would have been much different. I would also like to take this opputunity to thank P.~J.~Forrester, A.~B.~J.~Kuijlaars, E.~Strahov and L.~Zhang for sharing copies of unpublished drafts.

I am pleased to be able to thank P.~J.~Forrester, A.~B.~J.~Kuijlaars and G.~Schehr for inviting me to visit their institutions and for giving me the possibility to present my work. I am grateful for their generous hospitality.

D.~Conache and A.~Di~Stefano are thanked for reading through parts of this thesis. Any remaining errors are due to the author.

Doing my doctoral studies I have had the opportunity to talk to many brilliant scientists and mathematicians as well as talented students. Special thanks are owed to M.~Atkin, Ch.~Chalier, B.~Fahs, B.~Garrod, R.~Marino, T.~Nagao, A.~Nock, X.~Peng, G.~Silva, R.~Speicher, K.~Splittorff, A.~Swiech, J.~J.~M.~Verbaarschot, P.~Vivo, P.~Warcho\l, L.~Wei, T.~Wirtz, and Z.~Zheng for discussions on various topics.

Lastly, I would like to thank my friends and colleagues at Bielefeld University. It is my pleasure to thank G.~Bedrosian, P.~Bei\ss ner, V.~Bezborodov, S.~Cheng, M.~Cikovic, D.~Conache, M.~Dieckmann, T.~Fadina, D.~K\"ampfe, M.~Lebid, T.~L\"obbe, K.~von~der~L\"uhe, A.~Reshetenko, J.~Rodriguez, M.~Serti\'c, A.~Di~Stefano, Y.~Sun, M.~Venker, P.~Vidal, P.~Voigt, L.~Wresch, and D.~Zhang for making the university a pleasant place to work and for our shared chinese adventure. H.~Litschewsky, R.~Reischuk, and K.~Zelmer are thanked for administrative support and showing me the way through the jungle of bureaucracy. My thanks extend to scientists and staff at the Chinese Academy of Sciences, Department of Mathematics and Statistics at University of Melbourne, and LPTMS Université Paris-Sud.

The author acknowledge financial support by the German science foundation (DFG) through the International Graduate College \emph{Stochastics and Real World Models} (IRTG~1132) at Bielefeld University.

~

\noindent
\emph{Bielefeld, Germany} \hfill J.~R.~Ipsen\\
\emph{August 2015}

\cleardoublepage

\tableofcontents

\mainmatter*

\chapter{Why products of random matrices?}
\label{chap:motivation}

\begin{quote}
\itshape
The properties of random matrices and their products form a basic tool,
whose importance cannot be underestimated. They play a role as important
as Fourier transforms for differential equations.
~\hfill\normalfont
---Giorgio Parisi~\cite{CPV:1993}
\end{quote}

~

\noindent
It is broadly accepted that random matrix theory is an essential tool in a great variety of topics in both mathematics and physics (and beyond). Moreover, random matrix theory is an extremely rich research area in its own right. We consider the importance of random matrix theory in the theoretical sciences to be so well-established that it is unnecessary to provide further motivation for the study of random matrices themselves. For this reason, we jump directly to the sub-field considered in this thesis with the question: \emph{Why products of random matrices?} If unsatisfied with this leap, the reader is referred to the vast literature on the subject of random matrix theory; we emphasize the contemporary and extensive handbook~\cite{ABF:2011} in which many applications are discussed.

This chapter is intended to give a more physical motivation for the study of products of random matrices and the intriguing questions arising in this sub-field of random matrix theory. To do so, we will introduce a few possible applications of products of random matrices in the sciences. We emphasise that it is not our intention to present exhaustive technical derivations. Neither do we attempt to give an exhaustive nor extensive list of applications for products of random matrices. Rather, we sketch a few illustrative examples from which we hope it is possible to untangle the threads of the much larger pattern.

The applications considered in this chapter include: 
wireless telecommunication (section~\ref{sec:moti:mimo}), 
disordered spin chains (section~\ref{sec:moti:lattice}),
stability of large complex systems (section~\ref{sec:moti:stab}),
symplectic maps and Hamiltonian mechanics (section~\ref{sec:moti:hamilton}),
quantum transport in disordered wires (section~\ref{sec:moti:dmpk}), and
QCD at non-zero chemical potential (section~\ref{sec:moti:qcd}).
% cross correlations in ``big data'' (section~\ref{sec:moti:big-data}).

\section{Wireless telecommunication}
\label{sec:moti:mimo}

In this section, we look at an application of products of random matrices stemming from wireless telecommunication, see~\cite{Telatar:1999,Muller:2002,TV:2004}. We will consider a so-called multiple-input multiple-output (MIMO) communication channel. This is a single user system with $M$ transmitting and $N$ receiving antennae. As usual, it is convenient to write the amplitude and phase of our signal as the modulus and phase of a complex number. The most general MIMO communication channel may be written as
\begin{equation}
y=Xx+\eta,
\end{equation}
where $\eta$ is an $N$-dimensional vector representing the background noise, $X$ is an $N\times M$ complex matrix representing the channel, while $x$ and $y$ are $M$- and $N$-dimensional complex vectors which represent the signal at the transmitting and receiving antennae, respectively. The canonical choice for the channel matrix, $X$, is to take its entries as independent complex Gaussian variables, i.e. the phases are uniformly distributed and the moduli are Rayleigh distributed. This is known as a Rayleigh fading environment and it is a reasonable approximation for channels with many scatterers and no line-of-sight.  

The typical question asked by engineers concerns the channels information capacity. One of the most frequently used performance measures is the so-called mutual information which gives an upper bound for the spectral efficiency measured as bit-rate per bandwidth. Assuming that the channel matrix is known to the receiver and that the input-signal consists of independent and identically distributed random variables, then the mutual information is given by (see e.g.~\cite{TV:2004})
\begin{equation}
\cI^N(\gamma)=\frac1N\tr\log_2(1+\gamma X^\dagger X)
\end{equation}
with
\begin{equation}
\gamma=\frac{N\E\norm x^2}{M\E \norm\eta^2}
\end{equation}
denoting the signal-to-noise ratio. This means that the mutual information depends on the squared singular values of the channel matrix. For a Rayleigh fading environment we know that the distribution of squared singular values converges in probability to the so-called Mar\v cenko--Pastur law~\cite{MP:1967} in the limit where $N,M\to\infty$ and $N/M\to\alpha\in(0,\infty)$ (see chapter~\ref{chap:singular}). Consequently, the mutual information converges to
\begin{equation}
\cI(\gamma)=\int_0^\infty dx\,\rho_\MP(x)\log_2(1+\gamma x)
\end{equation}
in the limit with a large number of antennae. Here $\rho_\MP(x)$ denotes the density for the Mar\v cenko--Pastur law.

Let us look at a model introduced in~\cite{Muller:2002}, which is more complicated than the Rayleigh fading environment. We will consider a communication channel consisting of $n$ scattering environments separated by some major obstacles. We could imagine that the transmitter and the receiver were located in the same building but on different floors, such that the floors act as the obstacles. Our signal will not pass through a floor equivalently well everywhere, there will be certain spots of preferred penetration referred to as ``key holes''. Assuming that the $i$-th floor has $N_i$ ``key holes'', our communication channel becomes
\begin{equation}
y=X_n\cdots X_1x+\eta,
\end{equation}
where $X_i$ is an $N_i\times N_{i-1}$ matrix representing the $i$-th scattering environment (e.g. between floor number $i$ and $i-1$) with $N_0=M$ and $N_n=N$. The mutual information is given as before; except for a replacement of $X$ with the product matrix $X_n\cdots X_1$. Thus, we need knowledge about the distribution of the singular values of a product of random matrices in order to determine the mutual information in this model. We will return to this question in chapter~\ref{chap:singular}.

\section{Disordered spin chains}
\label{sec:moti:lattice}

The next application we will look at arises in the study of disordered spin chains. 
Consider a periodic chain with nearest neighbour interaction consisting of $n$ spins, $\{s_i\}\in\{1,\ldots,N\}^n$, described by a Hamiltonian,
\begin{equation}
\cH=-\sum_{i=1}^{n} J_{i}(s_{i},s_{i-1})
\end{equation}
where $J_{i}(s_i,s_{i-1})$ denote the coupling constants at the $i$-th link, i.e. the coupling between the spin at the $i$-th and the $(i-1)$-th site. Using standard techniques (see e.g.~\cite{Yeomans:1992}), the partition function (at temperature $1/\beta$) may be written in terms of a product of transfer matrices,
\begin{equation}
\cZ_n=\tr X_nX_{n-1}\cdots X_1,
\label{moti:lattice:partition-function}
\end{equation}
where the trace stems from the periodic boundary condition and each $X_i$ denotes an $N\times N$ transfer matrix given by
\begin{equation}
X_i=
\begin{bmatrix}
 e^{\beta J_i(1,1)} & \cdots &  e^{\beta J_i(1,N)} \\
 \vdots             &        &  \vdots \\
 e^{\beta  J_i(N,1)} & \cdots &   e^{\beta J_i(N,N)}
\end{bmatrix}.
\end{equation}
Note that the eigenvalues of such matrices may be complex even though the trace is real. However, it is known from the Perron--Frebenius theorem (see e.g.~\cite{HJ:2012}) that there is at least one real eigenvalue. Furthermore, this eigenvalue is strictly larger than the rest of the eigenvalues in absolute value. 

First, let us consider the case where all transfer matrices are identical. We will denote the eigenvalues of the transfer matrix by $\lambda_1,\ldots,\lambda_N$ and, due to the Perron--Frebenius theorem, we may order them as $\lambda_1>\abs{\lambda_2}\geq\cdots\geq\abs{\lambda_N}$. For $N$ fixed, the free energy per site becomes
\begin{equation}
\beta f=-\lim_{n\to\infty}\frac1n\log \tr (X_1)^n=-\lim_{n\to\infty}\frac1n\log(\lambda_1^n+\cdots+\lambda_N^n)=-\log \lambda_1
\label{moti:lattice:free-energy}
\end{equation}
in the thermodynamic limit. We can, of course, also consider other physical quantities, e.g. if $\abs{\lambda_2}>\abs{\lambda_3}$ then the correlation length (in units of the lattice spacing) is given by $\xi=(\log \lambda_1/\abs{\lambda_2})^{-1}$. 

Now, imagine that we want to consider a disordered system. In the physical literature, we typically model disorder by introducing randomness to the system, e.g. replacing the coupling constants $J_{i}(s_i,s_{i-1})$ by random variables. Thus, the transfer matrices, $X_i$ ($i=1,\ldots,N$), become random matrices distributed with respect to some probability measure on the multiplicative semi-group of positive matrices and the partition function~\eqref{moti:lattice:partition-function} is determined by the spectral properties of a product of random transfer matrices. Consequently, physical quantities are random variables in the disordered models, hence it is natural to ask for their distributions and whether these are universal. Actually, a few results are known for relatively general distributions. Typically, the free energy~\eqref{moti:lattice:free-energy} will be a Gaussian (see~\cite{Bellman:1954,Ishitani:1977} for precise statements), while the correlation length will tend to zero in the thermodynamic limit 
(this is the so-called Anderson localisation). We refer to~\cite{CPV:1993} and references for further discussion (in particular related to the disordered Ising chain).

\section{Stability of large complex systems}
\label{sec:moti:stab}

Let us follow the idea in~\cite{May:1972} and construct a simple model for the stability of large complex systems. 
We imagine a dynamical system in the variables $u(n)=\{u_1(n),\ldots,u_N(n)\}$ evolving in discrete time as $u_i(n+1)=f_i[u_1(n),\ldots,u_N(n)]$, where each $f_i$ is some smooth function. We assume that there is a fixed point, $u^*$, about which we make an expansion,
\begin{equation}
\delta u(n+1)=X\,\delta u(n)+O(\epsilon) \qquad\text{with}\qquad X_{ij}:= \frac{\p f_i(u)}{\p u_j}\bigg\vert_{u^*}.
\label{moti:stab:fix-expand}
\end{equation}
Here $X$ is the so-called stability matrix and $\epsilon\delta u(0)$ denotes a small initial perturbation to the fixed point. To leading order, we have the solution
\begin{equation}
\delta u(n)=X^n\delta u(0).
\end{equation}
The system is said to be (asymptotically) stable if the spectral norm tends to zero,
\begin{equation}
\norm{X^n}=\sup_{\delta u(0)\neq0}\frac{\norm{\delta u(n)}}{\norm{\delta u(0)}}\xrightarrow{n\to\infty}0.
\end{equation}
The interpretation of this definition is that given some small initial perturbation to the fixed point then the system will stay close to the fixed point as time evolves. An equivalent definition for stability would be to require that the spectral radius is less than unity, which in terms of the eigen- or singular values means that (i) if $z_1$ denotes the largest eigenvalue in terms of absolute values of the stability matrix, $X$, then the system is stable if $\abs{z_1}<1$ and (ii) if $\sigma_{1,n}$ denotes the largest singular value of $X^n$ then the system is stable if $(\sigma_{1,n})^{1/n}\to\sigma_1<1$. We recall that $\abs{z_1}\leq (\sigma_{1,n})^{1/n}$ for all $n$ and that $\abs{z_1}=\sigma_1$.

As an example, let us consider a large ecosystem containing $N$ interacting species. A full description of the system would, of course, be extremely complicated and highly non-linear. However, we are only interested in some small neighbourhood of a fixed point described by a stability matrix $X$, where the entry $X_{ij}$ tells us how a small fluctuation in population of the $j$-th species will affect the population of the $i$-th species. Thus, if the links $X_{ij}$ and $X_{ji}$ are both positive then the species will benefit from an increase in the population of the other (symbiosis); likewise if both links are negative then the species will have a competitive relation and if the links have opposite signs then the species will have a predator--prey relation. Rather than study each interaction between two species individually and build up the stability matrix entry by entry, we will replace it by a random matrix. The hope is that if the system is both large and complex, then it will be self-averaging. In our 
example, minimal requirements demand that the stability matrix is an asymmetric real
matrix (note that this is the discrete time version of the model considered in~\cite{May:1972}). For this reason, we will choose the entries of our random stability matrix as independent and identically distributed real-valued (Gaussian) random variables with variance $\sigma^2/N$. The circular law theorem (see~\cite{BC:2012} for a review) states that such matrices tend to a uniform law on a disk with radius $\sigma$ centred at the origin. It follows that the large-$N$ limit of our model has a phase transition between a stable and unstable phase at $\sigma=1$. Additional knowledge about this phase transition requires knowledge about the largest singular value of $X^n$ (see~\cite{MS:2014} and references within for a discussion of the phase transition in a different but closely related model).

Now, we are equipped to consider a generalisation, which requires knowledge about products of random matrices. Rather than considering an expansion around a fixed point, we might imagine expanding around a low energy path through a complicated and highly irregular landscape described by the dynamical system (see also the next section). In this case, we will have to evaluate the stability matrix along the path,
\begin{equation}
\delta u(n+1)=X_n\,\delta u(n) \qquad\text{with}\qquad (X_n)_{ij}:= \frac{\p f_i(u)}{\p u_j}\bigg\vert_{u(n)},
\label{moti:stab:path-expand}
\end{equation}
which gives rise to a solution
\begin{equation}
\delta u(n)=X_n\cdots X_1\delta u(0).
\end{equation}
Here, the question is whether two initially close trajectories will diverge or remain close as time evolves. As above, a first (and perhaps crude) approximation of such a system would be to replace the matrices $X_i$ ($i=1,\ldots,N$) with random matrices subject to symmetry constraints determined by physical considerations. This replacement will turn the question of stability into a question about the spectral properties of a product of random matrices as the number of factors tends to infinity. We will return to such models in chapter~\ref{chap:lyapunov}.

\section{Symplectic maps and Hamiltonian mechanics}
\label{sec:moti:hamilton}

Let us consider a slightly more concrete model inspired by~\cite{PV:1986} (see also~\cite{CPV:1993,Benettin:1984}), which may be thought of as included in the discussion from the previous section. We imagine a Hamiltonian system evolving in discrete time according to a $2N$ dimensional symplectic map,
\begin{align}
q(n+1)&=q(n)+p(n), \\
p(n+1)&=p(n)-\nabla V(q(n+1)),
\end{align}
where $q(n)$ and $p(n)$ are $N$ dimensional real vectors and $V$ is a twice differentiable function introducing a (non-integrable) deviation from the trivial map. In order to study the chaoticity of this evolution process we introduce a small perturbation to the trajectory, $\epsilon \delta q(n)$ and $\epsilon \delta p(n)$, which gives rise to a linearised problem,
\begin{equation}
u(n+1)=X_nu(n)
\label{moti:sym:map}
\end{equation}
with
\begin{equation}
u(n):=\begin{bmatrix} \delta q({n}) \\ \delta p({n}) \end{bmatrix},\quad
X_n:=
\begin{bmatrix}
 1 & 1 \\ 
H_n & 1+H_n 
\end{bmatrix},
\quad\text{and}\quad
(H_n)_{ij}:=-\frac{\p^2V(q(n+1))}{\p q_i\p q_j}.
\end{equation}
Here $H_n$ is a real symmetric matrix, which is in agreement with the fact that $X_n$ has to belong to $\gSp(2N,\R)$.

Given an initial perturbation, $u(0)$, then the solution of the linearised problem is trivially seen to be
\begin{equation}
\delta u(n)=X_n\cdots X_1\delta u(0).
\end{equation}
Now, the idea is the same as in the previous section. We replace either the symmetric matrices $H_i$ ($i=1,\ldots,N$) or the symplectic matrices $X_i$ ($i=1,\ldots,N$) with random matrices, which turns our problem into a study of the spectral properties of a product of random matrices. In~\cite{CPV:1993,PV:1986,Benettin:1984}, the randomness was introduced as a small perturbation of the integrable system and used to study critical exponents close to the transition between integrable and chaotic motion. However, many questions remain unanswered due to the lack of good analytic methods.

\section{Quantum transport in disordered wires}
\label{sec:moti:dmpk}

In this section, we look at how products of random matrices enter the study of quantum transport through quasi one-dimensional wires (see~\cite{Beenakker:1997} for a review). Here, ``quasi one-dimensional'' refers to a situation where we have a large number of conducting channels even though we consider a wire geometry.

Before we can understand the wire, we need to look at the transport properties for a quantum dot, i.e. the point geometry. The dot will be a chaotic cavity with two ideal leads: lead I and lead II. For simplicity, it is assumed that the leads are identical. The longitude modes of the wave function in lead I consists of $N$ incoming modes with amplitudes collectively denoted by the vector $c_\text{in}^I$ and $N$ outgoing modes with amplitudes collectively denoted by the vector $c_\text{out}^I$ and likewise for lead II. 

Two popular ways to describe the transport properties of a quantum dot are to use either a scattering matrix or a transfer matrix. The scattering matrix relates the incoming flux to the outgoing flux, while the transfer matrix relates the flux through one lead to the flux through the other. In matrix notation, we have
\begin{equation}
\begin{bmatrix}c_\text{out}^I \\ c_\text{out}^{II} \end{bmatrix}
=
\underbrace{\begin{bmatrix} r & t' \\ t & r' \end{bmatrix}}_{\displaystyle S}
\begin{bmatrix}c_\text{in}^I \\ c_\text{in}^{II} \end{bmatrix}
\qquad\text{and}\qquad
\begin{bmatrix}c_\text{in}^{II} \\ c_\text{out}^{II} \end{bmatrix}
=
\underbrace{\begin{bmatrix} a & b \\ c & d \end{bmatrix}}_{\displaystyle X}
\begin{bmatrix}c_\text{in}^I \\ c_\text{out}^{I} \end{bmatrix},
\end{equation}
respectively. Under the assumption of flux conservation it follows that the scattering matrix must be unitary, $S\in\gU(2N)$, while the transfer matrix must be split-unitary, $X\in \gU(N,N)$. For a dot given as a chaotic cavity, these matrices will be represented by random matrices. The conductance of the quantum dot is given in terms of the Landauer formula,
\begin{equation}
G/G_0=\tr t^\dagger t=\tr (a^\dagger a)^{-1},
\end{equation}
where $G_0=2e^2/h$ is the conductance quantum.

\begin{figure}[htbp]
\centering
\includegraphics{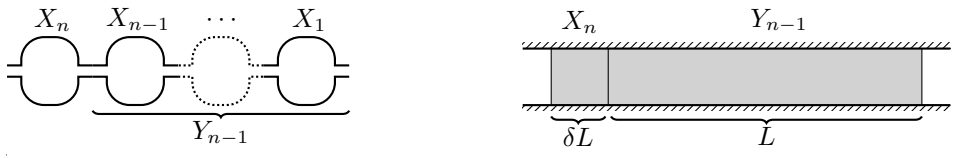}
\caption{Schematic illustrations of a disordered wire. The left panel emphasizes an interpretation of
the wire as $n$ quantum dots coupled in series, where each dot is described by a transfer matrix, $X_i$.
The right panel emphasizes an interpretation of the wire as divided into two segments; a long segment of
length $L$ and a short segment (thought of as infinitesimal) of length $\delta L$ described by transfer matrices $Y_{n-1}$ and $X_n$, respectively.
In both interpretations, we may think of the disordered wire as constructed successively one dot (or one segment of length $\delta L$) at the time.}
\label{fig:moti:wire}
\end{figure}

Now, we are ready to look at the wire geometry. In principle, the transport properties of the wire are described exactly like the dot except that the probability distribution for the scattering and the transfer matrices have to be chosen differently. However, it turns out to be a highly non-trivial task to find the correct distribution for these matrices. The usual trick is to divide the wire up into smaller pieces which are easier to understand and then rebuild the wire piece by piece, see e.g.~\cite{MPK:1988,IWZ:1990}. Figure~\ref{fig:moti:wire} illustrates two possible ways to construct a wire. The transfer matrix description seems particularly suited for such descriptions, since it links flux at one lead to the flux at the other lead. Thus, if we have a wire with an unknown transfer matrix, $Y_n$, divided into $n$ pieces each described by a transfer matrix $X_i$ (numbered successively), then the transfer matrix for the wire may be written as
\begin{equation}
Y_n=X_nY_{n-1}=X_nX_{n-1}\cdots X_1,
\end{equation}
i.e. a product of random matrices. A construction using scattering matrices is slightly more complicated, since they provide a relation between incoming and outgoing flux rather than a relation between leads.

\section{QCD at non-zero chemical potential}
\label{sec:moti:qcd}

Quantum chromodynamics (QCD) is broadly accepted as \emph{the} theory for the strong interaction. However, even with a known fundamental theory, many intriguing questions remain unanswered partly due to the non-perturbative nature of QCD at low energies. One of the most successful approaches in this non-perturbative regime is the use of lattice simulations. However, lattice simulations may be prohibited in certain regions of the phase diagram due to technical difficulties.

A major open problem in the description of strongly interacting matter is to understand the behaviour at a non-zero (baryon) chemical potential (and therefore non-zero baryon density). In this case, lattice simulations are plagued by a notorious sign problem. The core of the problem is that the fermion determinant is not ensured to be real and non-negative (it becomes complex) when the chemical potential differs from zero. For this reason, the fermion determinant cannot be included in the weight used for Monte Carlo sampling, which prohibits a standard approach, see~\cite{Schmidt:2006} for a review.

Some insight into this problem may be achieved using a random matrix model related to the product of two random matrices~\cite{Osborn:2004}. The model is defined through the partition function
\begin{equation}
\cZ(\mu)=\int\limits_{\C^{N\times(N+\nu)}}d^2X_1 \int\limits_{\C^{(N+\nu)\times N}}d^2X_2\, w_\mu(X_1,X_2)\prod_{f=1}^{N_f}\det[D+m_f],
\label{moti:qcd:partition}
\end{equation}
where $N_f$ is the number of quark flavours, $m_f$ denotes the mass of the $f$-th flavour, and $\mu\in(0,1]$ is the chemical potential. The matrix $D$ is given by
\begin{equation}
D=\begin{bmatrix} 0 & X_1 \\ X_2 & 0 \end{bmatrix}
\label{moti:qcd:dirac}
\end{equation}
and corresponds to the Dirac operator, while the weight function, $w_\mu(X_1,X_2)$, is given by
\begin{equation}
w_\mu(X_1,X_2)=\exp\bigg[-\frac{N(1+\mu^2)}{4\mu^2}\tr(X_1^\dagger X_1+X_2^\dagger X_2)-\frac{N(1-\mu^2)}{4\mu^2}\tr(X_1X_2+X_1^\dagger X_2^\dagger)\bigg].
\label{moti:qcd:weight}
\end{equation}
The scaling regime relevant for QCD is when $\mu^2=O(N^{-1})$ as $N$ tends to infinity; this is the limit of weak non-Hermiticity. The limit of strong non-Hermiticity, $\mu\to1$, is interesting as well (albeit not relevant for applications to QCD). In this limit the weight function~\eqref{moti:qcd:weight} splits into two separate Gaussian weights, hence $X_1$ and $X_2$ become independent Gaussian random matrices. 

For physical applications, we are interested in the (generalised) spectral density of the Dirac operator~\eqref{moti:qcd:dirac} or equivalently of the product $X_1X_2$. If $z_1,\ldots,z_N$ denote the non-zero eigenvalues of the Dirac operator~\eqref{moti:qcd:dirac} (this means that $z_1^2,\ldots,z_N^2$ are the eigenvalues of the product $X_1X_2$), then we define the spectral density as
\begin{equation}
\rho_\mu(z):=\naverage[\bigg]{\frac{1}{N}\sum_{k=1}^N\delta^2(z_k-z)}_{\cZ(\mu)},
\end{equation}
where the average is taken according to the partition function~\eqref{moti:qcd:partition}. We stress that this is a generalised density in the sense that it integrates to unity, but it is only ensured to be real and non-negative if $N_f=0$ (this is the so-called quenched approximation). 

Using the above given matrix model it was shown in~\cite{OSV:2005,AOSV:2005} that the complex phase arising due to the QCD sign problem contains essential physical information which must be included in order to obtain correct physical predictions, see~\cite{Splittorff:2006} for a review. 

% 
% 
% \section{Cross correlations in ``big data''}
% \label{sec:moti:big-data}
% 
% 
% 

\chapter{From random scalars to random matrices}
\label{chap:prologue}

% The main part of this thesis will concern a description of the spectral properties of products of independent Gaussian random matrices. However, before we embark on our description of such products, it is beneficially to recall some well-known (and perhaps some not-so-well-known) properties of products of random variables. In this chapter, and throughout this thesis, we will mainly encounter two types of random variables: scalar- and matrix-valued random variables. In order to make a clear distinction between them, we will refer to them as random scalars and random matrices, respectively. 

The purpose of the this chapter is two-fold: firstly, we want to review a few general properties of products of random variables to illustrate some similarities as well as differences between random scalars and random matrices; it will be helpful to keep the well-known structures for random scalars in mind, when considering products of independent Gaussian random matrices in the following chapters. Secondly, we want to introduce a few concepts which will be extensively used in the following chapters; isotropy and induced Ginibre matrices will be of particular interest. 

The chapter is divided into three sections: in section~\ref{sec:prologue:scalar} we will recollect some important structures for products of random scalars; section~\ref{sec:prologue:square} concerns (isotropic) random matrices and is partially based on the paper~\cite{IK:2014}, while section~\ref{sec:prologue:gauss} introduces the well-known Gaussian random matrix ensembles, which will be the central object for the rest of the thesis.

\section{Products of independent random scalars}
\label{sec:prologue:scalar}

This section is devoted to products of independent random scalars. However, we do not attempt to give an exhaustive nor extensive description of such products. For a more thorough account of classical probability the reader is referred to~\cite{Feller:1968,Kallenberg:1997}, while a thorough description of the algebra of (real-valued) random scalars can be found in~\cite{Springer:1979}.
% Many physical perspectives are covered in~\cite{vanKampen:2007}. 

\subsection{Finite products of random scalars}

Let $x_i$ ($i=1,\ldots,n$) be a family of continuous independent real ($\beta=1$) or complex ($\beta=2$) random scalars distributed with respect to probability density functions $p_i^\beta(x_i)$. We can construct a new random scalar as a product of the old, $y_n:= x_n\cdots x_1$. The density of the new random scalar can formally be written as
\begin{equation}
p^\beta(y_n)=\bigg[\prod_{i=1}^n\int_{\F_\beta} d^\beta x_i\,p_i^\beta(x_i)\bigg]\,\delta^\beta(x_n\cdots x_1-y_n),
\label{prologue:scalar:general}
\end{equation}
where $\delta^\beta(x)$ is the Dirac delta function and $d^\beta x$ denotes the flat (Lebesgue) measure on the real line ($\F_{\beta=1}:=\R$) or complex plane ($\F_{\beta=2}:=\C$), respectively.  By definition, the individual random scalars are non-zero almost surely, and therefore, so is any product with a finite number of factors.  

An alternative expression for the density~\eqref{prologue:scalar:general} is obtained by a simple change of variables, $y_{i+1}=x_{i+1}y_{i}$ with $y_1=x_1$. This yields
\begin{equation}
p^\beta(y_n)=
\bigg[\prod_{i=1}^{n-1}\int_{\F_\beta} \frac{d^\beta y_i}{\abs{y_i}^\beta} p_{i+1}^\beta\bigg(\frac{y_{i+1}}{y_i}\bigg)\bigg]
p_1^\beta(y_1),
\label{prologue:scalar:y-product}
\end{equation}
where we explicitly use that the random scalars are non-zero almost surely. For notational simplicity, it is sometimes convenient to introduce the convolution defined by
\begin{equation}
f*g(y):= \int_{\gGL(1,\F_\beta)} d\mu(x) f(y/x)g(x)
\end{equation}
where $d\mu(x):= d^\beta x/\abs{x}^\beta$ is the Haar (invariant) measure on the group of non-zero real or complex numbers with multiplication. With this notation, the density~\eqref{prologue:scalar:y-product} reduces to
\begin{equation}
p^\beta(y_n)=p_n^\beta *\cdots *p_1^\beta(y_n).
\label{prologue:scalar:y-product-simple}
\end{equation}
It is worth noting that the equivalent expression for sums of random scalars is obtained by replacing the convolution on the multiplicative group $\gGL(1,\F_\beta)$ with the convolution on the additive group $(\F_\beta,+)$. Both convolutions inherit commutativity from the scalar operations. 

Isotropic probability distributions will be of particular interest in this thesis, that is distributions which are invariant under bi-unitary transformations, see definition~\ref{def:prologue:isotropy}. For a random scalar with density $p_i^\beta(x)$ that is 
\begin{equation}
p_i^\beta(ux)=p_i^\beta(x)
\label{prologue:scalar:isotropic}
\end{equation}
with $u=\pm 1$ for $\beta=1$ and $u=e^{i\theta}\in \gU(1)$ for $\beta=2$. Both the flat and the Haar measure are invariant under such transformation as well. Isotropy is an important symmetry, since it allows us to describe random scalars solely in terms of their absolute value, i.e. a problem restricted to the positive half-line rather than the full real line or the complex plane.  

Let us return to the product density~\eqref{prologue:scalar:general} and consider a product with isotropic densities. If the $s$-th moment is well-defined for the individual distributions, then we see that
\begin{equation}
\average{\abs{y_n}^s}:=\int_{\F_\beta} d^\beta y_n\,p^\beta(y_n)\abs{y_n}^s=\prod_{i=1}^n\int_{\F_\beta} d^\beta x_i\,p_i^\beta(x_i) \abs{x_i}^s=:\prod_{i=1}^n\E_i[{\abs{x_i}^s}]\,,
\label{prologue:scalar:moments}
\end{equation}
which is an immediate consequence of the independence. In words, this means that the moments of the product are given by the product of the moments. Note that $s$ is not necessarily an integer. Thus, a description in terms of moments is straightforward. However, it might be a non-trivial task to obtain an explicit expression for the density. A main observation is that~\eqref{prologue:scalar:moments} may be interpreted as a Mellin transform; as a consequence it is often possible to find the corresponding density by means of an inverse Mellin transform. The application of the Mellin transform in this context dates back to the seminal paper~\cite{Epstein:1948}; the reader is referred to~\cite{Springer:1979}, and references within, for a thorough description of products of (real) random scalars. 

Let us illustrate the above mentioned procedure with a simple, but important, example. Namely, a product of $n$ independent Gaussian random scalars with zero mean and unit variance. From~\eqref{prologue:scalar:general}, we have
\begin{equation}
p^\beta(y_n)=\bigg[\prod_{i=1}^n\Big(\frac{\beta}{2\pi}\Big)^{\beta/2}\int_{\F_\beta} d^\beta x_i\,e^{-\beta\abs{x_i}^2/2}\bigg]\,\delta^\beta(x_n\cdots x_1-y_n).
\end{equation}
Isotropy suggests a change to polar coordinates, which after integration over the phases yields
\begin{equation}
p^\beta(y_n)=\frac1Z\bigg[\prod_{i=1}^n\frac2\beta\int_0^\infty dr_i\,e^{-\beta r_i/2}\bigg]\,\delta(r_n\cdots r_1-\abs{y_n}^{1/2}).
\end{equation}
with $Z:=\pi^{\beta-1}((2/\beta)^{(\beta-2)/2}\Gamma[\beta/2])^n$.
This expression has a natural interpretation as the probability density for a product of $n$ gamma distributed random scalars. The Mellin transform, or equivalently the $(s-1)$-th moment, is given
\begin{equation}
\cM[p^\beta](s):=\average{\abs{y_n}^{s-1}}=\frac1Z\Big[\Big(\frac2\beta\Big)^{s+1}\Gamma[s]\Big]^n.
\end{equation}
The inverse Mellin transform is immediately recognised as a Meijer $G$-function (see definition~\ref{def:special:meijer}),
\begin{equation}
p^\beta(y_n)=\frac{\pi^{1-\beta}}{\Gamma(\beta/2)^n}\Big(\frac{\beta}{2}\Big)^{\beta n/2}
\MeijerG{n}{0}{0}{n}{-}{0,\ldots,0}{\Big(\frac{\beta}{2}\Big)^n\abs{y_n}^2}.
\label{prologue:scalar:Meijer}
\end{equation}
We stress that the appearance of the Meijer $G$-function is by no means restricted to the problem involving Gaussian random scalars. On the contrary, the Meijer $G$-function possesses a prominent position in the study of products of random scalars due to its intimate relation with the Mellin transform. In fact, the Meijer $G$-function turns out to be important in the study of products of random matrices as well. A discussion of the Meijer $G$-function as well as references to the relevant literature can be found in appendix~\ref{app:special}.

\subsection{Asymptotic behaviour}

In certain cases our problem simplifies when the number of factors tends to infinity, since this allows us to employ the law of large numbers and the central limit theorem. 

Consider a set of independent and identically distributed random scalars $x_i$ ($i=1,2,\ldots$) and assume that the expectation $\average{\log\abs{x_1}}$ is finite, then it follows from the (strong) law of large numbers that the geometric mean converges almost surely,
\begin{equation}
\lim_{n\to\infty}\abs{x_n\cdots x_1}^{1/n}=\lim_{n\to\infty}\exp\bigg[\frac1n\sum_{i=1}^n\log\abs{x_1}\,\bigg] =\exp\big[\average{\log\abs{x_i}}\big].
\end{equation}
Note that the equality uses the commutative property of the scalar product; and that the absolute value is generally required in order to ensure a unique limit. 
If we additionally assume that $\average{(\log\abs{x_1}-\E\log\abs{x_1})^2}=\sigma^2<\infty$ then the central limit theorem states
\begin{equation}
\lim_{n\to\infty}\P\Big[\frac{\log \abs{x_1\cdots x_n}^{1/n}-\E\log\abs{x_1}}{\sqrt{\sigma^2/n}}\leq y\Big]
=\int_{-\infty}^yd\lambda\, \frac{e^{-\lambda^2/2}}{\sqrt{2\pi}}.
\label{prologue:scalar:CLT}
\end{equation}
In words, the law of large numbers tells us that a product of random scalars with a large number of factors grows (decays) exponentially with a growth rate $\average{\log\abs{x_1}}$, while the central limit theorem tells us that the fluctuations of the growth rate converge in distribution to a Gaussian on the scale $1/\sqrt n$. Both results are universal in the sense that they do not depend on the explicit form of the underlying distribution. 

Let us verify that the large-$n$ limit of the random scalar $y_n$ distributed with respect to the density~\eqref{prologue:scalar:Meijer} indeed behaves according to the law of large numbers and the central limit theorem. In order to do so, we introduce a new random scalar defined as $\lambda:= (\log\abs{y_n})/n$. The cumulant generating function and the corresponding $k$-th cumulant for $\lambda$ are given by
\begin{equation}
g^\beta(t)=\log\bigg(2n\pi^{\beta-1}\int_\R d\lambda\,e^{(\beta n+t)\lambda}p^\beta(e^{n\lambda})\bigg)
\qquad\text{and}\qquad
\kappa_k^\beta=\frac{\p^k g^\beta(t)}{\p t^k}\bigg\vert_{t=0},
\label{prologue:scalar:generate-cumulant}
\end{equation}
respectively. In our case, the density $p^\beta(y)$ is given by~\eqref{prologue:scalar:Meijer} and the integral within the logarithm in~\eqref{prologue:scalar:generate-cumulant} can be performed using an integration formula for the Meijer $G$-function~\eqref{special:meijer:meijer-moment}. A short computation yields
\begin{align}
\mu:=\kappa_1^\beta&=\frac12\log\frac{2}\beta+\frac12\psi\Big(\frac\beta2\Big), \qquad
\sigma^2:=\kappa_2^\beta=\frac1{4n}\psi'\Big(\frac\beta2\Big), \\
\kappa_k^\beta&=\frac12\Big(\frac1{2n}\Big)^{k-1}\psi^{(k-1)}\Big(\frac\beta2\Big)
\qquad\text{for}\qquad k\geq3.
\end{align}
Here $\psi(x)$ denotes the digamma function, while $\psi^{(k)}(x)$ is its $k$-th derivative also known as the polygamma function, see appendix~\ref{app:special}. To find the large-$n$ behaviour, we switch from the random scalar $\lambda$ to the standardised random scalar $\widetilde \lambda:= (\lambda-\mu)/\sigma$, which has zero mean and unit variance. The standardised cumulants are
\begin{equation}
\widetilde \kappa_1^\beta=0, \qquad \widetilde \kappa_2^\beta=1,
\qquad\text{and}\qquad
\widetilde \kappa_k^\beta=O(n^{1-k/2})
\quad\text{for}\quad k\geq3.
\end{equation}
We see that the higher order cumulants tend to zero as $n$ tends to infinity. It follows from standard arguments that the limiting distribution is a Gaussian. Returning from the standardised variable $\widetilde \lambda$ to the original variable $\lambda$, we find the asymptotic behaviour
\begin{equation}
\P[\lambda\leq t]\sim\sqrt{\frac{n}{2\pi\sigma^2}}\int_{-\infty}^{t}\exp\bigg[-n\frac{(\lambda-\mu)^2}{2\sigma^2}\bigg],
\end{equation}
which is in agreement with the law of large numbers and the central limit theorem.

\section{Products of independent random matrices}
\label{sec:prologue:square}

We are now ready to discuss products of random matrices which is the main topic in this thesis. Here, we focus on a few general properties related to matrix-multiplication and to isotropy, while a discussion of more classical results from random matrix theory (that is statements about spectral correlations) is postponed to the following chapters.
For an introduction to random matrix theory, we refer to~\cite{Mehta:2004,Forrester:2010,Tao:2012,AGZ:2010} and the review~\cite{ER:2005}; while a large variety of applications is discussed in a contemporary and extensive handbook~\cite{ABF:2011}. Some of the well-known properties for \emph{products} of random matrices are summarised in~\cite{CPV:1993,CKN:1986} and~\cite{NS:2006}, where the latter takes the viewpoint of free probability.

\subsection{Finite products of finite size square random matrices}

We will generally be interested in statistical properties of a product of $n$ independent square random matrices. We write this product matrix as
\begin{equation}
Y_n:= X_n\cdots X_1,
\label{prologue:square:product}
\end{equation}
where each $X_i$ ($i=1,\ldots,n$) is a real ($\beta=1$), complex ($\beta=2$) or quaternionic ($\beta=4$) $N\times N$ random matrix distributed with respect to a probability density $P_i^\beta(X_i)$, which by assumption is integrable with respect to the flat measure on the corresponding matrix space. For quaternions we use the canonical representation as $2\times 2$ matrices, see appendix~\ref{app:decompositions}. Thus, an $N\times N$ quaternionic matrix should be understood as a $2N\times 2N$ complex matrix which satisfies the quaternionic symmetry requirements. 

The probability density for the matrix $Y_n$ is formally defined as
\begin{equation}
P_{\{n,\ldots,1\}}^\beta(Y_n):=
\bigg[\prod_{i=1}^n\int_{\F_\beta^{N\times N}} d^\beta X_i\,P_i^\beta(X_i)\bigg]\,\delta^\beta(X_n\cdots X_1-Y_n),
\label{prologue:square:general}
\end{equation}
where $\delta^\beta(x)$ is the Dirac delta function of matrix argument and $d^\beta X$ denotes the flat measure on space of real, complex or quaternionic $N\times N$ matrices, i.e. on $\F_\beta^{N\times N}$ with $\F_\beta:=\R,\C,\H$. The multi-index on the density~\eqref{prologue:square:general} incorporates the ordering of the factors; this is necessary since matrix-multiplication is non-commutative.

By assumption, the matrices $X_i$ ($i=1,\ldots,n$) are non-singular almost surely and an alternative expression for the density~\eqref{prologue:scalar:general} can be found by a change of variables, $Y_{i+1}:= X_{i+1}Y_{i}$ with $Y_1:= X_1$. We find
\begin{align}
P_{\{n,\ldots,1\}}^\beta(Y_n)&=P_n^\beta *\cdots *P_1^\beta(Y_n) \nn \\
&=\bigg[\prod_{i=1}^{n-1}\int_{\gGL(N,\F_\beta)} d\mu(Y_i) P_{i+1}^\beta(Y_{i+1}Y_i^{-1})\bigg]P_1^\beta(Y_1),
\label{prologue:square:density-product}
\end{align}
where `$*$' and $d\mu(Y):= d^\beta Y/(\det Y^\dagger Y)^{\beta N/2\gamma}$ ($\gamma=1,1,2$ for $\beta=1,2,4$) denote the convolution and the Haar measure on the the group of real, complex or quaternionic invertible matrices, respectively. 

Both~\eqref{prologue:square:general} and~\eqref{prologue:square:density-product} appear as direct generalisations of the formulae for products of random scalars. However, this similarity is to some extent deceiving, since we typically are interested in spectral properties rather than the matrices themselves. 

\subsection{Weak commutation relation for isotropic random matrices}
\label{sec:prologue:weak}

One of the key differences between products of random scalars and random matrices is that matrix-multiplication generally is non-commutative. Nonetheless, we may consider matrix products~\eqref{prologue:square:product} which commute in a weak sense, such that
\begin{equation}
X_n\cdots X_1\deq X_{\sigma(n)}\cdots X_{\sigma(1)}
\label{prologue:square:weak-X}
\end{equation}
for any permutation $\sigma\in S_n$. The trivial example is when $X_i$ ($i=1,\ldots,n$) are independent and identically distributed (square) random matrices. In this section we will show that the restriction to identical distributions may be replaced by a symmetry requirement. We follow the idea in~\cite{IK:2014}.

\begin{definition}\label{def:prologue:isotropy}
Let $X$ be an $N\times M$ continuous random matrix distributed according to a probability density $P^\beta(X)$ on the matrix space $\F_\beta^{N\times M}$ with $\F_{\beta=1,2,4}=\R,\C,\H$. If
\begin{equation}
P^\beta(UXV)=P^\beta(X)\quad\text{for all}\quad (U,V)\in \gU(N,\F_\beta)\times\gU(M,\F_\beta) \\
\label{prologue:square:isotropy}
\end{equation}
then we say that the density $P^\beta(X)$ is isotropic, while the matrix $X$ is said to be statistically isotropic. Above, we have used the notation
\begin{equation}
\gU(N,\F_{\beta=1,2,4})=\gO(N),\gU(N),\gUSp(2N)
\end{equation}
for the maximal compact subgroups. 
\end{definition}

\begin{remark}
It is evident that isotropy implies that the density only depends on the singular values of its matrix argument.
\end{remark}

\begin{proposition}
If $\{X_i\}_{i=1,\ldots,n}$ is a set of independent statistically isotropic square random matrices, then the weak commutation relation~\eqref{prologue:square:weak-X} holds.
\end{proposition}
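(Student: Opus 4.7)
My plan is to reduce the general permutation to a sequence of adjacent transpositions and then handle the two-factor case via the singular value decomposition together with the bi-invariance of Haar measure. Since $S_n$ is generated by adjacent transpositions $(i,i+1)$, it suffices to show, for each $1 \leq i \leq n-1$,
\[X_n \cdots X_{i+1} X_i \cdots X_1 \deq X_n \cdots X_i X_{i+1} \cdots X_1.\]
By the mutual independence of the $X_j$, the pair $(X_i, X_{i+1})$ is independent of the remaining factors; conditioning on the outer factors, applying the two-factor swap $X_{i+1}X_i \deq X_i X_{i+1}$, and integrating back delivers the equality. Composing such swaps realises any $\sigma \in S_n$.

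For the two-factor case I would first observe that both $X_{i+1}X_i$ and $X_i X_{i+1}$ are themselves isotropic. Indeed, for deterministic unitaries $U,V \in \gU(N,\F_\beta)$,
\[U(X_{i+1}X_i)V = (U X_{i+1})(X_i V),\]
and the individual isotropy of each factor together with their independence shows the right-hand side to be distributed as $X_{i+1}X_i$. Next, I would appeal to the SVD for isotropic matrices: each $X_j$ admits the representation $X_j \deq U_j D_j V_j$ with $U_j, V_j$ independent Haar on $\gU(N,\F_\beta)$, $D_j$ the diagonal matrix of singular values, and all three factors mutually independent. Substituting yields
\[X_{i+1}X_i \deq U_{i+1}\, D_{i+1}\, W\, D_i\, V_i, \qquad W := V_{i+1}U_i,\]
where $W$ is Haar-distributed by bi-invariance and jointly independent of $U_{i+1}$, $V_i$, $D_i$, $D_{i+1}$.

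Since the outer unitaries $U_{i+1}$ and $V_i$ leave singular values invariant, the singular values of $X_{i+1}X_i$ are distributed as those of $M := D_{i+1} W D_i$. But $M^\dagger = D_i W^\dagger D_{i+1}$ (the $D_j$ being real and diagonal), the map $M \mapsto M^\dagger$ preserves singular values, and $W^\dagger \deq W$ by Haar invariance. The same computation applied to $X_i X_{i+1}$ therefore yields the identical singular-value distribution. To upgrade this coincidence of singular-value laws into equality in distribution of the matrices themselves, I would invoke the uniqueness principle that an isotropic random matrix is determined by its singular-value distribution: the density, written in SVD coordinates, factorises as a singular-value density times the bi-invariant measure on the unitary pair modulo the stabilizer.

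The main obstacle is precisely this last uniqueness step. Making the SVD representation of a general isotropic random matrix fully rigorous requires careful bookkeeping of the stabilizer (a product of $\gU(1)$ or $\gUSp(2)$ factors for generic simple singular values) and the corresponding Jacobian, which in turn legitimises the representation $X_j \deq U_j D_j V_j$ used above. This is technical but standard and is collected in the appendix on matrix decompositions. Once it is in hand, the remaining manipulations are short applications of bi-invariance of Haar measure and independence.
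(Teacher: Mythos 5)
Your proof is correct, but it takes a genuinely different route from the one in the paper. The paper works directly with the matrix density written against a delta function: from the singular value decomposition of the two-factor block $X_{j+1}X_j = V\Sigma U$ it derives the algebraic identity $X_{j+1}X_j = VU\,X_j^\dagger X_{j+1}^\dagger\,VU$, inserts this into the delta function, and uses isotropy to absorb the (data-dependent) unitaries $U,V$ into the neighbouring integration variables; a second singular value decomposition, applied to each individual factor, then strips the Hermitian conjugates in the same way. You instead turn the singular value decomposition into a probabilistic representation, $X_j\deq U_jD_jV_j$ with mutually independent Haar unitaries and independent singular-value matrix, and reduce the adjacent transposition to a coincidence of singular-value laws — exploiting that $D_{i+1}WD_i$ and its adjoint $D_iW^\dagger D_{i+1}$ share singular values while $W^\dagger\deq W$ — after which the observation that an isotropic distribution is determined by its singular-value law closes the two-factor case, and independence lifts it to the full product. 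What each buys: the paper's argument never needs the representation theorem as a black box — it uses only the deterministic SVD identity plus the change-of-variables absorption — so it stays closer to the density expression~\eqref{prologue:square:general}; your argument isolates the conceptual invariant behind the result (the singular-value distribution) and makes the mechanism of the swap transparent, at the cost of invoking the factorisation of the isotropic density into a singular-value part and a bi-invariant unitary part. Both approaches rest on the Jacobian of the singular value decomposition, and the stabiliser bookkeeping you flag at the end is precisely what proposition~\ref{thm:decomp:SVD} in appendix~\ref{app:decompositions} provides, so the representation you assume is available within the paper's own toolkit.
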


\begin{proof}
Our starting point is the density~\eqref{prologue:square:general} which is valid for independent matrices. It is sufficient to show that
\begin{equation}
P_{\{n,\ldots,j+1,j,\ldots,1\}}^\beta(Y_n)=P_{\{n,\ldots,j,j+1,\ldots,1\}}^\beta(Y_n)
\label{prologue:square:reorder}
\end{equation}
for any $j$, since such permutations are the generators of the permutation group, $S_n$.
We use that for any two matrices $X_j$ and $X_{j+1}$ there exists a singular value decomposition such that
\begin{equation}
X_{j+1}X_j=V\Sigma U=V\Sigma^\dagger U=VU(X_{j+1}X_{j})^\dagger VU=VUX_j^\dagger X_{j+1}^\dagger VU,
\label{prologue:square:XXtoXX}
\end{equation}
where $\Sigma$ is a positive semi-definite diagonal matrix, while $U$ and $V$ are orthogonal ($\beta=1$), unitary ($\beta=2$), or unitary symplectic ($\beta=4$) matrices. We insert identity~\eqref{prologue:square:XXtoXX} into the delta function in~\eqref{prologue:square:general} and use the isotropy~\eqref{prologue:square:isotropy} to absorb the unitary transformations $U$ and $V$ into the measure. This yields
\begin{equation}
P_{\{n,\ldots,j+1,j,\ldots,1\}}^\beta(Y_n)=\bigg[\prod_{i=1}^n\int d^\beta X_i\,P_i^\beta(X_i)\bigg]\,
\delta^\beta(X_n\cdots X_{j}^\dagger X_{j+1}^\dagger\cdots X_1-Y_n).
\label{prologue:square:density-step}
\end{equation}
We can now repeat the same idea for the individual matrices $X_j$ and $X_{j+1}$. Similar to~\eqref{prologue:square:XXtoXX}, we have
\begin{equation}
X_j^\dagger=V_j\Sigma_j U_j=V_jU_jX_jV_jU_j,
\label{prologue:square:XtoX}
\end{equation}
where $\Sigma_j$ is a positive semi-definite diagonal matrix, while $U_j$ and $V_j$ are orthogonal, unitary, or unitary symplectic matrices. We insert~\eqref{prologue:square:XtoX} and an equivalent identity for $X_{j+1}^\dagger$ into the delta function in~\eqref{prologue:square:density-step}. As before, the unitary transformations can be absorbed into the measure due to isotropy, hence
\begin{equation}
P_{\{n,\ldots,j+1,j,\ldots,1\}}^\beta(Y_n)=\bigg[\prod_{i=1}^n\int d^\beta X_i\,P_i^\beta(X_i)\bigg]\,
\delta^\beta(X_n\cdots X_{j} X_{j+1}\cdots X_1-Y_n).
\end{equation}
This is the identity~\eqref{prologue:square:reorder}, which proves the weak commutation relation for isotropic densities. 
\end{proof}

\subsection{From rectangular to square matrices}
\label{sec:prologue:rectangular}

So far we have looked solely on products of square matrices. However, it is desirable to extend the description to the general case including \emph{rectangular} matrices. Let us consider a product of independent random matrices,
\begin{equation}
\widetilde Y_n:= \widetilde X_n\cdots \widetilde X_1,
\label{prologue:rectangular:product}
\end{equation}
where each $\widetilde X_i$ ($i=1,\ldots,n$) is a real, complex or quaternionic $N_i\times N_{i-1}$ random matrix distributed with respect to a probability density $\widetilde P_i^\beta(\widetilde X_i)$.

The probability density for the product matrix is defined like in the square case,
\begin{equation}
\widetilde P_{\{n,\ldots,1\}}^\beta(\widetilde Y_n):=\bigg[\prod_{i=1}^n\int_{\F_\beta^{N_i\times N_{i-1}}} d^\beta\widetilde X_i\,\widetilde P_i^\beta(\widetilde X_i)\bigg]\,
\delta^\beta(\widetilde X_n\cdots \widetilde X_1-\widetilde Y_n).
\label{prologue:rectangular:general}
\end{equation}
However, we have no direct analogue of~\eqref{prologue:square:density-product} nor does isotropy imply weak commutativity in the sense of~\eqref{prologue:square:weak-X}.
In order to reclaim these useful properties of square matrices, we will reformulate the product of rectangular matrices defined through~\eqref{prologue:rectangular:product} and~\eqref{prologue:rectangular:general} in terms of square matrices. We follow the idea presented in~\cite{IK:2014}.

The generalised block QR decomposition (proposition~\ref{thm:decomp:gen-QR-block} and corollary~\ref{thm:decomp:gen-QR-block-2}) tells us that given a product matrix~\eqref{prologue:rectangular:product} with the smallest matrix dimension denoted by $N$, we can find a pair of orthogonal ($\beta=1$), unitary ($\beta=2$), or unitary symplectic ($\beta=4$) matrices, $\widetilde U_1$ and $\widetilde U_n$, so that
\begin{equation}
\widetilde Y_n=\widetilde U_n\begin{bmatrix}Y_n & 0 \\ 0 & 0 \end{bmatrix}(\widetilde U_1)^{-1},
\end{equation}
where $Y_n$ is an $N\times N$ matrix. This immediately reveals that $\widetilde Y_n$ has at most rank $N$, or equivalently that at least $\max\{N_n,N_0\}-N$ singular values are equal to zero (a similar statement may be formulated for the eigenvalues if $N_n=N_0$). If we additionally require that the individual matrices $\widetilde X_i$ ($i=1,\ldots,n$) are statistically isotropic, then we can establish a stronger statement:

\begin{proposition}\label{thm:pro:rect-square}
Consider a product of independent random matrices~\eqref{prologue:rectangular:product} with matrix density~\eqref{prologue:rectangular:general} where each of the individual densities, $\widetilde P_i(\widetilde X_i)$, is isotropic. Let $N=N_j$ denote the smallest matrix dimension (not necessarily unique) and let $\nu_i$ ($i=1,\ldots,n$) be a collection of non-negative integers such that $N_i=N+\nu_{i+1}$ for $i<j$ and $N_i=N+\nu_{i}$ for $i>j$, then
\begin{equation}
\int_{\F_\beta^{N_0\times N_n}} d^\beta\widetilde Y_n\widetilde P_{\{n,\ldots,1\}}^\beta(\widetilde Y_n)\delta^\beta\bigg(\widetilde Y_n-\bigg[\,\begin{matrix}Y_n & 0 \\ 0 & 0 \end{matrix}\,\bigg]\bigg)
=P_{\{n,\ldots,1\}}^\beta(Y_n)
\label{prologue:rectangular:rect-to-square}
\end{equation}
where $Y_n$ is an $N\times N$ matrix and
\begin{equation}
P_{\{n,\ldots,1\}}^\beta(Y_n):=\bigg[\prod_{i=1}^n\int_{\F_\beta^{N\times N}} d^\beta X_i\,P_i^\beta(X_i)\bigg]\,\delta^\beta(X_n\cdots X_1-Y_n)
\label{prologue:rectangular:square-density}
\end{equation}
where $P_i(X_i)$ are probability densities for a family of $N\times N$ matrices, $X_i$. Moreover, the densities are explicitly given by
\begin{subequations}
\label{prologue:rectangular:induced}
\begin{align}
P_{i}^\beta(X_i)&=\vol_i\,\det (X_i^\dagger X_i)^{\beta\nu_{i}/2\gamma}\int_{\F_\beta^{\nu_{i+1}\times(N+\nu_i)}} d^\beta T_i\,
\widetilde P_i^\beta\bigg( \bigg[\begin{array}{@{}c @{}} \begin{matrix} X_{i} & 0 \end{matrix} \\ \hline T_i \end{array}\bigg] \bigg),
\qquad i<j \label{prologue:rectangular:induced-a}\\
P_i^\beta(X_i)&=\vol_i\,\det (X_i^\dagger X_i)^{\beta\nu_i/2\gamma} \int_{\F_\beta^{(N+\nu_{i})\times\nu_{i-1}}} d^\beta T_i\,
\widetilde P_i^\beta\bigg( \bigg[\begin{matrix} X_{i} \\ 0 \end{matrix}\, \bigg\vert\, T_i\,\bigg] \bigg),
\qquad i>j\label{prologue:rectangular:induced-b}
\end{align}
\end{subequations}
with
\begin{equation}
\vol_i:=\vol [\gU(N+\nu_{i},\F_\beta)/\gU(N,\F_\beta)\times\gU(\nu_{i},\F_\beta)]
\label{prologue:rectangular:volume}
\end{equation}
denoting the volumes of the Grassmannians.
\end{proposition}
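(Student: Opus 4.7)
The plan is to perform a change of variables on each rectangular factor $\widetilde X_i$ which extracts an $N\times N$ ``core'' $X_i$ together with auxiliary data $T_i$, and then show by a direct block multiplication that the product of cores equals $Y_n$ while the $T_i$ decouple.

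Concretely, for each $i=1,\ldots,n$ I would invoke the generalised block QR decomposition of Appendix~\ref{app:decompositions} to write $\widetilde X_i=R_iV_i^\dagger$ (for $i<j$) or $\widetilde X_i=U_iR_i$ (for $i>j$), where $R_i$ is precisely the structured matrix appearing in the argument of $\widetilde P_i^\beta$ on the right-hand side of~\eqref{prologue:rectangular:induced-a} (respectively~\eqref{prologue:rectangular:induced-b}), and $V_i$ or $U_i$ is a representative of the Grassmannian $\gU(N+\nu_i,\F_\beta)/(\gU(N,\F_\beta)\times\gU(\nu_i,\F_\beta))$; the bottleneck $\widetilde X_j$ is treated by the same recipe with the convention that one of $\nu_j,\nu_{j+1}$ is set to zero, making the corresponding $T_j$-block empty. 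The usual measure-theoretic calculation for a polar/QR-type decomposition produces the Jacobian factor $\det(X_i^\dagger X_i)^{\beta\nu_i/2\gamma}$, while Haar integration over the Grassmannian fibre contributes the volume $\vol_i$ of~\eqref{prologue:rectangular:volume}. Since $\widetilde P_i^\beta$ is isotropic, it depends on $\widetilde X_i$ only through $R_i$, so the unitaries $V_i,U_i$ survive only inside the delta-function constraint $\widetilde X_n\cdots\widetilde X_1=\bigl[\begin{smallmatrix}Y_n&0\\0&0\end{smallmatrix}\bigr]$.

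The key computation is then to show that the structured factors themselves multiply cleanly. Working inductively outwards from the bottleneck $j$, the block-triangular form of the $R_i$'s gives $R_n\cdots R_{j+1}=\bigl[\begin{smallmatrix}X_n\cdots X_{j+1}\\0\end{smallmatrix}\bigr]$ on the left, while $R_{j-1}\cdots R_1$ has top row equal to $[X_{j-1}\cdots X_1\mid 0]$ on the right; the zero block inside the bottleneck $\widetilde X_j=[X_j\mid 0]$ then annihilates every contribution from the auxiliary $T_i$ matrices, so the full product collapses to $R_n\cdots R_1=\bigl[\begin{smallmatrix}X_n\cdots X_1&0\\0&0\end{smallmatrix}\bigr]$. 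Consequently the delta function reduces to $\delta^\beta(X_n\cdots X_1-Y_n)$ on the $N\times N$ core, the $T_i$-integrations may be performed independently, and assembling the Jacobian factors with the Grassmannian volumes identifies each surviving integrand as the induced density $P_i^\beta(X_i)$ of~\eqref{prologue:rectangular:induced}; what remains is exactly the defining integral~\eqref{prologue:rectangular:square-density} of $P_{\{n,\ldots,1\}}^\beta(Y_n)$.

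I expect the main obstacle to be the absorption of the interior Grassmannian unitaries $U_i,V_i$ which survive inside the delta function: one has to argue that each can be moved across an adjacent factor and reabsorbed via the isotropy of that neighbour without conflicting with the extraction of that neighbour's own core. The cleanest way is to process the factors outwards from $j$ and to use the fact that the stabiliser $\gU(N,\F_\beta)\times\gU(\nu_i,\F_\beta)$ of the structured form $R_i$ is precisely what has already been quotiented out in the Grassmannian, so the absorption acts only on the Grassmannian fibre and leaves $X_i$ and $T_i$ untouched. A weak-commutation-style argument in the spirit of section~\ref{sec:prologue:weak}, adapted to the rectangular setting, then pins the two outermost unitaries to the identity by the block structure of the target and completes the reduction.
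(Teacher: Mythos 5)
Your overall strategy matches the paper's: extract an $N\times N$ core from each factor via a block QR-type parametrisation, use isotropy to make each density depend only on the structured part, integrate out the unitaries to produce Grassmannian volumes, and observe that the cores multiply cleanly to give $Y_n$. The route you take to get there, however, differs in one important respect and omits one step.

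You cite ``the generalised block QR decomposition of Appendix~\ref{app:decompositions}'' but then write each factor in a one-sided form $\widetilde X_i = R_iV_i^\dagger$ or $\widetilde X_i = U_iR_i$, with one unitary per factor. That is the \emph{ordinary} block QR decomposition (Proposition~\ref{thm:decomp:QR-block}) applied independently to each $\widetilde X_i$; it is not the generalised version (Proposition~\ref{thm:decomp:gen-QR-block} and Corollary~\ref{thm:decomp:gen-QR-block-2}), which instead produces a chained two-sided form $\widetilde X_i = \widetilde U_i R_i \widetilde U_{i-1}^{-1}$ in which the same unitary $\widetilde U_i$ appears on the right of factor $i+1$ and on the left of factor $i$, so that the interior unitaries cancel telescopically in the product $\widetilde X_n\cdots\widetilde X_1$. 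The obstacle you flag --- ``interior Grassmannian unitaries which survive inside the delta function'' --- is precisely what the chained form is engineered to eliminate; it does not arise in the paper's proof at all. Your proposed fix (process factors outwards from $j$, absorb each unitary into its neighbour using that neighbour's isotropy before decomposing it) is valid and is, in fact, exactly the inductive argument by which Proposition~\ref{thm:decomp:gen-QR-block} is proved in the appendix. So your proposal is correct, but it re-derives the chained decomposition inline inside the density integral rather than invoking it once as a black box; the paper's route is more modular and keeps the measure-theoretic bookkeeping (the Jacobian~\eqref{decomp:decomp:gen-QR-block-jacob} and the identification of the Grassmannian volumes $\vol_i$) confined to the appendix.

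Two smaller points. First, you do not address the normalisation of the induced densities $P_i^\beta(X_i)$; the paper closes its proof by verifying $\int d^\beta X_i\,P_i^\beta(X_i)=1$ using isotropy once more to reassemble the block QR form, and this is needed to justify calling $P_i^\beta$ a probability density in~\eqref{prologue:rectangular:square-density}. Second, your treatment of the bottleneck is right in spirit (``one of $\nu_j,\nu_{j+1}$ is set to zero, making the corresponding $T$-block empty''), but it is worth being explicit that the paper achieves this by splitting the product at $j$ and applying Proposition~\ref{thm:decomp:gen-QR-block} to $\widetilde X_n\cdots\widetilde X_{j+1}$ with its innermost unitary fixed to the identity, and Corollary~\ref{thm:decomp:gen-QR-block-2} to $\widetilde X_j\cdots\widetilde X_1$ with its outermost unitary fixed to the identity, so that the two chains glue together at the $N\times N$ bottleneck without an extra unitary in between.
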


\begin{proof} 
We factorise the product~\eqref{prologue:rectangular:product} into two partial products $\widetilde X_n\cdots\widetilde X_{j+1}$ and $\widetilde X_j\cdots\widetilde X_{1}$. From proposition~\ref{thm:decomp:gen-QR-block} and corollary~\ref{thm:decomp:gen-QR-block-2}, we have the parametrisation
\begin{equation}
\widetilde X_i=\widetilde U_i\bigg[\begin{array}{@{}c @{}} \begin{matrix} X_{i} & 0 \end{matrix} \\ \hline T_i \end{array}\bigg](\widetilde U_{i-1})^{-1}
\quad\text{and}\quad
\widetilde X_i=\widetilde U_{i+1}\bigg[\begin{matrix} X_{i} \\ 0 \end{matrix}\, \bigg\vert\, T_i\,\bigg](\widetilde U_{i})^{-1}
\end{equation}
for $i<j$ and $i>j$, respectively. Here $X_{i}$ are $N\times N$ matrices and $\widetilde U_{i}\in\gU(N+\nu_{i},\F_\beta)/\gU(N,\F_\beta)\times\gU(\nu_{i},\F_\beta)$, while each $T_i$ is either a $\nu_{i+1}\times (N+\nu_{i})$ matrix ($i<j$) or an $(N+\nu_{i})\times\nu_{i-1}$ matrix ($i>j$). The corresponding change of measure is
\begin{equation}
\prod_{i=1}^nd^\beta\widetilde X_i
=\prod_{i=1}^n\det (X_i^\dagger X_i)^{\beta\nu_{i}/2\gamma}d^\beta X_id^\beta T_id\mu(\widetilde U_i)
\end{equation}
with $d\mu(\widetilde U_i):=(\widetilde U_i)^{-1}d\widetilde U_i$ denoting the Haar measure on $\gU(N+\nu_{i},\F_\beta)/\gU(N,\F_\beta)\times\gU(\nu_{i},\F_\beta)$.
We insert this parametrisations into the density~\eqref{prologue:rectangular:general} and use isotropy to absorb the unitary transformation into the measures, which yields
\begin{align}
\widetilde P_{\{n,\ldots,1\}}^\beta(\widetilde Y_n)
&=\bigg[\prod_{i=1}^j\int d^\beta X_i\det (X_i^\dagger X_i)^{\beta\nu_{i}/2\gamma} \int d^\beta T_i\int d\mu(\widetilde U_i)
\widetilde P_i^\beta\bigg( \bigg[\begin{array}{@{}c @{}} \begin{matrix} X_{i} & 0 \end{matrix} \\ \hline T_i \end{array}\bigg] \bigg)\bigg]\nn\\
&\times\bigg[\prod_{i=j+1}^n\int d^\beta X_i\det (X_i^\dagger X_i)^{\beta\nu_{i}/2\gamma} \int d^\beta T_i\int d\mu(\widetilde U_i)
\widetilde P_i^\beta\bigg( \bigg[\begin{matrix} X_{i} \\ 0 \end{matrix}\, \bigg\vert\, T_i\,\bigg] \bigg)\bigg]\nn\\
&\times\delta^\beta\bigg(\bigg[\begin{matrix}X_n\cdots X_1 & 0 \\ 0 & 0 \end{matrix}\,\bigg]-\widetilde Y_n\bigg).
\label{prologue:rectangular:square-rectangular}
\end{align}
We can now insert this expression into~\eqref{prologue:rectangular:rect-to-square}. The formulae~\eqref{prologue:rectangular:square-density} and~\eqref{prologue:rectangular:induced} are obtained after integration over $\widetilde Y_n$ and $\widetilde U_i$ ($i=1,\ldots,n$).

It remains to verify that the densities~\eqref{prologue:rectangular:induced} are normalised to unity. In order to show this, we introduce matrices $\widetilde V_i\in\gU(N+\nu_{i},\F_\beta)/\gU(N,\F_\beta)\times\gU(\nu_{i},\F_\beta)$. By definition, we have $\vol_i=\int d\mu(\widetilde V_i)$ where $d\mu(\widetilde V_i):=[\widetilde V_i^{-1}d\widetilde V_i]$ is the Haar measure. It follows that
\begin{equation}
\int d^\beta X_i P^\beta_i(X_i)=\frac{1}{\vol_i}\int d\mu(\widetilde V_i)\int d^\beta X_i P^\beta_i(X_i)
\end{equation}
and by isotropy that
\begin{equation}
\int d^\beta X_i P^\beta_i(X_i)=\int d\mu(\widetilde V_i)\int d^\beta X_i\det (X_i^\dagger X_i)^{\beta\nu_{i}/2\gamma}\int d^\beta T_i\,
\widetilde P_i^\beta\bigg(\bigg[\begin{array}{@{}c @{}}\begin{matrix}X_{i}&0\end{matrix}\\ \hline T_i\end{array}\bigg]\widetilde V_i\bigg)
\end{equation}
for $i<j$ (with an equivalent expression for $i>j$). Here, we recognise the right hand side as a block-QR decomposition, thus
\begin{equation}
\int d^\beta X_i P^\beta_i(X_i)=\int d^\beta\widetilde  X_i\widetilde  P^\beta_i(\widetilde X_i)
\end{equation}
and the normalisation follows from the definition of $\widetilde  P^\beta_i(\widetilde X_i)$.
\end{proof}

\begin{corollary}\label{thm:pro:rect-weak}
The matrix densities~\eqref{prologue:rectangular:induced} are isotropic.
\end{corollary}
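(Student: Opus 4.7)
The plan is to reduce the bi-unitary invariance of $P_i^\beta$ to the assumed isotropy of $\widetilde P_i^\beta$, by extending any pair $(U,V) \in \gU(N,\F_\beta) \times \gU(N,\F_\beta)$ acting on the $N \times N$ block $X_i$ to a pair of block-diagonal unitaries acting on the enlarged rectangular matrix that appears as the argument of $\widetilde P_i^\beta$.

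I treat case $i<j$ in \eqref{prologue:rectangular:induced-a}; case $i>j$ follows by the transposed argument. The prefactor $\det(X_i^\dagger X_i)^{\beta\nu_i/2\gamma}$ is manifestly invariant under $X_i \mapsto UX_iV$, so only the $T_i$-integral requires work. I split the auxiliary variable as $T_i = [T_i^{(1)}\,|\,T_i^{(2)}]$ with $T_i^{(1)} \in \F_\beta^{\nu_{i+1} \times N}$ and $T_i^{(2)} \in \F_\beta^{\nu_{i+1} \times \nu_i}$, and introduce
\begin{equation*}
\widetilde U := \diag(U,\one_{\nu_{i+1}}) \in \gU(N+\nu_{i+1},\F_\beta), \qquad
\widetilde V := \diag(V,\one_{\nu_i}) \in \gU(N+\nu_i,\F_\beta).
\end{equation*}
A direct check shows that conjugating the block matrix in \eqref{prologue:rectangular:induced-a} by $(\widetilde U,\widetilde V)$ sends the top-left entry $X_i$ to $UX_iV$ and the bottom-left block $T_i^{(1)}$ to $T_i^{(1)}V$, while preserving $T_i^{(2)}$ and the upper-right zero block.

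Substituting $X_i \mapsto UX_iV$ in \eqref{prologue:rectangular:induced-a} and performing the change of variable $T_i^{(1)} \mapsto T_i^{(1)} V^{-1}$—whose Jacobian is one, since right multiplication by $V \in \gU(N,\F_\beta)$ preserves the flat measure on $\F_\beta^{\nu_{i+1} \times N}$—rewrites the integrand as $\widetilde P_i^\beta(\widetilde U M \widetilde V)$, where $M$ denotes the original block matrix appearing in \eqref{prologue:rectangular:induced-a} with argument $X_i$. Isotropy of $\widetilde P_i^\beta$ then collapses $\widetilde U M \widetilde V$ back to $M$, yielding $P_i^\beta(UX_iV) = P_i^\beta(X_i)$. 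Case $i>j$ runs identically after splitting $T_i$ by rows and padding $(U,V)$ with identity blocks of sizes $\nu_i$ and $\nu_{i-1}$. No genuine obstacle appears: the only substantive step is the Jacobian-preserving change of variable in $T_i^{(1)}$ (or its row analog), and the only bookkeeping is matching the sizes of the padding identity blocks so that $\widetilde U$ and $\widetilde V$ land in the correct maximal compact subgroups.
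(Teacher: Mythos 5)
Your proof is correct and takes the same route as the paper, which merely asserts in one line that isotropy of $P_i^\beta$ follows from isotropy of $\widetilde P_i^\beta$ together with invariance of the determinantal prefactor. You have simply spelled out the details the paper leaves implicit: the padding of $(U,V)$ to block-diagonal unitaries $(\widetilde U,\widetilde V)$ in the appropriate maximal compact subgroups, and the measure-preserving change of variable in the sub-block $T_i^{(1)}$ needed to absorb the residual action on $T_i$.
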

\begin{proof}
The isotropy of $P_i(X_i)$ follows from the isotropy of $\widetilde P_i(\widetilde X_i)$ together with invariance of the determinantal prefactor under bi-unitary transformations.
\end{proof}

\begin{remark}
Proposition~\ref{thm:pro:rect-square} tells us, that given a product of independent statistically isotropic \emph{rectangular} random matrices, we can find a product of independent \emph{square} matrices which has the same spectral properties (up to a number of trivial zeros). Furthermore, corollary~\ref{thm:pro:rect-weak} tells us that the square matrices inherit isotropy from their rectangular counter parts and, thus, the square matrices commute in a weak sense.
\end{remark}

\section{Gaussian random matrix ensembles}
\label{sec:prologue:gauss}

In the rest of this thesis, we will focus on Gaussian random matrix ensembles. For future reference, we summarise the precise definitions for these ensembles in this section.

\begin{definition}\label{def:pro:ginibre}
The real, complex, and quaternionic Ginibre ensembles (or non-Hermitian Gaussian ensembles) are defined as the space of $N\times M$ matrices $\widetilde X$ whose entries are independent and identically distributed real-, complex-, or quaternion-valued Gaussian random variables with zero mean and unit variance, i.e. matrices distributed according to the density
\begin{equation}
\widetilde P_\gauss^\beta(\widetilde X)
=\bigg(\frac\beta{2\pi}\bigg)^{\beta NM/2}\exp\Big[-\frac{\beta}{2\gamma}\tr \widetilde X^\dagger \widetilde X\Big]
\label{prologue:outline:gauss}
\end{equation}
on the matrix space $\F_\beta^{N\times M}$ with $\F_\beta=\R,\C,\H$ and $\gamma=1,1,2$ for $\beta=1,2,4$.
\end{definition}

\begin{remark}
Note that~\eqref{prologue:outline:gauss} is an isotropic density. In the light of section~\ref{sec:prologue:square}, this will obviously be an important observation when considering products.
\end{remark}

Typically, the eigenvalues of a (square) Ginibre matrix are scattered in the complex plane due to the non-Hermiticity. More precisely, the number of real eigenvalues is zero almost surely for complex and quaternionic Ginibre matrices~\cite{Ginibre:1965,Mehta:2004}, but given a real Ginibre matrix then there is non-zero probability that all eigenvalues are real, however, this probability tends to zero as the matrix dimension increases~\cite{EKS:1994}. Rather than considering the generally complex eigenvalue spectra of Ginibre matrices, we can use $\widetilde X$ to construct other (Hermitian) matrix ensembles. We note that the ensemble of Gaussian non-Hermitian matrices, $\widetilde X$, may be considered as a building block for other Gaussian ensembles through the following constructions:

\begin{flexlabelled}{itshape}{1.75em}{.5em}{0pt}{2.25em}{0pt}

\item[Wishart ensemble.] The ensembles of positive semi-definite Hermitian matrix constructed as $\widetilde X^\dagger\widetilde X$ ($\beta=1,2,4$) are known as the Wishart ensembles~\cite{Wishart:1928}; their eigenvalues are identical to the squared singular values of $\widetilde X$ except for $\gamma(M-N)$ trivial zeros if $M>N$. Note that the Wishart ensembles are essentially equivalent to the so-called chiral Gaussian ensembles~\cite{Verbaarschot:1994}. We will return to Wishart matrices and their product generalisations in chapter~\ref{chap:singular}.

\item[Hermitian Gaussian ensembles.] If $X=\widetilde X$ is a square matrix, then we can construct ensembles of Hermitian matrices by $H:=(X+X^\dagger)/2$. These constitute the Gaussian orthogonal ($\beta=1$), unitary ($\beta=2$), and symplectic ($\beta=4$) ensembles (GOE, GUE, and GSE). We refer to~\cite{Mehta:2004,Forrester:2010} for an elaborate description.

\item[Elliptic Gaussian ensembles.] Let $\tau\in[-1,+1]$, if $X=\widetilde X$ is a square matrix, then we can construct an ensemble of matrices with the form:
\begin{equation}
E:=\sqrt{\frac{1+\tau}{2}}\frac{X+X^\dagger}{2}+\sqrt{\frac{1-\tau}{2}}\frac{X-X^\dagger}{2}.
\end{equation}
This is the so-called Gaussian elliptic ensemble~\cite{SCSS:1988}, which reduces to the (square) Ginibre ensemble for $\tau=0$ and to the Hermitian Gaussian ensembles for $\tau=1$. 
\end{flexlabelled}
Here, Wishart ensembles preserve isotropy, while the Hermitian and elliptic Gaussian ensembles explicitly break the symmetry from a bi-unitary to a (single) unitary invariance, i.e.
\begin{equation}
H\mapsto UHU^{-1}
\qquad\text{and}\qquad
E\mapsto UEU^{-1}
\qquad\text{for}\qquad
U\in\gU(N,\F_\beta)
\end{equation}
are still invariant transformations.

\begin{definition}\label{def:pro:induced}
Let $\nu$ be a non-negative constant, then the real, complex, and quaternionic induced Ginibre ensembles with charge $\nu$ are defined as matrices distributed according to the density
\begin{equation}
P_\nu^\beta(X)=\frac1{Z^\beta}\det (X^\dagger X)^{\beta\nu/2\gamma}\exp\Big[-\frac{\beta}{2\gamma}\tr X^\dagger X\Big]
\label{prologue:outline:induced}
\end{equation}
on the (square) matrix space $\F_\beta^{N\times N}$ with $\F_\beta=\R,\C,\H$ and $\gamma=1,1,2$ for $\beta=1,2,4$. Here, $Z^\beta$ is a normalisation constant.
\end{definition}

\begin{corollary}\label{thm:pro:ginibre-to-induced}
The product of independent $N\times N$ induced Ginibre matrices, $Y_n=X_n\cdots X_1$, with non-negative integer charges $\nu_1,\ldots,\nu_n$ and density
\begin{equation}
P^\beta(Y_n)=\bigg[\prod_{i=1}^n\int_{\F_\beta^{N\times N}} d^\beta X_i\,P_{\nu_i}^\beta(X_i)\bigg]\,\delta^\beta(X_n\cdots X_1-Y_n)
\label{prologue:outline:induced-product}
\end{equation}
has, up to a number of trivial zeros, the same spectral properties as a product of independent rectangular Ginibre matrices with dimensions as in proposition~\ref{thm:pro:rect-square}.
\end{corollary}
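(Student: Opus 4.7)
The plan is to reduce immediately to Proposition~\ref{thm:pro:rect-square}. The Ginibre density $\widetilde P_\gauss^\beta$ in~\eqref{prologue:outline:gauss} is manifestly bi-unitarily invariant, so the hypothesis of isotropy is satisfied, and every rectangular Ginibre product with matrix dimensions $N_0,\ldots,N_n$ chosen as in Proposition~\ref{thm:pro:rect-square} has, modulo trivial zeros, the same spectral data as the square product~\eqref{prologue:rectangular:square-density} whose factor densities are prescribed by~\eqref{prologue:rectangular:induced-a}--\eqref{prologue:rectangular:induced-b}. The corollary therefore collapses to the identification of those squarified densities with the induced Ginibre densities of Definition~\ref{def:pro:induced}.

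To carry out that identification I would substitute $\widetilde P_i^\beta=\widetilde P_\gauss^\beta$ into~\eqref{prologue:rectangular:induced-a}; the case $i>j$ is handled symmetrically through~\eqref{prologue:rectangular:induced-b}. The key observation is that evaluating the Gaussian density at the block-matrix argument makes the trace in the exponent split additively,
\begin{equation*}
\tr\bigl(\widetilde X_i^\dagger\widetilde X_i\bigr)=\tr X_i^\dagger X_i+\tr T_i^\dagger T_i,
\end{equation*}
because of the zero block. Consequently the $T_i$-integration factors off as an elementary flat Gaussian integral on $\F_\beta^{\nu_{i+1}\times(N+\nu_i)}$, contributing only a finite multiplicative constant, and the surviving $X_i$-dependence is
\begin{equation*}
P_i^\beta(X_i)\;\propto\;\det(X_i^\dagger X_i)^{\beta\nu_i/2\gamma}\exp\Bigl[-\tfrac{\beta}{2\gamma}\tr X_i^\dagger X_i\Bigr],
\end{equation*}
which is exactly an induced Ginibre density of charge $\nu_i$. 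No separate normalisation computation is required, since the closing paragraph of the proof of Proposition~\ref{thm:pro:rect-square} already guarantees that both sides integrate to unity.

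The only mildly delicate point is index bookkeeping. The relation between matrix dimensions and charges is asymmetric about the pivot index $j$ ($N_i=N+\nu_{i+1}$ for $i<j$ versus $N_i=N+\nu_i$ for $i>j$), so I would check explicitly that the determinantal prefactor generated by~\eqref{prologue:rectangular:induced-a}--\eqref{prologue:rectangular:induced-b} carries exponent $\beta\nu_i/2\gamma$ on each side of the pivot, matching the charge label $\nu_i$ in Definition~\ref{def:pro:induced}. Once this is verified, the converse direction of the statement is immediate: any prescribed tuple of non-negative integer charges $(\nu_1,\ldots,\nu_n)$ can be realised as the dimension gaps of a rectangular Ginibre product whose smallest factor is $N\times N$, which delivers the spectral equivalence claimed in the corollary up to the anticipated trivial zeros.
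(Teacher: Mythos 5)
Your proposal is correct and follows the same route the paper intends: the paper's proof is the single line ``Follows from proposition~\ref{thm:pro:rect-square},'' and what you have written is precisely the verification that the paper leaves implicit, namely that substituting the Gaussian density~\eqref{prologue:outline:gauss} into~\eqref{prologue:rectangular:induced} makes the trace split additively over the block structure, so the $T_i$-integral decouples as a flat Gaussian and the surviving $X_i$-dependence is the induced density~\eqref{prologue:outline:induced} with charge $\nu_i$. Your remark about the index bookkeeping across the pivot $j$ is the right thing to check, and it does work out since each $\nu_i$ appears as the dimension gap (and hence the determinantal exponent) for the $i$-th factor on either side of $j$.
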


\begin{proof}
Follows from proposition~\ref{thm:pro:rect-square}.
\end{proof}

\begin{remark}
In all following chapters, we restrict our attention to products of induced Ginibre matrices, but due to corollary~\ref{thm:pro:ginibre-to-induced} this incorporates the general structure of products of \emph{rectangular} matrices.
%  However, it should be mentioned that the induced ensembles are well-defined for all charges $\nu>-1$, even though the link between rectangular and induced Ginibre matrices is only valid for integer values.
\end{remark}

\begin{remark}
We have dropped the multi-index on the right hand side of~\eqref{prologue:outline:induced-product}, since the induced matrices are statistically isotropic and therefore commute in the weak sense. Furthermore, we can choose to order the charges $\nu_1\leq\cdots\leq\nu_n$ without loss of generality.
\end{remark}

The relation between the Wishart ensemble and densities of the form~\eqref{prologue:outline:induced} has been known for a longer time, but applications in relations to complex spectra are more recent. The induced Ginibre ensemble as a truncation of a rectangular Ginibre matrix was first presented in~\cite{FBKSZ:2012}, where also the name was coined. Their aim was to describe statistical properties of evolution operators in quantum mechanical systems. However, similar structures had appeared prior in the literature. In~\cite{Akemann:2001,Akemann:2002} an induced version of the elliptic ensemble was studied as a toy-model for quantum chromodynamics at finite chemical potential. A succeeding model describing the same system~\cite{Osborn:2004} had the clear physical benefit that it could be mapped exactly to the corresponding effective field theory. In this case the model included the product of two random 
matrices and the induced structure appeared (as in our case) because the product of two matrices can be square even though individual matrices are rectangular. As a consequence of its origin the charge $\nu$ was restricted to the integers, and it represented the topological charge (or winding number) on the field theoretical side.
 
We note that the induced density~\eqref{prologue:outline:induced} equivalently can be written as
\begin{equation}
P_\nu^\beta(X)=\frac1{Z^\beta}\exp\Big[-\frac{\beta}{2\gamma}\tr (X^\dagger X+\nu\log X^\dagger X)\Big],
\label{prologue:outline:induced-log}
 \end{equation}
 which illustrates the fact that $\nu$ represents the charge of a logarithmic singularity at the origin. Furthermore, the induced density is a special case of the more general class of ensembles,
 \begin{equation}
 P^\beta(X)=\frac1{Z^\beta}\exp\Big[-\frac{\beta}{2\gamma}\tr V(X^\dagger X)\Big],
 \label{pro:outline:feinberg-zee}
 \end{equation}
 where $V$ is a confining potential (subject to certain regularity conditions). If we are interested in the eigenvalues of the Hermitian matrix, $X^\dagger X$, then~\eqref{pro:outline:feinberg-zee} belongs to the canonical generalisation of the (Hermitian) Gaussian ensembles, which have been studied in great detail. On the other hand, if we are interested in the complex eigenvalues of $X$, then the density~\eqref{pro:outline:feinberg-zee} is of so-called Feinberg--Zee-type~\cite{FZ:1997}. The logarithmic singularity moves the microscopic neighbourhood of the origin out of the regime of known universality results.

\chapter{Wishart product matrices}
\label{chap:singular}

In this chapter, we will consider the statistical properties of the eigenvalues of a product generalisation of the Wishart ensemble. However, it seems appropriate to briefly recall the well-known structure of the (standard) Wishart ensemble before we embark on this description. We emphasise that our intention with this introductory remark is to recollect some well-known results rather than providing a comprehensive description. A more thorough account on the Wishart ensemble as well as references to relevant literature can be found in~\cite{Forrester:2010}.

For reasons which will become clear when we consider products, we restrict our discussion to the complex Wishart ensemble. As explained in section~\ref{sec:prologue:gauss}, we say that $XX^\dagger$ is a complex Wishart matrix if $X$ is an $N\times (N+\nu)$ complex random matrix distributed according to the density
\begin{equation}
P_\nu(X)=\Big(\frac1\pi\Big)^{N(N+\nu)}e^{-\tr X^\dagger X}.
% \qquad \int_{\C^{N\times(N+\nu)}}d^2X\,P_\nu(X)=1.
\end{equation}
We are interested in properties of the eigenvalues $\lambda_i$ ($i=1,\ldots,N$) of the matrix $XX^\dagger$, i.e. the squared singular values of $X$. The joint probability density function for the eigenvalues is readily obtained by means of a singular value decomposition (proposition~\ref{thm:decomp:SVD}); after integration over the unitary groups we have the point process
\begin{equation}
\cP_\jpdf(\lambda_1,\ldots,\lambda_N)=\frac1\cZ\prod_{k=1}^Ne^{-\lambda_k}\lambda_k^\nu \prod_{1\leq i<j\leq N}(\lambda_j-\lambda_i)^2
\label{singular:intro:jpdf-laguerre}
,\qquad \cZ=\prod_{k=0}^{N-1}k!(k+\nu)!,
\end{equation}
where the eigenvalues $\lambda_i$ ($i=1,\ldots,N$) are restricted to the positive half-line (the Wishart matrix is positive definite). This point process may be thought of as log-gas i.e. $N$ logarithmically repulsive particles trapped by a confining potential $V(x)=x-\nu\log x$. Moreover, due to the special form of the eigenvalue repulsion we can study such ensembles using the method of orthogonal polynomials. In fact, the polynomials related to~\eqref{singular:intro:jpdf-laguerre} are the well-known Laguerre polynomials (for this reason the Wishart ensemble is often also referred to as the Wishart--Laguerre or simply Laguerre ensemble). The $k$-point correlation function is given by
\begin{equation}
R_k(\lambda_1,\ldots,\lambda_k):=\frac{N!}{(N-k)!}\bigg[\prod_{i=k+1}^N\int_0^\infty d\lambda_i\bigg] \cP_\jpdf(\lambda_1,\ldots,\lambda_N)
=\det_{1\leq i,j\leq N}\big[K_N(\lambda_i,\lambda_j)\big]
\end{equation}
with correlation kernel
\begin{align}
K_N(x,y)&=e^{-\frac{x+y}2}(xy)^{\nu/2}\sum_{k=0}^{N-1}\frac{\widetilde L_k^\nu(x)\widetilde L_k^\nu(y)}{k!(k+\nu)!} \label{singular:intro:CD}\\
&=\begin{cases}
\displaystyle{\frac{e^{-\frac{x+y}2}(xy)^{\nu/2}}{\Gamma[N]\Gamma[N+\nu]}\frac{\widetilde L_{k+1}^\nu(x)\widetilde L_k^\nu(y)-\widetilde L_k^\nu(x)\widetilde L_{k+1}^\nu(y)}{x-y}} & (x\neq y) \\
\displaystyle{\frac{e^{-x}x^{\nu}}{\Gamma[N]\Gamma[N+\nu]}\Big[\frac{d\widetilde L_{k+1}^\nu(x)}{dx}\widetilde L_k^\nu(x)-\frac{d\widetilde L_k^\nu(x)}{dx}\widetilde L_{k+1}^\nu(x)\Big]} & (x= y) \nn
\end{cases}.
\end{align}
Thus, the eigenvalues form a determinantal point process.
Here $\widetilde L_{k}^{\nu}(x)=(-1)^kk!L_k^{\nu_1}(x)$ denotes the (associated) Laguerre polynomial in monic normalisation. The latter equality in~\eqref{singular:intro:CD} is the celebrated Christoffel--Darboux formula which is a consequence of the three-step recurrence relation for orthogonal polynomials, see e.g.~\cite{Szego:1939}.

We are interested in the asymptotic properties as the matrix dimension, $N$, tends to infinity. Typically, we distinguish between two types of scaling regimes: (i) a macroscopic (or global) regime in which the eigenvalue interspacing decays with $N$, and (ii) a microscopic (or local) regime in which the eigenvalue interspacing is kept at order unity as $N$ tends to infinity. Both regimes are important for applications.

Without rescaling, the largest eigenvalue of a typical Wishart matrix is of order $N$; this suggests that the appropriate scaling for the macroscopic density (one-point correlation function) is $R_1(N\lambda)/N$. One way to get an explicit expression for this density is to calculate the asymptotic value of the integer-moments and then consider the corresponding moment problem. If $\nu$ is kept fixed as $N$ tends to infinity, then the moments converge to the Catalan numbers and consequently we have (see proposition~\ref{thm:singular:fuss-catalan} with $n=1$)
\begin{equation}
\lim_{N\to\infty}\frac{1}{N}R_1(Nx)=\rho_\MP(x)=\frac1{2\pi}\sqrt{\frac{4-x}{x}}\one_{0<x<4},
\label{singular:intro:mp-density}
\end{equation}
which is a special case of the Mar\v cenko--Pastur density. If $\nu/N\to\alpha>0$ for $N\to\infty$ then the moments are given as Narayana polynomials in $1+\alpha$ and there will be a macroscopic gap between the the origin and the infimum of the spectrum.

The interest in macroscopic properties is, of course, not restricted to the spectral density. Macroscopic (smoothed) correlation functions (often referred to as wide correlators) are of great interest as well (partly due to their relation with the variance of linear statistics, see e.g.~\cite{Beenakker:1997,Forrester:2010}). A description of such structures requires different techniques than those used in the description of the density (e.g. loop equations) and very little is known for products of matrices, even though wide correlators are well studied within the classical matrix ensembles~\cite{AJM:1990,BZ:1993,Beenakker:1994,KKP:1995,DZ:1996,Itoi:1997}. For this reason, we will not include a discussion of wide correlators in this introductory remark; a few comments regarding this open problem are given in section~\ref{sec:singular:discuss}.

\begin{figure}[tp]
\centering
\includegraphics{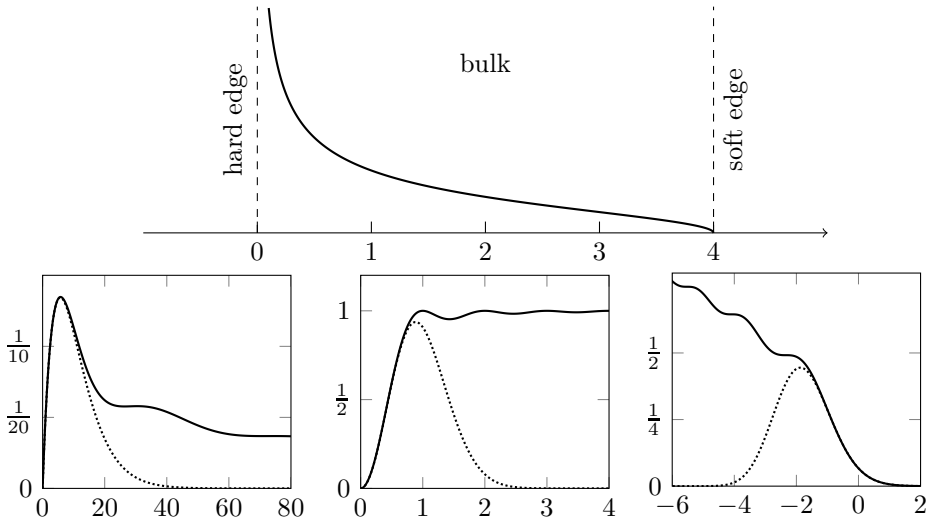}
\caption{The upper panel shows the macroscopic density~\eqref{singular:intro:mp-density}. The solid curves on lower panels show from left to right: (i) the microscopic density near the hard edge, $\rho_\hard^{\nu=1}(x)=K_\bessel^{\nu=1}(x,x)$, (ii) the microscopic two-point correlations in the bulk, $\rho_\bulk^2(0,x)=1-K_\sine(0,x)^2$, and (iii) the microscopic density near the hard edge, $\rho_\soft(x)=K_\airy(x,x)$. For comparison, the dotted curves show the density for the smallest eigenvalue~\cite{Forrester:1993}, the Wigner surmise, and the density for the largest eigenvalue~\cite{TW:1994}, respectively. Note that, unlike the bulk, the scaling at the hard and soft edge are traditionally not ``unfolded'' and, as a consequence, the relevant length scales differ considerably between the three lower panels.}
\label{fig:singular:intro}
\end{figure}

Instead, let us turn to the microscopic scaling regimes, where we are interested in correlations on the same scale as the eigenvalue interspacing.  A na\"ive use of~\eqref{singular:intro:mp-density} gives (it is assumed that $\nu$ is kept fixed for large $N$)
\begin{flalign}
&\text{(i)} & \int_0^{a/N^2} dx\,R_1(Nx)&\approx \frac2\pi\sqrt a, && \\
&\text{(ii)} & \int_{x_0}^{x_0+a/N} dx\,R_1(Nx)&\approx \rho_\MP(x_0)\,a, && \\
&\text{(iii)} & \int_{4-a/N^{2/3}}^4 dx\,R_1(Nx)&\approx \frac{a^{3/2}}{6\pi}, &&
\end{flalign}
for $x_0\in(0,4)$, $N\gg1$ and $a$ denotes a fixed constant. This approximation is too crude to be used in any actual computations but, nonetheless, it suggests that we have three different scaling regions: (i) a hard edge (close to the origin) characterised by an inverse square root divergence of the density, (ii) the bulk of the spectrum, and (iii) a soft edge (close to the supremum of the spectrum) characterised by a square root decay of the density. Rigorous results for these three scaling regimes can be obtained using known formulae for the asymptotics of Laguerre polynomials~\cite{Szego:1939}. The scaling limit for the $k$-point correlation function can for all three regions be written as
\begin{equation}
\lim_{N\to\infty}\frac1{(c_N^*)^k}R_k\Big(N\Big(x_\ast+\frac{x_1}{c_N^*}\Big),\ldots,N\Big(x_\ast+\frac{x_k}{c_N^*}\Big)\Big)=\det_{1\leq i,j\leq k}\big[K_*(x_i,x_j)\big],
\label{singular:intro:corr-scaling}
\end{equation}
where $c_N^*$ is an $N$-dependent constant and $K_*(x,y)$ is the limiting correlation kernel. 
\begin{enumerate}[(i)]
\item At the hard edge we set $x_*=0$ and $c^*_N=(2N)^2$ and the $k$-point correlation function~\eqref{singular:intro:corr-scaling} converges uniformly for $x_i$ ($i=1,\ldots,k$) in compact subsets of the positive half-line and the limiting kernel reads
\begin{equation}
 K_\bessel^\nu(x,y)=\frac{J_\nu(x^{1/2})y^{1/2}J_\nu'(y^{1/2})-x^{1/2}J_\nu'(x^{1/2})J_\nu(y^{1/2})}{2(x-y)},
\label{singular:intro:bessel}
\end{equation}
where $J_\nu(x)$ is the Bessel function.
\item
In the bulk we fix some $x_*\in(0,4)$ and set $c^*_N=N\rho_\MP(x_*)$, where $\rho_\MP(x_*)$ denotes the macroscopic density~\eqref{singular:intro:mp-density} at $x_*$. With this scaling~\eqref{singular:intro:corr-scaling} converges uniformly for $x_i$ ($i=1,\ldots,k$) in compact subsets of the real line and the celebrated sine kernel is obtained,
\begin{equation}
 K_\sine(x,y)=\frac{\sin\pi(x-y)}{\pi(x-y)}.
\label{singular:intro:sine}
\end{equation} 
Note that this scaling is independent of the exact value of $x_*$ and is in this way universal.
\item
At the soft edge we set $x_*=4$ and $c^*_N=(4N)^{2/3}$. In this case the limiting kernel becomes
\begin{equation}
K_\airy(x,y)=\frac{\Ai(x)\Ai'(y)-\Ai'(x)\Ai(y)}{x-y},
\label{singular:intro:airy}
\end{equation}
where $\Ai(x)$ is the Airy function.
\end{enumerate}
For obvious reasons, the three limiting kernels are known as the Bessel, the sine, and the Airy kernel, respectively. The limiting structures are visualised on figure~\ref{fig:singular:intro}.

Two ways of generalising the Wishart ensemble dominate the literature: either the assumption of i.i.d. entries is taken as fundamental and the Gaussian variables are replaced with another set of i.i.d. random variables, or the log-gas picture is taken as fundamental and we consider matrix densities of the form
\begin{equation}
P(H)=\frac1Ze^{-\tr V(H)},
\end{equation}
where $Z$ is a normalisation constant, $V(x)$ is a confining potential and $H$ is a Hermitian matrix. The former case has the benefit that such matrices are very easy to simulate numerically, but the assumption of independent entries is not always appropriate for applications. The latter case has the benefit that it may studied using the method of orthogonal polynomials; these ensembles are known as trace-class or invariant ensembles since they are invariant under unitary similarity transformations $H\mapsto UHU^{-1}$ (in the special case $H=X^\dagger X$, the matrix is isotropic in the sense of definition~\ref{def:prologue:isotropy}); this type of ensembles has a close connection with symmetry classification schemes such as~\cite{AZ:1997}. One of the interesting features of these generalisations is that the limiting behaviour of the Wishart matrix discussed above can be re-obtained for these more general classes of ensembles if certain constraints on either the distribution of the entries or the properties 
of the confining potential are imposed; we say that these limiting formulae are universal, see~\cite{Kuijlaars:2011,Erdos:2011} and references within for precise statements.

Here, we will consider a generalisation of the Wishart matrix constructed by considering products of independent Gaussian matrices. We note that the entries of such a product matrix will be strongly correlated even though the entries of the individual factors are independent. The product generalisation will not belong to the trace-class ensembles either. Thus, none of the classical universality results apply. It is intriguing to ask: \emph{Are the familiar universality classes valid in the study of product ensembles as well, or do new universality classes appear?} And: \emph{Do we have something similar to the method of orthogonal polynomials valid for the trace-class ensembles?}

The chapter is organised as follows: in section~\ref{sec:singular:exact} we describe exact results valid for finite matrix dimension and a finite number of factors, section~\ref{sec:singular:asymp} considers the asymptotic behaviour as the matrix dimension tends to infinity. In the final section we summarise known results and discuss some problems which remain open.

\section{Exact results for Wishart product matrices}
\label{sec:singular:exact}

Let $Y_n=X_n\cdots X_1$ be a product of independent random matrices, where each $X_i$ ($i=1,\ldots,n$) is an $N\times N$ induced Gaussian matrix, i.e. distributed with respect to the density~\eqref{prologue:outline:induced} with charge $\nu_i>-1$. Then we say that
\begin{equation}
Y_n^\dagger Y_n:=(X_n\cdots X_1)^\dagger X_n\cdots X_1,
\label{singular:intro:product}
\end{equation}
is a Wishart product matrix and we call the set of all such matrices the Wishart product ensemble. We are interested in the statistical properties of eigenvalues of such matrices.

\begin{definition}
Let $\lambda_i$ ($i=1,\ldots,N$) denote the eigenvalues of the Wishart product matrix $Y_n^\dagger Y_n$. The joint probability density function for the eigenvalues is defined as
\begin{equation}
\cP_\jpdf^\beta(x_1,\ldots,x_N):=\int d^\beta Y_nP^\beta(Y_n)\prod_{k=1}^N\delta(\lambda_{k}-x_k),
\label{singular:intro:jpdf-def}
\end{equation}
where $P^\beta(Y_n)$ is the matrix density of the product matrix $Y_n=X_n\cdots X_1$ given in terms of~\eqref{prologue:square:general} and~\eqref{prologue:outline:induced} with parameters $0\leq\nu_1\leq\cdots\leq\nu_n$.
\end{definition}

\begin{remark} We recall that if the constants $\nu_i$ ($i=1,\ldots,n$) which appear in~\eqref{prologue:outline:induced} are non-negative integers, then they may be interpreted as incorporating the structure of a product of rectangular Gaussian random matrices~\eqref{prologue:outline:gauss}. Moreover, Wishart product matrices are statistically isotropic, thus the ordering of the factors is irrelevant; for this reason we can choose the ordering $\nu_1\leq\cdots\leq\nu_n$ without loss of generality.
\end{remark}

It was realised in~\cite{AKW:2013,AIK:2013} that the complex ($\beta=2$) Wishart product ensemble is exactly solvable, meaning that we can find explicit expressions for the joint density~\eqref{singular:intro:jpdf-def} as well as any $k$-point correlations function for arbitrary matrix dimension $N$ and any number of factors $n$. Here, we follow~\cite{AIK:2013} and re-obtain the results for the joint probability density function (section~\ref{sec:singular:jpdf}) and the correlation functions in terms of a family of bi-orthogonal functions (section~\ref{sec:singular:correlations}). We will see in section~\ref{sec:singular:microscopic} that such exact results allow a study of universal behaviour at the microscopic scale in contrast to previous results which have been restricted to macroscopic densities.

\subsection{Joint probability density function}
\label{sec:singular:jpdf}

Our first step is to go from a joint density expressed in terms of matrix-valued variables~\eqref{singular:intro:jpdf-def} to an expression solely in terms of the eigenvalues, i.e. integrating out all irrelevant degrees of freedom. We have~\cite{AIK:2013}:

\begin{proposition}\label{thm:singular:jpdf-determinantal}
The eigenvalues of the complex Wishart product matrix form a determinantal point process on the positive half-line with joint density
\begin{equation}
\cP_\jpdf^{\beta=2}(x_1,\ldots,x_N)=\frac1{\cZ}
\prod_{1\leq i<j\leq N}(x_j-x_i)
\det_{1\leq i,j\leq N}\big[w_{j-1}(x_i)\big],
\label{singular:intro:jpdf}
\end{equation}
where $\cZ$ is a normalisation constant and $w_{j-1}(x)\ (j=1,\ldots,N)$ is a family of positive weight functions given by
\begin{equation}
\cZ=N!\prod_{k=0}^{N-1}k!\prod_{i=1}^n\Gamma[\nu_i+k+1]
\quad\text{and}\quad
w_{j}(x)=\MeijerG{n}{0}{0}{n}{-}{\nu_1+j,\nu_2,\ldots,\nu_n}{x},
\label{singular:intro:weight+normal}
\end{equation}
respectively. Here $N$ and $n$ are positive integers denoting the matrix dimension and the number of factors, while the charges $0\leq\nu_1\leq\cdots\leq\nu_n$ are constants.
\end{proposition}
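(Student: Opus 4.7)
The plan is to establish the formula by induction on $n$, with each inductive step carried out by identifying the action of multiplying by one more induced Ginibre factor as a multiplicative Mellin convolution on the squared singular-value process.

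For the base case $n=1$, $Y_1=X_1$ is a single induced Ginibre matrix; a singular value decomposition together with the standard Jacobian reduces the integral over $X_1$ to the classical Wishart--Laguerre joint density $\propto\prod_{i<j}(x_j-x_i)^2\prod_k x_k^{\nu_1}e^{-x_k}$. Using the identity $x^{\nu_1+j-1}e^{-x}=G^{1,0}_{0,1}(-\,;\,\nu_1+j-1\mid x)$ and splitting one Vandermonde factor as $\det[x_i^{j-1}]$ puts the density into the stated form with $\cZ_1=N!\prod_{k=0}^{N-1}k!\,\Gamma[\nu_1+k+1]$.

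For the inductive step, decompose $Y_n=X_n Y_{n-1}$. The eigenvalues of $Y_n^\dagger Y_n$ coincide with those of $X_n Y_{n-1}Y_{n-1}^\dagger X_n^\dagger$, so conditioning on $Y_{n-1}$ and diagonalising $Y_{n-1}Y_{n-1}^\dagger=U\,\mathrm{diag}(y)\,U^\dagger$ reduces the problem, via isotropy of the induced density (the unitary $U$ is absorbed into $X_n$, see definition~\ref{def:prologue:isotropy}), to computing the conditional eigenvalue density $p(x\mid y)$ of $X\,\mathrm{diag}(y)\,X^\dagger$ for $X$ induced Ginibre of charge $\nu_n$. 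The key lemma is that
\[
  p(x\mid y)\;=\;\frac{1}{N!\prod_{k=0}^{N-1}\Gamma[\nu_n+k+1]}\;\frac{\Delta(x)}{\Delta(y)}\;\det_{1\le i,j\le N}\!\Bigl[\,\frac{x_i^{\nu_n}}{y_j^{\nu_n+1}}\,e^{-x_i/y_j}\,\Bigr].
\]
This identity, which is the main technical obstacle, is obtained by performing a further SVD $X=U\Sigma V$, reducing the eigenvalue question to that of $\Sigma V\,\mathrm{diag}(y)\,V^\dagger\Sigma$, and evaluating the remaining integral over $V\in\gU(N)$ by the Harish--Chandra--Itzykson--Zuber formula; the induced Jacobian $\det(X^\dagger X)^{\nu_n}=\prod\sigma_i^{2\nu_n}$ has to be carried through carefully, and normalisation can be fixed independently by the Andr\'eief computation $\int\Delta(x)\det[x^{\nu_n}e^{-x/y}/y^{\nu_n+1}]\,dx=N!\,\Delta(y)\prod_i\Gamma[\nu_n+i]$.

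Given the lemma, the induction step is formal: substitute $\cP_\jpdf^{(n-1)}(y)=\cZ_{n-1}^{-1}\Delta(y)\det[w^{(n-1)}_{j-1}(y_i)]$ and the explicit $p(x\mid y)$ into $\cP_\jpdf^{(n)}(x)=\int p(x\mid y)\,\cP_\jpdf^{(n-1)}(y)\,dy$. The two $\Delta(y)$ factors cancel and Andr\'eief's identity collapses the two determinants into $\Delta(x)\det[w^{(n)}_{j-1}(x_i)]$ with
\[
  w^{(n)}_{j-1}(x)\;=\;x^{\nu_n}\int_0^\infty \frac{dy}{y^{\nu_n+1}}\,e^{-x/y}\,w^{(n-1)}_{j-1}(y).
\]
A short Mellin-transform computation shows $\int_0^\infty x^{s-1}w^{(n)}_{j-1}(x)\,dx=\Gamma[\nu_n+s]\int_0^\infty y^{s-1}w^{(n-1)}_{j-1}(y)\,dy$, so by induction the Mellin transform of $w^{(n)}_{j-1}$ equals $\Gamma[\nu_1+j-1+s]\prod_{k=2}^n\Gamma[\nu_k+s]$, which is precisely the Mellin representation of $G^{n,0}_{0,n}(-\,;\,\nu_1+j-1,\nu_2,\ldots,\nu_n\mid x)$; hence $w^{(n)}_{j-1}=w_{j-1}$ of the proposition. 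Finally, tracking constants gives the recursion $\cZ_n=\cZ_{n-1}\prod_{k=0}^{N-1}\Gamma[\nu_n+k+1]$, and together with the base case this yields the claimed $\cZ=N!\prod_{k=0}^{N-1}k!\prod_{i=1}^n\Gamma[\nu_i+k+1]$.
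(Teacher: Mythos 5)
Your proof is correct, and I verified the key lemma: pulling $D_y^{1/2}$ into the Gaussian via the change of variables $Z=XD_y^{1/2}$, doing the SVD, and applying HCIZ does give
\[
p(x\mid y)=\frac{1}{N!\prod_{k=0}^{N-1}\Gamma[\nu_n+k+1]}\,\frac{\Delta(x)}{\Delta(y)}\,
\det_{i,j}\Bigl[\frac{x_i^{\nu_n}}{y_j^{\nu_n+1}}\,e^{-x_i/y_j}\Bigr],
\]
and the Mellin recursion $\cM[w^{(n)}_{j-1}](s)=\Gamma[\nu_n+s]\,\cM[w^{(n-1)}_{j-1}](s)$ together with Andr\'eief closes the induction with the right constants. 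The machinery (SVD Jacobian, Harish--Chandra--Itzykson--Zuber, Andr\'eief, Mellin identification of the Meijer $G$-function) is the same as in the paper's proof, but the organization is genuinely different. The paper starts from the convolution form~\eqref{prologue:square:density-product}, parametrizes \emph{all} the partial products $Y_1,\dots,Y_n$ simultaneously by their SVDs, obtains $n-1$ HCIZ integrals between $\Lambda_{i+1}$ and $\Lambda_i^{-1}$ in one go, and then collapses everything with successive applications of Andr\'eief, leaving the weight as an explicit $n$-fold integral which is recognized afterwards as a Meijer $G$-function. You instead make the Markov structure in the number of factors explicit: you isolate the one-step transition kernel $p(x\mid y)$ on the squared singular values as a standalone lemma, and the proposition becomes a clean induction on $n$. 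The paper's simultaneous approach produces the closed-form $n$-fold integral~\eqref{singular:intro:weight-nfold} for the weight at once, which is convenient for later asymptotic work; your inductive version isolates the interesting structural fact that the singular-value process of a Gaussian product is a Markov chain with an explicit Laguerre-type transition kernel, and the weight function emerges from iterated Mellin convolutions, which dovetails nicely with the scalar discussion in section~\ref{sec:prologue:scalar}.
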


\begin{proof}
The matrix density for complex ($\beta=2$) Wishart product matrices follows directly from~\eqref{prologue:square:density-product} and~\eqref{prologue:outline:induced}. Up to a factor of normalisation, we have
\begin{multline}
P^{\beta=2}(Y_n)\propto(\det Y_n^\dagger Y_n)^{\nu_{n}}\\
\times\bigg[\prod_{i=1}^{n-1}\int_{\gGL(N,\C)} d\mu(Y_i) 
(\det Y_{i}^\dagger Y_{i}^{})^{\nu_{i}-\nu_{i+1}}
e^{-\tr (Y_{i+1}^\dagger Y_{i+1}^{}) (Y_{i}^\dagger Y_i^{})^{-1}}\bigg]
e^{-\tr Y_1^\dagger Y_1^{}},
\label{singular:intro:density-Y}
\end{multline}
where $d\mu(Y)=d^2Y/(\det Y^\dagger Y)^N$ is the invariant measure on the group of complex $N\times N$ invertible matrices, $\gGL(N,\C)$. Note that we have dropped the multi-index from~\eqref{prologue:square:density-product}, since the ordering of the individual matrices is irrelevant for the properties of the product matrix $Y_n$ due the weak commutation relation.

We are interested in the eigenvalues of the Wishart product matrix~\eqref{singular:intro:product}, and we therefore change variables. We parametrise each matrix $Y_i$ ($i=1,\ldots,n$) in~\eqref{singular:intro:density-Y} using a singular value decomposition,
\begin{equation}
Y_i=U_i\Lambda_i^{1/2}V_i,
\end{equation}
where $\Lambda_i=\diag(\lambda_{i,1},\ldots,\lambda_{i,N})$ with $\lambda_{i,j}$ ($j=1,\ldots,N$) denoting the eigenvalues of $Y_i^\dagger Y_i^{}$, while $U_i$ and $V_i$ are unitary matrices. It is known that this change of variables gives (see proposition~\ref{thm:decomp:SVD})
\begin{equation}
d\mu(Y_i)=d\mu(U_i)d\mu(V_i)\prod_{j=1}^N\frac{d\lambda_{i,j}}{\lambda_{i,j}^N}\prod_{1\leq k<\ell\leq N} (\lambda_{i,\ell}-\lambda_{i,k})^2
\label{singular:intro:SVD-jacobian}
\end{equation}
with $d\mu(U_i)$ and $d\mu(V_i)$ denoting the Haar measures on $\gU(N)$ and $\gU(N)/U(1)^N$, respectively. For notational simplicity, we also introduce the symbol
\begin{equation}
\Delta_N(\Lambda_i):=\det_{1\leq k,\ell\leq N}\big[\lambda_{i,\ell}^{k-1}\big]=\prod_{1\leq k<\ell\leq N} (\lambda_{i,\ell}-\lambda_{i,k})
\end{equation}
for the Vandermonde determinant.

With the above given parametrisation and the explicit expression for the matrix density~\eqref{singular:intro:density-Y} we may write the joint density~\eqref{singular:intro:jpdf-def} as
\begin{align}
&\cP_\jpdf^{\beta=2}(x_1,\ldots,x_N)\propto 
\prod_{k=1}^N\int_0^\infty d\lambda_{n,k}\,\lambda_{n,k}^{\nu_{n}}\delta(\lambda_{n,k}-x_k)\Delta_N(\Lambda_n)^2 \label{singular:intro:jpdf-step1} \\
&\times\bigg[\prod_{i=1}^{n-1}\prod_{j=1}^N\int_0^\infty \frac{d\lambda_{i,j}}{\lambda_{i,j}^N} \lambda_{i,j}^{\nu_{i}-\nu_{i+1}}\Delta_N(\Lambda_i)^2
\int_{\gU(N)} d\mu(U_{i+1})e^{-\tr U\Lambda_{i+1}U^{-1}\Lambda_i^{-1}}\bigg]\prod_{\ell=1}^Ne^{-\lambda_{1,\ell}}, \nn
\end{align}
The remaining unitary integrals are of Harish-Chandra--Itzykson--Zuber type~\cite{HC:1957,IZ:1980} which allow explicit evaluation
\begin{equation}
\int_{\gU(N)} d\mu(U_{i+1})e^{-\tr U\Lambda_{i+1}U^{-1}\Lambda_i^{-1}}\propto
\frac{\det\limits_{1\leq k,\ell\leq N}\left[e^{-\lambda_{i+1,k}/\lambda_{i,\ell}}\right]}
{\Delta_{N}(\Lambda_{i+1})\Delta_{N}(\Lambda_{i}^{-1})}.
\label{singular:intro:IZ-integral}
\end{equation}
Using this integration formula~\eqref{singular:intro:IZ-integral} in the joint density~\eqref{singular:intro:jpdf-step1} yields
\begin{align}
&\cP_\jpdf^{\beta=2}(x_1,\ldots,x_N)\propto 
\prod_{k=1}^N\int_0^\infty d\lambda_{n,k}(\lambda_{n,k})^{\nu_{n}}\delta(\lambda_{n,k}-x_k)\Delta_{N}(\Lambda_{n}) \label{singular:intro:jpdf-step2}\\
&\times\bigg[\prod_{i=1}^{n-1}\prod_{j=1}^N\int_0^\infty \frac{d\lambda_{i,j}}{\lambda_{i,j}} (\lambda_{i,j})^{\nu_{i}-\nu_{i+1}}
\det_{1\leq k,\ell\leq N}\left[e^{-\lambda_{i+1,k}/\lambda_{i,\ell}}\right]\bigg]
\det_{1\leq k,\ell\leq N}\left[ \lambda_{1,\ell}^{k-1}e^{-\lambda_{1,\ell}} \right]. \nn
\end{align}
Here the integrals on the second line can be absorbed into a single determinant by successive use of Andr\'eief's integration formula~\cite{Andreief:1883,deBruijn:1955}, while the integration over $\lambda_{n,k}$ ($k=1,\ldots,n$) may be performed due to the delta function. This yields~\eqref{singular:intro:jpdf} with the weight functions given as $n$-fold integrals
\begin{equation}
w_{k}(x)=\int_0^\infty d\lambda_n \lambda_n^{\nu_n}
\bigg[\prod_{i=1}^{n-1}\int_0^\infty \frac{d\lambda_{i}}{\lambda_i} \lambda_{i}^{\nu_{i}-\nu_{i+1}}e^{-\lambda_{i+1}/\lambda_{i}}\bigg]
\lambda_{1}^{k}e^{-\lambda_{1}}\delta(\lambda_n-x),
\label{singular:intro:weight-nfold}
\end{equation}
hence it only remains to evaluate the integral~\eqref{singular:intro:weight-nfold} and determine the normalisation.

We note that after proper normalisation~\eqref{singular:intro:weight-nfold} can be interpreted as a probability density for a product of $n$ gamma distributed random scalars (expressed as an ($n-1$)-fold multiplicative convolution on the positive half-line). This problem may be solved using the Mellin transform and its inverse, see e.g.~\cite{Springer:1979} (similar to the discussion in section~\ref{sec:prologue:scalar}). Alternatively, we can rewrite the exponentials in~\eqref{singular:intro:weight-nfold} as Meijer $G$-functions and then perform the integrals using~\eqref{special:meijer:int-meijer-meijer}. Either way, we find the Meijer $G$-function formulation in~\eqref{singular:intro:weight+normal}, which essentially means that we have replaced the $n$-fold integral with a single contour integral. The normalisation constant $\cZ$ is found by integrating out $x_i$ ($i=1,\ldots,N$) in~\eqref{singular:intro:jpdf}, which can be done by applying Anr\'eief's integration formula~\cite{Andreief:1883,deBruijn:1955}
and then performing the integral inside 
the determinant using~\eqref{special:meijer:meijer-moment}.
\end{proof}

\begin{remark}
The reason we have restricted ourselves to the complex ($\beta=2$) case is the explicit use of the Itzykson--Zuber integral~\eqref{singular:intro:IZ-integral}. We have no analogues formulae for integrals over the orthogonal ($\beta=1$) and unitary symplectic ($\beta=4$) group, although there still exist expressions in terms of Zonal polynomials~\cite{GR:1987}. One way to circumvent such difficulties and obtain large-$N$ behaviour could be the use of supersymmetric techniques~\cite{Kieburg:2015}.  
\end{remark}

\subsection{Correlations and bi-orthogonal functions}
\label{sec:singular:correlations}

The point process formed by the eigenvalues of the Wishart product matrix belongs to a special class of bi-orthogonal ensembles~\cite{Muttalib:1995,Borodin:1998} which recently has been coined polynomial ensembles~\cite{KS:2014,Kuijlaars:2015}. Before we continue our study of Wishart product matrices, let us briefly consider a generic polynomial ensemble of $N$ points on an interval $(a,b)\subseteq\R$ with joint density
\begin{equation}
\cP_\jpdf(x_1,\ldots,x_N)=\frac1{\cZ}
\prod_{1\leq i<j\leq N}(x_j-x_i)
\det_{1\leq i,j\leq N}\big[w_{j-1}(x_i)\big],
\label{singular:correlation:jpdf-general}
\end{equation}
where $\{w_{j-1}(x)\}$ is a family of weight functions. It is assumed that~\eqref{singular:correlation:jpdf-general} is a well-defined probability density, which implies certain constraints on the weight functions, see~\cite{Kuijlaars:2015} for a brief discussion.

Typically, we are interested in different large-$N$ limits, where universal behaviour is expected. For this reason it is convenient to introduce the $k$-point correlation function defined by (see e.g.~\cite{Mehta:2004})
\begin{equation}
R_k(x_1,\ldots,x_k):=\frac{N!}{(N-k)!}\bigg[\prod_{i=k+1}^N\int_a^b dx_i\bigg] \cP_\jpdf(x_1,\ldots,x_N).
\label{singular:correlation:correlation-def}
\end{equation}
To find an explicit expression for this correlation functions, we might exploit the fact that the determinants in~\eqref{singular:correlation:jpdf-general} are invariant under permutations of rows or columns.
Suppose we have managed to bi-orthogonalise~\eqref{singular:correlation:correlation-def}, i.e. we have found a family of (monic) functions
\begin{subequations}
\label{singular:correlation:monic}
\begin{align}
p_j(x)-x^j&\in\Span(1,\ldots,x^{j-1}),\\
\phi_j(x)-w_j(x)&\in\Span(w_0(x),\ldots,w_{j-1}(x)),
\end{align}
\end{subequations}
which satisfy the bi-orthogonality relation
\begin{equation}
\inner{p_i,\phi_j}:=\int_a^b dx\, p_i(x)\phi_j(x)=h_i\delta_{ij}
\label{singular:correlation:orthogonal}
\end{equation}
where $h_i\ (i=1,2,\ldots)$ are positive constants. If we construct a kernel,
\begin{equation}
K_N(x,y):=\sum_{k=0}^{N-1}\frac{p_k(x)\phi_k(y)}{h_k},
\label{singular:correlations:kernel-general}
\end{equation}
then it follows from the bi-orthogonality~\eqref{singular:correlation:orthogonal} that the kernel satisfies
\begin{equation}
\int_a^bdx\,K_N(x,x)=N
\quad\text{and}\quad
\int_a^bdu\,K_N(x,u)K_N(u,y)=K_N(x,y).
\end{equation}
Following the same arguments as for orthogonal ensembles shows that the $k$-point correlation function~\eqref{singular:correlation:correlation-def} is given by~\cite{Muttalib:1995,Borodin:1998}
\begin{equation}
R_k(x_1,\ldots,x_k)=\det_{1\leq i,j\leq k}\big[K_N(x_i,x_j)\big]
\label{singular:correlation:correlation-kernel}
\end{equation}
with correlation kernel given by~\eqref{singular:correlations:kernel-general}. Note that the correlation kernel is not unique, e.g. if $g(x)$ is a non-zero function on the relevant domain then the correlation function~\eqref{singular:correlation:correlation-kernel} is unaffected by a gauge transfermation $K_N(x,y)\mapsto (g(x)/g(y))  K_N(x,y)$.

A major difference between orthogonal and bi-orthogonal ensembles is that orthogonal polynomials satisfy a three-step recurrence formula (see e.g.~\cite{Szego:1939,Ismail:2005}), while there is no guarantee for any recurrence relation for the bi-orthogonal functions. The three-step recurrence relation is an integral part of several proofs related to orthogonal ensembles, thus different techniques are sometimes required for bi-orthogonal ensembles. This also opens the possibility for new universality classes and we will see in section~\ref{sec:singular:microscopic} that new classes indeed are present at the hard edge.

A useful property of polynomial ensembles~\eqref{singular:correlation:jpdf-general} are that the bi-orthogonal functions are related to the expectation (with respect to the joint density) of the characteristic polynomial and its inverse,
\begin{subequations}
\label{singular:correlations:average-poly}
\begin{align}
\average[\bigg]{\prod_{i=1}^N(z-x_i)}&=p_N(z), && z\in\C, \\
\average[\bigg]{\prod_{i=1}^N\frac1{(z-x_i)}}&=\int_a^bdx\frac{\phi_{N-1}(x)}{z-x}, && z\in\C\setminus[a,b].
\end{align}
\end{subequations}
Here, the former formula is often referred to as the Heine formula~\cite{Szego:1939} and gives the bi-orthogonal polynomial of order $N$, while the latter gives the Cauchy transform of the other bi-orthogonal function of order $N-1$. These formulae follow exactly like in the case of orthogonal ensembles since the joint density~\eqref{singular:correlation:jpdf-general} contains a Vandermonde determinant, but they will not generally hold for the larger class of bi-orthogonal ensembles (clearly the expectation of the characteristic polynomial must yield another polynomial). The canonical generalisations of~\eqref{singular:correlations:average-poly} to products of characteristic polynomials are valid as well; again the proofs follow the usual procedure~\cite{BH:2000,FS:2003,BDS:2003,DF:2006}.

The main purpose of this section is to bi-orthogonalise the polynomial ensemble~\eqref{singular:intro:jpdf} and thereby find an explicit expression for the $k$-point correlation function of the eigenvalues of a Wishart product matrix. The bi-orthogonal functions and the corresponding correlation kernel were originally obtained in~\cite{AKW:2013,AIK:2013} using a link to a two-matrix model and the Eynard--Mehta theorem~\cite{EM:1998}. An alternative, but closely related, derivation of the same result was given later in~\cite{KZ:2014}. Here, we present a slightly modified version of the proof from~\cite{AIK:2013}.

\begin{proposition}\label{thm:singular:biorthogonalise}
The complex Wishart product ensemble~\eqref{singular:intro:jpdf} is bi-orthogonalised by
\begin{subequations}
\label{singular:correlations:bi-system}
\begin{align}
p_k^n(x)&=(-1)^kh_k\,\MeijerG{1}{0}{1}{n+1}{k+1}{0,-\nu_1,\ldots,-\nu_n}{x}, \label{singular:correlations:p} \\
\phi_k^n(x)&=(-1)^k\MeijerG{n}{1}{1}{n+1}{-k}{\nu_1,\ldots,\nu_n,0}{x}, \label{singular:correlations:phi} \\
h_k^n&=k!\prod_{\ell=1}^n\Gamma[k+\nu_\ell+1]. \label{singular:correlations:norm}
\end{align}
\end{subequations}
With parameters as in proposition~\ref{thm:singular:jpdf-determinantal}.
\end{proposition}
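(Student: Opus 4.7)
The plan is to verify directly that the proposed pair $(p_k^n,\phi_k^n)$ satisfies the three defining properties: (a) $p_k^n$ is a monic polynomial of degree $k$; (b) $\phi_k^n \in \Span(w_0,\ldots,w_k)$ with non-trivial $w_k$-component; and (c) the bi-orthogonality $\int_0^\infty p_i^n(x)\phi_j^n(x)\,dx = h_i^n\delta_{ij}$. Throughout I use the Mellin--Barnes integral representations of the Meijer $G$-functions together with the weight moments $\int_0^\infty x^{s-1}w_j(x)\,dx = \Gamma(\nu_1+j+s)\prod_{\ell=2}^n\Gamma(\nu_\ell+s)$, which follow from the standard Mellin formula for $G^{n,0}_{0,n}$.

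For (a), expanding the Mellin--Barnes representation of $G^{1,0}_{1,n+1}$ in residues at the poles $s=0,1,2,\ldots$ of $\Gamma(-s)$, the reciprocal gamma factor $1/\Gamma(k+1-s)$ annihilates all contributions with $s>k$, so the series terminates at a polynomial of degree $k$. Collecting the residue at $s=k$ and multiplying by $(-1)^k h_k^n$ gives leading coefficient $+1$, confirming monicity; the same computation produces the explicit expansion $p_k^n(x)=\sum_{j=0}^k c_{k,j}x^j$ with $c_{k,j}=(-1)^{k-j}h_k^n/[j!(k-j)!\prod_\ell\Gamma(\nu_\ell+j+1)]$. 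For (b), I note that $\Gamma(1+k+s)/\Gamma(1+s)=\prod_{m=1}^k(m+s)$ is a polynomial of degree $k$ in $s$, while the integrand of $w_j$ differs from that of $w_0$ by $\Gamma(\nu_1+j-s)/\Gamma(\nu_1-s)=\prod_{m=0}^{j-1}(\nu_1-s+m)$. The latter factors for $j=0,1,\ldots,k$ form a triangular basis of the polynomials in $s$ of degree $\le k$, so $\phi_k^n$ is a unique linear combination of $w_0,\ldots,w_k$; matching leading $s^k$-coefficients shows that the $w_k$-component equals $+1$.

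Property (c) splits into three cases. The standard Meijer $G$-integration formula yields
\begin{equation*}
\int_0^\infty x^\ell\phi_j^n(x)\,dx=(-1)^j\prod_{m=1}^j(m-\ell-1)\prod_{\ell'=1}^n\Gamma(\nu_{\ell'}+\ell+1),
\end{equation*}
which vanishes for $0\le\ell<j$ and equals $h_j^n$ at $\ell=j$; this handles both $i<j$ (since $p_i^n$ has degree $i<j$) and $i=j$ (via the monic leading term). For $i>j$, because $\phi_j^n\in\Span(w_0,\ldots,w_j)$, it suffices to show $\int_0^\infty p_i^n(x)w_\ell(x)\,dx=0$ for $0\le\ell<i$. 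Substituting the explicit coefficients $c_{i,j}$ and the Mellin moment of $w_\ell$, this integral reduces, up to a non-zero factor, to
\begin{equation*}
\sum_{j=0}^i(-1)^{i-j}\binom{i}{j}\prod_{r=1}^\ell(\nu_1+j+r),
\end{equation*}
which is the $i$-th forward difference at $0$ of a polynomial in $j$ of degree $\ell<i$, and therefore vanishes.

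The main obstacle is the case $i>j$ in step (c): neither the polynomial orthogonality encoded in the Mellin transform of $\phi_j^n$ nor the span condition alone closes the argument, and the key observation is to recognise the remaining sum as the annihilation of a sub-degree polynomial by a higher-order finite difference. The other steps amount to careful bookkeeping once the Mellin--Barnes contour orientation (poles of $\Gamma(b_\ell-s)$ encircled clockwise) and the polynomial-in-$s$ structure of the integrands are properly exploited.
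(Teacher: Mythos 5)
Your proof is correct, but it takes a genuinely different route from the paper's. The paper argues \emph{constructively}: it writes down the bi-moment matrix $M_{i,j}^n = \Gamma[i+j+\nu_1+1]\prod_{\ell=2}^n\Gamma[i+\nu_\ell+1]$, expresses the bi-orthogonal pair through Cramer's rule as determinants in these bi-moments, pulls the single-index gamma factors out of the determinants to reduce to the $n=1$ Laguerre case, and then resummes the resulting expansions into the Meijer $G$-representation. You instead take the proposed formulae as given and \emph{verify} the three defining conditions (monic polynomial, span condition with unit leading component, bi-orthogonality), using the Mellin--Barnes representations to read off the polynomial/$s$-polynomial structure and the Mellin moment formula to evaluate all the required integrals. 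Your case $i>j$ is the only genuinely non-routine step, and the reduction to the vanishing of an $i$-th forward difference of a degree-$\ell<i$ polynomial in $j$ is exactly the right way to close it; the same combinatorial identity appears implicitly in the paper's proof when the paper (in effect) notes that the Laguerre polynomials annihilate lower-degree monomials against the $n=1$ weight, but you have made it explicit and elementary. The trade-off is the usual one between a constructive and a verificatory argument: the paper's route shows \emph{why} the formulae have the shape they have and how one would find them, whereas yours is shorter and more self-contained once the answer is known, and it avoids the Cramer's-rule/determinant machinery entirely. Both are complete; your write-up is sound.

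Two small pedantries, neither of which affects correctness: in your step (b), the span property and the unit $w_k$-coefficient are obtained jointly from the triangularity of the basis $\{\prod_{m=0}^{j-1}(\nu_1-s+m)\}_{j=0}^{k}$ together with the overall $(-1)^k$ prefactor on $\phi_k^n$ (which cancels the $(-1)^k$ arising as the leading $s^k$-coefficient of $\prod_{m=0}^{k-1}(\nu_1-s+m)$), and spelling that cancellation out would make the claim airtight. Second, in the $i>j$ case you only need $\int p_i^n\, w_\ell = 0$ for $0\le\ell\le j$ rather than for all $0\le\ell<i$; proving the stronger range is harmless but worth flagging so a reader does not worry about whether $\ell$ can equal $i$.
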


\begin{proof}
In order to find the bi-orthogonal functions, we start by computing the bi-moments using~\eqref{special:meijer:meijer-moment},
\begin{equation}
M_{i,j}^n:= \int_0^\infty dx\, x^i w_{j}^n(x)= \Gamma[i+j+\nu_1+1]\prod_{\ell=2}^{n}\Gamma[i+\nu_\ell+1].
\label{singular:correlations:bimoment}
\end{equation}
Due to the total positivity of the bi-moment matrix, there exists a family of functions~\eqref{singular:correlation:monic} which satisfies the bi-orthogonality relation~\eqref{singular:correlation:orthogonal}. Using Cramer's rule, the bi-orthogonal functions can be expressed in terms of the bi-moments,
\begin{equation}
p_k^n(x)=\frac{1}{D_{k-1}^n}
\det_{\substack{i=0,\ldots,k\\j=0,\ldots,k-1}}
\begin{bmatrix}
M_{i,j}^n \bigg\vert\, x^i
\end{bmatrix},\qquad
\phi_k^n(x)=\frac{1}{D_{k-1}^n}
\det_{\substack{i=0,\ldots,k-1\\j=0,\ldots,k}}
\bigg[\begin{array}{@{}cc @{}}
M_{i,j}^n \\ \hline
w_j^n(x)
\end{array}\bigg],
\label{singular:correlations:function-step1}
\end{equation}
with
\begin{equation}
D_k^n:=\det_{0\leq i,j\leq k} \big[M_{ij}^n\big]=\prod_{i=0}^ki!\prod_{\ell=1}^n\Gamma[i+\nu_\ell+1].
\end{equation}
The squared norms are given by $h_k^n=D_k^n/D_{k-1}^n$ which immediately yields~\eqref{singular:correlations:norm}. It remains to show that the bi-orthogonal functions~\eqref{singular:correlations:function-step1} are equivalent to those in~\eqref{singular:correlations:bi-system}.

We observe that the last gamma functions in~\eqref{singular:correlations:bimoment} only depend on a single index. For this reason, we may think of their contribution as multiplication by a diagonal matrix within the determinants in~\eqref{singular:correlations:function-step1}. We pull this contribution out of the determinant, which yields
\begin{subequations}
\label{singular:correlations:function-step2}
\begin{align}
p_k^n(x)&=\frac{\prod_{i=0}^k\prod_{\ell=2}^n\Gamma[i+\nu_\ell+1]}{D_{k-1}^n}
\det_{\substack{i=0,\ldots,k\\j=0,\ldots,k-1}}
\begin{bmatrix}
 M_{i,j}^{n=1} \bigg\vert\, \dfrac{x^i}{\prod_{\ell=2}^n\Gamma[i+\nu_\ell+1]}
\end{bmatrix},\\
\phi_k^n(x)&=\frac{\prod_{i=0}^{k-1}\prod_{\ell=2}^n\Gamma[i+\nu_\ell+1]}{D_{k-1}^n}
\det_{\substack{i=0,\ldots,k-1\\j=0,\ldots,k}}
\bigg[\begin{array}{@{}cc @{}}
M_{i,j}^{n=1} \\ \hline
w_j^n(x)
\end{array}\bigg].
\end{align}
\end{subequations}
The determinants are now related to the $n=1$ case, i.e. the ordinary Wishart ensemble. It is well-known (see e.g.~\cite{Forrester:2010}) that the Wishart ensemble may be orthogonalised by means of Laguerre polynomials, which in monic normalisation reads
\begin{equation}
\widetilde L_k^{\nu_1}(x)=(-1)^kk!L_k^{\nu_1}(x)
=\sum_{j=0}^k\frac{(-1)^{k+j}k!}{(k-j)!}\frac{\Gamma[k+\nu_1+1]}{\Gamma[j+\nu_1+1]}\frac{x^j}{j!}.
\label{singular:correlations:laguerre}
\end{equation}
We can use this fact together with the replacements $x^i\mapsto x^i/\prod_{\ell=2}^n\Gamma[i+\nu_\ell+1]$ and $x^i\mapsto w_i^n(x)$ to evaluate the determinants in~\eqref{singular:correlations:function-step2}. We find
\begin{subequations}
\begin{align}
p_k^n(x)&=\sum_{j=0}^k\frac{(-1)^{k+j}k!}{(k-j)!}\prod_{\ell=1}^n\frac{\Gamma[k+\nu_\ell+1]}{\Gamma[j+\nu_\ell+1]}\frac{x^j}{j!}, \\
\phi_k^n(x)&=\sum_{j=0}^k\frac{(-1)^{k+j}k!}{(k-j)!}\frac{\Gamma[k+\nu_1+1]}{\Gamma[j+\nu_1+1]}\frac{w_j^n(x)}{j!}. \label{singular:correlations:function-step3-phi}
\end{align}
\end{subequations}
The polynomial $p_k^n(x)$ is immediately recognized as a hypergeometric function
\begin{equation}
p_k^n(x)=(-1)^k\prod_{\ell=1}^n\frac{\Gamma[k+\nu_\ell+1]}{\Gamma[\nu_\ell+1]}\hypergeometric{1}{n}{-k}{\nu_1+1,\ldots,\nu_n+1}{x}
\end{equation}
or equivalent the Meijer $G$-function~\eqref{singular:correlations:p}, see appendix~\ref{app:special}. In order to get a compact expression for the function $\phi_k^n(x)$, we first rewrite the weight function as an integral
\begin{equation}
w_k^n(x)=\int_0^\infty \frac{dy}{y}\, y^{\nu_1+k}e^{-y}\MeijerG{n-1}{0}{0}{n-1}{-}{\nu_2,\ldots,\nu_n}{\frac{x}{y}}.
\end{equation}
We insert this expression into~\eqref{singular:correlations:function-step3-phi} and use~\eqref{singular:correlations:laguerre} to write
\begin{equation}
\phi_k^n(x)=(-1)^kk!\int_0^\infty \frac{dy}{y}\, y^{\nu_1}e^{-y}L_k^{\nu_1}(y)\MeijerG{n-1}{0}{0}{n-1}{-}{\nu_2,\ldots,\nu_n}{\frac{x}{y}}.
\end{equation}
After rewriting the Laguerre polynomial as a Meijer $G$-function~\eqref{special:meijer:meijer-other}, we perform this integral using~\eqref{special:meijer:int-meijer-meijer} which yields~\eqref{singular:correlations:phi} and concludes the proof.
\end{proof}

\begin{remark}
It was pointed out in~\cite{KZ:2014} that the bi-orthogonal functions~\eqref{singular:correlations:bi-system} satisfy a stronger requirement than bi-orthogonality: they are multiple orthogonal in the sense of~\cite{Ismail:2005,Kuijlaars:2010}. Multiple orthogonality is useful since it can be used to find recurrence relations for the functions. However, the same multiple orthogonality has not been observed for other product ensembles~\cite{Forrester:2014b,KS:2014,KKS:2015} and we will not pursue that direction in this thesis. 
\end{remark}

\begin{corollary}
The $k$-point correlation function describes a determinantal point process~\eqref{singular:correlation:correlation-kernel} on the positive half-line with kernel
\begin{equation}
K_N^n(x,y)=\sum_{k=0}^{N-1}\MeijerG{1}{0}{1}{n+1}{k+1}{0,-\nu_1,\ldots,-\nu_n}{x}\MeijerG{n}{1}{1}{n+1}{-k}{\nu_1,\ldots,\nu_n,0}{y}
\label{singular:correlations:kernel}
\end{equation}
with parameters as in proposition~\ref{thm:singular:biorthogonalise}.
\end{corollary}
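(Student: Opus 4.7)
My plan is to obtain the kernel by a direct application of the polynomial-ensemble formalism to the explicit bi-orthogonal system already produced in Proposition~\ref{thm:singular:biorthogonalise}. The argument amounts to stitching together results established earlier in this section, so no substantial new computation is required.

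First, I would invoke Proposition~\ref{thm:singular:jpdf-determinantal}, which exhibits the joint eigenvalue density as a polynomial ensemble of the form~\eqref{singular:correlation:jpdf-general} with weights given by Meijer $G$-functions. I would then appeal to the general framework reviewed in this section: once a monic bi-orthogonal family $\{p_k,\phi_k\}$ with normalisations $h_k$ is in hand, the $k$-point correlation function has the determinantal expression~\eqref{singular:correlation:correlation-kernel} with kernel $K_N(x,y)=\sum_{k=0}^{N-1}p_k(x)\phi_k(y)/h_k$. This general formula is derived in the standard way: by multilinearity of the determinants one replaces the monomial rows and weight-function columns inside~\eqref{singular:correlation:jpdf-general} by rows of $p_{j-1}$ and columns of $\phi_{j-1}$ (these row/column operations have unit determinant because the family is monic), and one then integrates out $x_{k+1},\ldots,x_N$ in the definition~\eqref{singular:correlation:correlation-def} by successive use of the bi-orthogonality relation.

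Second, I would substitute the explicit bi-orthogonal system from Proposition~\ref{thm:singular:biorthogonalise}. Writing $p_k^n(x)=(-1)^k h_k^n\, G^{1,0}_{1,n+1}(\,\cdot\,|x)$ from~\eqref{singular:correlations:p} and $\phi_k^n(y)=(-1)^k\, G^{n,1}_{1,n+1}(\,\cdot\,|y)$ from~\eqref{singular:correlations:phi}, one observes that the normalisation $h_k^n$ cancels the $1/h_k^n$ factor in the kernel sum, while the two sign factors combine as $(-1)^{2k}=1$. What remains in each summand is precisely the product of Meijer $G$-functions displayed in the statement, with parameters transcribed directly from~\eqref{singular:correlations:bi-system}.

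The main ``obstacle'' is essentially vacuous: it consists only of the book-keeping check that the $G$-function parameters have been copied correctly and that the cancellation of $h_k^n$ is clean. All the substantive work has already been carried out in Propositions~\ref{thm:singular:jpdf-determinantal} and~\ref{thm:singular:biorthogonalise}, so the corollary follows by pure substitution into the general bi-orthogonal kernel formula.
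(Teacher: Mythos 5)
Your proposal is correct and follows essentially the same route as the paper, which simply cites Proposition~\ref{thm:singular:biorthogonalise} and the general kernel formula~\eqref{singular:correlations:kernel-general}. You have merely spelled out the cancellation of $h_k^n$ and the sign factors $(-1)^{2k}=1$, which the paper leaves implicit.
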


\begin{proof}
Follows from proposition~\ref{thm:singular:biorthogonalise} and~\eqref{singular:correlations:kernel-general}.
\end{proof}

For later use, we also reproduce an alternative expression for the kernel originally given by Kuijlaars and Zhang \cite{KZ:2014}.

\begin{corollary}
The kernel~\eqref{singular:correlations:kernel} can be written as
\begin{equation}
K_N^n(x,y)=\int_0^1 du\,\MeijerG{1}{0}{1}{n+1}{N}{0,-\nu_1,\ldots,-\nu_n}{ux}\MeijerG{n}{1}{1}{n+1}{-N}{\nu_1,\ldots,\nu_n,0}{uy}.
\label{singular:correlations:kernel-int}
\end{equation}
\end{corollary}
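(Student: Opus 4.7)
The plan is to show that both sides satisfy the same first‑order recursion in $N$ together with the same initial value, and to verify the recursion at the level of Mellin--Barnes integral representations of the Meijer $G$‑functions. Denote the right‑hand side by
$$
I_N^n(x,y) := \int_0^1 du\,\MeijerG{1}{0}{1}{n+1}{N}{0,-\nu_1,\ldots,-\nu_n}{ux}\MeijerG{n}{1}{1}{n+1}{-N}{\nu_1,\ldots,\nu_n,0}{uy}.
$$
Since $K_N^n - K_{N-1}^n$ is the $k=N-1$ summand in \eqref{singular:correlations:kernel}, the claim reduces to proving the recursion
$$
I_N^n - I_{N-1}^n \;=\; \MeijerG{1}{0}{1}{n+1}{N}{0,-\nu_1,\ldots,-\nu_n}{x}\MeijerG{n}{1}{1}{n+1}{-(N-1)}{\nu_1,\ldots,\nu_n,0}{y}
$$
together with the initial condition $I_0^n = K_0^n = 0$.

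For the recursion I would insert the Mellin--Barnes representations and interchange the order of integration, which produces the factor $\int_0^1 u^{s+t}du = 1/(s+t+1)$. Then
$$
I_N^n - I_{N-1}^n = \frac{1}{(2\pi \imath)^2}\iint \frac{\Gamma(-s)\prod_\ell\Gamma(\nu_\ell-t)\,x^s y^t}{\Gamma(1+t)\prod_\ell\Gamma(1+\nu_\ell+s)}\cdot\frac{1}{s+t+1}\left[\frac{\Gamma(N+1+t)}{\Gamma(N-s)}-\frac{\Gamma(N+t)}{\Gamma(N-1-s)}\right]ds\,dt.
$$
The key observation is that repeated use of $\Gamma(z+1)=z\Gamma(z)$ simplifies the bracket to $(s+t+1)\Gamma(N+t)/\Gamma(N-s)$, exactly cancelling the denominator $(s+t+1)$. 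The remaining double integral then factorizes as a product of two single Mellin--Barnes integrals, which are precisely the two Meijer $G$-functions on the right of the recursion.

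For the base case $N=0$, the sum $K_0^n$ is empty and equals zero; for $I_0^n$ one uses that $\MeijerG{1}{0}{1}{n+1}{0}{0,-\nu_1,\ldots,-\nu_n}{ux}$ vanishes, since the $\Gamma(-s)$ in its Mellin--Barnes numerator cancels against $\Gamma(0-s)$ in the denominator, leaving an entire integrand whose Hankel‑type contour integral is zero (equivalently, the polynomial from \eqref{singular:correlations:p} with $k=-1$ is empty). Induction on $N$ then yields $I_N^n = K_N^n$.

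\textbf{Main obstacle.} The principal technical point is the justification of the interchange of the $u$‑integration with the Mellin--Barnes integrals. For $G^{1,0}_{1,n+1}$, the parameters give $2(m+n)-p-q=-n<0$, so the integrand does not decay on vertical contours and the contour must be of Hankel type enclosing the poles of $\Gamma(-s)$; this is consistent with $G^{1,0}_{1,n+1}(N;\ldots\mid ux)$ being a polynomial of degree $N-1$ in $ux$. In contrast, $G^{n,1}_{1,n+1}$ admits a convergent vertical Mellin--Barnes representation. Hence the cleanest implementation of the argument is to expand the polynomial $G^{1,0}_{1,n+1}$ as a finite series in $(ux)^j$ via residues, carry out the $u$-integration against $G^{n,1}_{1,n+1}(-N;\ldots\mid uy)$ term by term using $\int_0^1 u^{s+t}du=1/(s+t+1)$ inside the convergent $t$-contour (shifted so that $\operatorname{Re}(t)>-1$), and then perform the telescoping manipulation above on the resulting contour integrals. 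Once Fubini is secured on these contours, the algebraic step is immediate and the proof concludes by induction.
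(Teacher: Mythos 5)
Your proposal is correct and follows essentially the same route as the paper's own proof: the same gamma-function identity $(s+t+1)\,\Gamma(k+t)/\Gamma(k-s)=\Gamma(k+1+t)/\Gamma(k-s)-\Gamma(k+t)/\Gamma(k-1-s)$, the same device $\int_0^1 u^{s+t}\,du=1/(s+t+1)$ linking the $u$-integral to the double Mellin--Barnes representation, and the same observation that one endpoint vanishes (your base case $I_0^n=0$ corresponds to the paper's vanishing second contour integral). The paper telescopes the finite sum in $K_N^n$ all at once, while you phrase it as an induction on $N$ with a one-step recursion; this is a cosmetic reorganization of the same argument.
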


\begin{proof}
The kernel~\eqref{singular:correlations:kernel} has a natural representation as a double contour integral through the definition of the Meijer $G$-function~\eqref{special:meijer:meijer-def}. We have
\begin{equation}
K_N^n(x,y)=\frac{1}{(2\pi i)^2}\int_{-1-i\infty}^{-1+i\infty} dt\oint_{\Sigma_k} ds\, x^sy^t\frac{\Gamma[-s]}{\Gamma[1+t]}
\prod_{\ell=1}^n\frac{\Gamma[\nu_\ell-t]}{\Gamma[\nu_\ell+1+s]}
\sum_{k=0}^{N-1}\frac{\Gamma[k+1+t]}{\Gamma[k+1-s]},
\label{singular:correlations:kernel-step1}
\end{equation}
where $\Sigma_k$ is a closed contour which encircles $0,\ldots,k$ in the negative direction such that $\Re s>-1$ for all $s\in\Sigma_k$. Using the recursive property $\Gamma[z+1]=z\Gamma[z]$, it is straightforward to show that
\begin{equation}
(t+s+1)\frac{\Gamma[k+t]}{\Gamma[k-s]}=\frac{\Gamma[k+1+t]}{\Gamma[k-s]}-\frac{\Gamma[k+t]}{\Gamma[k-1-s]}.
\end{equation}
This identity turns~\eqref{singular:correlations:kernel-step1} into a telescopic sum, hence
\begin{multline}
K_N^n(x,y)=\frac{1}{(2\pi i)^2}\int_{-1-i\infty}^{-1+i\infty} dt\oint_{\Sigma_k} ds\, \frac{x^sy^t}{s+t+1}\frac{\Gamma[-s]}{\Gamma[1+t]}
\frac{\Gamma[N+1+t]}{\Gamma[N-s]}\prod_{\ell=1}^n\frac{\Gamma[\nu_\ell-t]}{\Gamma[\nu_\ell+1+s]}\\
-\frac{1}{(2\pi i)^2}\int_{-1-i\infty}^{-1+i\infty} dt\oint_{\Sigma_k} ds\, \frac{x^sy^t}{s+t+1}
\prod_{\ell=1}^n\frac{\Gamma[\nu_\ell-t]}{\Gamma[\nu_\ell+1+s]}.
\label{singular:correlations:kernel-step2}
\end{multline}
Note that $s+t+1$ is non-zero with the above given contours. The integral on the second line in~\eqref{singular:correlations:kernel-step2} is zero, since $\Sigma_k$ encircles no singularities. In order to re-obtain an expression in terms of Meijer $G$-functions, we use that
\begin{equation}
\frac{x^sy^t}{s+t+1}=\int_0^1 du (ux)^s(uy)^t.
\end{equation}
Inserting this into the first line in~\eqref{singular:correlations:kernel-step2} and using the definition of the Meijer $G$-function~\eqref{special:meijer:meijer-def} completes the proof.
\end{proof}

% \begin{remark}
% Note that the $t$-contour is shifted with a half compared to~\cite{KZ:2014}; this shift is necessary since we allow $\nu_1>-1$ rather than $\nu_1\geq0$.
% \end{remark}

\section{Asymptotic results for Wishart product matrices}
\label{sec:singular:asymp}

\subsection{Macroscopic density}

The macroscopic density for the Wishart product ensemble has been obtained previously in the literature (without knowledge about the structure presented above) using planar diagrams, free probability, or probabilistic techniques~\cite{Muller:2002,BJLNS:2010,BG:2010,BBTCC:2011,PZ:2011}. In this section we show that the finite-$N$ expression is in agreement with these results. In order to do so, we first calculate the moments and then look at the corresponding moment problem.

We first look at the moment at finite-$N$ originally presented in~\cite{AIK:2013}.

\begin{lemma}
Let $s$ be a positive integer, then the $s$-th moment for the spectral density (one-point correlation function) reads
\begin{equation}
\int_0^\infty dx\,R_1^n(x)x^s
=\frac1{s!}\sum_{j=0}^{s-1}(-1)^j\binom{s-1}{j}\frac{\Gamma[N-j+s]}{\Gamma[N-j]}\prod_{\ell=1}^n\frac{\Gamma[N+\nu_\ell-j+s]}{\Gamma[N+\nu_\ell-j]}
\label{singular:asymp:moments-finite}
\end{equation}
with parameters as in proposition~\ref{thm:singular:jpdf-determinantal}. The $0$-th moment is normalised to $N$ in the present notation.
\end{lemma}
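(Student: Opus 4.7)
The plan is to compute the moment by integrating the one-point correlation function written via the bi-orthogonal expansion of the kernel. Concretely, I would start from
\[
M_s := \int_0^\infty R_1^n(x)\,x^s\,dx \;=\; \sum_{k=0}^{N-1}\frac{1}{h_k^n}\int_0^\infty x^s\,p_k^n(x)\,\phi_k^n(x)\,dx,
\]
using proposition~\ref{thm:singular:biorthogonalise}. In the proof of that proposition the polynomial $p_k^n$ is given explicitly as a finite sum of monomials, so the task reduces to evaluating $\int_0^\infty x^{s+j}\phi_k^n(x)\,dx$ for $0\le j\le k$.

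Those integrals are textbook Mellin transforms of a Meijer $G$-function and can be obtained from the moment formula~\eqref{special:meijer:meijer-moment}. The bi-orthogonality $\int x^m\phi_k^n=0$ for $m<k$ makes the answer vanish unless $s+j\ge k$, and for $s+j\ge k$ a short Gamma-function manipulation (cancelling the pole at $\Gamma[-s-j]$ in the denominator against $\Gamma[k-s-j]$ in the numerator) yields
\[
\int_0^\infty x^{s+j}\phi_k^n(x)\,dx \;=\; \frac{(s+j)!}{(s+j-k)!}\prod_{\ell=1}^n\Gamma[\nu_\ell+s+j+1].
\]
Substituting this back and dividing by $h_k^n=k!\prod_\ell\Gamma[k+\nu_\ell+1]$ produces a double sum over $(k,j)$. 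Reindexing via $i=k-j$ and $m=k-i$ (with $0\le i\le s$) separates the $i$- and $m$-dependence and gives
\[
M_s \;=\; \frac{1}{s!}\sum_{i=0}^{s}(-1)^i\binom{s}{i}\,S(N-i),
\qquad
S(M):=\sum_{m=0}^{M-1}\frac{\Gamma[m+s+1]}{\Gamma[m+1]}\prod_{\ell=1}^n\frac{\Gamma[m+\nu_\ell+s+1]}{\Gamma[m+\nu_\ell+1]}.
\]

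The last step is a purely combinatorial identity. Writing $S(N-j)-S(N-j-1)$ for the $(N-1-j)$-th summand of $S$, one verifies via Pascal's rule that
\[
\sum_{j=0}^{s-1}(-1)^j\binom{s-1}{j}\bigl[S(N-j)-S(N-j-1)\bigr] \;=\; \sum_{i=0}^{s}(-1)^i\binom{s}{i}S(N-i),
\]
and the summand $S(N-j)-S(N-j-1)$ is exactly $(N-j)_s\prod_\ell(N+\nu_\ell-j)_s$, matching the RHS of~\eqref{singular:asymp:moments-finite}.

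The main obstacle is the indeterminate $0/0$ that appears when applying the Meijer $G$ moment formula in the regime $s+j\ge k$: both $\Gamma[-s-j]$ and $\Gamma[k-s-j]$ have poles at integer arguments, and one has to interpret the ratio as the limit that produces the factorial $(s+j)!/(s+j-k)!$ together with the correct sign. Everything else is bookkeeping, with the Pascal's-rule step being the slick shortening of the naive length-$(s{+}1)$ sum to the displayed length-$s$ sum.
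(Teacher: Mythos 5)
Your proof is correct, and it takes a genuinely different route from the paper's. The paper evaluates the moment via the Kuijlaars--Zhang single-integral kernel representation~\eqref{singular:correlations:kernel-int}: it expands the first Meijer $G$-factor as a polynomial, performs the Mellin integrals, and arrives at a formula with a factor $1/\Gamma[s-j]$ in the summand; this formula is first established for non-integer $s$ by analyticity, and when $s$ is a positive integer the pole of $\Gamma[s-j]$ automatically truncates the sum at $j\le s-1$. Your approach instead works directly with the bi-orthogonal sum form of the kernel, expands $p_k^n$ as a polynomial, and moments $\phi_k^n$ term by term; the truncation then comes from the bi-orthogonality relation $\int x^m \phi_k^n=0$ for $m<k$ rather than from a gamma-function pole, and the final reshaping is a genuinely new ingredient (the Pascal's-rule telescoping $\sum_j (-1)^j\binom{s-1}{j}[S(N-j)-S(N-j-1)]=\sum_i(-1)^i\binom{s}{i}S(N-i)$, which I checked). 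What your route buys is that it avoids the integral kernel representation~\eqref{singular:correlations:kernel-int} altogether, at the price of the reindexing/Pascal step; what the paper's route buys is a single closed sum that handles both the finite-$s$ truncation and the $s\to0$ normalisation in one breath, since it works with $s$ as a continuous parameter.

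One small point worth tightening if you write this up formally: your $0/0$ cancellation $\Gamma[k-s-j]/\Gamma[-s-j]\to(-1)^k\,(s+j)!/(s+j-k)!$ is best phrased via analytic continuation in $s$ (the moment $\int_0^\infty x^{s'}\phi_k^n\,dx$ is analytic in a right half-plane and the Mellin formula gives its continuation), rather than as a literal limit of a quotient of two infinities; the paper makes essentially the same analytic-continuation move when it says its intermediate formula is ``continuous in $s$ for $s>-\nu_1-1$.''
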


\begin{proof}
We use the kernel~\eqref{singular:correlations:kernel-int} in the formula for the one-point correlation function. Writing the first Meijer $G$-function as a polynomial, we find
\begin{multline}
\int_0^\infty dx\,R_1^n(x)x^s=\sum_{j=0}^{N-1}\frac{(-1)^j}{j!(N-j-1)!}\prod_{\ell=1}^n\frac1{\Gamma[\nu_\ell+j+1]}\\
\times\int_0^\infty dx\,x^{s+j}\int_0^1du\,u^j\MeijerG{n}{1}{1}{n+1}{-N}{\nu_1,\ldots,\nu_n,0}{ux},
\end{multline}
which is continuous in $s$ for $s>-\nu_1-1$. The integrals over the Meijer $G$-function can be performed using~\eqref{special:meijer:int-meijer-meijer} and~\eqref{special:meijer:meijer-moment}. After a reordering of the sum, we obtain
\begin{equation}
\int_0^\infty dx\,R_1^n(x)x^s
=\frac1{s}\sum_{j=0}^{N-1}\frac{(-1)^j}{j!\Gamma[s-j]}\frac{\Gamma[N-j+s]}{\Gamma[N-j]}\prod_{\ell=1}^n\frac{\Gamma[N+\nu_\ell-j+s]}{\Gamma[N+\nu_\ell-j]}
\end{equation}
with integer values of $s$ uniquely determined by the corresponding limits. If $s$ is a positive integer then the sum terminates for $j\geq s$ and~\eqref{singular:asymp:moments-finite} is obtained, while $s\to0$ yields the normalisation.
\end{proof}

% Before we go on, it is beneficially to introduce 

\begin{definition}The Fuss--Catalan distribution is defined on the positive half-line through the density
\begin{equation}
\rho_\fuss^n(x):=
\frac1{\sqrt{2\pi}}\frac{n^{n-3/2}}{(n+1)^{n+1/2}}\MeijerG[\bigg]{n}{0}{n}{n}{\frac1n,0,-\frac1n,\ldots,-\frac{n-2}n}{-\frac1{n+1},-\frac2{n+1},\ldots,-\frac{n}{n+1}}{\frac{n^n}{(n+1)^{n+1}}x}.
\label{singular:asymp:density-fc}
\end{equation}
\end{definition}

\begin{remark}
The Fuss--Catalan distribution is named after its moments, 
\begin{equation}
\int_0^\infty dx\,\rho_\fuss^n(x)x^s=\frac1{ns+1}\binom{(n+1)s}{s},
\label{singular:asymp:moments-fc}
\end{equation}
which for integer values are the so-called Fuss--Catalan numbers. From general properties of the Meijer $G$-function, we know that~\eqref{singular:asymp:density-fc} is an analytical function on the positive half-line except at $x=(n+1)^{n+1}/n^n$. Furthermore, we note that the complex contour, which appears in the definition of the Meijer $G$-function~\eqref{singular:asymp:density-fc}, depends on whether $x$ is larger or smaller than $K_n:=(n+1)^{n+1}/n^n$. For $x>K_n$ the contour encloses no singularities, thus the Fuss--Catalan distribution has compact support on $[0,K_n]$.
\end{remark}

% We have the follow result for the macroscopic density:

\begin{proposition}\label{thm:singular:fuss-catalan}
For large matrix dimension, the macroscopic spectral density converges to the Fuss--Catalan density~\eqref{singular:asymp:density-fc}, i.e. 
\begin{equation}
\lim_{N\to\infty}N^{n-1} R_1^n(N^nx)=\rho_\fuss^n(x)
\end{equation}
with $n$ and $\nu_\ell$ ($\ell=1,\ldots,n$) fixed. 
\end{proposition}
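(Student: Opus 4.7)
I would argue via the method of moments. Setting $\nu_0:=0$ for notational uniformity, the rescaled density $\tilde R_N(y):=N^{n-1}R_1^n(N^n y)$ is a probability density on $[0,\infty)$, and by the preceding lemma its $s$-th moment ($s\ge 1$) equals
\begin{equation*}
\mu_s^{(N)} \;=\; \frac{1}{s!\,N^{1+ns}}\sum_{j=0}^{s-1}(-1)^j\binom{s-1}{j}\prod_{\ell=0}^{n}\frac{\Gamma[N+\nu_\ell-j+s]}{\Gamma[N+\nu_\ell-j]}.
\end{equation*}
The goal is to show that $\lim_{N\to\infty}\mu_s^{(N)}=\frac{1}{ns+1}\binom{(n+1)s}{s}$, which by~\eqref{singular:asymp:moments-fc} is exactly the $s$-th Fuss--Catalan moment. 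Since $\rho_\fuss^n$ has compact support on $[0,(n+1)^{n+1}/n^n]$, the Hausdorff moment problem is determinate, so pointwise convergence of all integer moments (together with the tightness automatically provided by the boundedness of the limiting moments) will give weak convergence of $\tilde R_N(y)\,dy$ to $\rho_\fuss^n(y)\,dy$.

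For the asymptotics the key observation is that each Gamma ratio equals the polynomial $\prod_{k=0}^{s-1}(N+\nu_\ell-j+k)$ of degree $s$ in $(N-j)$ with leading coefficient one, so the $(n+1)$-fold product is a polynomial in $(N-j)$ of degree $(n+1)s$ whose leading monomial $(N-j)^{(n+1)s}$ is \emph{independent} of the charges $\nu_\ell$; all $\nu_\ell$-dependence is pushed into strictly subleading terms $c_m(\nu)(N-j)^{(n+1)s-m}$ with $m\ge 1$. Expanding
\begin{equation*}
(N-j)^{(n+1)s-m}=\sum_{k=0}^{(n+1)s-m}\binom{(n+1)s-m}{k}N^{(n+1)s-m-k}(-j)^k
\end{equation*}
and substituting into the alternating $j$-sum, the classical finite-difference identity
\begin{equation*}
\sum_{j=0}^{s-1}(-1)^j\binom{s-1}{j}\,j^k \;=\; (-1)^{s-1}(s-1)!\,S(k,s-1),
\end{equation*}
with $S(\cdot,\cdot)$ the Stirling numbers of the second kind, vanishes for $k<s-1$ and equals $(-1)^{s-1}(s-1)!$ at $k=s-1$. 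A quick power count then shows that the $(m,k)$-contribution to $\mu_s^{(N)}$ scales as $N^{s-m-k-1}$ after the division by $N^{1+ns}$, which vanishes in the limit unless $m=0$ and $k=s-1$.

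Retaining only the surviving $(m,k)=(0,s-1)$ and using $S(s-1,s-1)=1$ yields
\begin{equation*}
\lim_{N\to\infty}\mu_s^{(N)} \;=\; \frac{1}{s!}\binom{(n+1)s}{s-1}(s-1)!\;=\;\frac{1}{s}\binom{(n+1)s}{s-1}\;=\;\frac{1}{ns+1}\binom{(n+1)s}{s},
\end{equation*}
as required. The Hausdorff moment theorem then closes the argument. The main obstacle I anticipate is bookkeeping: one has to make sure, uniformly in $s,n$ and the $\nu_\ell$, that the elementary symmetric polynomials $c_m(\nu)$ of the shifts $\{\nu_\ell+k\}$ appearing in the subleading monomials cannot conspire to produce a contribution of order $N^{1+ns}$, which is ensured by the power count above but is notationally cumbersome. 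A secondary concern, if the proposition is to be read as pointwise rather than weak convergence of densities, is to supplement the moment argument with locally uniform bounds on the Meijer-$G$ representation of $\tilde R_N$, so as to upgrade weak convergence of measures to locally uniform convergence on compact subsets of the open interval $(0,(n+1)^{n+1}/n^n)$.
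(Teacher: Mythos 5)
Your proof is correct and follows essentially the same route as the paper: rescale the finite-$N$ moments of the lemma, expand the product of Gamma ratios as a polynomial in $N-j$, kill all but one term with the alternating binomial sum, and identify the surviving coefficient as the Fuss--Catalan number before invoking moment determinacy. Your version is in fact a slightly cleaner rendering of the same argument --- you correctly replace the paper's Kronecker-delta identity (which as written only holds for $k\le s-1$) with the full Stirling-number formula and make the power count $N^{s-m-k-1}$ explicit, which is exactly the suppression mechanism the paper leaves implicit when it says the dominant term is of order $N^{ns+1}$; the paper closes with Carleman's criterion on $[0,\infty)$ while you appeal to determinacy of the compactly supported limit, both of which suffice via Fréchet--Shohat.
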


\begin{proof}
First, we will show that the moments of the spectral density converges to the Fuss--Catalan numbers. It follows from~\eqref{singular:asymp:moments-finite} that the rescaled moments are 
\begin{equation}
\int_0^\infty dxN^{n-1}{R_1^n(N^nx)}x^s
=\frac1{N^{ns+1}}\frac1{s!}\sum_{j=0}^{s-1}(-1)^j\binom{s-1}{j}\frac{\Gamma[N-j+s]}{\Gamma[N-j]}\prod_{\ell=1}^n\frac{\Gamma[N+\nu_\ell-j+s]}{\Gamma[N+\nu_\ell-j]}
\label{singular:asymp:moments-rescaled}
\end{equation}
for $s=1,2,\ldots$ (the $0$-th moment is unity). In order to evaluate the sum, we notice that
\begin{equation}
\sum_{j=0}^{s-1}(-1)^jj^k\binom{s-1}{j}=(-1)^{s-1}(s-1)!\,\delta_{k,s-1},
\label{singular:asymp:moments-identity}
\end{equation}
which may be seen by taking derivatives of $(x-1)^{s-1}$ with respect to $x$ and then setting $x=1$. With this in mind, we rewrite the gamma functions in~\eqref{singular:asymp:moments-rescaled} as a polynomial in $j$ and $N$,
\begin{equation}
\frac{\Gamma[N-j+s]}{\Gamma[N-j]}\prod_{\ell=1}^n\frac{\Gamma[N+\nu_\ell-j+s]}{\Gamma[N+\nu_\ell-j]}
=\sum_{k=0}^{(n+1)s}\sum_{m=0}^{(n+1)s-k}a_{k,m}j^kN^m,
\end{equation}
where $a_{k,m}$ denotes the coefficients.
Due to the identity~\eqref{singular:asymp:moments-identity} we may restrict ourselves to $k=s-1$, hence the dominant term is of order $N^{ns+1}$ in agreement with the prefactor in~\eqref{singular:asymp:moments-rescaled}. Moreover, the coefficient of $j^{s-1}N^{ns+1}$ is equal to the coefficient of the same quantity in the expansion of $(N-j)^{(n+1)s}$. Inserting this back into~\eqref{singular:asymp:moments-rescaled} yields
\begin{equation}
\int_0^\infty dxN^{n-1}{R_1^n(N^nx)}x^s=\frac1{ns+1}\binom{(n+1)s}{s}+O(N^{-1}).
\end{equation}
Thus, the moments converge to the Fuss--Catalan numbers\footnote{A. Smith is thanked for help with this proof. \url{mathoverflow.net/questions/201496}}. 

It follows from Stieltjes moment problem and Carleman's criterion that the Fuss--Catalan numbers determine a unique probability measure on the positive half-line (see e.g.~\cite{Simon:1998} for a discussion of the moment problem). The probability density~\eqref{singular:asymp:density-fc} is obtained from the moments by means of an inverse Mellin transform after an expansion of the gamma functions using the Gauss' multiplication formula~\eqref{special:gamma:multiplication}, see~\cite{PZ:2011} for details. 
\end{proof}

\begin{remark}
A more compact form for the Fuss--Catalan density~\eqref{singular:asymp:density-fc} can be obtained using the Fox $H$-function,
\begin{equation}
\rho_\fuss^n(x)=\FoxH{1}{0}{2}{1}{(0,1),(2-n,n)}{(-n,n+1)}{x};
\label{singular:asymp:fox}
\end{equation}
see appendix~\ref{app:special} for discussion of this function. On the interval of support, an expression for Fuss--Catalan density in terms elementary function can be obtained through a Plancherel--Rotach-like parametrisation~\cite{HM:2013,Neuschel:2014},
\begin{equation}
\rho_\fuss^n\Big(\frac{\sin^{n+1}((n+1)\alpha)}{\sin\alpha\sin^n(n\alpha)}\Big)=\frac{\sin^2\alpha\sin^{n-1}(n\alpha)}{\pi\sin^n((n+1)\alpha)},
\qquad
0<\alpha<\frac{\pi}{n+1},
\label{singular:asymp:density-sines}
\end{equation}
where the argument of the density is an increasing function of $\alpha$ in the relevant interval.

It can be verified directly from the Meijer $G$-function~\eqref{singular:asymp:moments-fc}, or using~\eqref{singular:asymp:density-sines}, that near the edges of support the Fuss--Catalan density behave like~\cite{FL:2014}
\begin{align}
\rho_\fuss^n(x)&\sim \sin\Big(\frac\pi{n+1}\Big)\frac{x^{-n/(n+1)}}\pi
&\text{for}\quad x&\searrow 0,\\
\rho_\fuss^n(x)&\sim \bigg(\frac{2^{1/3}n^{n-1}}{(n+1)^{n+2/3}}\bigg)^{3/2}\frac{\sqrt{\frac{(n+1)^{n+1}}{n^n}-x}}\pi
&\text{for}\quad x&\nearrow \frac{(n+1)^{n+1}}{n^{n}}.
\label{singular:asymp:limits}
\end{align}
We note that the square root decay near the soft edge is the same as for the Wishart ensemble (or, more generally, orthogonal ensembles~\cite{BIPZ:1978,DKMVZ:1999}), while the asymptotic behaviour near the hard edge differs for $n\geq2$. It was conjectured in~\cite{AKW:2013} that the microscopic correlations at the hard edge would be described by a new universal kernel indexed by the number of factors $n$ (i.e. different from the Bessel kernel for $n\geq2$), while the correlations in the bulk and at the soft edge would be described by the sine and Airy kernel, respectively. For the bulk and the soft edge, this conjecture was proven in~\cite{LWZ:2014}, while an explicit expression for the microscopic kernel at the hard edge was obtained in~\cite{KZ:2014}. A brief discussion will be given in section~\ref{sec:singular:microscopic}.
\end{remark}

\begin{remark}\label{remark:singular:algebraic-eq}
Let us compare proposition~\ref{thm:singular:fuss-catalan} with an earlier result about the one-point Green function (or resolvent). Recall that the Green function is defined as the Stieltjes transform of the density,
\begin{equation}
G_n(z):=\int_0^{K_n}dx\frac{\rho_\fuss^n(x)}{z-x},\qquad z\notin[0,K_n].
\end{equation}
Since we know the density~\eqref{singular:asymp:density-fc}, we can perform this integral using~\eqref{special:meijer:int-meijer-meijer} and thereby write the Green function as a Meijer $G$-function,
\begin{equation}
zG_n(z)=\sqrt{\frac{n+1}{2\pi n^3}}\MeijerG[\bigg]{1}{n+1}{n+1}{n+1}{0,\frac1{n+1},\frac2{n+2},\ldots,\frac n{n+1}}{0,-\frac1n,0,\frac1n,\ldots,\frac{n-2}n}{-\frac{(n+1)^{n+1}}{z\,n^n}}.
\label{singular:asymp:green}
\end{equation}
On the other hand, it was shown in~\cite{BJLNS:2010} that the Green function satisfies an algebraic (trinomial) equation%
% \footnote{This equation can also be derived taking the Heine formula~\eqref{singular:correlations:average-poly} as a starting point; using that the polynomial~\eqref{singular:correlations:p} is a Meijer $G$-function and thus is the solution of a known differential equation, see~\cite{FL:2014}.}
\begin{equation}
(zG_n(z))^{n+1}=z(zG_n(z)-1),\qquad \lim_{\abs z\to\infty}zG_n(z)=1.
\label{singular:asymp:green-eq-square}
\end{equation}
For $n\leq3$ we have an equation of degree four or less which can be solved by standard means and the solution can be shown to agree with~\eqref{singular:asymp:green} after explicit evaluation of the Meijer $G$-function. For $n\geq4$ there is no general \emph{algebraic} solution to~\eqref{singular:asymp:green-eq-square} and the relation becomes less trivial. In this case, we follow the approach presented in~\cite{FL:2014} (see also~\cite{Glasser:1994}). Let $f(u)$ and $\phi(u)$ be functions which are analytic in a neighbourhood $\Omega$ of $v\in\C$ and let $t$ be a small parameter so that $\abs{t\phi(u)}<\abs{u-v}$ for $u\in\Omega$. Then the Lagrange reversion formula~\cite{WW:1996} tells that $u=v+t\phi(u)$ has one solution in $\Omega$ and that
\begin{equation}
f(u)=f(v)+\sum_{s=1}^\infty\frac{t^s}{s!}\frac{d^{s-1}}{dv^{s-1}}f'(v)\phi(v)^s.
\end{equation}
We are interested in the equation $g(z)^{n+1}=z(g(z)-1)$ which according to~\eqref{singular:asymp:green-eq-square} should have a solution $g(z)=zG_n(z)$. Rewriting this equation as $g(z)=1+\frac1zg(z)^{n+1}$, we see that we can use the Lagrange reversion formula by taking $v=1$, $t=1/z$, $f(g)=g$ and $\phi(g)=g^{n+1}$. We have
\begin{equation}
g(z)=\sum_{s=0}^\infty \frac{1}{z^s}\frac1{ns+1}\binom{(n+1)s}{s}
\end{equation}
for sufficiently large $z$. We recognise the Fuss--Catalan numbers, hence 
\begin{equation}
g(z)=\sum_{s=0}^\infty \frac{1}{z^s}\int_0^\infty dx\,\rho_\fuss^n(x)x^s=\int_0^{K_n}dx\frac{\rho_\fuss^n(x)}{1-x/z}=zG_n(z),
\end{equation}
which was the statement of~\eqref{singular:asymp:green-eq-square}. Here, the sum converges for $\abs{z}>K_n$ and the result is extended to the rest of the domain by analytic continuation.

In the above given description of the macroscopic density, we have restricted ourselves to the case where the charges $\nu_i$ ($i=1,\ldots,n$) are kept fixed as $N$ tends to infinity. If we instead let the charges scale with $N$ so that $\nu_i/N\to\alpha_i\in[0,\infty)$, then we no longer have an explicit expression for the density for arbitrary $n$. However, we still have an algebraic equation for the Green function~\cite{BJLNS:2010},
\begin{equation}
zG_n^\alpha(z)\prod_{\ell=1}^n\frac{zG_n^\alpha(z)+\alpha_\ell}{\alpha_\ell+1}=z(zG_n^\alpha(z)-1),\qquad \lim_{\abs z\to\infty}zG_n^\alpha(z)=1.
\label{singular:asymp:green-eq-gen}
\end{equation}
For $n\leq3$, it is again a straightforward task to solve this equation and as usual the density can obtained by
\begin{equation}
\rho_\nu^n(x)=\frac1\pi\lim_{\epsilon\to 0^+}\Im G_n^\nu(x-i\epsilon).
\label{singular:asymp:green-to-density}
\end{equation}
For the case of general $n$, an important observation is that there is a macroscopic gap between the origin and the smallest eigenvalues if and only if all scaled charges are non-zero, i.e. $0<\alpha_1\leq\cdots\leq\alpha_n$. This was shown in~\cite{AIK:2013} using a saddle point approximation, but we will not repeat this derivation here. If $0=\alpha_1=\cdots=\alpha_m$ and $0<\alpha_{m+1}\leq\cdots\leq\alpha_n$ for $0<m\leq n$ (i.e. there is no macroscopic gap) then it is seen from~\eqref{singular:asymp:green-eq-gen} and~\eqref{singular:asymp:green-to-density} that the macroscopic density diverges like $x^{-m/(m+1)}$ for $x\to0$~\cite{BJLNS:2010}. This observation is important if we believe that the microscopic universality class at the hard edge is related to the rate of divergence of the macroscopic density (see proposition~\ref{thm:singular:meijer-kernel}). 
\end{remark}

\subsection{Microscopic correlations}
\label{sec:singular:microscopic}

As mentioned in the previous section, the results for microscopic correlations were given by Kuijlaars and Zhang~\cite{KZ:2014} for the hard edge, and by Liu, Wang and Zhang~\cite{LWZ:2014} for the bulk and the soft edge. For completeness we will recall their results. For the hard edge we prove a slight generalisation of the original result, while the reader is referred to the original paper for proofs of the microscopic limits in the bulk and at the soft edge.

\begin{proposition}[Hard edge]\label{thm:singular:meijer-kernel}
Let $n\geq m\geq 1$ and $\nu_i$ for $i=1,\ldots,m$ be fixed and let $\nu_i=\alpha_iN+O(1)$ with $\alpha_i>0$ fixed for $i=m+1,\ldots,n$. With $(c_N)^{-1}=\alpha_{m+1}\cdots\alpha_nN^{n-m-1}$, we have
\begin{equation}
\lim_{N\to\infty}\frac1{c_N}K_N^n\Big(\frac x{c_N},\frac y{c_N}\Big)=K_\meijer^{m,\nu}(x,y)
\end{equation}
uniformly for $x$ and $y$ in compact subsets of the positive half-line, where
\begin{equation}
K_\meijer^{m,\nu}(x,y)=
\int_0^1du\,\MeijerG{1}{0}{0}{m+1}{-}{0,-\nu_1,\ldots,-\nu_m}{ux}\MeijerG{m}{0}{0}{m+1}{-}{\nu_1,\ldots,\nu_m,0}{uy}.
\label{singular:asymp:meijer-kernel}
\end{equation}
\end{proposition}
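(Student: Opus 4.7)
My strategy is to work from the double contour representation underlying~\eqref{singular:correlations:kernel-int}, namely
\begin{equation*}
K_N^n(x,y)=\int_0^1du\,\frac{1}{(2\pi\imath)^2}\int_{L_t}dt\oint_{L_s}ds\,(ux)^s(uy)^t\,\frac{\Gamma[-s]\,\Gamma[N+1+t]}{\Gamma[1+t]\,\Gamma[N-s]}\prod_{\ell=1}^n\frac{\Gamma[\nu_\ell-t]}{\Gamma[\nu_\ell+1+s]},
\end{equation*}
obtained by inserting the contour-integral definitions of the two Meijer $G$-functions of~\eqref{singular:correlations:kernel-int} into the integrand. Here $L_s$ is a small loop around the nonnegative integers (naturally truncated by $1/\Gamma[N-s]$) and $L_t$ is a vertical line to the left of the origin. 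Substituting $x\mapsto x/c_N$, $y\mapsto y/c_N$ and multiplying by $1/c_N$ produces an overall factor $c_N^{-s-t-1}$ from the monomial and the prefactor; the remaining $N$-dependence sits entirely in the gamma ratios with large arguments.

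The core step is a uniform application of Stirling's formula to those ratios, for $(s,t)$ in compact sets. One obtains $\Gamma[N+1+t]/\Gamma[N-s]=N^{s+t+1}(1+O(1/N))$ and, for each $\ell>m$, $\Gamma[\nu_\ell-t]/\Gamma[\nu_\ell+1+s]=(\alpha_\ell N)^{-s-t-1}(1+O(1/N))$. Their product equals $c_N^{s+t+1}(1+O(1/N))$, cancelling the rescaling factor exactly. The pointwise limit of the integrand is therefore obtained by dropping both the envelope $\Gamma[N+1+t]/\Gamma[N-s]$ and the $n-m$ large-$\nu$ factors, retaining only $\Gamma[-s]/\Gamma[1+t]$ and $\prod_{\ell\le m}\Gamma[\nu_\ell-t]/\Gamma[\nu_\ell+1+s]$. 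This integrand factorises as a function of $s$ times a function of $t$, so the double contour integral becomes a product of two single contour integrals which, by the definition of the Meijer $G$-function, are precisely the $G^{1,0}_{0,m+1}$ and $G^{m,0}_{0,m+1}$ appearing in~\eqref{singular:asymp:meijer-kernel}.

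Promoting the pointwise cancellation to locally uniform convergence (so that limit, $du$-integration and contour integration can be exchanged) I would do by dominated convergence. The finite $s$-contour causes no difficulty; the delicate point is the unbounded vertical line $L_t$. Combining the classical complex Stirling estimate $|\Gamma[a+\imath y]|\leq C|y|^{\Re a-1/2}e^{-\pi|y|/2}$ with its uniform-in-$N$ analogue for $\Gamma[N+a+\imath y]/\Gamma[N+1]$, and using the exponential decay supplied by $\prod_{\ell\le m}\Gamma[\nu_\ell-t]/\Gamma[1+t]$, yields an $N$-independent integrable majorant along $L_t$. Local uniformity in $x,y$ is then automatic from continuity of the limiting integrand.

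The principal obstacle is precisely this uniform tail bound on $L_t$: one must verify that the large-$\nu$ gamma ratios behave well when $|\Im t|$ is comparable to, or even larger than, $N$, a regime not covered by the compact-set Stirling estimate used for the pointwise statement. This requires the full complex Stirling expansion with explicit error control; everything else reduces to algebraic identities among gamma functions and the definition of the Meijer $G$-function.
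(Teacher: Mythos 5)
Your proposal is correct and mirrors the paper's own proof almost exactly: both start from the Kuijlaars--Zhang integral form~\eqref{singular:correlations:kernel-int}, pass to the double contour representation, apply the gamma-ratio asymptotic $\Gamma[\alpha N+a]/\Gamma[\alpha N+b]=(\alpha N)^{a-b}(1+O(N^{-1}))$ to cancel $c_N^{s+t+1}$, then invoke dominated convergence and read off the two remaining Meijer $G$-functions. You have in fact been more explicit than the paper about the delicate point — the $N$-uniform integrable majorant along the unbounded $t$-line — which the paper dispatches in a single clause, so your identification of that as the principal obstacle is a useful clarification rather than a deviation.
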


\begin{proof}
We follow the same steps as in~\cite{KZ:2014} and take expression~\eqref{singular:correlations:kernel-int} for the correlation kernel as our starting point. Writing the kernel as its double contour integral representation, we have
\begin{multline}
\frac1{c_N}K_N^n\Big(\frac x{c_N},\frac y{c_N}\Big)=
\frac1{c_N}\frac{1}{(2\pi i)^2}\int_0^1du\int_{-1-i\infty}^{-1+i\infty} dt\oint_{\Sigma_k} ds\, \Big(\frac{ux}{c_N}\Big)^s\Big(\frac{uy}{c_N}\Big)^t \\
\times\frac{\Gamma[-s]}{\Gamma[1+t]}\frac{\Gamma[N+1+t]}{\Gamma[N-s]}\prod_{\ell=1}^n\frac{\Gamma[\nu_\ell-t]}{\Gamma[\nu_\ell+1+s]}
\label{singular:asymp:hard-proof-long}
\end{multline}
with $\Sigma_k$ as in~\eqref{singular:correlations:function-step1}. In order to evaluate the large-$N$ limit, we need to look at the asymptotic behaviour of the fractions of gamma functions in the second line in~\eqref{singular:asymp:hard-proof-long}. We know that (see appendix~\ref{app:special})
\begin{equation}
\frac{\Gamma[\alpha N+a]}{\Gamma[\alpha N+b]}=(\alpha N)^{a-b}(1+O(N^{-1}))
\end{equation}
for complex constants $a$ and $b$, and $\alpha>0$. We recall that $\nu_i=O(1)$ for $i\leq m$ and $\nu_i=\alpha_iN+O(1)$ for $i>m$, thus
\begin{equation}
\frac{\Gamma[N+t+1]}{\Gamma[N-s]}\prod_{i=m+1}^n\frac{\Gamma[\nu_i-t]}{\Gamma[\nu_i+s+1]}
=(c_N)^{1+s+t}(1+O(N^{-1})).
\end{equation}
Using this asymptotic formula in~\eqref{singular:asymp:hard-proof-long}, we get
\begin{multline}
\frac1{c_N}K_N^n\Big(\frac x{c_N},\frac y{c_N}\Big)=
\frac{1}{(2\pi i)^2}\int_0^1du\int_{-1-i\infty}^{-1+i\infty} dt\int_{\Sigma} ds\, (ux)^s(uy)^t\\
\times\frac{\Gamma[-s]}{\Gamma[1+t]}\prod_{\ell=1}^m\frac{\Gamma[\nu_\ell-t]}{\Gamma[\nu_\ell+1+s]}(1+O(N^{-1})),
\end{multline}
where we have modified the contour of the $s$-integral such that it begins and ends at $+\infty$ and encircles the positive half-line in the negative direction without crossing the other contour. An interchange of the large-$N$ limit and the integrals is justified by the dominated convergence theorem. Finally, the expression~\eqref{singular:asymp:meijer-kernel} is obtained by rewriting the contour integrals as Meijer $G$-functions using~\eqref{special:meijer:meijer-def}. 
\end{proof}

\begin{remark}
The hard edge scaling limit requires that at least one charge is independent of $N$. In the light of remark~\ref{remark:singular:algebraic-eq}, the reason for this requirement is quite obvious: If all charges scale with $N$, i.e. $\nu_i/N\to\alpha_i>0$ for all $i$, then there will be a macroscopic gap between the origin and the smallest eigenvalues, thus the spectrum has no hard edge. Furthermore, we note that the microscopic correlations are classified according to the divergence of the macroscopic density near the origin through the integer $m$. 
In the special case where $m=1$ and the density diverges like an inverse square root, the Meijer $G$-kernel agrees with the Bessel kernel~\eqref{singular:intro:bessel} as expected. Explicitly, we have 
\begin{equation}
\frac14K_\meijer^{m=1,\nu}\Big(\frac x4,\frac y4\Big)=\Big(\frac yx\Big)^{\nu/2}\int_0^1du\,J_\nu(\sqrt{ux})J_\nu(\sqrt{uy})=\Big(\frac yx\Big)^{\nu/2} K_\bessel(x,y).
\end{equation}
Recall that the prefactor $(y/x)^{\nu/2}$ does not affect the correlations.
\end{remark}

\begin{remark}
We note that the hard edge scaling relation (propostion~\ref{thm:singular:meijer-kernel}) incorporates the reduction formula for $m\geq2$,
\begin{equation}
\lim_{\nu_m\to\infty}\nu_m K_\meijer^{m,\nu}(\nu_m x,\nu_m y)=K_\meijer^{m-1,\nu}(x,y),
\label{singular:asymp:reduction}
\end{equation}
given in~\cite{AI:2015}. This may also be seen starting from the Meijer $G$-kernel, we have 
\begin{multline}
\nu_mK_N^n(\nu_mx,\nu_my)=
\frac{\nu_m}{(2\pi i)^2}\int_0^1du\int_{-1-i\infty}^{-1+i\infty} dt\int_{\Sigma} ds\, (\nu_mux)^s(\nu_muy)^t\\
\times\frac{\Gamma[-s]}{\Gamma[1+t]}\frac{\Gamma[\nu_m-t]}{\Gamma[\nu_m+1+s]}
\prod_{\ell=1}^{m-1}\frac{\Gamma[\nu_\ell-t]}{\Gamma[\nu_\ell+1+s]}.
\label{singular:asymp:reduction-proof-1}
\end{multline}
Using~\eqref{special:gamma:ratio-asymp} it follows that
\begin{equation}
\frac{\Gamma[\nu_m-t]}{\Gamma[\nu_m+1+s]}=\nu_m^{-s-t-1}(1+O(N^{-1}))\qquad\text{for}\qquad \nu_m\to\infty.
\end{equation}
Inserting this into~\eqref{singular:asymp:reduction-proof-1} and interchanging the limit and the integration (justified by the dominated converge theorem), we find the reduction formula~\eqref{singular:asymp:reduction}. 
This reduction of the Meijer $G$-kernel may be thought of as a consequence of the change of the rate of divergence of the macroscopic density at the hard edge for $\nu_m\to\infty$, see remark~\ref{remark:singular:algebraic-eq}.
\end{remark}

For completeness, let us recall (without proof) the scaling limits for the bulk and the soft edge obtained by Liu, Wang and Zhang~\cite{LWZ:2014}:

\begin{proposition}[Bulk]\label{thm:singular:sine}
Let $n$ and $\nu_i>-1$ ($i=1,\ldots,n$) be fixed. Given a point $x_0\in(0,(n+1)^{n+1}/n^n)$ in the bulk with spectral density $\rho_0:=\rho_\fuss^n(x_0)$ (cf. proposition~\ref{thm:singular:fuss-catalan}), then 
\begin{equation}
\lim_{N\to\infty}\bigg(\frac{N^{n}}{\rho_0N}\bigg)^kR_k^n\Big(N^{n}\Big(x_0+\frac{x_1}{\rho_0N}\Big),\ldots,N^{n}\Big(x_0+\frac{x_k}{\rho_0N}\Big)\Big)=\det_{1\leq i,j\leq k}[ K_\sine(x_i,x_j)]
\end{equation}
uniformly for $x$ and $y$ belonging to any compact subset of the real line, where $K_\sine(x,y)$ denotes the sine kernel~\eqref{singular:intro:sine}.
\end{proposition}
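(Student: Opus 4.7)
The plan is to perform a double steepest-descent analysis on the Mellin--Barnes representation of the kernel that appears in the proof of the integral formula~\eqref{singular:correlations:kernel-int},
\begin{equation*}
K_N^n(x,y)=\frac{1}{(2\pi i)^2}\int_{-1-i\infty}^{-1+i\infty}\!dt\oint_{\Sigma}\!ds\,\frac{x^s y^t}{s+t+1}\,\frac{\Gamma[-s]\,\Gamma[N+1+t]}{\Gamma[1+t]\,\Gamma[N-s]}\prod_{\ell=1}^{n}\frac{\Gamma[\nu_\ell-t]}{\Gamma[\nu_\ell+1+s]}.
\end{equation*}
After inserting the bulk substitutions $x=N^{n}(x_0+\xi/(\rho_0 N))$ and $y=N^{n}(x_0+\eta/(\rho_0 N))$ and rescaling $s=Nu$, $t=Nv$, Stirling's formula applied to each ratio of $\Gamma$-functions brings the integrand into the form $\exp\{N[\Phi(u;x_0)-\Phi(v;x_0)]\}$ times a slowly varying prefactor and a linear oscillatory factor $\exp[(\xi u-\eta v)/(\rho_0 x_0)]$, for an explicit effective action $\Phi$ built from the logarithmic contributions of the $\Gamma$-asymptotics.

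The next step is to identify the saddles. A direct computation of $\partial_u\Phi(u;x_0)=0$ reduces, after elementary simplification, to the trinomial Green-function equation~\eqref{singular:asymp:green-eq-square} evaluated at $z=x_0$, so the two relevant saddles form a complex-conjugate pair $u_\pm(x_0)$ in the upper and lower half-planes, directly tied to the Fuss--Catalan density $\rho_0=\rho_\fuss^n(x_0)$ through Stieltjes inversion~\eqref{singular:asymp:green-to-density}. I would deform both contours so that each passes through $u_+$ and $u_-$ along the steepest-descent direction of $\operatorname{Re}\Phi$; the Plancherel--Rotach-type parametrization~\eqref{singular:asymp:density-sines} provides an explicit handle on the saddle locations throughout the bulk.

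Near each saddle $u_\pm$ the further rescaling $u=u_\pm+\sigma/N$ absorbs the $N$-dependence of the quadratic part of $\Phi$ into a Gaussian factor on the amplitude, leaving a bounded $\sigma$-integral multiplied by the linear oscillatory factor. Of the four saddle pairings $(u_\varepsilon,u_{\varepsilon'})$ with $\varepsilon,\varepsilon'\in\{+,-\}$, the two diagonal combinations yield a gauge-equivalent contribution that drops out of the determinantal correlation functions, while the two off-diagonal pairings produce conjugate oscillatory factors that combine with the $1/(s+t+1)$ denominator to yield $\sin\pi(\xi-\eta)/[\pi(\xi-\eta)]$. The matching of numerical factors is forced by the chosen scaling: the phase generated at the saddles is proportional to $\rho_0N(x-y)/N^{n}=\xi-\eta$, and the prefactor $N^n/(\rho_0 N)$ in front of $K_N^n$ is exactly what is needed to absorb the Jacobian of the $u$-rescaling. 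Lifting from the kernel to the $k$-point correlation function via the determinantal structure then gives the claimed uniform convergence on compact sets.

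The main obstacle is the rigorous justification of the contour deformation: one must construct contours on which $\operatorname{Re}\Phi$ has a global maximum at the conjugate saddle pair, verify uniform global decay away from the saddles so as to apply dominated convergence, and carry these estimates out uniformly in $(\xi,\eta)$ on compact subsets of $\R$. The dependence of $u_\pm(x_0)$ on $x_0$ through an algebraic equation of degree $n+1$ makes an explicit description of the descent paths cumbersome for general $n$, and this technical bookkeeping is the principal reason one would ultimately rely on the detailed steepest-descent analysis carried out in~\cite{LWZ:2014} rather than reproducing it here.
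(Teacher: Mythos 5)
The paper itself gives no proof of Proposition~\ref{thm:singular:sine}: it is explicitly ``stated without proof'' and attributed to Liu, Wang and Zhang~\cite{LWZ:2014}. So there is no paper argument to compare against line by line. That said, your sketch does correctly describe the strategy used in~\cite{LWZ:2014}: one starts from the double contour-integral form of the kernel obtained from~\eqref{singular:correlations:kernel-int}, rescales $s,t$ linearly in $N$, identifies a phase function via Stirling whose stationarity equation is equivalent to the trinomial Green-function relation~\eqref{singular:asymp:green-eq-square}, and locates the complex-conjugate saddle pair in the bulk; the sine kernel then emerges from the two off-diagonal saddle pairings through the oscillatory factors and the $1/(s+t+1)$ term, while the diagonal pairings give a gauge factor that cancels in the determinant. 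Your parenthetical reference to the Plancherel--Rotach parametrisation~\eqref{singular:asymp:density-sines} as the handle on the saddle locations is also the right instinct.

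What you have written, however, is an outline and not a proof, and you are honest about that. The genuine gaps are exactly the ones you list but do not close: you never exhibit the deformed contours, never verify that $\operatorname{Re}\Phi$ attains a global maximum only at $u_\pm$ along them, and never establish the uniform tail estimates that justify dominated convergence. You also assert without calculation that the diagonal saddle contributions are a pure gauge and that the remaining phase is precisely $\pi(\xi-\eta)$ after the claimed normalisation; both are true but each requires the explicit expansion of $\Phi$ to quadratic order and careful bookkeeping of the Jacobians, and the prefactor matching $(N^n/\rho_0 N)^k$ is not ``forced'' until that expansion is written down. Since the thesis deliberately punts on these details as well, the honest verdict is that your proposal matches the cited method at the level of strategy, but would need the full contour construction and uniform estimates of~\cite{LWZ:2014} to count as a proof.
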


\begin{proposition}[Soft edge]\label{thm:singular:airy}
Let
\begin{equation}
x_*=\frac{(n+1)^{n+1}}{n^n}
\qquad\text{and}\qquad
c_*=\bigg(\frac{(n+1)^{n+2/3}}{2^{1/3}n^{n-1}}\bigg)^{3/2},
\end{equation}
i.e. $x_*$ is the location of the soft edge, while $c_*$ should be compared with the prefactor in~\eqref{singular:asymp:limits}. With $n$ and $\nu_i>-1$ ($i=1,\ldots,n$) fixed, we have 
\begin{equation}
\lim_{N\to\infty}\!
\bigg(\frac{N^{n}}{(c_*N)^{\frac23}}\bigg)^k
R_k^n\Big(N^{n}\Big(x_\ast+\frac{x_1}{(c_*N)^{\frac23}}\Big),\ldots,N^{n}\Big(x_\ast+\frac{x_k}{(c_*N)^{\frac23}}\Big)\Big)
=\det_{1\leq i,j\leq k}[K_\airy(x_i,x_j)]\!\!
\label{singular:asymp:airy-kernel}
\end{equation}
uniformly for $x$ and $y$ belonging to any compact subset of the real line, where $K_\airy(x,y)$ is the Airy kernel~\eqref{singular:intro:airy}.
\end{proposition}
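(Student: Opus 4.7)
The natural route is a steepest-descent (saddle point) analysis of the double contour integral representation of the correlation kernel, which was already developed in the proof of proposition on the hard edge. Recall that, starting from the integrated form of $K_N^n$, one obtains
\begin{equation*}
K_N^n(x,y)=\frac{1}{(2\pi i)^2}\int_0^1 du\int\int\, (ux)^s(uy)^t\frac{\Gamma[-s]}{\Gamma[1+t]}\frac{\Gamma[N+1+t]}{\Gamma[N-s]}\prod_{\ell=1}^n\frac{\Gamma[\nu_\ell-t]}{\Gamma[\nu_\ell+1+s]}\,ds\,dt,
\end{equation*}
with appropriate contours. Under the soft-edge rescaling $x\mapsto N^n(x_*+x/(c_*N)^{2/3})$ and $y\mapsto N^n(y_*+y/(c_*N)^{2/3})$, and with $\nu_\ell$ fixed so that $\Gamma[\nu_\ell\pm\cdot]$ contributes only a smoothly bounded factor, Stirling's formula lets me write the integrand in the form $\exp\bigl(N\,\Phi(s/N,t/N)\bigr)$ times a regular prefactor, with a phase function whose critical points drive the asymptotics.

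\textbf{Key step: a coalescing saddle.} I would verify that, at $x=y=x_*$, the phase $\Phi$ has a double saddle at $s/N=t/N=1/(n+1)$; this is precisely the value corresponding to the right edge of support of the Fuss--Catalan density $\rho_\fuss^n$, consistent with the square-root vanishing observed earlier. The coalescence of saddles (first two derivatives vanishing simultaneously) is the defining signature of a soft edge and fixes the correct scaling as $N^{-1/3}$. Zooming in via the local substitution $s=N/(n+1)+N^{1/3}\sigma$, $t=N/(n+1)+N^{1/3}\tau$, the cubic Taylor coefficient of $\Phi$ determines the precise value of the constant $c_*$ appearing in the statement; the quadratic term drops out by coalescence.

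\textbf{Reduction to the Airy kernel.} After this local rescaling, the inner $s$ and $t$ integrations on contours descending through the saddle in their respective steepest-descent directions (rays at angles $\pm\pi/3$) each converge to a contour integral representation of the Airy function. The auxiliary variable $u\in[0,1]$ near $u=1$ is rescaled as $u=1-v/(c_*N)^{2/3}$, and this integration transfers directly to the shifted-variable representation
\begin{equation*}
K_\airy(x,y)=\int_0^\infty \Ai(x+v)\Ai(y+v)\,dv,
\end{equation*}
which yields the desired pointwise limit of the kernel. Uniform convergence on compact sets $(x,y)\in K$ then lifts immediately to convergence of $R_k^n$ via the determinantal identity, as in the corresponding step for the hard edge.

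\textbf{Main obstacle.} The substantive technical work is the global control of the contours. One must select $s$- and $t$-contours that (i) respect the original pole structure (no crossing of the gamma function poles, in particular the poles of $\Gamma[\nu_\ell-t]$ and $\Gamma[-s]$), (ii) descend through the coalescing saddle along the correct cubic descent directions, and (iii) deliver uniform exponential suppression of the integrand away from a shrinking neighbourhood of the saddle, uniformly in $(x,y)$ over compact sets. The $n$-fold product $\prod_\ell \Gamma[\nu_\ell-t]/\Gamma[\nu_\ell+1+s]$ can introduce competing local extrema of $\Re\Phi$ that must be excluded by explicit estimates. Once this uniform tail bound is in place, dominated convergence closes the argument and delivers the statement.
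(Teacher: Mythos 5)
The paper itself gives no proof of this proposition: it is stated explicitly ``(without proof)'' as a recollection of Liu--Wang--Zhang~\cite{LWZ:2014}. Your outline is essentially the same strategy those authors use --- a steepest-descent analysis of the Kuijlaars--Zhang double contour integral, with a coalescing saddle producing a cubic local phase and hence Airy behaviour --- and you have correctly identified the real technical difficulty (global control of the deformed contours and uniform tail bounds for the $n$-fold gamma ratios). So the plan is sound and matches the cited source.

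Two computational details in your sketch appear to be off, and you should recheck them before relying on them. First, the claimed saddle location $s/N\to 1/(n+1)$ seems wrong. The $s$-phase of the integrand is, to leading order, $\phi(s)=s\log z-(n+1)s\log s+(n+1)s-(N-s)\log|N-s|+\cdots$ with $z=N^n x_*$. Setting $\phi'(s)=0$ and $\phi''(s)=0$ simultaneously (the coalescence condition that defines the soft edge) and writing $s=Na$ forces $a=(n+1)/n$ and, via $z|N-s|=s^{n+1}$, reproduces $x_*=(n+1)^{n+1}/n^n$. This is the natural answer since it equals $zG_n(z)$ at the edge (cf.\ remark~\ref{remark:singular:algebraic-eq}); it is $(n+1)/n$, not $1/(n+1)$, and in particular the saddle lies to the right of $N$, so the $\Sigma_k$-contour must be deformed past $s=N-1$ (there are no poles beyond $N-1$ because the poles of $\Gamma[-s]$ at $s\geq N$ are cancelled by $1/\Gamma[N-s]$, so this is admissible, but it needs to be stated). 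Second, the proposed rescaling $u=1-v/(c_*N)^{2/3}$ is dimensionally inconsistent with the cubic scaling of the $(s,t)$ variables: since the local variables satisfy $s-s_*, t-t_*=O(N^{1/3})$, the factor $u^{s+t}$ concentrates the $u$-integral on a window of width $O(N^{-1/3})$, not $O(N^{-2/3})$, and after the change of variables this is what produces the $1/(\sigma+\tau)$ denominator in the standard double contour representation of $K_\airy$ (equivalently the $\int_0^\infty\Ai(x+v)\Ai(y+v)\,dv$ form). Neither of these affects the structure of the argument, but both would break the verification that the cubic coefficient reproduces $c_*$ if left uncorrected.
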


\section{Summary, discussion and open problems}
\label{sec:singular:discuss}

We have seen that it is possible to obtain exact expressions for the statistical properties of the singular values of a product constructed from an arbitrary number of independent Gaussian matrices with arbitrary size (proposition~\ref{thm:singular:jpdf-determinantal} and~\ref{thm:singular:biorthogonalise}). However, the ``real'' difference between the joint density from proposition~\ref{thm:singular:jpdf-determinantal} and the familiar Wishart--Laguerre ensemble~\eqref{singular:intro:jpdf-laguerre} is to some extent hidden within the Meijer $G$-function expression for the weight functions. In order to get a better understanding of this difference, it is illustrative to introduce an approximation of the Meijer $G$-function. It follows from~\eqref{special:meijer:meijer-asymp} that the joint density~\eqref{singular:intro:jpdf} may be approximated by
\begin{equation}
\cP_\jpdf^{\beta=2}(N^nx_1,\ldots,N^nx_N)N^{nN}\approx \frac1\cZ\prod_{k=1}^Ne^{-nNx_k^{1/n}}x_k^{(\overline\nu+1-n)/n}\prod_{1\leq i<j\leq N}(x_j-x_i)(x_j^{1/n}-x_i^{1/n})
% (1+O(N^{-1}))
\label{singular:discuss:jpdf-MB-inv-laguerre}
\end{equation}
or with a change of variables
\begin{equation}
\cP_\jpdf^{\beta=2}((Nx_1)^n,\ldots,(Nx_N)^n)\prod_{k=1}^NnN^nx_k^{n-1}\approx \frac{1}{\cZ}\prod_{k=1}^Ne^{-nNx_k}x_k^{\overline\nu}\prod_{1\leq i<j\leq N}(x_j-x_i)(x_j^{n}-x_i^{n})
% (1+O(N^{-1})),
\label{singular:discuss:jpdf-MB-laguerre}
\end{equation}
where $\cZ$ is a normalisation constant and $\overline\nu=\nu_1+\cdots+\nu_n$ is the cumulative charge. These approximations give a more useful intuition about how matrix multiplication changes the repulsion between eigenvalues. Note that joint densities with this structure are not new, they belong to a special type of polynomial ensembles, which were originally suggested as an approximate description for transport properties in disordered wires in~\cite{Muttalib:1995}. The $n=2$ case of~\eqref{singular:discuss:jpdf-MB-laguerre} has also been linked to the dilute phase model, $\gO(-2)$, on a random lattice~\cite{EZ:1992,EK:1995} and to systems of disordered bosons~\cite{LSZ:2006}.

A na\"ive comparison of the joint density~\eqref{singular:discuss:jpdf-MB-inv-laguerre} with the scaling limits from section~\ref{sec:singular:asymp} suggests that the macroscopic spectral density as well as the microscopic correlations in the bulk and at the soft edge should agree with those of the Wishart product ensemble. In fact, this na\"ive guess for the limiting behavior is confirmed by rigorous computations~\cite{FL:2014,FW:2015,Zhang:2015}. The approximation~\eqref{singular:discuss:jpdf-MB-inv-laguerre} breaks down at the hard edge, thus we do not expect to find the same scaling in this limit (this is immediately confirmed by noticing that the charges $\nu_1,\ldots,\nu_n$ do not appear individually but only in the cumulative charge $\overline\nu$). However, the macroscopic density still has the same rate of divergence at the hard edge and we might guess that (albeit different) the microscopic correlations will belong to the same class of correlation functions. Again this intuition is confirmed by 
actual computations; it was shown in~\cite{Borodin:1998,KS:2014} that the hard edge scaling limit for~\eqref{singular:discuss:jpdf-MB-inv-laguerre} gives rise to a correlation kernel
\begin{equation}
K_\meijer^{n,\hat\nu}(x,y)=\int_0^1du\MeijerG{1}{0}{0}{n+1}{-}{0,-\hat\nu_1,\ldots,-\hat\nu_n}{ux}\MeijerG{n}{0}{0}{n+1}{-}{\hat\nu_1,\ldots,\hat\nu_n,0}{uy}
\end{equation}
with
\begin{equation}
\hat\nu_i=\frac{\overline\nu+j-n}{n},\qquad i=1,\ldots,n. 
\end{equation}
This reappearance of the Meijer $G$-kernel is important, since it suggests universality. In addition to the two examples mentioned above, the Meijer $G$-kernel has recently been observed in several other product ensembles~\cite{Forrester:2014b,KS:2014,KKS:2015,AI:2015} and multi-matrix models~\cite{BGS:2009,BGS:2014,BB:2015,FK:2014} confirming that a much stronger underlying universality principle is at play; similar to that of the Bessel kernel~\cite{ADMN:1997,KF:1998,KV:2002,KV:2003,Lubinsky:2008,BP:2005,TV:2010}. Perhaps, the simplest starting point for a study of such universality would be to consider ensembles of the form
\begin{equation}
\cP_\jpdf^n(x_1,\ldots,x_N)= \frac{1}{\cZ}\prod_{k=1}^Ne^{-NV(x_k)}\prod_{1\leq i<j\leq N}(x_j-x_i)(x_j^{\theta}-x_i^{\theta}),
\label{singular:discuss:jpdf-MB-general}
\end{equation}
where $V(x)$ is a real analytic confining potential and $\theta$ is a rational constant. However, it should be mentioned that in all the known product ensembles the charges $\nu_1,\ldots,\nu_n$ which appear in the Meijer $G$-kernel~\eqref{singular:asymp:meijer-kernel} have a natural interpretation as incooperating a rectangular structure of the product due to the general reduction procedure (proposition~\ref{thm:pro:rect-square}), such structure cannot, by construction, be included in~\eqref{singular:discuss:jpdf-MB-general}.

\begin{figure}[tp]
\centering
\includegraphics{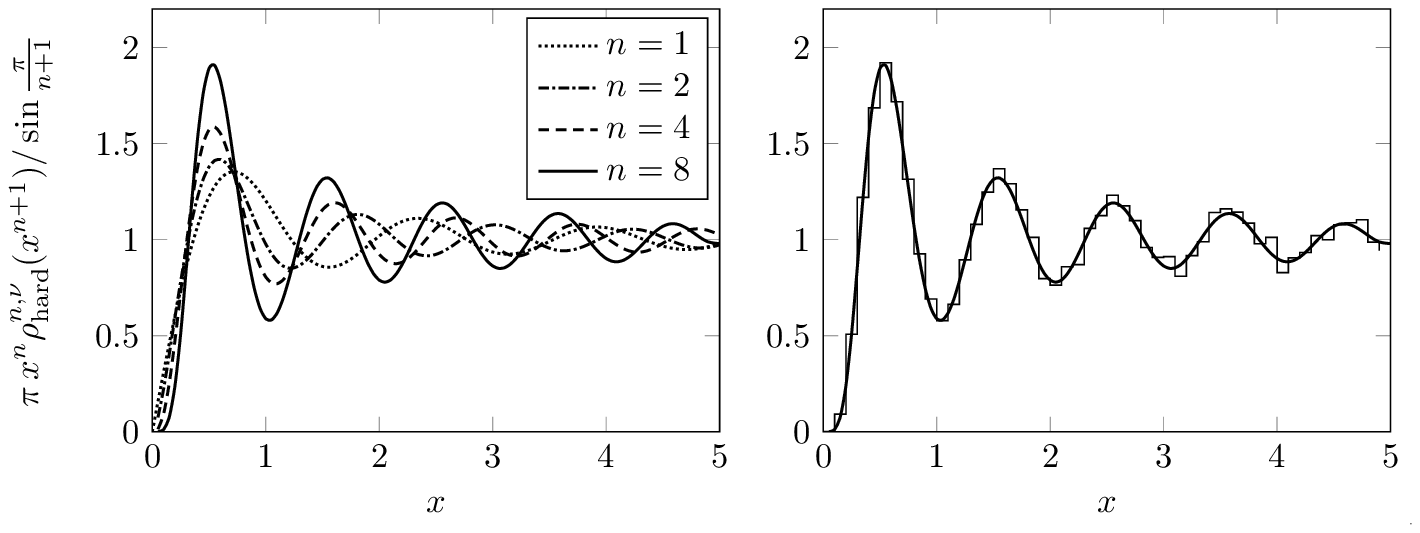}
\caption{The figure shows the microscopic density as described by the Meijer $G$-kernel, $\rho_\hard^{n,\nu}(x)=K_\meijer^{n,\nu}(x,x)$, with $\nu_1=\cdots=\nu_n=0$. The left panel shows the density for $n=1,2,4,8$; for comparison between densities, we have unfolded the densities such that they all tend to unity as their argument tends to infinity. The right panels the $n=8$ case with numerical data from an ensemble of $10\,000$ realisations of a product of $8$ independent $100\times100$ complex Ginibre matrices.}
\label{fig:singular:meijer}
\end{figure}

It is, of course, also highly desirable to find more physical models which belong to this new hard edge universality class. One could hope that links to physical models can be established through the corresponding non-linear sigma models but, so far, little progress has been made in this direction. The structure of the Meijer $G$-kernel is visualised in figure~\ref{fig:singular:meijer}. We note that the amplitude of the oscillations in the microscopic density near the hard edge appears to increase with the index $n$ (i.e. the number of matrices). We will return to large-$n$ limits in chapter~\ref{chap:lyapunov}.

\begin{figure}[tp]
\centering
\includegraphics{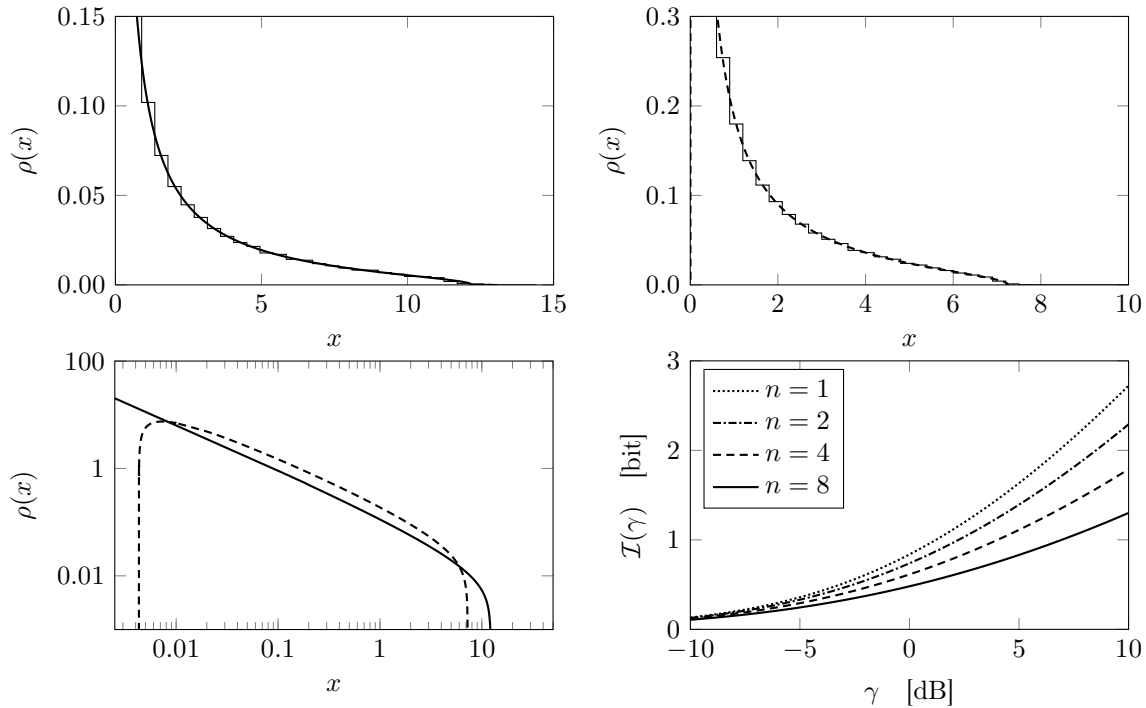}
\caption{The top left panel and the top right panel compare the macroscopic densities for $n=4$ with $\alpha_1=\cdots=\alpha_4=0$ (left) and $\alpha_1=\cdots=\alpha_4=1$ (right) with numerical data generated from an ensemble of $1\,000$ realisations of a product of four independent $100\times100$ complex Ginibre matrices and a product of one $100\times200$ and three $200\times200$ complex Ginibre matrices, respectively. The bottom left panel shows the two same analytic curves on a double logarithmic scale which illustrates that the $\alpha_1=\cdots=\alpha_4=1$ case has a macroscopic gap between the origin and the eigenvalue spectrum even though this gap is too small to be visible on the top right panel. The bottom right panel shows mutual information~\eqref{singular:discuss:info} for $n=1,2,4,8$ as a function of the signal-to-noise ratio measured in decibel.}
\end{figure}

Let us leave the microscopic regimes for now, and turn to macroscopic properties instead. The explicit expression for the macroscopic spectral density of Wishart product ensemble as a Meijer $G$-function~\eqref{singular:asymp:density-fc} together with the powerful integration formula~\eqref{special:meijer:int-meijer-meijer}, immediately allow us calculate explicit formulae for many important types of linear statistics. For example, an interesting quantity in wireless telecommunication is the input-output mutual information for a MIMO communication channel with progressive scattering (see section~\eqref{sec:moti:mimo}) given by
\begin{equation}
\cI_n(\gamma):=\int_0^{K_n} dx\,\rho_\fuss^n(x)\log_2(1+\gamma x),
\end{equation}
which gives an upper bound for the spectral efficiency. Here, $\gamma$ denotes the signal-to-noise ratio and integration gives
\begin{equation}
\cI_n(\gamma)=\frac{1}{\log2}\sqrt{\frac{n+1}{2\pi n^3}}\MeijerG[\bigg]{1}{n+2}{n+2}{n+2}{\frac1{n+1},\frac2{n+1},\ldots,\frac{n}{n+1},1,1}{1,0,-\frac1n,0,\frac1n,\ldots,\frac{n-2}n}{\frac{(n+1)^{n+1}}{n^n}\gamma}.
\label{singular:discuss:info}
\end{equation}
The simplicity of this macroscopic formula is illustrated by comparing with the formulae valid for finite matrix dimensions given in~\cite{AIK:2013,AKW:2013}.
We note that universality results such as~\cite{GKT:2014} do not apply to this model since the assumptions of i.i.d. entries and their Gaussian distribution go hand-in-hand within the Rayleigh fading regime, while leaving this regime results in violation of both (note that in the noise free channel the Gaussian matrices should be replaced by unitary matrices). Nonetheless, the $n=1$ case is known to be in good agreement with experimental data for strongly scattering environments with no line-of-sight between the transmitter and receiver; this agreement is expected to extend to communication channels with progressive scattering described by the $n\geq2$ cases.

We would, of course, also like to know the structure of higher point correlations, i.e. the wide correlators. The wide correlators for product ensembles are expected to differ from the known universality results of the trace-class ensembles, since the eigenvalue repulsion is not purely logarithmic. This conjecture is supported by a heuristic argument originally given in~\cite{Beenakker:1993,Beenakker:1994}.

Consider a point process of $N$ particles with partition function,
\begin{equation}
\cP_\jpdf(x_1,\ldots,x_N)=\frac1{\cZ_V}\exp\bigg[-N\sum_{k=1}^NV(x_k)+\sum_{1\leq i<j\leq N}U(x_j,x_i)\bigg],
\label{singular:discuss:particle}
\end{equation}
where $\cZ_V$ denotes the potential dependent normalisation constant and the exponent is interpreted as a Hamiltonian for the $N$ particles located at $x_1,\ldots,x_N$ subject to a confining potential $V(x)$ and pair interaction $U(x,y)$. If the pair interaction is given by $U(x,y)=\log(x-y)+\log(x^\theta-y^\theta)$ then~\eqref{singular:discuss:particle} corresponds to~\eqref{singular:discuss:jpdf-MB-general} and if $\theta=1$ this reduces to a trace-class ensemble in which case universality is known. In the following we consider $U(x,y)$ as a generic (repulsive) interaction.

Leaving the large-$N$ limit implicit, the macroscopic density and the connected two-point correlator are given by
\begin{align}
\rho_V(x)&=\average[\Big]{\sum_{i}\delta(x_i-x)}
\label{singular:discuss:density-particle}\\
\rho_\conn(x,y)&=\average[\Big]{\sum_{i\neq j}\delta(x_i-x)\delta(x_j-y)}-\rho_V(x)\rho_V(y),
\label{singular:discuss:connect}
\end{align}
where the averages are taken according to the partition function~\eqref{singular:discuss:particle}. The subscript $V$ indicates that the density has a non-universal dependence on the potential. We recall that the macroscopic density, $\rho_V(x)$, with compact support on a single interval $[a,b]$ is such that it minimises the energy functional
\begin{equation}
E[\rho_V]=\int_a^bdx\,\rho_V(x)V(x)-\frac12\int_a^bdx\,\rho_V(x)\int_a^bdy\,\rho_V(y)U(x,y).
\end{equation}
Or stated differently, the density satisfies the macroscopic balance equation,
\begin{equation}
V(y)=\int_a^bdx\,\rho_V(x)U(x,y)+\text{const.},\qquad
\int_a^bdx\,\rho_V(x)=1,
\label{singular:discuss:balance}
\end{equation}
and therefore responds linearly to changes in the potential. 

The idea presented in~\cite{Beenakker:1993,Beenakker:1994} is to consider the functional derivative with respect to the potential of density~\eqref{singular:discuss:density-particle} and balance equation~\eqref{singular:discuss:balance}. Under the assumption that the macroscopic density vanishes at non-fixed (i.e. soft) edges, this yields
\begin{align}
\frac{\delta\rho_V(x)}{\delta V(y)}&=\rho_\conn(x,y)+\rho_V(x)\delta(x-y)\\
\delta(x-y)&=\int_a^bdx'\frac{\delta\rho_V(x')}{\delta V(y)}U(x',x)
\end{align}
Combining these two equations gives
\begin{equation}
\rho_\conn(x,y)=U^{-1}(y,x)-\rho_V(x)\delta(x-y),
\end{equation}
where the inverse pair interaction is defined through the integral equation
\begin{equation}
\int_a^bdx'\,U^{-1}(y,x')U(x',x)=\delta(x-y).
\end{equation}
This heuristic argument shows that the potential $V(x)$ is expected to \emph{only} influence the connected two-point correlation function~\eqref{singular:discuss:connect} through the boundary of support, $a$ and $b$ (i.e. the correlation function is universal). On the other hand, the correlations will dependent on the eigenvalue repulsion and, thus, differ from the classical universality results when the repulsion is not purely logarithmic. Due to the approximation~\eqref{singular:discuss:jpdf-MB-inv-laguerre} we additionally expect that the Wishart product ensemble gives rise to the same universality classes as~\eqref{singular:discuss:jpdf-MB-general}. 

It remains an intriguing (but challenging) problem to find explicit expressions for the wide correlator in the case of product ensembles. In~\cite{Forrester:2014b}, it was argued that the wide correlator on the semi-infinite interval $(0,\infty)$ for a product of two matrices is
\begin{equation}
\rho_\conn^{n=2}(x,y)=-\frac1{6\pi^2}\frac{1+(x/y)^{1/3}+(y/x)^{1/3}}{(x-y)^2}.
\label{singular:discuss:wide2}
\end{equation}
We note that this expression (as expected) differs from the same result the for trace-class ensembles,
\begin{equation}
\rho_\conn^{n=1}(x,y)=-\frac1{4\pi^2}\frac{(x/y)^{1/2}+(y/x)^{1/2}}{(x-y)^2}.
\end{equation}
Formula~\eqref{singular:discuss:wide2} was obtained using asymptotics of the Meijer $G$-kernel (similar to an approach previously applied to the Bessel kernel~\cite{BT:1993}), but it is non-trivial to extend this formula to products of more matrices. Moreover, the general expression for arbitrary boundary conditions $a$ and $b$ is still completely unknown even for two matrices. 

One reason for the interest in the connected two-point wide correlator is its relation to the variance of linear statistics. If $F=\sum_if(x_i)$ is linear statistic (e.g. $f(x)=\log_2(1+\gamma x)$ for the mutual information) then it follows
\begin{equation}
\var F=\int_a^bdx\int_a^bdy\,f(x)f(y)[\rho_\conn(x,y)+\rho_V(x)\delta(x-y)];
\label{singular:discuss:linear}
\end{equation}
assuming that the right hand side is well-defined. The linear statistics for a product of two matrices was studied in~\cite{GNT:2014}, but the direct link to~\eqref{singular:discuss:wide2} is not completely obvious.

Finally, let us mention that it is also interesting to consider matrix products in the limit where the number of factors tends to infinity, but we will return to this question in chapter~\ref{chap:lyapunov}.

\chapter{Eigenvalues of Gaussian product matrices}
\label{chap:complex}

This chapter is devoted to the study of complex eigenvalues of products of independent random matrices. Unlike the description of the Wishart product ensemble given chapter~\ref{chap:singular}, we will not be limited by the use of the Itzykson--Zuber--Harish-Chandra formula and we can therefore consider all three standard classes: real ($\beta=1$), complex ($\beta=2$), and quaternionic ($\beta=4$) matrices. 
As in the previous chapter, our focus will be on products of independent induced Ginibre matrices. 
% We emphasise that our knowledge regarding complex eigenvalue spectra is limited compared to their real analogous. Several problems remain open even for a single Gaussian matrices, e.g. we do not have an explicit closed expression for the distribution of the largest real eigenvalue of a real Ginibre matrix~\cite{RS:2014} (a quantity which is relevant for the stability-instability transition in certain evolution models~\cite{May:1972}). 

Let us briefly recall the known structure of the eigenvalues of induced Ginibre matrices. Details for the Ginibre ensembles can be found in~\cite{Ginibre:1965,MS:1966,Kanzieper:2002,LS:1991,Edelman:1997,AK:2005,FN:2007,Sommers:2007,SW:2008,BS:2009}; while the relevant structures for the product ensembles will be given in the three proceeding sections. 

For $N\times N$ complex ($\beta=2$) and quaternionic ($\beta=4$) matrices distributed according to the density given in definition~\ref{def:pro:induced}, we readily obtain the joint density for the eigenvalues~\cite{Ginibre:1965},
\begin{align}
\cP_{\jpdf}^{\beta=2}(z_1,\ldots,z_N)&=\frac{1}{\cZ^{\beta=2}}\prod_{j=1}^N \abs{z_j}^{2\nu}e^{-\abs{z_j}^2}\prod_{1\leq k<\ell\leq N}\abs{z_k-z_\ell}^2, \label{complex:intro:jpdf-C} \\
\cP_{\jpdf}^{\beta=4}(z_1,\ldots,z_N)&=\frac{1}{\cZ^{\beta=4}}\prod_{j=1}^N \abs{z_j}^{4\nu}e^{-2\abs{z_j}^2}\abs{z_j-z_j^*}^2\prod_{1\leq k<\ell\leq N}\abs{z_k-z_\ell}^2\abs{z_k-z_\ell^*}^2,
\end{align}
using the Schur decomposition (proposition~\ref{thm:decomp:schur-C} and~\ref{thm:decomp:schur-H}, respectively) and integrating out irrelevant degrees of freedom. Here, $\cZ^{\beta}$ are normalisation constants, $\nu$ is a non-negative integer, and $z_i$ ($i=1,\ldots,N$) are the eigenvalues (in the quaternionic case the eigenvalues come in complex conjugate pairs and $z_1,\ldots,z_N$ are chosen to belong to the complex upper half-plane, cf. appendix~\ref{app:decompositions}). The real ($\beta=1$) induced Ginibre ensemble is more complicated since the eigenvalues come in a combination of real and complex conjugate pairs. Assuming that the matrix has $K$ real eigenvalues denoted by $x_1,\ldots,x_K$ and $2L$ complex eigenvalues denoted by $z_1,z_1^*,\ldots,z_L,z_L^*$, then the joint density for the eigenvalues can be written as~\cite{LS:1991,Edelman:1997}
\begin{multline}
\cP_{\jpdf}^{\beta=1,K}(x_1,\ldots,x_K  ,z_1,\ldots,z_L)=
  \frac{1}{\cZ_{K}^{\beta=1}} \prod_{1\leq i<j\leq K} \abs{x_j-x_i} \prod_{j=1}^K\prod_{i=1}^L \abs{x_j-z_i}^2\prod_{j=1}^K \abs{x_j}^\nu e^{-x_j^2/2}  \\
\times \prod_{1\leq i<j\leq L}\abs{z_j-z_i}^2\abs{z_j-z_i^*}^2
\prod_{j=1}^L(z_j-{z}_j^*) \abs{z_j}^{2\nu}\erfc\left[\frac{z_j-z_j^*}{i\sqrt{2}}\right] e^{-(z_j^2+{z}_j^{*2})/2}. 
\end{multline}
Again the joint density is obtained using the Schur decomposition (proposition~\ref{thm:decomp:schur-R}) and integrating out irrelevant degrees of freedom; the error function appears since the real Schur decomposition is incomplete for $L\geq1$ (see appendix~\ref{app:decompositions}). Here, $\cZ_K^{\beta=1}$ is chosen such that integration over the eigenvalues gives the probability that there are $K$ real eigenvalues, hence the sum over $K=0,\ldots,N$ is normalised to unity.

After rescaling by $N$, the macroscopic spectrum for all three Ginibre ensembles is uniformly distributed within a disk if the charge $\nu$ is kept fixed as $N$ tends to infinity and uniformly distributed within an annulus if $\nu/N\to\alpha>0$ for $N\to\infty$~\cite{FBKSZ:2012}; this is equivalent to the behaviour of the macroscopic gap between the origin and the infimum of the real spectrum of the Wishart ensemble. 
% The transition between a disk and an annulus is expected (at least in the complex case) since the induced ensembles are so-called Feinberg--Zee ensembles~\cite{FZ:1997} for which the single ring theorem apply~\cite{FZ:1997,FSZ:2001,GKZ:2011}. 
In the Ginibre case ($\nu=0$) the uniform distribution is a special case of the circular law, see~\cite{BC:2012} for a review. It is worth noting that the authors of~\cite{FBKSZ:2012} also provided a method for generating induced Ginibre matrices with integer charge $\nu$ from an $N\times(N+\nu)$ rectangular matrix with i.i.d. Gaussian entries (also numerically!).

\begin{figure}[tp]
\centering
\includegraphics{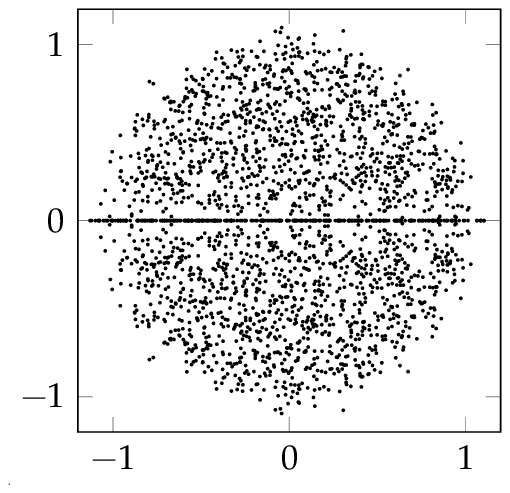}
\includegraphics{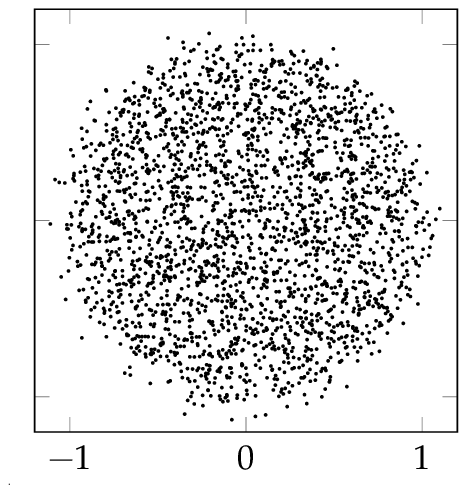}
\includegraphics{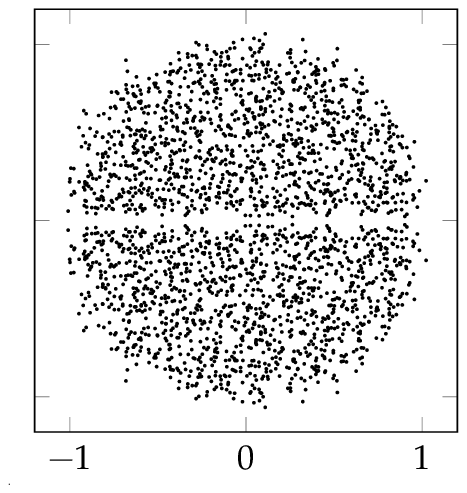}
\caption{Scatter plots of the complex eigenvalues of $50\times 50$ real (left panel), complex (center panel), and quaternionic Ginibre matrices generated from an ensemble of $50$ (real and complex case) and $25$ (quaternionic case) matrices. Note that the apparent poisson-like clustering is due to the ensemble average.}
\label{fig:complex:scat-intro}
\end{figure}

The uniform density of the macroscopic spectrum within the entire region of support does not extend to microscopic structures. As indicated by figure~\ref{fig:complex:scat-intro} the real and quaternionic Ginibre ensembles assign special behaviour to the neighbourhood of the real axis due to the complex pairing of eigenvalues. Special structure is also assigned to the origin, since we are considering induced ensembles. In the complex case, the joint density~\eqref{complex:intro:jpdf-C} may be interpreted as a two-dimensional log-gas trapped by a potential $V(\abs{z})=\abs{z}^2-\nu\log\abs{z}^2$ and the special structure near the origin is due to the logarithmic singularity (the singularity takes us out of the universality regime considered in~\cite{Berman:2008}). Similar interpretations can be given to the real and quaternionic ensembles.

Let us recall the three microscopic scaling limits for the complex ($\beta=2$) induced Ginibre ensemble: the bulk, the origin, and the edge. The real and quaternionic Ginibre ensembles are Pfaffian rather than determinantal point processes and have additional scaling regimes near the real axis but we will not repeat these structures here, see~\cite{Kanzieper:2002,BS:2009} and references within. The joint density~\eqref{complex:intro:jpdf-C} describes a determinantal point process with $k$-point correlation function given by
\begin{align}
R_k(z_1,\ldots,z_k)&=\det_{1\leq i,j\leq k}\big[K_N(z_i,z_j)\big], \\
K_N(u,v)&=\abs{uv}^{\nu}e^{-\abs{u}^2/2-\abs v^2/2}\sum_{\ell=0}^{N-1}\frac{(uv^*)^\ell}{\pi\Gamma[\nu+k+1]}. \nn
\end{align}
The simple structure of the finite-$N$ correlation function enables us to perform the scaling explicitly in all three regions. The macroscopic density is uniform in the region of the support and the microscopic scaling is 
\begin{equation}
\lim_{N\to\infty}\frac1{N^k}R_k\Big(N^{\frac12}\Big(z_\ast+\frac{z_1}{N^{\frac12}}\Big),\ldots,N^{\frac12}\Big(z_\ast+\frac{z_k}{N^{\frac12}}\Big)\Big)=\det_{1\leq i,j\leq k}\big[K_*(z_i,z_j)\big],
\label{complex:intro:corr-scaling}
\end{equation}
where $K_*(u,v)$ is the limiting kernel. 
Assuming $\nu$ is kept fixed, the limiting kernel reads
\begin{equation}
K_\bulk(u,v)=\frac1\pi e^{-\abs u^2/2-\abs v^2/2+uv^*}
\label{complex:intro:bulk}
\end{equation}
for $0<\abs{z_*}<1$. This is the well-known universal microscopic correlation kernel in the bulk. If we look at microscopic correlations close to either the origin ($z_*=0$) or the edge ($\abs{z_*}=1$) then there will be a correction term to~\eqref{complex:intro:bulk} given by
\begin{equation}
\frac{K_\origin(u,v)}{K_\bulk(u,v)}=1-\nu\frac{\Gamma[\nu,uv^*]}{\Gamma[\nu+1]}
\qquad\text{and}\qquad
\frac{K_\edge(u,v)}{K_\bulk(u,v)}=\frac{\erfc\big[\sqrt{2}(zv^*+uz^*)\big]}2,
\label{complex:intro:origin}
\end{equation}
respectively. Here, $\Gamma[\nu,z]$ is the incomplete gamma function and $\erfc[z]$ is the complementary error function. We note that for $\nu=0$, we have $K_\origin(u,v)=K_\bulk(u,v)$ which is the well known result from the Ginibre ensemble. It should be mentioned that the microscopic correlations in the bulk and at the edge are known to be universal in the sense of~\cite{TV:2012}. The microscopic correlations at the origin for $\nu\neq0$ have previously been studied in~\cite{Akemann:2001}. It is also of physical relevance since it may be linked to the microscopic density of the QCD baryon number Dirac operator (see~\cite{IS:2012}) in a sector with non-zero topological charge $\nu$. 

\plainbreak{1}

With the structure of a single induced Ginibre matrix in mind, we turn to products of such matrices. Contrary to our description of the Wishart product ensembles (chapter~\ref{chap:singular}) where we took the density presentation~\eqref{prologue:square:density-product} as a starting point, we will start from the representation~\eqref{prologue:square:general}. Using the density for induced Ginibre matrices~\eqref{prologue:outline:induced} in~\eqref{prologue:square:general} gives
\begin{equation}
P^\beta(Y_n)=\bigg[\prod_{i=1}^n\frac1{Z_i^\beta}\int_{\F_\beta^{N\times N}} d^\beta X_i\,\det (X_i^\dagger X_i)^{\beta\nu_{i}/2\gamma}
e^{-\frac{\beta}{2\gamma}\tr X_i^\dagger X_i}\bigg]\,\delta^\beta(X_n\cdots X_1-Y_n),
\label{complex:intro:density-matrix}
\end{equation}
with $\nu_1,\ldots,\nu_n$ denoting non-negative integers and $\gamma=1,1,2$ for $\beta=1,2,4$.
The joint density for the eigenvalues is obtained by integrating out all irrelevant degrees of freedom, i.e. if $\lambda_i$ ($i=1,\ldots,N$) denote the eigenvalues of the matrix $Y_n$ distributed with respect to~\eqref{complex:intro:density-matrix}, then the joint probability density function for the eigenvalues is given by
\begin{equation}
\cP_\jpdf^\beta(z_1,\ldots,z_N)=\int_{\F_\beta^{N\times N}} d^\beta Y_nP^\beta(Y_n)\prod_{k=1}^N\delta(\lambda_{k}-z_k),
\label{complex:intro:jpdf-def}
\end{equation}
with $\F_{\beta=1,2,4}=\R,\C,\H$ and $n$, $N$ and $\nu_i$ ($i=1,\ldots,n$) as above.
Recall that the constants $\nu_i$ ($i=1,\ldots,n$) which appear in~\eqref{prologue:outline:induced} may be interpreted as incorporating the structure of a product of rectangular Gaussian random matrices~\eqref{prologue:outline:gauss}. Moreover, the induced Ginibre matrices are statistically isotropic, thus the ordering of the factors is irrelevant and we may again choose the ordering $\nu_1\leq\cdots\leq\nu_n$ without loss of generality.

The rest of this chapter is organised such that section~\ref{sec:complex:C},~\ref{sec:complex:H}, and~\ref{sec:complex:R} deals with complex, quaternionic, and real matrices, respectively. A summary and a discussion of open problems will be provided in section~\ref{sec:complex:discuss}. It should be mentioned that some results for the special case of a product of \emph{two} random matrices pre-date the more general results presented in the rest of this chapter. A product of two Ginibre matrices appears as a certain limit in a model of QCD at finite baryon chemical potential~\cite{Osborn:2004,Akemann:2005,APS:2009,APS:2010} (see~\cite{Akemann:2011} for a review) and was also considered in~\cite{KS:2010}. Similarly, the so-called spherical ensembles~\cite{Krishnapur:2009,FM:2012,Mays:2013} are given as a product of a Ginibre and an inverse Ginibre matrix; such ensembles arise when considering a random matrix version of the generalised eigenvalue problem, or matrix pencil~\cite{GL:1996}.

\section{Products of complex Ginibre matrices}
\label{sec:complex:C}

First, let us consider the simplest case: products of complex induced Ginibre matrices. The case of square complex Ginibre matrices (i.e. with all charges $\nu_i$ equal to zero) was originally solved by Akemann and Burda~\cite{AB:2012}. The generalisation to rectangular or induced matrices can be found in~\cite{ARRS:2013,IK:2014}. The first step is to find the joint density for the eigenvalues.

\begin{proposition}\label{thm:complex:jpdf-C}
The eigenvalues of a product of $n$ independent $N\times N$ induced Ginibre matrices with charges $\nu_1,\ldots,\nu_n$ form a point process on the complex plane with joint density  
\begin{equation}
\cP_\jpdf^{\beta=2}(z_1,\ldots,z_N)=\frac{1}{\cZ^{\beta=2}}\prod_{k=1}^Nw_n^{\beta=2}(z_k)\prod_{1\leq i<j\leq N}\abs{z_j-z_i}^2,
\label{complex:C:jpdf}
\end{equation}
where $\cZ^{\beta=2}$ is a normalisation constant and $w_n^{\beta=2}(z)$ is a weight function given by
\begin{equation}
\cZ^{\beta=2}=N!\,\pi^N\prod_{k=1}^N\prod_{\ell=1}^n\Gamma[\nu_\ell+k]
\quad\text{and}\quad
w_n^{\beta=2}(z)=\MeijerG{n}{0}{0}{n}{-}{\nu_1,\ldots,\nu_n}{\abs z^2},
\label{complex:C:norm+weight}
\end{equation}
respectively. 
\end{proposition}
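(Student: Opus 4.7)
The plan is to mirror the derivation of Proposition~\ref{thm:singular:jpdf-determinantal}, replacing the singular value decomposition by the Schur decomposition (proposition~\ref{thm:decomp:schur-C}), since we want eigenvalues of $Y_n$ rather than its singular values. Because the induced Ginibre density~\eqref{prologue:outline:induced} is isotropic (remark following definition~\ref{def:pro:induced}), the matrix density~\eqref{complex:intro:density-matrix} of $Y_n$ is bi-unitarily invariant, hence a function of singular values only. For any such density on $\C^{N\times N}$, writing $Y_n = U(Z+T)U^{-1}$ with $U \in \gU(N)/\gU(1)^N$, $Z = \diag(z_1,\ldots,z_N)$ and $T$ strictly upper triangular, the Schur Jacobian produces $\prod_{i<j}|z_j - z_i|^2$ and integrating out $U$ against its Haar measure yields
\[\cP_\jpdf^{\beta=2}(z_1,\ldots,z_N) \;\propto\; \prod_{1\leq i<j\leq N}|z_j - z_i|^2 \int d^2 T\; P^{\beta=2}(Z+T).\]

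The core of the proof is the evaluation of this remaining integral. I would write $P^{\beta=2}$ via~\eqref{prologue:square:general} with the induced densities~\eqref{prologue:outline:induced} and perform a simultaneous Schur-type decomposition along the chain, $X_i = U_i R_i U_{i-1}^{-1}$ with $U_0 = U_n$ and each $R_i = \diag(z_{i,1},\ldots,z_{i,N}) + T_i$ upper triangular, so that $X_n\cdots X_1 = U_n R_n R_{n-1}\cdots R_1 U_n^{-1}$ and the product eigenvalues factor as $z_k = z_{n,k}\cdots z_{1,k}$. The crucial simplification is that, because the determinant of an upper triangular matrix is the product of its diagonal entries, $\det(X_i^\dagger X_i)^{\nu_i}$ reduces to $\prod_k|z_{i,k}|^{2\nu_i}$ and depends only on the diagonals, while $\tr(X_i^\dagger X_i) = \sum_k|z_{i,k}|^2 + \sum_{j<k}|t_{i,j,k}|^2$ splits into a diagonal piece plus an independent Gaussian in the strictly upper triangular entries. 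After integrating out the $T_i$ together with the internal unitaries $U_1,\ldots,U_{n-1}$ (which contribute only volume factors), and using the $T$-integration to fix the constraint $z_k = z_{n,k}\cdots z_{1,k}$, one obtains
\[\int d^2 T\; P^{\beta=2}(Z+T) \;\propto\; \prod_{k=1}^N w_n^{\beta=2}(z_k),\]
where $w_n^{\beta=2}(z)$ is the density of the product $z = z_n\cdots z_1$ of $n$ independent complex scalars with respective densities proportional to $|z_i|^{2\nu_i}e^{-|z_i|^2}$.

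The identification of $w_n^{\beta=2}$ with the Meijer $G$-function in~\eqref{complex:C:norm+weight} then follows by Mellin transform, exactly as in the scalar computation leading to~\eqref{prologue:scalar:Meijer}: by independence the moments of $w_n^{\beta=2}$ factorise as $\int_\C d^2 z\, |z|^{2s} w_n^{\beta=2}(z) = \prod_\ell \Gamma[\nu_\ell+s+1]/\Gamma[\nu_\ell+1]$, and the inverse Mellin transform produces the stated $G^{n,0}_{0,n}$ function via its defining contour integral (see appendix~\ref{app:special}). The normalisation $\cZ^{\beta=2}$ is obtained by applying Andr\'eief's formula to the joint density and evaluating a single moment integral via~\eqref{special:meijer:meijer-moment}, analogously to the end of the proof of Proposition~\ref{thm:singular:jpdf-determinantal}. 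The principal obstacle will be the bookkeeping in the chain Schur-decomposition step: verifying that the relative unitaries $U_i^{-1}U_{i+1}$ along the chain and the strictly upper triangular entries $T_i$ truly decouple from the auxiliary diagonal variables $z_{i,k}$, and that the various Jacobians combine to produce exactly the single Vandermonde $\prod_{i<j}|z_j - z_i|^2$ in the product eigenvalues rather than auxiliary Vandermondes in the $z_{i,k}$'s.
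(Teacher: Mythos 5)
Your proof is essentially correct and takes the same route as the paper's: the generalised (chain) Schur decomposition of proposition~\ref{thm:decomp:gen-schur-C}, whose Jacobian supplies the single Vandermonde in the product eigenvalues that you flagged as the ``principal obstacle,'' followed by decoupling of the Gaussian $T_i$-integrals, Mellin-transform identification of the weight with a Meijer $G$-function, and Andr\'eief's formula for the normalisation. Two small remarks: the preliminary single Schur decomposition of $Y_n$ in your opening paragraph is superfluous (the paper goes straight to the chain parametrisation, which already produces the Vandermonde and the residual $T_i$-integral in one step), and your moment formula for $w_n^{\beta=2}$ omits the constant $\pi\prod_\ell\Gamma[\nu_\ell+1]$ present in the un-normalised weight of~\eqref{complex:C:norm+weight} --- but neither affects the validity of the argument.
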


\begin{proof}
In order to find the joint density for the eigenvalues, we need to change variables. Here, we use a generalised complex Schur decomposition (proposition~\ref{thm:decomp:gen-schur-C}) to write
\begin{equation}
X_i=U_i(\Lambda_i+T_i)U_{i-1},\qquad i=1,\ldots,n,
\end{equation}
where each $U_i$ denotes a unitary matrix ($U_0:=U_n$), each $T\in\C^{N(N-1)/2}$ is a strictly upper-triangular complex matrix, and $\Lambda_i=\diag(\lambda_{i,1},\ldots,\lambda_{i,N})$ with $\lambda_{i,j}\in\C$ for all $j=1,\ldots,N$. With this parametrisation the eigenvalues of the product matrix $Y_n$ are given by $\lambda_j=\lambda_{1,j}\lambda_{2,j}\cdots\lambda_{n,j}$ ($j=1,\ldots,N$) but $\lambda_{i,j}$ is (generally) not an eigenvalue of $X_i$. From proposition~\ref{thm:decomp:gen-schur-C}, we know that
\begin{equation}
\prod_{\ell=1}^nd^2 X_\ell=\prod_{1\leq i<j\leq N}\abs{\lambda_j-\lambda_i}^2\prod_{\ell=1}^nd\mu(U_\ell)d^2T_\ell\prod_{k=1}^Nd^2\lambda_{\ell,k} ,
\end{equation}
where $d\mu(U_\ell):=U_\ell^{-1}dU_\ell$ is the Haar measure on $\gU(N)/U(1)^N$, $d^2T_\ell$ is the flat measure on the space of all strictly upper-triangular complex matrices, and $d^2\lambda_{\ell,k}$ is the flat measure on the complex plane. 

We use this parametrisation in~\eqref{complex:intro:jpdf-def}. The integration over $U_i$ and $T_i$ ($i=1,\ldots,n$) decouples and contributes only to the normalisation. It follows that the joint density is~\eqref{complex:C:jpdf} with a weight given as an $n$-fold integral,
\begin{equation}
w_n^{\beta=2}(z)=\pi\bigg[\prod_{\ell=1}^n\frac1\pi\int_{\C}d^2\lambda_\ell \abs{\lambda_\ell}^{2\nu_\ell}e^{-\abs{\lambda_\ell}^2}\bigg]\delta^2(\lambda_n\cdots\lambda_1-z).
\label{complex:C:weight-n-fold}
\end{equation}
After a normalisation, this weight function may be interpreted as the density of a product of complex-valued random scalars. Using the Mellin transform procedure presented in section~\ref{sec:prologue:scalar} we find the Meijer $G$-function formulation of the weight function given in~\eqref{complex:C:norm+weight}. 

It remains to determine the normalisation, which we do by integrating out the variables in~\eqref{complex:C:jpdf}. Using Andr\'eief's integration formula~\cite{Andreief:1883,deBruijn:1955} we find
\begin{equation}
\cZ^{\beta=2}=N!\det_{1\leq k,\ell\leq N}\bigg[\int_\C d^2z\, w_n^{\beta=2}(z) z^{k-1}z^{*\,\ell-1}\bigg].
\end{equation}
The integral within the determinant is zero if $k\neq \ell$, while the $k=\ell$ case can be performed using~\eqref{special:meijer:meijer-moment}. This determines the normalisation and completes the proof.
\end{proof}

\subsection{Correlations for finite size matrices}

Before we look at the correlations of the product ensemble~\eqref{complex:C:jpdf}, we recall a few general properties of orthogonal ensembles in the complex plane. Let us consider an ensemble of $N$ points in the complex plane with joint probability density function
\begin{equation}
\cP_\jpdf(z_1,\ldots,z_N)=\frac{1}{\cZ}\prod_{k=1}^Nw(z_k)\prod_{1\leq i<j\leq N}\abs{z_j-z_i}^2.
\label{complex:C:jpdf-general}
\end{equation}
Here $\cZ$ is a normalisation constant and $w(z)$ is a positive weight function such that
\begin{equation}
\int_\C d^2z\,w(z)\,\abs z^{2k}<\infty,\qquad k=0,1,\ldots,N-1.
\end{equation}
Note that we do not require that all moments are finite, since this would prohibit the study of product ensembles which contain inverse matrices~\cite{Krishnapur:2009,ARRS:2013,AI:2015}. 

We are interested in the $k$-point correlation function defined by 
\begin{equation}
R_k(z_1,\ldots,z_k):=\frac{N!}{(N-k)!}\bigg[\prod_{i=k+1}^N\int_\C d^2z_i\bigg]\cP_\jpdf(z_1,\ldots,z_N).
\end{equation}
Suppose that we have managed to orthogonalise~\eqref{complex:C:jpdf-general}, i.e. we have found a family of monic polynomials, $p_k(z)=z^k+O(z^{k-1})$, with real coefficients, which satisfy the orthogonality relation
\begin{equation}
\inner{p_i,p_j}:=\int_\C d^2z\,w(z)\,p_i(z)p_j(z^*)=h_i\delta_{ij},\qquad i,j=1,2,\ldots,N-1,
\label{complex:C:orthogonal}
\end{equation}
where $h_i$ are the squared norms. We construct the kernel
\begin{equation}
K_N(x,y)=w(x)\sum_{k=0}^{N-1}\frac{p_k(x)p_k(y^*)}{h_k},
\label{complex:C:kernel-general}
\end{equation}
which due the orthogonality relation~\eqref{complex:C:orthogonal} satisfies
\begin{equation}
\int_\C d^2z\,K_N(z,z)=N
\quad\text{and}\quad
\int_\C d^2z\, K_N(x,z)K_N(z,y)=K_N(x,y).
\end{equation}
It follows from general considerations (see e.g.~\cite{Mehta:2004}) that the joint density~\eqref{complex:C:jpdf-general} describes a determinantal point process
\begin{equation}
R_k(z_1,\ldots,z_k)=\det_{1\leq i,j\leq k}\big[K_N(z_i,z_j)\big]
\label{complex:C:R-general}
\end{equation}
with correlation kernel given by~\eqref{complex:C:kernel-general}. In particular, the one-point correlation function (spectral density) is given by
\begin{equation}
R_1^n(z)=K_N(z,z).
\label{complex:C:density-finite}
\end{equation}
As for determinantal point processes on the real line, the correlation kernel is not unique, e.g. if $g(x,x^*)$ is a non-zero function on the relevant domain, then the correlation function~\eqref{complex:C:R-general} is unaffected by a gauge transformation $K_N(x,y)\mapsto (g(x,x^*)/g(y,y^*)) K_N(x,y)$. In fact, our definition of the correlation kernel~\eqref{complex:C:kernel-general} differs from the more conventional choice
\begin{equation}
K_N(x,y)=\sqrt{w(x)w(y)}\sum_{k=0}^{N-1}\frac{p_k(x)p_k(y^*)}{h_k},
\label{complex:C:kernel-sqrt}
\end{equation}
by a gauge transformation $\sqrt{w(x)/w(y)}$. The reason we define the correlation kernel by~\eqref{complex:C:kernel-general} rather than by~\eqref{complex:C:kernel-sqrt} is that we will deal with weight functions defined by a contour integral in the complex plane rather than an exponential of a potential; the square root in~\eqref{complex:C:kernel-sqrt} would force us to do certain computations on the level of correlation functions rather than directly for the kernel.

Similar to the description of real eigenvalues in section~\ref{sec:singular:correlations}, a useful property of the ensemble~\eqref{complex:C:jpdf-general} is that the orthogonal polynomial is related to the expectation (with respect to the joint density) of the characteristic polynomial,
\begin{equation}
\average[\bigg]{\prod_{i=1}^N(z-z_i)}=p_N(z),\qquad z\in\C.
\label{complex:C:average-poly}
\end{equation}
Extensions to products and ratios of characteristic polynomials exist as well~\cite{AV:2003,AP:2004}; all of which follows using standard identities of Vandermonde determinants in complete analogy to the real eigenvalue case. The complex Heine formula~\eqref{complex:C:average-poly} is in general quite useful when searching for orthogonal polynomials, but the description given below turns out to be even simpler since the symmetry of our weight functions implies that the orthogonal polynomials must be monomials. We have the following well-known lemma:

\begin{lemma}\label{thm:complex:monomials}
Given a point process on the complex plane, $\C$, with joint density~\eqref{complex:C:jpdf-general} and weight function such that $w(z)=w(\abs z)$, then the corresponding monic orthogonal polynomials are the monomials, $z$ and $z^*$. 
\end{lemma}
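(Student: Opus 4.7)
The plan is to exploit the rotational invariance of $w$ directly by passing to polar coordinates and computing the bi-moments $\inner{z^j,z^k}$. The key observation is that monic orthogonal polynomials of a given degree are uniquely determined by the orthogonality requirement (via Gram--Schmidt applied to $1,z,z^2,\ldots$), so it suffices to verify that the monomials themselves satisfy the required orthogonality relation.

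Concretely, I would write $z=re^{i\theta}$ with $d^2z=r\,dr\,d\theta$. Since $w(z)=w(|z|)=w(r)$, the inner product factorises:
\begin{equation*}
\inner{z^j,z^k}=\int_0^\infty dr\,r\,w(r)\,r^{j+k}\int_0^{2\pi}d\theta\,e^{i(j-k)\theta}.
\end{equation*}
The angular integral equals $2\pi\delta_{jk}$, so the monomials $\{z^k\}_{k=0}^{N-1}$ are pairwise orthogonal with respect to $\inner{\cdot,\cdot}$, with squared norms $h_k=2\pi\int_0^\infty dr\,r^{2k+1}w(r)$ (which are positive and finite by the assumption that the moments up to order $2N-2$ of $w$ exist).

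To conclude, I would invoke uniqueness of the monic orthogonal family: given the weight, the Gram--Schmidt procedure applied to the basis $\{1,z,z^2,\ldots\}$ produces a unique sequence of monic polynomials $p_k(z)=z^k+\text{(lower order)}$ satisfying $\inner{p_i,p_j}=h_i\delta_{ij}$. Since the monomials $z^k$ are themselves monic of the correct degree and already satisfy the required orthogonality, they must coincide with $p_k(z)$. There is no real obstacle here; the only thing to be careful about is ensuring the inner product is well defined on polynomials of degree up to $N-1$, which is guaranteed by the integrability hypothesis stated just above the lemma.
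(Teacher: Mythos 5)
Your proof is correct and takes essentially the same approach as the paper: change to polar coordinates, observe that rotational invariance of the weight makes the angular integral produce $2\pi\delta_{jk}$, and conclude that the monomials are the monic orthogonal family. You have merely spelled out the uniqueness-of-Gram--Schmidt step that the paper leaves implicit.
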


\begin{proof}
The fact that the monic orthogonal polynomials are the monomials, $z$ and $z^*$, follows immediately from the orthogonality relation
\begin{equation}
 \int_{-\pi}^{+\pi}d\theta e^{i(k-\ell)\theta}=2\pi\,\delta_{k\ell},\qquad k,\ell\in\Z,
\label{complex:C:unitary-ortho}
\end{equation}
after a change to polar coordinates, since the weight function is independent of the complex phase. 
\end{proof}

Due to the explicit form of the joint density of the eigenvalues for the product ensemble under consideration (proposition~\ref{thm:complex:jpdf-C}) the correlation kernel follows as a simple corollary~\cite{AB:2012,ARRS:2013,IK:2014}.

\begin{corollary}
The joint density~\eqref{complex:C:jpdf} describes a determinantal point process on the complex plane with kernel
\begin{equation}
K_N^n(x,y)=\MeijerG{n}{0}{0}{n}{-}{\nu_1,\ldots,\nu_n}{\abs{x}^2}\sum_{k=0}^{N-1}\frac{(xy^*)^k}{\pi\prod_{\ell=1}^n\Gamma[\nu_\ell+k+1]}
\label{complex:C:kernel-finite}
\end{equation}
with parameters as in proposition~\ref{thm:complex:jpdf-C}.
\end{corollary}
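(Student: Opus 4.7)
The plan is to apply the general theory of orthogonal ensembles on $\C$ that was just set up, combined with the radial-symmetry lemma, since everything reduces to computing the norms of the monomial orthogonal polynomials.

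First I would observe that the weight $w_n^{\beta=2}(z)=\MeijerG{n}{0}{0}{n}{-}{\nu_1,\ldots,\nu_n}{\abs z^2}$ is a function of $\abs z$ alone, so Lemma~\ref{thm:complex:monomials} applies: the monic orthogonal polynomials for the inner product~\eqref{complex:C:orthogonal} are simply $p_k(z)=z^k$. Therefore the kernel~\eqref{complex:C:kernel-general} takes the form
\begin{equation}
K_N^n(x,y)=w_n^{\beta=2}(x)\sum_{k=0}^{N-1}\frac{x^k(y^*)^k}{h_k},
\end{equation}
and the determinantal representation~\eqref{complex:C:R-general} follows verbatim from the general discussion. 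The only content left to verify is the value of the squared norms $h_k$.

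Next I would compute $h_k$ by passing to polar coordinates $z=re^{i\theta}$ with $d^2z=r\,dr\,d\theta$, and then substituting $u=r^2$ so that $r\,dr=\tfrac{1}{2}du$. The angular integration is trivial and gives a factor of $2\pi$, reducing the norm to
\begin{equation}
h_k=\pi\int_0^\infty du\,u^k\,\MeijerG{n}{0}{0}{n}{-}{\nu_1,\ldots,\nu_n}{u}.
\end{equation}
At this point I would invoke the Mellin moment formula~\eqref{special:meijer:meijer-moment} for the Meijer $G$-function, which evaluates the integral to $\prod_{\ell=1}^n\Gamma[\nu_\ell+k+1]$. Thus $h_k=\pi\prod_{\ell=1}^n\Gamma[\nu_\ell+k+1]$, and inserting this into the kernel sum yields exactly~\eqref{complex:C:kernel-finite}.

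There is no genuine obstacle in this proof: the radial symmetry of the weight eliminates the need to construct nontrivial orthogonal polynomials, and the only analytic input is the single Mellin transform identity for the Meijer $G$-function. As a consistency check one can verify that $N!\prod_{k=0}^{N-1}h_k$ reproduces the normalisation $\cZ^{\beta=2}$ stated in~\eqref{complex:C:norm+weight}, which indeed it does, confirming that the kernel integrates to $N$ and that no combinatorial factor has been lost.
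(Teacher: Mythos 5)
Your proof is correct and takes essentially the same route as the paper: both invoke Lemma~\ref{thm:complex:monomials} to reduce the problem to computing the squared norms $h_k$, and both evaluate $h_k=\pi\prod_{\ell=1}^n\Gamma[\nu_\ell+k+1]$ via the Mellin moment formula~\eqref{special:meijer:meijer-moment}. The only difference is that you spell out the polar-coordinate reduction that the paper leaves implicit, and you add the (correct) sanity check against $\cZ^{\beta=2}$.
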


\begin{proof}
The correlation kernel is given by~\eqref{complex:C:kernel-general} with weight~\eqref{complex:C:norm+weight}. It follows from lemma~\ref{thm:complex:monomials} that the orthogonal polynomials are the monomials, thus the squared norms are
\begin{equation}
h^n_k=\int_\C d^2w_n^{\beta=2}(z)\abs{z}^{2k}=\pi\prod_{\ell=1}^n\Gamma[\nu_\ell+k+1].
\label{complex:C:norm}
\end{equation}
Here the last equality follows from~\eqref{special:meijer:meijer-moment}.
\end{proof}

Yet another interesting property (originally observed for Ginibre matrices in~\cite{Kostlan:1992,Rider:2004}) follows directly from the symmetry of the weight function. 

\begin{proposition}\label{thm:complex:permanent-C}
Given a determinantal point process on the complex plane~\eqref{complex:C:jpdf-general} with weight function such that $w(z)=w(\abs z)$, then
\begin{equation}
\prod_{k=1}^N\int_{-\pi}^\pi d\theta_k\, r_k\,\cP_\jpdf(r_1e^{i\theta_1},\ldots,r_Ne^{i\theta_N})
=\frac1{N!}\per_{1\leq i,j\leq N}\bigg[\frac{2\pi\, r_i^{2j-1}w(r_i)}{h_{j-1}}\bigg]
\label{complex:C:permanent}
\end{equation}
with $r_i\in\R_+$ ($i=1,\ldots,k$), \textup{`$\per$'} denotes a permanent, and the squared norms are given by
\begin{equation}
h_j^n=\int_\C d^2z\,w(z)\abs{z}^{2j}. 
\end{equation}
\end{proposition}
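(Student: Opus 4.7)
The plan is to expand the squared Vandermonde determinant as a double sum over permutations and then exploit the rotational symmetry of $w$ to collapse it, via the orthogonality of exponentials, into a permanent. Writing
\[
\prod_{1\le i<j\le N}\abs{z_j-z_i}^2
=\det_{1\le i,j\le N}[z_i^{j-1}]\,\overline{\det_{1\le i,j\le N}[z_i^{j-1}]}
=\sum_{\sigma,\tau\in S_N}\sign(\sigma)\sign(\tau)\prod_{k=1}^N z_k^{\sigma(k)-1}(z_k^*)^{\tau(k)-1},
\]
and substituting $z_k=r_ke^{i\theta_k}$, the $k$-th angular integral produces $2\pi\,\delta_{\sigma(k),\tau(k)}$ by~\eqref{complex:C:unitary-ortho}. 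The constraint $\sigma(k)=\tau(k)$ for all $k$ forces $\tau=\sigma$, so $\sign(\sigma)\sign(\tau)=1$, and the double sum collapses to a single sum over $S_N$.

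With this, and using $w(z_k)=w(r_k)$, I would obtain
\[
\prod_{k=1}^N\int_{-\pi}^{\pi}d\theta_k\,r_k\,\cP_\jpdf(r_1e^{i\theta_1},\ldots,r_Ne^{i\theta_N})
=\frac{(2\pi)^N}{\cZ}\sum_{\sigma\in S_N}\prod_{k=1}^N r_k^{2\sigma(k)-1}w(r_k)
=\frac{(2\pi)^N}{\cZ}\per_{1\le i,j\le N}\!\bigl[r_i^{2j-1}w(r_i)\bigr],
\]
where the passage to the permanent uses that each permutation picks up every row exactly once, allowing the row-dependent factor $r_iw(r_i)$ to be absorbed into the $(i,j)$-entry.

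It remains to identify $\cZ$. Integrating the left-hand side over $r_1,\ldots,r_N\in(0,\infty)$ must give $1$. In polar coordinates $h_{j-1}=\int_\C d^2z\,w(z)\abs{z}^{2(j-1)}=2\pi\int_0^\infty dr\,r^{2j-1}w(r)$, so each term in the permanent integrates to $\prod_j h_{j-1}/(2\pi)$ independently of $\sigma$, yielding $N!\prod_j h_{j-1}/(2\pi)^N$ in total. Hence $\cZ=N!\prod_{j=1}^N h_{j-1}$, and inserting this into the displayed formula gives exactly
\[
\frac{1}{N!}\per_{1\le i,j\le N}\!\left[\frac{2\pi\,r_i^{2j-1}w(r_i)}{h_{j-1}}\right],
\]
as claimed.

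There is no genuine obstacle: the only subtle point worth emphasising is that the appearance of a \emph{permanent} rather than a determinant comes precisely from the collapse $\tau=\sigma$, which kills the sign factor $\sign(\sigma)\sign(\tau)=\sign(\sigma)^2=+1$; this is the structural difference from the determinantal kernel formulation~\eqref{complex:C:R-general} of the correlation functions, and it is what makes the radial marginal tractable even when the angular correlations are nontrivial.
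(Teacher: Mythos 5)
Your proof is correct and follows essentially the same route as the paper: expand the squared Vandermonde as a double sum over permutations, integrate out the phases using the orthogonality relation to force $\tau=\sigma$ (hence killing the sign factor), and recognise the surviving single sum as a permanent. The only cosmetic difference is that you derive $\cZ=N!\prod_{j=0}^{N-1}h_j$ from the normalisation of the radial marginal, whereas the paper simply cites this identity; both are fine.
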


\begin{proof}
Our starting point is the joint density~\eqref{complex:C:jpdf-general}. Expanding the Vandermonde determinants gives
\begin{multline}
\prod_{j=1}^N\int_{-\pi}^\pi d\theta_j\, r_j\,\cP_\jpdf(r_1e^{i\theta_1},\ldots,r_Ne^{i\theta_N})=
\frac{1}{\cZ}\prod_{j=1}^N\int_{-\pi}^\pi d\theta_j\, r_j\,w(r_j)\\
\times\sum_{\sigma,\omega\in S_N}\sign\sigma\sign\omega
\prod_{k,\ell=1}^Nr_k^{\sigma(k)-1}e^{i(\sigma(k)-1)\theta_k}r_\ell^{\sigma(\ell)-1}e^{-i(\sigma(\ell)-1)\theta_\ell}.
\end{multline}
It is trivial to perform the integrals using~\eqref{complex:C:unitary-ortho}, since the weight function is independent of the complex phases. Exploiting that $\cZ=N!\prod_{j=0}^{N-1}h_j$, we write
\begin{equation}
\prod_{k=1}^N\int_{-\pi}^\pi d\theta_k\, r_k\,\cP_\jpdf(r_1e^{i\theta_1},\ldots,r_Ne^{i\theta_N})
=\frac1{N!}\sum_{\sigma\in S_N}\prod_{i=1}^N\frac{2\pi\, r_i^{2\sigma(i)-1}w_n(r_i)}{h_{\sigma(i)-1}}.
\end{equation}
Here the right hand side is recognised as a permanent.
\end{proof}

\begin{remark}
We note that the permanental structure~\eqref{complex:C:permanent} is simpler than those which appear in the statistical description of interacting bosonic systems. In fact, the joint density~\eqref{complex:C:permanent} corresponds to the symmetrised density for $N$ independent random variables $r_k$ ($k=1,\ldots,N$) with densities
\begin{equation}
f_k(r_k)=\frac{2\pi\, r_k^{2k-1}w(r_k)}{h_{k-1}},\qquad k=1,\ldots,N.
\end{equation}
The normalisation of the densities, $f_k$, is ensured by~\eqref{complex:C:orthogonal}.
\end{remark}

\begin{remark}
The permanental structure~\eqref{complex:C:permanent} is useful when considering quantities which only depend on the radial structure, such as hole or overcrowding probabilities at the origin (see e.g.~\cite{HKPV:2009}) or the ``largest'' (meaning furthest away from the origin) eigenvalue. Hole probabilities for products of two matrices were considered in~\cite{APS:2009gap}, while hole and overcrowding probabilities for an arbitrary number of factors were studied in~\cite{AS:2013,AIS:2014}, but we will not repeat that description here. A study of the ``largest'' eigenvalue for a product of complex Ginibre matrices was carried out in~\cite{JQ:2014}.
\end{remark}

\subsection{Macroscopic density}

With exact formulae for correlation functions at arbitrary matrix dimension and an arbitrary number of factors, we can turn our attention towards asymptotic behaviour. First, we consider the macroscopic density which has been studied in several papers~\cite{BJW:2010,BJLNS:2010,GT:2010,OS:2011,AB:2012}. Here, we will follow an approach presented in~\cite{IK:2014} for mixed products of induced Ginibre and truncated unitary matrices. The method is easily extended to include inverse Ginibre and inverse truncated unitary matrices as well, but we will concentrate on the case consisting purely of induced Ginibre matrices.

\begin{lemma}
Let $0\leq\nu_1\leq\cdots\leq\nu_n$ be a collection of real constants and let $s>-\nu_1-1$ be a real parameter, then the $2s$-th absolute moment for the spectral density (one-point correlation function) reads
\begin{equation}
\int_\C d^2z\,R_1^n(z)\,\abs{z}^{2s}
=\sum_{k=0}^{N-1}\prod_{\ell=1}^n\frac{\Gamma[\nu_\ell+k+s+1]}{\Gamma[\nu_\ell+k+1]}.
\label{complex:C:moments-finite}
\end{equation}
The $0$-th moment is normalised to $N$ in the present notation.
\end{lemma}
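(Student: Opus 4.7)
The plan is to start directly from the finite-$N$ formula for the correlation kernel given in \eqref{complex:C:kernel-finite}, specialised on the diagonal, since the spectral density coincides with $K_N^n(z,z)$ by \eqref{complex:C:density-finite}. Inserting this into the left-hand side of \eqref{complex:C:moments-finite} gives an explicit expression of the form
\begin{equation}
\int_\C d^2z\,R_1^n(z)\,\abs{z}^{2s}
=\sum_{k=0}^{N-1}\frac{1}{\pi\prod_{\ell=1}^n\Gamma[\nu_\ell+k+1]}
\int_\C d^2z\,\abs{z}^{2(k+s)}\,\MeijerG{n}{0}{0}{n}{-}{\nu_1,\ldots,\nu_n}{\abs z^2},
\end{equation}
which reduces the problem to a one-dimensional radial integral.

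Next, I would pass to polar coordinates $z=r e^{i\theta}$. The integrand depends on $\abs z$ only, so the $\theta$-integral contributes a factor $2\pi$, and the substitution $x=r^2$ turns each term into
\begin{equation}
\int_0^\infty dx\, x^{k+s}\MeijerG{n}{0}{0}{n}{-}{\nu_1,\ldots,\nu_n}{x}
=\prod_{\ell=1}^n\Gamma[\nu_\ell+k+s+1],
\end{equation}
by the Meijer $G$-function moment formula \eqref{special:meijer:meijer-moment} already recorded in the paper. The $\pi$'s cancel against the $1/\pi$ in the kernel and one obtains exactly the claimed sum; the $s\to0$ case gives $\sum_{k=0}^{N-1}1=N$, confirming the stated normalisation.

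The only subtlety worth checking is the validity of interchanging summation and integration (which is immediate since the sum is finite) and the convergence condition for the Meijer $G$-function moment. The formula \eqref{special:meijer:meijer-moment} requires the argument to produce positive gamma arguments, which here amounts to $k+s+\nu_\ell+1>0$ for all $\ell$ and all $k=0,\ldots,N-1$. Since $0\le\nu_1\le\cdots\le\nu_n$, the binding constraint is $s>-\nu_1-1$, i.e.\ the hypothesis of the lemma. There is essentially no hard step: the determinantal structure has already done all the combinatorial work, and once the kernel is written down, the computation collapses to a single application of the Meijer $G$ moment identity.
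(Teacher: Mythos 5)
Your proof is correct and follows exactly the same route as the paper's (which states only that the result ``follows directly from~\eqref{complex:C:density-finite} and~\eqref{complex:C:kernel-finite} using the integration formula~\eqref{special:meijer:meijer-moment}''). You have simply spelled out the polar-coordinate substitution, the cancellation of the factors of $\pi$, and the convergence condition in more detail, all of which is sound.
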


\begin{proof}
The moments follow directly from~\eqref{complex:C:density-finite} and~\eqref{complex:C:kernel-finite} using the integration formula~\eqref{special:meijer:meijer-moment}.
\end{proof}

\begin{proposition}\label{thm:complex:macro-C}
Let $n$ be a positive integer and $\nu_i=\alpha_iN$ with $\alpha_i\in[0,\infty)$, then there exists a macroscopic limit for the spectral density,
\begin{equation}
\lim_{N\to\infty}N^{n-1} R_1^n(N^{n/2}z)=\rho_\macro^{n,\alpha}(z)
\end{equation}
where the macroscopic density is defined through its Mellin transform
\begin{equation}
\int_\C d^2z\,\rho_\macro^{n,\alpha}(z)\abs{z}^{2s}=\int_0^1dt\prod_{\ell=1}^n(t+\alpha_\ell)^s,\qquad s\in[0,\infty).
\label{complex:C:density-macro}
\end{equation}
In particular, for $\alpha_1=\cdots=\alpha_n=0$  (this includes a product of square matrices), we have
\begin{equation}
\rho_\macro^{n,\alpha=0}(z)=\frac{\abs{z}^{2(1-n)/n}}{\pi\, n}\one_{0\leq \abs{z}\leq 1},
\label{complex:C:density-macro0}
\end{equation}
which has compact support on unit disk.
\end{proposition}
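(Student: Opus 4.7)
\noindent
The natural strategy is to work through moments. The finite--$N$ moment formula \eqref{complex:C:moments-finite} is the only input we need from the finite--size analysis, so the proof is essentially a careful asymptotic analysis of a sum of ratios of gamma functions followed by a standard moment problem argument. Since the weight $w_n^{\beta=2}(|z|)$ in proposition~\ref{thm:complex:jpdf-C} is radial, the spectral density $R_1^n$ is radial as well, so it suffices to prove convergence of the rescaled absolute moments $\int_\C d^2 z\, N^{n-1} R_1^n(N^{n/2}z)\,\abs{z}^{2s}$ for all $s\geq 0$, together with uniqueness.

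First I would substitute $w = N^{n/2}z$ in the integral defining the rescaled moment. The Jacobian $d^2 w = N^{n} d^2 z$ combined with $|w|^{2s}=N^{ns}|z|^{2s}$ and the prefactor $N^{n-1}$ turns \eqref{complex:C:moments-finite} into
\begin{equation}
\int_\C d^2 z\, N^{n-1} R_1^n(N^{n/2}z)\,\abs{z}^{2s}
=\frac{1}{N^{ns+1}}\sum_{k=0}^{N-1}\prod_{\ell=1}^n\frac{\Gamma[\alpha_\ell N+k+s+1]}{\Gamma[\alpha_\ell N+k+1]}.
\end{equation}
Next I invoke the uniform asymptotic \eqref{special:gamma:ratio-asymp}, i.e. $\Gamma[x+s]/\Gamma[x]=x^s(1+O(1/x))$, with $x=\alpha_\ell N+k+1$. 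For each fixed $s$ the error is $O(1/N)$ uniformly in $0\leq k\leq N-1$ (use $\alpha_\ell\geq 0$ and treat the small-$k$ term separately if some $\alpha_\ell=0$; the contribution of a bounded number of initial terms is $O(N^{-1})$ after dividing by $N^{ns+1}$). Setting $t=k/N$, the sum becomes a Riemann sum:
\begin{equation}
\frac{1}{N}\sum_{k=0}^{N-1}\prod_{\ell=1}^n\Big(\alpha_\ell+\tfrac{k}{N}\Big)^s+O(N^{-1})
\xrightarrow{N\to\infty}\int_0^1 dt\prod_{\ell=1}^n(t+\alpha_\ell)^s,
\end{equation}
which is exactly \eqref{complex:C:density-macro}.

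For existence and uniqueness of the limiting density: the limiting moment sequence grows no faster than $\bigl(\prod_\ell(1+\alpha_\ell)\bigr)^s$, so by the Hausdorff moment problem the radial distribution of $|z|^2$ is the unique probability measure on the compact interval $[0,\prod_\ell(1+\alpha_\ell)]$ with these moments. Together with rotational invariance this defines $\rho_\macro^{n,\alpha}$. Convergence of all absolute moments to those of a compactly supported limit implies weak convergence of the (radially symmetrised) rescaled densities to $\rho_\macro^{n,\alpha}$, which is the asserted macroscopic limit.

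Finally, the special case $\alpha_1=\cdots=\alpha_n=0$ is solved by inspection: the Mellin transform reduces to $\int_0^1 t^{ns}\,dt=(ns+1)^{-1}$. Writing a candidate density in polar coordinates, $\int_\C d^2z\,\rho(|z|)|z|^{2s}=2\pi\int_0^\infty r^{2s+1}\rho(r)\,dr$, I would try the ansatz $\rho(r)=c\,r^{2(1-n)/n}\one_{0\le r\le 1}$; the radial integral is elementary and equals $\pi c n/(ns+1)$, forcing $c=1/(\pi n)$ and reproducing \eqref{complex:C:density-macro0}. The main (very mild) obstacle is the uniformity in $k$ of the gamma-ratio expansion when some $\alpha_\ell$ vanish, which is handled by isolating the $O(1)$ leading terms and absorbing them into an $O(N^{-1})$ error after the overall $N^{-ns-1}$ division.
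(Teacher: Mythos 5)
You follow the same moment strategy as the paper, but execute two of the technical steps differently. For the Riemann-sum step, the paper bounds the discrete sum above and below by integrals, exploiting the monotonicity in $t$ of $\Gamma[t+\nu_\ell+s]/\Gamma[t+\nu_\ell]$, which sidesteps uniformity questions entirely; you instead apply the ratio asymptotic $\Gamma[x+s]/\Gamma[x]=x^s(1+O(1/x))$ termwise and isolate the problematic small-$k$ terms. Your parenthetical claim that the error is ``$O(1/N)$ uniformly in $k$'' is not literally true when some $\alpha_\ell=0$ and $k$ is of order one, since there $x$ is not large; but your subsequent remark supplies the correct fix: those finitely many terms are individually bounded and vanish once divided by $N^{ns+1}$, while the cumulative error from the remaining $k$ sums to $O(N^{-1})$. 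For determinacy you invoke the Hausdorff moment problem on the compact interval of support of $|z|^2$, whereas the paper argues that having the moments for all real $s\geq 0$ determines the density uniquely as an inverse Mellin transform; either route is valid and neither requires the other. Your direct verification of the radial ansatz for $\alpha=0$ matches the paper's computation.
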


\begin{proof}
In order to find the macroscopic density, we look at its moments. It will be sufficient to investigate the asymptotic form of the absolute moments~\eqref{complex:C:moments-finite}, since the density is invariant under rotations in the complex plane. After rescaling we have 
\begin{equation}
\int_\C d^2z\,N^{n-1}R_1^n(N^{n/2}z)\,\abs{z}^{2s}
=\frac1N\sum_{k=0}^{N-1}\prod_{\ell=1}^n\frac{\Gamma[\nu_\ell+k+s+1]}{N^s\Gamma[\nu_\ell+k+1]}.
\label{complex:C:macro-proof-1}
\end{equation}
We would like to approximate the sum with an integral, which we do using the following bounds:
\begin{equation}
\int_0^N\frac{dt}{N}\prod_{\ell=1}^n\frac{\Gamma[t+\nu_\ell+s]}{N^s\Gamma[t+\nu_\ell]}
\leq\frac{1}{N}\sum_{k=0}^{N-1}\prod_{\ell=1}^n\frac{\Gamma[k+\nu_\ell+s+1]}{N^s\Gamma[k+\nu_\ell+1]}
\leq\int_0^N\frac{dt}{N}\prod_{\ell=1}^n\frac{\Gamma[t+\nu_\ell+s+1]}{N^s\Gamma[t+\nu_\ell+1]}
\end{equation}
This holds since
\begin{equation}
f(t)=\frac{\Gamma[t+\nu_\ell+s]}{N^s\Gamma[t+\nu_\ell]}
\end{equation}
is a non-negative, monotonously increasing function for $t\in[0,\infty)$ given $\nu_\ell,s\geq0$.

In order to evaluate the upper bound, we first recall that $\nu_\ell=N\alpha_\ell$, thus after a change of variables $t/N\mapsto t$,
\begin{equation}
\int_0^N\frac{dt}{N}\prod_{\ell=1}^n\frac{\Gamma[t+\nu_\ell+s+1]}{N^s\Gamma[t+\nu_\ell+1]} 
=\int_0^1dt\prod_{\ell=1}^n\frac{\Gamma[N(t+\alpha_\ell)+s+1]}{N^s\Gamma[N(t+\alpha_\ell)+1]}.
\end{equation}
Here, the asymptotic behaviour of the right hand side may be evaluated using~\eqref{special:gamma:ratio-asymp},
\begin{equation}
\int_0^N\frac{dt}{N}\prod_{\ell=1}^n\frac{\Gamma[t+\nu_\ell+s+1]}{N^s\Gamma[t+\nu_\ell+1]} =\int_0^1dt\prod_{\ell=1}^n(t+\alpha_\ell)^s(1+O(N^{-1})).
\end{equation}
Equivalent manipulations for the lower bound also give
\begin{equation}
\int_0^N\frac{dt}{N}\prod_{\ell=1}^n\frac{\Gamma[t+\nu_\ell+s]}{N^s\Gamma[t+\nu_\ell]}=
\int_0^1dt\prod_{\ell=1}^n(t+\alpha_\ell)^s(1+O(N^{-1})),
\end{equation}
and, thus, agrees with the upper bound to leading order in $N$. 

The integrations and the large-$N$ limits may be interchanged using the dominated convergence theorem. Thus, the two bounds reduce to the same integral and the moments read
\begin{equation}
\lim_{N\to\infty}\int_\C d^2z\,N^{n-1}R_1^n(N^{n/2}z)\,\abs{z}^{2s}=\int_0^1dt\prod_{\ell=1}^n(t+\alpha_\ell)^s.
\end{equation}
The macroscopic density is obtained by an inverse Mellin transform. Note that we can find the density even though the integer-moments do not determine a unique probability measure, since we have all positive moments, which is sufficient for the inverse Mellin transform to be unique. For $\alpha_1=\cdots=\alpha_n=0$, we have
\begin{equation}
\int_0^1dt\,t^{ns}=\frac1{ns+1}
\end{equation}
and it is straightforward to show that the inverse Mellin transform is~\eqref{complex:C:density-macro0}.
\end{proof}

\begin{remark}\label{remark:complex:macro-C}
It is more challenging to find an explicit expression for the macroscopic density for general $\{\alpha_i\}$, but formally we can write
\begin{equation}
\rho_\macro^{n,\alpha}(z)=\int_0^1dt\,\delta(\abs{z}^2-p_\alpha(t)),\qquad p_\alpha(t):=\prod_{\ell=1}^n(t+\alpha_\ell).
\label{complex:C:macro-gen}
\end{equation}
We note that the polynomial $p_\alpha(t)$ is a strictly increasing function for $t>0$, thus the macroscopic density has compact support on an annulus centred in the complex plane with inner and outer radius given by
\begin{equation}
r_\text{in}=p_\alpha(0)=\prod_{\ell=1}^n\alpha_\ell
\qquad\text{and}\qquad
r_\text{out}=p_\alpha(1)=\prod_{\ell=1}^n(1+\alpha_\ell),
\end{equation}
i.e. the macroscopic spectrum has a hole at the origin if and only if $\alpha_\ell>0$ for all $\ell$. 

Also the algebraic equation for the Green function originally given in~\cite{BJLNS:2010} is readily obtained from~\eqref{complex:C:macro-gen}. Inversion of the relation $\pi\rho_\macro^{n,\alpha}(z)=\p G^\alpha_n(z)/\p z^*$ yields
\begin{equation}
G_n^\alpha(z)=\frac1z\int_0^1dt\,\one_{\abs{z}^2\leq p_\alpha(t)}=\frac{p_\alpha^{-1}(\abs{z}^2)}{z}
\end{equation}
or equivalently
\begin{equation}
p_\alpha(zG_n^\alpha(z))=\prod_{\ell=1}^n(zG_n^\alpha(z)+\alpha_\ell)=\abs{z}^2,
\end{equation}
which is the algebraic equation from~\cite{BJLNS:2010} (see also~\cite{GKT:2014}). Finally, we note that if $\alpha_1=\cdots=\alpha_m=0$ and $\alpha_{m+1}=\cdots=\alpha_n>0$ for some $m\geq1$ (i.e. there is no macroscopic hole) then the macroscopic density diverges like $\abs{z}^{2(1-m)/m}$ at the origin~\cite{BJLNS:2010}. 
\end{remark}

\subsection{Microscopic correlations}

Let us consider the microscopic correlations near the origin. An explicit expression for such correlations was first obtained by Akemann and Burda in~\cite{AB:2012} for the $\nu_1=\cdots=\nu_n=0$ case. Here, we present the general result for rectangular matrices, which is a straightforward generalisation due to the reduction formula from chapter~\ref{chap:prologue} (proposition~\ref{thm:pro:rect-square}).

\begin{proposition}[Origin]\label{thm:complex:origin}
Let $n$ be a positive integer and $\nu_1\leq\ldots\leq\nu_n$ be fixed, then we have
\begin{equation}
\lim_{N\to\infty}K_N^n(x,y)=K_\origin^{n,\nu}(x,y)
\end{equation}
uniformly for $z_1,\ldots,z_k$ in compact subsets of the complex plane, where
\begin{equation}
K_\origin^{n,\nu}(x,y)=\frac1\pi\MeijerG{n}{0}{0}{n}{-}{\nu_1,\ldots,\nu_n}{\abs{x}^2}
\MeijerG{1}{1}{1}{n+1}{0}{0,-\nu_1,\ldots,-\nu_n}{-xy^*}.
\label{complex:C:kernel-origin}
\end{equation}
\end{proposition}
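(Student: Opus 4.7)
The plan is to rewrite $K_N^n(x,y)$ so that the only $N$-dependence sits in the partial sum over $k$, and then show that this partial sum converges to an entire function which is precisely the Meijer $G$-function appearing in~\eqref{complex:C:kernel-origin}. Concretely, write
\begin{equation}
K_N^n(x,y)=\frac{1}{\pi}\MeijerG{n}{0}{0}{n}{-}{\nu_1,\ldots,\nu_n}{\abs{x}^2}\,S_N(xy^{*}),\qquad
S_N(w):=\sum_{k=0}^{N-1}\frac{w^k}{\prod_{\ell=1}^n\Gamma[\nu_\ell+k+1]}.
\end{equation}
The prefactor is independent of $N$ and already has the form required in~\eqref{complex:C:kernel-origin}, so everything is reduced to understanding the asymptotics of $S_N(w)$.

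Next I would verify that $S_\infty(w)=\lim_{N\to\infty}S_N(w)$ exists as an entire function of $w$. Using that $\Gamma[\nu_\ell+k+1]$ grows like $(k!)$ for each $\ell$ (by Stirling or by the ratio test $\Gamma[\nu_\ell+k+2]/\Gamma[\nu_\ell+k+1]=\nu_\ell+k+1$), the coefficients of the series decay like $(k!)^{-n}$, so the series has infinite radius of convergence and $S_N\to S_\infty$ uniformly on compact subsets of $\C$. Since $(x,y)\mapsto xy^{*}$ is continuous and maps compact sets to compact sets, this transfers to uniform convergence of $K_N^n(x,y)$ on compact subsets of $\C\times\C$.

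The final step is to identify $S_\infty(w)$ with the Meijer $G$-function $\MeijerG{1}{1}{1}{n+1}{0}{0,-\nu_1,\ldots,-\nu_n}{-w}$. Starting from the Mellin--Barnes definition of this $G$-function,
\begin{equation}
\MeijerG{1}{1}{1}{n+1}{0}{0,-\nu_1,\ldots,-\nu_n}{-w}
=\frac{1}{2\pi i}\int_L \frac{\Gamma(-s)\,\Gamma(1+s)}{\prod_{\ell=1}^n\Gamma(1+\nu_\ell+s)}(-w)^s\,ds,
\end{equation}
I would close the contour $L$ to the right and pick up the simple poles of $\Gamma(-s)$ at $s=k\in\N_0$, whose residues are $(-1)^{k+1}/k!$. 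Summing these residues, the factors of $(-1)$ and $k!=\Gamma(1+k)$ combine to give exactly $S_\infty(w)$. This is the standard series expansion of a Meijer $G$-function with one upper and $n+1$ lower parameters, and convergence of the residue sum is ensured for all $w\in\C$ by the same $(k!)^{-n}$ decay used above.

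I expect the main (only) non-routine point to be the bookkeeping of signs, the shift between Pochhammer and gamma normalisations, and the justification that the Mellin--Barnes contour can indeed be closed to the right in the present parameter regime; once that is in place, everything reduces to the uniform convergence of a power series in $w=xy^{*}$ on compacts, which makes the interchange $\lim_{N\to\infty}$ with the fixed analytic prefactor trivial.
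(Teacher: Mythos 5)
Your proposal is correct and follows essentially the same route as the paper: both observe that the Meijer $G$ prefactor in~\eqref{complex:C:kernel-finite} carries no $N$-dependence, so the limit reduces to summing the series $\sum_{k\geq 0} w^k/\prod_\ell\Gamma[\nu_\ell+k+1]$ and identifying it with the Meijer $G$-function in~\eqref{complex:C:kernel-origin}. The only cosmetic difference is in how that identification is made: the paper recognises the series as the hypergeometric ${}_1F_n$ via~\eqref{special:meijer:hypergeometric} and then invokes the table identity~\eqref{special:meijer:hyper-meijer} converting ${}_pF_q$ to a Meijer $G$-function, whereas you start from the Mellin--Barnes integral defining that Meijer $G$-function and derive the same series by closing the contour to the right and summing residues of $\Gamma(-s)$ at $s=k\in\N_0$. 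This amounts to proving the cited table identity in the special case at hand; your residue bookkeeping is correct (the $(-1)^{k+1}/k!$ residue combined with $(-w)^k$ and the clockwise contour orientation gives a clean sum, and $q=n+1>p=1$ guarantees the contour can indeed be closed for all $w$). Your uniform-on-compacts argument via the $(k!)^{-n}$ decay of the coefficients is the clean way to make the convergence claim precise, which the paper's one-line proof leaves implicit.
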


\begin{proof}
We need to take the large-$N$ in~\eqref{complex:C:kernel-finite} without any further rescaling. The weight function~\eqref{complex:C:norm+weight} is independent of $N$, hence we only need to evaluate the sum. From~\eqref{special:meijer:hypergeometric} we have
\begin{equation}
\hypergeometric{1}{n}{1}{\nu_1+1,\ldots,\nu_n+1}{xy^*}=
\sum_{k=0}^{\infty}\prod_{\ell=1}^n\frac{\Gamma[\nu_\ell+1]}{\Gamma[\nu_\ell+k+1]}(xy^*)^k,
\label{complex:C:hard-proof-hyper}
\end{equation}
which immediately gives the limiting expression for the sum in~\eqref{complex:C:kernel-finite} as a hypergeometric function. Rewriting the hypergeometric function as a Meijer $G$-function using~\eqref{special:meijer:hyper-meijer}, we find the microscopic kernel~\eqref{complex:C:kernel-origin}.
\end{proof}

We also have a reduction formula similar to~\eqref{singular:asymp:reduction}: 

\begin{proposition}
Let $n\geq2$ and let $\nu_i$ ($i=1,\ldots,n-1$) be fixed, then
\begin{equation}
\lim_{\nu_n\to\infty}\nu_nK_\origin^{n,\nu}(\nu_nx,\nu_ny)=K_\origin^{n-1,\nu}(x,y),
\end{equation}
uniformly for $x$ and $y$ in compact subsets of $\C$.
\end{proposition}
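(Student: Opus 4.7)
The plan is to follow the same Mellin--Barnes strategy that worked for the analogous reduction of the Meijer $G$-kernel at the hard edge of the Wishart product ensemble (the remark following proposition~\ref{thm:singular:meijer-kernel}). I would begin by representing each of the two Meijer $G$-factors in $K_\origin^{n,\nu}$ by its defining contour integral,
\[
\MeijerG{n}{0}{0}{n}{-}{\nu_1,\ldots,\nu_n}{u}=\frac{1}{2\pi\imath}\int_{\gamma_1}\prod_{j=1}^{n}\Gamma[\nu_j-s]\,u^{s}\,ds,
\]
\[
\MeijerG{1}{1}{1}{n+1}{0}{0,-\nu_1,\ldots,-\nu_n}{v}=\frac{1}{2\pi\imath}\int_{\gamma_2}\frac{\Gamma[-t]\,\Gamma[1+t]}{\prod_{\ell=1}^{n}\Gamma[1+\nu_\ell+t]}\,v^{t}\,dt,
\]
where $\gamma_1,\gamma_2$ are vertical lines that separate the two sets of poles in the standard manner, and then plug in the rescaled arguments $u=|\nu_n x|^{2}$, $v=-(\nu_n x)(\nu_n y)^{*}$ so that $\nu_n K_\origin^{n,\nu}(\nu_n x,\nu_n y)$ becomes a double contour integral in $(s,t)$.

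Next, I would isolate every $\nu_n$-dependent factor in the integrand. These are (i) the explicit prefactor $\nu_n$, (ii) the single $\Gamma[\nu_n-s]$ from the first Meijer $G$-function and the single $\Gamma[1+\nu_n+t]^{-1}$ from the second, and (iii) the powers of $\nu_n$ extracted from $u^{s}$ and $v^{t}$. The main input is the Stirling-type asymptotic
\[
\frac{\Gamma[\nu_n+a]}{\Gamma[\nu_n+b]}=\nu_n^{a-b}\bigl(1+O(\nu_n^{-1})\bigr)\qquad(\nu_n\to\infty),
\]
valid uniformly for $a,b$ in compact subsets of $\C$ (cf.\ appendix~\ref{app:special}). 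Applied with $a=-s$ and $b=1+t$, it combines with the prefactor and the rescaling powers to cancel all $\nu_n$ dependence, and the integrand converges pointwise to exactly the Mellin--Barnes integrand of $K_\origin^{n-1,\nu}(x,y)$, namely
\[
\frac{1}{\pi}\prod_{j=1}^{n-1}\Gamma[\nu_j-s]\cdot\frac{\Gamma[-t]\,\Gamma[1+t]}{\prod_{\ell=1}^{n-1}\Gamma[1+\nu_\ell+t]}\cdot|x|^{2s}\,(-xy^{*})^{t}.
\]

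The interchange of the $\nu_n\to\infty$ limit with the double integration would be justified by the dominated convergence theorem. Stirling supplies exponential decay of the remaining gamma products as $|\Im s|,|\Im t|\to\infty$, while the ratio $\Gamma[\nu_n-s]/\Gamma[1+\nu_n+t]$ multiplied by the compensating powers of $\nu_n$ is bounded on the contours by a $\nu_n$-independent constant. Since $\gamma_1,\gamma_2$ can be chosen independent of $x,y$, the uniform convergence on compact subsets of $\C$ asserted in the proposition follows immediately; recognising the limit as $K_\origin^{n-1,\nu}(x,y)$ closes the argument.

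The main technical obstacle is the explicit construction of the integrable majorant: one must pick $\gamma_1$ (a vertical line with $0<\Re s<\min_{j<n}\nu_j$) and $\gamma_2$ (a vertical line with $-1<\Re t<0$) on which both the Stirling remainder for $\Gamma[\nu_n-s]/\Gamma[1+\nu_n+t]$ is uniform in $\nu_n$ \emph{and} the remaining integrand is $L^{1}$. This is the same balancing act that appears in the proof of the hard-edge reduction in chapter~\ref{chap:singular}, and it is dispatched in exactly the same way by a Gaussian-type Stirling majorant in $|\Im s|$ and $|\Im t|$; everything else is routine.
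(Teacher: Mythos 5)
Your overall strategy --- Mellin--Barnes representation, Stirling asymptotic $\Gamma[\nu_n+a]/\Gamma[\nu_n+b]=\nu_n^{a-b}(1+O(\nu_n^{-1}))$, dominated convergence --- is exactly the approach used in the paper. The conclusion you reach for the pointwise limit of the integrand is also correct. However, there is a genuine gap in the contour choice for the second Meijer $G$-factor.

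You take $\gamma_2$ to be a vertical line with $-1<\Re t<0$. For the Meijer $G$-function $G^{1,1}_{1,n+1}$ the convergence criterion for a vertical-line contour (see definition~\ref{def:special:meijer}) is $m+n>\frac12(p+q)$, i.e.\ $2>\frac{n+2}{2}$, which requires $n<2$. The proposition assumes $n\geq 2$, precisely the regime where the vertical line fails: along $\Re t$ fixed, $\Gamma[-t]\Gamma[1+t]=-\pi/\sin\pi t$ decays like $e^{-\pi\abs{\Im t}}$, while each of the $n$ factors $\Gamma[1+\nu_\ell+t]^{-1}$ grows like $e^{\pi\abs{\Im t}/2}$, so the integrand behaves like $e^{(n/2-1)\pi\abs{\Im t}}$ --- non-integrable for $n\geq 2$. (This also means the ``Gaussian-type Stirling majorant'' does not exist; gamma decay is only exponential in the imaginary direction, and here it is exactly cancelled.) The fix is the one the paper uses: take the contour for this factor to be a loop starting and ending at $+\infty$, encircling the non-negative integers in the negative direction; then the integral converges for all $z\neq 0$ because $q=n+1>p=1$, and the loop remains valid for the limiting $G^{1,1}_{1,n}$ factor as well. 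With that replacement, your dominated-convergence argument goes through along the same lines as the paper's, provided the majorant is constructed on the loop (where the $s$-dependence enters only through a finite number of poles per bounded region and the decay at the tails of the loop is supplied by the ratio $\Gamma[-s]\Gamma[1+s]/\prod_\ell\Gamma[\nu_\ell+s+1]$, with $n-1\geq 1$ gamma factors surviving in the denominator).
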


\begin{proof}
The first step is to write the kernel~\eqref{complex:C:kernel-origin} as its double contour representation using~\eqref{special:meijer:meijer-def}, 
\begin{multline}
\nu_nK_\origin^{n,\nu}(\nu_nx,\nu_ny)=
\frac{\nu_n}{\pi}\frac{1}{(2\pi i)^2}\int_{-1-i\infty}^{-1+i\infty}dt\int_\Sigma ds(\nu_n\abs{x}^2)^t(-\nu_nxy^*)^s\\
\times\Gamma[s+1]\Gamma[-s]\prod_{\ell=1}^n\frac{\Gamma[\nu_\ell-t]}{\Gamma[\nu_\ell+s+1]},
\end{multline}
where the contour, $\Sigma$, is a loop which starts and ends at $+\infty$ and encircles all the non-negative integers in the negative direction without crossing the $t$-contour. From~\eqref{special:gamma:ratio-asymp} we know that
\begin{equation}
\frac{\Gamma[\nu_n-t]}{\Gamma[\nu_n+s+1]}=\nu_n^{-t-s-1}(1+O(\nu_n^{-1})),
\end{equation}
for large $\nu_n$, thus
\begin{multline}
\nu_nK_\origin^{n,\nu}(\nu_nx,\nu_ny)=
\frac{1}{\pi}\frac{1}{(2\pi i)^2}\int_{-1-i\infty}^{-1+i\infty}dt\int_\Sigma ds(\abs{x}^2)^t(-xy^*)^s\\
\times\Gamma[s+1]\Gamma[-s]\prod_{\ell=1}^{n-1}\frac{\Gamma[\nu_\ell-t]}{\Gamma[\nu_\ell+s+1]}(1+O(\nu_n^{-1})).
\end{multline}
An interchange of the large-$\nu_n$ limit and the integrations may be justified by means of the dominated convergence theorem. The proposition follows after re-evaluating the contour integrals as Meijer $G$-functions.  
\end{proof}

\begin{remark}
% We see that origin scaling limit requires that at least one charge is independent of $N$. This is similar to hard edge scaling limit for the Wishart product ensemble (cf. proposition~\ref{thm:singular:meijer-kernel}) and reason is the same. If all charges scale such that $\nu_i/N\to\alpha_i>0$ for all $i$, then there will be a macroscopic hole in the spectrum centred at the origin (cf. remark~\ref{remark:complex:macro-C}).
% Additionally, we note that the kernel~\eqref{complex:C:kernel-origin} is classified according to the divergence of the macroscopic spectrum near the origin through the integer $m$. 
For $n=1$ the kernel reduces to~\eqref{complex:intro:origin} (up to a gauge factor which does not affect the correlations), while for $n=2$ and $(\nu_1,\nu_2)=(0,\nu)$ the kernel reads
\begin{equation}
\frac12K_\origin^{n=2,(0,\nu)}\Big(\frac x2,\frac y2\Big)=\bigg(\frac{x^*}{y^*}\bigg)^{\frac\nu2}K_\nu(\abs{x})\,I_\nu(\sqrt{xy^*}),
\end{equation}
where $K_\nu(x)$ and $I_\nu(x)$ are modified Bessel functions, and the prefactor $(x^*/y^*)^{\nu/2}$ is a gauge factor which does not contribute to the correlations. This type of kernel has previously been observed in a matrix model related to quantum chromodynamics at finite chemical potential, see~\cite{Akemann:2011} for a review.
\end{remark}

A direct derivation of the microscopic correlation in the bulk and at the soft edge from the exact expression~\eqref{complex:C:kernel-finite} was obtained by Akemann and Burda~\cite{AB:2012} and more rigorously by Liu and Wang~\cite{LW:2014}. They showed:

\begin{proposition}\label{thm:complex:bulk+edge}
Let $z_*$ be a fixed point in the complex plane so that $0<\abs{z_*}\leq 1$ and let $\rho:=\abs{z_*}^{2(1-n)/n}/\pi n$. For fixed parameters $n$ and $\nu_\ell$ ($\ell=1,\ldots,n$), we have
\begin{equation}
\lim_{N\to\infty} \bigg(\frac{N^{n}}{N\rho}\bigg)^k R_k^n\Big(N^{n/2}\Big(z_*+\frac{z_1}{\sqrt{N\rho}}\Big),\ldots,N^{n/2}\Big(z_*+\frac{z_k}{\sqrt{N\rho}}\Big)\Big)
=\det_{1\leq i,j\leq k}\big[K_*(z_i,z_j)\big]
\end{equation}
uniformly for $z_1,\ldots,z_k$ in compact subsets of the complex plane, where $K_*(x,y)=K_\bulk(x,y)$ for $0<\abs{z_*}<1$ and $K_*(x,y)=K_\edge(x,y)$ for $\abs{z_*}=1$ with $K_\bulk(x,y)$ and $K_\edge(x,y)$ given by~\eqref{complex:intro:bulk} and~\eqref{complex:intro:origin}, respectively.
\end{proposition}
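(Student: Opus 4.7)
The plan is to apply saddle-point analysis to the exact kernel \eqref{complex:C:kernel-finite}, which factorises conveniently as
\[
K_N^n(x,y) = W_n(x)\, S_N^n(xy^*), \qquad W_n(u) := \MeijerG{n}{0}{0}{n}{-}{\nu_1,\ldots,\nu_n}{u}, \qquad S_N^n(\omega) := \sum_{k=0}^{N-1}\frac{\omega^k}{\pi\prod_{\ell=1}^n\Gamma(\nu_\ell+k+1)}.
\]
Because the $k$-point correlation functions \eqref{complex:C:R-general} are invariant under gauge transformations $K_N^n(x,y)\mapsto g(x,x^*)K_N^n(x,y)/g(y,y^*)$, it suffices to identify the rescaled kernel with the claimed limit modulo such a gauge factor.

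First I would treat the weight using the large-argument Meijer $G$ asymptotic, which gives $W_n(u) \sim C_n\, u^{\alpha}\exp(-n u^{1/n})$ as $u\to\infty$, with constants $C_n$ and $\alpha$ independent of $N$. Inserting the rescaled argument $x = N^{n/2}(z_*+\xi/\sqrt{N\rho})$ and expanding $|x|^{2/n}$ to order $1/N$ produces a constant, a $\sqrt{N}$-linear term in $\xi$, and an $O(1)$ quadratic term; the specific power $(1-n)/n$ defining $\rho$ is chosen precisely to match these contributions to the forthcoming saddle-point expansion of $S_N^n(\omega)$.

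For the sum, at $\omega = xy^* \sim N^n|z_*|^2$, I would write each summand as $\exp f_k(\omega)$ with $f_k = k\log\omega-\sum_\ell\log\Gamma(\nu_\ell+k+1)$; Stirling gives $f'_k = \log\omega-n\log k+O(1/k)$, so the continuous saddle sits at $k_* = \omega^{1/n} \sim N|z_*|^{2/n}$ with effective Gaussian width of order $\sqrt{N}$. In the bulk case $|z_*|<1$, $k_*$ is separated from both endpoints, so the cutoff $k\le N-1$ contributes only exponentially small error and Laplace's method converts the sum into a full Gaussian integral (to be justified either via Euler--Maclaurin or by replacing $1/\Gamma$ by its Hankel contour integral and saddling there). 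Combining the sum's Gaussian with the weight's $O(1)$ quadratic contribution and absorbing $\sqrt{N}$-linear residues into a gauge factor then produces exactly $K_\bulk(\xi,\eta) = \pi^{-1}\exp(-|\xi|^2/2 - |\eta|^2/2 + \xi\eta^*)$. At the soft edge $|z_*|=1$ the saddle migrates to $k_* = N+O(\sqrt{N})$, colliding with the summation endpoint; the same analysis then yields a one-sided Gaussian sum that evaluates to $\tfrac{1}{2}\erfc$ of the appropriate scaled argument, which is exactly the factor distinguishing $K_\edge$ from $K_\bulk$ in \eqref{complex:intro:origin}.

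The main obstacle will be the delicate bookkeeping of the complex phases at order $\sqrt{N}$: because $xy^*$ carries a non-trivial argument but the weight depends only on $|x|$, the $\sqrt{N}$-linear contributions from the two factors involve $\xi$ and $\eta$ differently, and their sum must decompose as a function of $\xi$ alone plus a function of $\eta$ alone (a gauge term) plus the Gaussian cross-term $\xi\eta^*$; absent this cancellation the limit would diverge. A secondary subtlety is establishing uniform convergence on compact subsets of $\xi,\eta$, particularly at the edge, where the rigorous passage from a partial Riemann sum to the complementary error function integral is delicate. Both obstacles are standard but technically heavy; a complete execution appears in \cite{LW:2014}.
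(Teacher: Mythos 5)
The paper states this proposition without proof, citing \cite{AB:2012} and \cite{LW:2014}, so there is no in-paper argument to compare against. Your saddle-point sketch accurately describes the route those references take: factorise the kernel into a weight times a truncated sum, apply the Meijer $G$ large-argument asymptotic~\eqref{special:meijer:meijer-asymp} to the weight, apply Stirling and Laplace to the sum with the saddle sitting at $k_*\approx N\abs{z_*}^{2/n}$, and observe that the bulk-versus-edge dichotomy is governed by whether that saddle lies interior to or collides with the cutoff $k\leq N-1$, the latter producing a one-sided Gaussian and hence the $\erfc$ factor in $K_\edge$. You correctly flag the genuinely delicate steps --- the $O(\sqrt N)$ phase bookkeeping, which must reorganise into a pure gauge factor plus the cross term $z_iz_j^*$, and the rigorous replacement of the partial Riemann sum by its Gaussian or $\erfc$ integral uniformly on compacts --- and you explicitly defer their execution to \cite{LW:2014}, so what you have is a faithful outline of the cited proof rather than a self-contained argument, which is consistent with how the paper itself treats this result.
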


\begin{remark}
It should be noted that the joint density~\eqref{complex:C:jpdf}, from which the proposition is obtained, takes the form
\begin{equation}
\cP_\jpdf^{\beta=2}(z_1,\ldots,z_N)=\frac{1}{\cZ^{\beta=2}}\prod_{k=1}^Ne^{-V(\abs{z_k}^2)}\prod_{1\leq i<j\leq N}\abs{z_j-z_i}^2,
\end{equation}
where the potential $V(x)=\log w_n^{\beta=2}(\sqrt x)$ is real analytic for $x\in(0,\infty)$. Thus the complex eigenvalues of a product of induced Ginibre matrices may still be thought of as a log-gas in the usual way (this is unlike the Wishart product ensemble from the previous chapter where the eigenvalue repulsion changed). Ensembles of this type have been studied prior to the product ensembles and certain universality results are known, see~\cite{Berman:2008,Zabrodin:2006,AHM:2011} and references therein.
\end{remark}

\section{Products of quaternionic Ginibre matrices}
\label{sec:complex:H}

Now, let us turn to products of quaternionic ($\beta=4$) matrices (see appendix~\ref{sec:decomp:H} for an introduction to quaternions). Unlike complex Ginibre matrices, the finite-$N$ spectrum will not be invariant under rotation in the complex plane, cf. figure~\ref{fig:complex:scat-intro}. The repulsion from the real axis is due to the complex conjugate pairing of eigenvalues induced by the quaternionic (symplectic) symmetry. The description of quaternionic ensembles presented in this section follows~\cite{Ipsen:2013}.

\begin{proposition}\label{thm:complex:jpdf-H}
The eigenvalues of a product of $n$ quaternionic $N\times N$ induced Ginibre matrices form a point process on the upper half-plane, $\C_+:=\{z\in\C:\Im z\geq 0\}$, with joint probability density function
\begin{equation}
\cP_\jpdf^{\beta=4}(z_1,\ldots,z_N)=\frac{1}{\cZ^{\beta=4}}\prod_{k=1}^Nw_n^{\beta=4}(z_k)\,\abs{z_k-z_k^*}^2
\prod_{1\leq i<j\leq N}\abs{z_j-z_i}^2\abs{z_j^*-z_i}^2,
\label{complex:H:jpdf}
\end{equation}
where $\cZ$ is a normalisation constant and $w_n^{\beta=4}(z)$ is a weight function given by
\begin{equation}
\cZ^{\beta=4}=N!\,\pi^N\prod_{k=1}^N\prod_{\ell=1}^n\frac{\Gamma[2(\nu_\ell+k)]}{4^{k}}
\quad\text{and}\quad
w_n^{\beta=4}(z)=\MeijerG{n}{0}{0}{n}{-}{2\nu_1,\ldots,2\nu_n}{2^n\abs z^2},
\label{complex:H:norm+weight}
\end{equation}
respectively.
\end{proposition}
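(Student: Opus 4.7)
The plan is to mirror closely the proof of Proposition~\ref{thm:complex:jpdf-C}, replacing the generalised complex Schur decomposition by its quaternionic analogue from Appendix~\ref{app:decompositions} (the $\beta=4$ counterpart of proposition~\ref{thm:decomp:gen-schur-C}). Starting from the matrix density~\eqref{complex:intro:density-matrix} specialised to $\beta=4$, I would parametrise each factor as
\begin{equation}
X_i = U_i(\Lambda_i + T_i) U_{i-1}^{-1}, \qquad i=1,\ldots,n,
\end{equation}
with $U_i \in \gUSp(2N)/\gU(1)^N$, $\Lambda_i = \diag(\lambda_{i,1},\ldots,\lambda_{i,N})$ a block-diagonal matrix whose blocks encode the complex conjugate pairs $\lambda_{i,j}\in\C_+$, and $T_i$ the appropriate strictly upper-triangular remainder. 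As in the complex case, the structure of the successive unitary conjugations (with $U_0 = U_n$) makes all intermediate unitary factors cancel in the product, so that the eigenvalues of $Y_n$ satisfy $\lambda_j = \lambda_{n,j}\cdots\lambda_{1,j}$.

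The crucial input is the Jacobian of the quaternionic Schur decomposition, which I expect to take the form
\begin{equation}
\prod_{\ell=1}^n d^4 X_\ell = \prod_{\ell=1}^n \Big(\prod_{j=1}^N |\lambda_{\ell,j}-\lambda_{\ell,j}^*|^2 \!\!\prod_{1\leq i<j\leq N}\!\!|\lambda_{\ell,j}-\lambda_{\ell,i}|^2|\lambda_{\ell,j}^*-\lambda_{\ell,i}|^2\Big)\, d\mu(U_\ell)\, d^4 T_\ell \prod_{k=1}^N d^2\lambda_{\ell,k},
\end{equation}
by direct analogy with the $n=1$ Ginibre case. Inserting this into~\eqref{complex:intro:jpdf-def}, the integrations over $U_\ell$ and $T_\ell$ decouple and contribute only to the normalisation, while the potential $\det(X_\ell^\dagger X_\ell)^{\nu_\ell} e^{-\tr X_\ell^\dagger X_\ell}$ becomes $\prod_j |\lambda_{\ell,j}|^{4\nu_\ell} e^{-2|\lambda_{\ell,j}|^2}$ (the factors of $2$ arising from the doubled $2N\times 2N$ complex representation). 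The structure~\eqref{complex:H:jpdf} then drops out with the weight given, per eigenvalue, as an $n$-fold convolution
\begin{equation}
w_n^{\beta=4}(z) \,\propto\, \bigg[\prod_{\ell=1}^n \int d^2\lambda_\ell\, |\lambda_\ell|^{4\nu_\ell} e^{-2|\lambda_\ell|^2}\bigg]\,\delta^2(\lambda_n\cdots\lambda_1 - z),
\end{equation}
which, via the Mellin transform computation from section~\ref{sec:prologue:scalar} (noting that the $|\lambda|^{4\nu_\ell}$ factor together with the $e^{-2|\lambda|^2}$ Gaussian yields $\Gamma[2\nu_\ell+\cdot]$ at argument rescaled by $2$), gives precisely the Meijer $G$-function in~\eqref{complex:H:norm+weight} with indices $2\nu_\ell$ and argument $2^n|z|^2$. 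The normalisation $\cZ^{\beta=4}$ is then determined by integrating~\eqref{complex:H:jpdf} over $(\C_+)^N$ using the de Bruijn--Andr\'eief formula, with the inner $d^2z$ integrals evaluated by~\eqref{special:meijer:meijer-moment}.

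The main obstacle is the quaternionic Schur Jacobian itself: the $|\lambda-\lambda^*|^2$ factors per eigenvalue and the doubled Vandermonde-type repulsion $|\lambda_j-\lambda_i|^2|\lambda_j^*-\lambda_i|^2$ need to emerge cleanly from the appendix decomposition, and one has to verify that these factors from each level combine correctly so that only one copy of $|z-z^*|^2$ and one copy of the doubled Vandermonde survive after integrating out the intermediate eigenvalues $\lambda_{\ell,j}$ for $\ell<n$. Granted the appendix provides the decomposition in the form analogous to proposition~\ref{thm:decomp:gen-schur-C}, the remainder of the argument is a routine adaptation of the complex case, with the only genuinely new element being the bookkeeping of the factors of $2$ intrinsic to the quaternionic convention.
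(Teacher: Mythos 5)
Your overall strategy is right — mirror the complex case and replace the complex generalised Schur decomposition with its quaternionic counterpart — and indeed the paper's own proof invokes exactly proposition~\ref{thm:decomp:gen-schur-H} in this way. The gap is in the Jacobian you write down as the ``crucial input''. You propose
\begin{equation*}
\prod_{\ell=1}^n d^4 X_\ell \stackrel{?}{=} \prod_{\ell=1}^n \Big(\prod_{j=1}^N |\lambda_{\ell,j}-\lambda_{\ell,j}^*|^2 \!\!\prod_{1\leq i<j\leq N}\!\!|\lambda_{\ell,j}-\lambda_{\ell,i}|^2|\lambda_{\ell,j}^*-\lambda_{\ell,i}|^2\Big)\, d\mu(U_\ell)\, d^4 T_\ell \prod_{k=1}^N d^2\lambda_{\ell,k},
\end{equation*}
i.e.\ one Vandermonde-type factor \emph{per level} $\ell$. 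This is not what proposition~\ref{thm:decomp:gen-schur-H} gives, and it cannot be correct: the entries $\lambda_{\ell,j}$ for $\ell<n$ are not eigenvalues of $X_\ell$ (they are just the diagonal of the upper-triangular factor in a QR step), so there is no reason to expect a Schur-type Jacobian level by level. The actual Jacobian has a \emph{single} factor
\begin{equation*}
\prod_{k=1}^N|\lambda_k-\lambda_k^*|^2\prod_{1\leq i<j\leq N}|\lambda_j-\lambda_i|^2|\lambda_j-\lambda_i^*|^2,
\qquad \lambda_j:=\lambda_{1,j}\cdots\lambda_{n,j},
\end{equation*}
expressed in the \emph{product} eigenvalues only — this is precisely the (at first sight surprising) content of the generalised Schur Jacobian, and it is what makes the whole construction work. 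You flag this yourself as ``the main obstacle'' and say one must ``verify that these factors from each level combine correctly so that only one copy survives''; but a product of Vandermonde determinants in the $\lambda_{\ell,j}$ does not collapse to a single Vandermonde in the $\lambda_j$, so with your proposed Jacobian the $n$-fold integrals over the intermediate $\lambda_{\ell,j}$ would remain coupled across the eigenvalue index $j$ and the weight would not factorize into a Meijer $G$-function per eigenvalue. The resolution is not a cancellation you still need to check; it is that the Jacobian is structurally different from what you wrote, and the nontrivial work of the proof is exactly establishing that single-Vandermonde form (which the appendix does by a one-forms computation: the contributions from the off-diagonal one-forms $[U_\ell^{-1}dU_\ell]_{kj}$ telescope through the levels, producing only the product eigenvalues). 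Once you use the correct Jacobian, the remainder of your argument — decoupling of the $U_\ell$ and $T_\ell$ integrals, the Mellin-transform identification of the weight as a Meijer $G$-function with argument $2^n|z|^2$ and parameters $2\nu_\ell$, and the de Bruijn (Pfaffian) evaluation of $\cZ^{\beta=4}$ — matches the paper's.
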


\begin{proof}
We follow the similar steps as in the proof of proposition~\ref{thm:complex:jpdf-C}: We use a generalised quaternionic Schur decomposition (proposition~\ref{thm:decomp:gen-schur-H}) to parametrise the matrices $X_i$ ($i=1,\ldots,n$) as
\begin{equation}
X_i=U_i(\Lambda_i+T_i)U_{i-1},\qquad i=1,\ldots,n,
\end{equation}
where each $U_i$ denotes a unitary symplectic matrix ($U_0:=U_n$), $T\in\H^{N(N-1)/2}$ denotes a strictly upper-triangular quaternionic matrix, and $\Lambda_i=\diag(\lambda_{i,1},\lambda_{i,1}^*,\ldots,\lambda_{i,N},\lambda_{i,N}^*)$ with $\lambda_{i,j}\in\C$ for all $j=1,\ldots,N$. It follows that the eigenvalues of the product matrix $Y_n$ are given by $\lambda_j=\lambda_{1,j}\lambda_{2,j}\cdots\lambda_{n,j}$ and $\lambda_j^*=\lambda_{1,j}^*\lambda_{2,j}^*\cdots\lambda_{n,j}^*$ ($j=1,\ldots,N$). In this case, the corresponding change of measure reads (proposition~\ref{thm:decomp:gen-schur-H})
\begin{equation}
\prod_{\ell=1}^nd^4 X_\ell=
\prod_{1\leq i<j\leq N}\abs{\lambda_j-\lambda_i}^2\abs{\lambda_j-\lambda_i^*}^2\prod_{k=1}^N\abs{\lambda_k-\lambda_k^*}^2
\prod_{\ell=1}^nd\mu(U_\ell)d^4T_\ell\prod_{k=1}^Nd^2\lambda_{\ell,k} ,
\end{equation}
where $d\mu(U_i)=[U_i^{-1}dU_i]$ is the Haar measure on $\gUSp(2N)/U(1)^N$, $d^4T_i$ and $d^2\lambda_{i,j}$ are the flat measure on the space of all strictly upper-triangular quaternionic matrices and the upper-half of the complex plane, respectively. Using this parametrisation in~\eqref{complex:intro:jpdf-def} implies that integration over $U_i$ and $T_i$ ($i=1,\ldots,n$) only contributes with a constant. It follows that the joint density is given by~\eqref{complex:H:jpdf} with a weight function expressed as an $n$-fold integral,
\begin{equation}
w_n^{\beta=4}(z)=\pi\bigg[\prod_{\ell=1}^n\frac1\pi\int_{\C}d^2\lambda_\ell 2^{2\nu_\ell}\abs{\lambda_\ell}^{4\nu_\ell}e^{-2\abs{\lambda_\ell}^2}\bigg]\delta^2(\lambda_n\cdots\lambda_1-z).
\end{equation}
This formula is identical to~\eqref{complex:C:weight-n-fold}, except for factors of two in the exponentials and the powers. As before, the Meijer $G$-function formulation of the weights can be obtained from the moments by means of an inverse Mellin transform. Furthermore, using that the eigenvalues come in complex conjugate pairs, we can choose $z_i$ ($i=1,\ldots,N$) in~\eqref{complex:H:jpdf} to lie in the upper-half of the complex plane without loss of generality. 

In order to determine the normalisation, we apply de~Bruijn's integration formula~\cite{deBruijn:1955}, which yields
\begin{equation}
\cZ^{\beta=4}=N!\pf_{1\leq k,\ell\leq 2N}\bigg[ \frac12\int_\C d^2z\,w_n^{\beta=4}(z)\,(z-z^*)(z^{k-1}z^{*\,\ell-1}-z^{\ell-1}z^{*\,k-1})\bigg].
\end{equation}
If $\ell=k\pm1$, then the integral within the Pfaffian can be performed using~\eqref{special:meijer:meijer-moment}; all other cases are zero. Finally, the evaluating the Pfaffian gives the normalisation in~\eqref{complex:H:norm+weight}.
\end{proof}

\subsection{Correlations for finite size matrices}

In this subsection, we will find an explicit expression for $k$-point correlations corresponding to the joint density~\eqref{complex:H:jpdf}. As for a single matrix, the structure differs considerably from the complex case, since the correlations constitute a Pfaffian (rather than determinantal) point process. For this reason, we will first recall a few well-known properties for such processes~\cite{Mehta:2004,MS:1966,Kanzieper:2002}.

We consider a point process of $N$ points on the upper half-plane, $\C_+:=\{z\in\C:\Im z\geq 0\}$, with joint probability density function
\begin{align}
\cP_\jpdf(z_1,\ldots,z_N)
&=\frac{1}{\cZ}\prod_{k=1}^Nw(z_k)\abs{z_k-z_k^*}^2\prod_{1\leq i<j\leq N}\abs{z_j-z_i}^2\abs{z_j-z_i^*}^2 \nn \\
&=\frac{1}{\cZ}\prod_{k=1}^Nw(z_k)(z_k^*-z_k)
\det_{\substack{i=1,\ldots,N\\j=1,\ldots,2N}}\begin{bmatrix} z_i^{j-1}\\z_i^{*j-1}\end{bmatrix},
\label{complex:H:jpdf-general}
\end{align}
where $\cZ$ is a normalisation constant and $w(z)$ is a positive weight function such that
\begin{equation}
\int_{\C}d^2z\,w(z)\abs{z}^{2k}<\infty,\qquad k=0,1,\ldots,2N-1.
\end{equation}
The $k$-point correlation function is defined analogously to the complex complex case,
\begin{equation}
R_k(z_1,\ldots,z_k):=\frac{N!}{(N-k)!}\bigg[\prod_{i=k+1}^N\frac12\int_{\C} d^2z_i\bigg]\cP_\jpdf(z_1,\ldots,z_N).
\end{equation}
Note that the factor $1/2$ is included in the integration, since we are integrating over the entire complex plane which leads to a double counting due to complex conjugate pairing of the eigenvalues.
In order to find explicit expressions for such correlations, we introduce the skew-symmetric product,
\begin{equation}
\inner{f,g}_s:=\frac12\int_{\C} d^2z\,w(z)(z^*-z) (f(z)g(z^*)-f(z^*)g(z)),
\end{equation}
where the weight function stems from~\eqref{complex:H:jpdf-general}. The main idea is to use the symmetries of the determinant in~\eqref{complex:H:jpdf-general} to write its entries in terms of a family of skew-orthogonal polynomials. More precisely, if $p_k(z)=z^k+\cdots$ ($k=1,2,\ldots$) is a family of monic polynomials, then they are said to be skew-orthogonal if
\begin{subequations}
\begin{align}
\inner{p_{2i},p_{2j}}_s=\inner{p_{2i+1},p_{2j+1}}_s &=0 \label{complex:H:skew-0}\\
\inner{p_{2i+1},p_{2j}}_s=-\inner{p_{2i},p_{2j+1}}_s &=2h_{2i+1}\delta_{ij} \label{complex:H:skew-h}
\end{align}
\end{subequations}
for $i,j=0,1,2,\ldots$. Here, $\{h_{2i+1}\}$ denotes a family of positive constants. Given such a family of skew-orthogonal polynomials, we construct a 
$2\times 2$ matrix kernel
\begin{equation}
K_N(x,y)=\sqrt{(x^*-x)w(x)(y^*-y)w(y)}
\begin{bmatrix}
\kappa_N(x,y^*) & -\kappa_N(x^*,y^*) \\
\kappa_N(x,y) & -\kappa_N(x^*,y) 
\end{bmatrix},
\label{complex:H:kernel}
\end{equation}
with a pre-kernel given terms of the skew-orthogonal polynomials,
\begin{equation}
\kappa_N(x,y):=\sum_{k=0}^{N-1}\frac{p_{2k+1}(x)p_{2k}(y)-p_{2k+1}(y)p_{2k}(x)}{2h_{2k+1}}.
\label{complex:H:pre-kernel}
\end{equation}
Due to the skew-orthogonality, we have
\begin{equation}
\frac12\int_{\C}d^2z\,K_N(z,z)=N\one
\quad\text{and}\quad
\frac12\int_{\C}d^2z\,K_N(x,z)K_N(z,y)=K_N(x,y),
\end{equation}
where $\one$ is the $2\times 2$ identity matrix. This implies that~\eqref{complex:H:jpdf-general} is a Pfaffian point process (see~\cite{Mehta:2004}),
\begin{equation}
R_k(z_1,\ldots,z_k)=\pf_{1\leq i,j\leq N}\big[K_N(z_i,z_j)\big].
\label{complex:H:pfaff}
\end{equation}
In particular, the spectral density (one-point correlation function) is given by
\begin{equation}
R_1(z)=(z^*-z)w(z)\kappa_N(z,z^*).
\label{complex:H:density-general}
\end{equation}
With the above given generality, the correlations were first obtained in~\cite{Kanzieper:2002}. The \mbox{one-,} two- and three-point correlation functions for the quaternionic Ginibre ensemble were originally found in~\cite{MS:1966}, where also the form of the higher correlation functions was conjectured.%
% \footnote{%
% It worth noting that~\cite{MS:1966} does not use same method as~\cite{Mehta:2004}. Rather, it uses that the $k$-point correlation function can be obtained as the $k$-th functional derivative of a generating functional,
% \begin{equation*}
% R_k(u_1,\ldots,u_k)=\frac{\delta^k}{\delta f(u_1)\cdots\delta f(u_k)}\average[\bigg]{\prod_{i=1}^N(1+f(z_i))}\bigg\vert_{f=0},
% \end{equation*}
% where the expectation value is given with respect to the joint density~\eqref{complex:H:jpdf-general} and $u_i\neq u_j$ for $i\neq j$. In the context of random matrix theory, this method first appeared in~\cite{Dyson3:1962}. The method was later rediscovered and combined with the integration theorems of Andr\'eief and de~Bruijn to create a more powerful tool~\cite{TW:1998}. In fact, this was the method used in~\cite{Kanzieper:2002} to obtain the Pfaffian structure for the correlations.%
% }.

It remains to determine the skew-orthogonal polynomials. For a general weight function, they may be written as~\cite{Kanzieper:2002}
\begin{align}
p_{2k+1}(z)&=\average[\bigg]{\bigg(z-\sum_{j=1}^k(z-z_j)\bigg)\prod_{i=1}^k(z-z_i)(z-z_i^*)}, \\
p_{2k}(z)&=\average[\bigg]{\prod_{i=1}^k(z-z_i)(z-z_i^*)},
\end{align}
where the expectation is with respect to the joint density~\eqref{complex:H:jpdf-general}.
However, in this thesis we may restrict ourselves to a simpler case due to symmetry of the weight function. From~\cite{Ipsen:2013}, we have the following lemma:

\begin{lemma}\label{thm:complex:skew}
Given a Pfaffian point process on the upper half-plane, $\C_+$, with joint density~\eqref{complex:H:jpdf-general} and weight function such that $w(z)=w(\abs z)$, then the corresponding monic skew-orthogonal polynomials are given by
\begin{equation}
p_{2k+1}(z)=z^{2k+1}
\qquad\text{and}\qquad
p_{2k}(z)=\sum_{i=0}^k\bigg[\prod_{j=i+1}^k\frac{h_{2j}}{h_{2j-1}}\bigg]z^{2i}
\label{complex:H:skew-general}
\end{equation}
with
\begin{equation}
h_k:=\frac12\int_{\C}d^2z\,w(z)\abs z^{2k},\qquad k=0,1,\ldots,2N-1.
\label{complex:H:h-thm}
\end{equation}
\end{lemma}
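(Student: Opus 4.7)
The plan is to exploit the rotational symmetry of the weight to reduce the skew-orthogonality of the proposed polynomials to simple algebraic identities on monomials, bypassing entirely the general expressions from~\cite{Kanzieper:2002} quoted just above.

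First, I would pass to polar coordinates in the skew-product. Since $w(z)=w(\abs{z})$, the angular integration in $\int_\C d^2 z\, w(z) z^a z^{*b}$ produces a Kronecker delta enforcing $a=b$, and in that diagonal case reduces to the radial moment $2h_a$ by the very definition~\eqref{complex:H:h-thm}. Expanding the factor $(z^*-z)$ in $\inner{z^a,z^b}_s$ produces four monomial terms, and this observation gives the closed form
\[
\inner{z^a,z^b}_s = 2h_a\,\delta_{a,b+1} - 2h_b\,\delta_{b,a+1},
\]
so that the skew-product of two monomials vanishes unless their degrees differ by exactly one.

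Second, I would verify the three conditions \eqref{complex:H:skew-0}--\eqref{complex:H:skew-h} by bilinear expansion. The two vanishing conditions, for odd--odd and even--even pairs, are immediate by parity: every monomial pair in the expansion of $\inner{p_{2i},p_{2j}}_s$ or $\inner{p_{2i+1},p_{2j+1}}_s$ has degrees of the same parity, so their difference is even and the previous step forces the contribution to vanish. The non-trivial case is the odd--even one. By bilinearity, only two monomial pairs in $\inner{p_{2i+1},p_{2j}}_s$ survive: those coming from the $z^{2i}$ and $z^{2i+2}$ components of $p_{2j}$. For $i>j$ both coefficients of $p_{2j}$ vanish for degree reasons; for $i=j$ only the $z^{2i}$ contribution remains since $p_{2j}$ is monic of degree $2j$, producing exactly $2h_{2i+1}$ as required. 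For $i<j$, the two surviving terms must cancel, which reduces to the recurrence $h_{2i+1}\,c_i^{(j)} = h_{2i+2}\,c_{i+1}^{(j)}$ for the coefficients $c_i^{(j)}:=\prod_{k=i+1}^{j} h_{2k}/h_{2k-1}$. This identity is manifest from the defining product and is in fact the design principle behind the formula~\eqref{complex:H:skew-general}.

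The argument is entirely algebraic; the only delicate point is tracking the signs and indices when expanding $(z^*-z)$ in the first step and verifying that the monic normalisation $c_j^{(j)}=1$ feeds correctly into the diagonal case. I do not anticipate any analytic obstacle.
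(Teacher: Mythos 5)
Your proof is correct and takes a genuinely different route from the paper's. The paper \emph{derives} the formula: it first establishes the monomial relation $\tfrac12\int_\C d^2z\,w(z)z^k z^{*\ell}=h_k\delta_{k\ell}$, then argues (using the vanishing conditions together with the normalisation convention that each $p_k$ contains only monomials of a single parity) that the odd polynomials must be monomials --- this is proved by induction on $k$ --- and finally solves the resulting two-term recurrence $c_{2j-2}^k=c_{2j}^k\,h_{2j}/h_{2j-1}$ for the even coefficients. You instead \emph{verify} the proposed closed form by first reducing the skew-product of two monomials to $\inner{z^a,z^b}_s = 2h_a\,\delta_{a,b+1}-2h_b\,\delta_{b,a+1}$, which makes the parity vanishing immediate and reduces the odd--even pairing to a telescoping cancellation of exactly two surviving terms. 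Your approach is somewhat cleaner: the paper's step from parity to pure monomial form for the odd polynomials quietly invokes the standard normalisation convention for skew-orthogonal families (the odd polynomials are only determined modulo multiples of the even ones), whereas your verification sidesteps any uniqueness question and simply shows the proposed family satisfies \eqref{complex:H:skew-0}--\eqref{complex:H:skew-h}. The paper's route is more constructive --- it tells you how to \emph{find} the polynomials if you didn't already have the answer --- which is why it is the natural presentation in a paper introducing the formula, but as a proof of the stated lemma your check is complete, and your compact identity for $\inner{z^a,z^b}_s$ is a reusable lemma in its own right. One small point worth stating explicitly in a write-up: when you combine the four monomial integrals arising from the expansion of $(z^*-z)$, you need the coincidence $h_a=h_{b+1}$ when $a=b+1$ (and similarly for the other delta) to collapse them to the two-term form with the factors of $2$; you clearly did this, but it is the kind of index bookkeeping that deserves a line.
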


\begin{proof}
It follows from $w(z)=w(\abs z)$ and the orthogonality relation~\eqref{complex:C:unitary-ortho} that
\begin{equation}
\frac12\int_{\C}d^2z\,w(z) z^{k}z^{*\ell}=h_k\delta_{k\ell},\qquad k,\ell=0,1,\ldots,2N-1,
\label{complex:H:semi-ortho}
\end{equation}
which is seen by changing to polar coordinates. Combining this formula with condition~\eqref{complex:H:skew-0} gives
\begin{equation}
p_{2k+1}(z)=\sum_{j=1}^kc_{2j+1}^kz^{2j+1}
\qquad\text{and}\qquad
p_{2k}(z)=\sum_{j=1}^kc_{2j}^kz^{2j},
\end{equation}
respectively. Here, $c_j^k$ denote real coefficients with $c_{2k+1}^k=c_{2k}^k=1$ for all $k$ due to monic normalisation. The fact that all odd polynomials are monomials follows by induction, since $p_1(z)=z$ and if $p_{2j+1}(z)=z^{2j+1}$ for $j=0,1,\ldots,k-1$, then
\begin{equation}
\inner{p_{2k+1},p_{2\ell}}_s=2\sum_{j=0}^{k}c^k_{2j+1}h_{2j+1}\delta_{j\ell},
\end{equation}
which together with~\eqref{complex:H:skew-h} implies that $c^k_{2j+1}=0$ for $j<k$. Finally, we insert the monomial form for the odd polynomial into~\eqref{complex:H:skew-h} and perform the integrals using~\eqref{complex:H:semi-ortho}, which gives the recurrence relation $c_{2j-2}^k=c_{2j}^kh_{2j}/h_{2j-1}$. Combining this recursion relation with the monic normalisation completes the proof.
\end{proof}

\begin{corollary}\label{thm:complex:prekernel-finite}
The eigenvalues of a product of $n$ independent induced $N\times N$ quaternionic Ginibre matrices with indices $\nu_1\leq\cdots\leq\nu_n$ form a Pfaffian point process~\eqref{complex:H:pfaff} on the upper half-plane with kernel~\eqref{complex:H:kernel}, where the weight function and pre-kernel given by~\eqref{complex:H:norm+weight} and
\begin{equation}
\kappa_N^n(x,y)=2\pi^{n/2-1}\sum_{k=0}^{N-1}\sum_{j=0}^k \frac{x^{2k+1}y^{2j}-y^{2k+1}x^{2j}}{\prod_{\ell=1}^n4^{\nu_\ell}\Gamma[\nu_\ell+j+1]\Gamma[\nu_\ell+k+3/2]},
\label{complex:H:pre-kernel-gauss}
\end{equation}
respectively. 
\end{corollary}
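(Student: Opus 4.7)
The plan is to apply Lemma~\ref{thm:complex:skew} directly, since the weight $w_n^{\beta=4}(z)$ given in Proposition~\ref{thm:complex:jpdf-H} depends only on $\abs z$. This immediately gives the odd skew-orthogonal polynomials as the monomials $p_{2k+1}(z)=z^{2k+1}$, and expresses the even ones and the pre-kernel in terms of the normalisations $h_k=\tfrac12\int_\C d^2z\,w_n^{\beta=4}(z)\abs z^{2k}$. Consequently the work boils down to (i) computing $h_k$ explicitly as a product of gamma functions and (ii) telescoping the products $\prod_{j=i+1}^k h_{2j}/h_{2j-1}$ appearing in $p_{2k}$, so that the ratios inside the pre-kernel sum collapse to the form stated in~\eqref{complex:H:pre-kernel-gauss}.

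For step~(i), I would pass to polar coordinates and substitute $u=2^n r^2$ to reduce the radial integral to a standard Meijer $G$-moment integral. The formula~\eqref{special:meijer:meijer-moment} then yields
\begin{equation*}
h_k=\frac{\pi}{2^{n(k+1)+1}}\prod_{\ell=1}^n\Gamma[2\nu_\ell+k+1].
\end{equation*}
To turn the ``$2\nu_\ell$'' gammas into the ``$\nu_\ell+k+3/2$'' and ``$\nu_\ell+i+1$'' combinations that appear in the target formula, I would apply the Legendre duplication formula $\Gamma[2z]=\pi^{-1/2}2^{2z-1}\Gamma[z]\Gamma[z+\tfrac12]$ separately to $\Gamma[2\nu_\ell+2k+2]$ (which controls $h_{2k+1}$) and to the numerator/denominator pair in each ratio $h_{2j}/h_{2j-1}$.

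For step~(ii), the key observation is that after duplication the ratio $h_{2j}/h_{2j-1}$ collapses: all powers of $2$ and $\sqrt{\pi}$ cancel, leaving simply $\prod_{\ell=1}^n(\nu_\ell+j)$. Consequently
\begin{equation*}
\prod_{j=i+1}^{k}\frac{h_{2j}}{h_{2j-1}}=\prod_{\ell=1}^n\frac{\Gamma[\nu_\ell+k+1]}{\Gamma[\nu_\ell+i+1]},
\end{equation*}
and when this is combined with the prefactor $1/(2h_{2k+1})$ in~\eqref{complex:H:pre-kernel}, the $\Gamma[\nu_\ell+k+1]$ in the numerator cancels the corresponding factor in the denominator of $h_{2k+1}$. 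What survives is exactly the expression $\bigl[\prod_\ell 4^{\nu_\ell}\Gamma[\nu_\ell+i+1]\Gamma[\nu_\ell+k+3/2]\bigr]^{-1}$ up to an overall constant, which yields~\eqref{complex:H:pre-kernel-gauss} after relabelling $i\to j$. The Pfaffian form~\eqref{complex:H:pfaff} with kernel~\eqref{complex:H:kernel} then follows from the general theory recalled before the corollary.

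The only non-trivial step is the bookkeeping in the two rounds of Legendre duplication, where one must carefully collect the powers of $2$, $\pi$ and $4^{\bar\nu}$ (with $\bar\nu=\nu_1+\cdots+\nu_n$) that arise from the different factors, and check that they combine to give precisely the claimed prefactor. Beyond this, everything is a direct assembly of Lemma~\ref{thm:complex:skew}, the Meijer $G$ moment formula, and the gamma-function identities from Appendix~\ref{app:special}.
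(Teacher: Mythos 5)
Your proposal reproduces the paper's proof exactly: Lemma~\ref{thm:complex:skew} reduces everything to the normalisations $h_k$, which the paper likewise evaluates via the Meijer $G$-moment formula~\eqref{special:meijer:meijer-moment} to get $h_k=\frac\pi2\prod_\ell\Gamma[2\nu_\ell+k+1]/2^{k+1}$, and then applies the Gauss multiplication (i.e.\ Legendre duplication) formula~\eqref{special:gamma:multiplication} to bring the gammas into the stated form. One caution on the bookkeeping you deferred: carefully collecting the powers of $2$ that come out of the $n$ duplications applied to $2h_{2k+1}$ yields an overall prefactor $2^n\pi^{n/2-1}$ rather than the $2\pi^{n/2-1}$ printed in~\eqref{complex:H:pre-kernel-gauss}; for instance with $n=2$, $\nu_1=\nu_2=0$, $N=1$ one finds $h_1=\pi/32$ and hence $\kappa_1(x,y)=(x-y)/(2h_1)=16(x-y)/\pi$, which matches $2^n\pi^{n/2-1}/\Gamma[3/2]^2$ but not $2\pi^{n/2-1}/\Gamma[3/2]^2$, so do carry the constant through explicitly rather than assuming the printed coefficient.
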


\begin{proof}
The weight function~\eqref{complex:H:norm+weight} satisfies the condition $w_n(z)=w(\abs z)$ and according to lemma~\ref{thm:complex:skew} it is sufficient to determine the constants~\eqref{complex:H:h-thm}, which is achieved by means of the integration formula~\eqref{special:meijer:meijer-moment},
\begin{equation}
h_k=\frac12\int_{\C}d^2z\,w_n(z)\,\abs z^{2k}=\frac\pi2\prod_{\ell=1}^n\frac{\Gamma[2\nu_\ell+k+1]}{2^{k+1}}.
\label{complex:H:h}
\end{equation}
Using this together with~\eqref{complex:H:skew-general} and the multiplication formula for gamma functions~\eqref{special:gamma:multiplication} in the expression for the pre-kernel~\eqref{complex:H:pre-kernel} yields~\eqref{complex:H:pre-kernel-gauss}.
\end{proof}

\begin{remark} Note that corollary~\ref{thm:complex:prekernel-finite} is in agreement with previous results for the quaternionic Ginibre ensemble~\cite{Mehta:2004,MS:1966,Kanzieper:2002} and for products of two matrices~\cite{Akemann:2005}, e.g. for $n=1$ and $\nu_1=0$, we have the well-known expression  
\begin{equation}
\frac12\kappa_N^{n=1}\bigg(\frac x{\sqrt2},\frac y{\sqrt2}\bigg)=\frac1{2\pi}\sum_{k=0}^{N-1}\sum_{j=0}^k \frac{x^{2k+1}y^{2j}-y^{2k+1}x^{2j}}{(2j)!!(2k+1)!!}.
\end{equation}
Here, the rescaling appears because we are using the weight function $w_{n=1}(z)=e^{-2\abs z^2}$ rather than $e^{-\abs z^2}$.
\end{remark}

Equivalently to proposition~\ref{thm:complex:permanent-C}, the joint density simplifies considerably if we integrate out the complex phases~\cite{Rider:2004,AIS:2014}.

\begin{proposition}\label{thm:complex:permanent-H}
Given a Pfaffian point process~\eqref{complex:H:jpdf-general} on the upper-half plane, $\C_+$, with weight function such that $w(z)=w(\abs z)$, then
\begin{equation}
\prod_{k=1}^N\frac12\int_{-\pi}^\pi d\theta_k\, r_k\,\cP_\jpdf(r_1e^{i\theta_1},\ldots,r_Ne^{i\theta_N})
=\frac1{N!}\per_{1\leq i,j\leq N}\bigg[\frac{\pi\, r_i^{4j-1}w(r_i)}{h_{2j-1}}\bigg]
\label{complex:H:permanent}
\end{equation}
with $r_i\in\R_+$ ($i=1,\ldots,k$), \textup{`$\per$'} denotes a permanent, and $h_k$ is given by~\eqref{complex:H:h-thm}.
\end{proposition}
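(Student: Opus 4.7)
My strategy mirrors that of Proposition~\ref{thm:complex:permanent-C} but exploits the determinantal form of the joint density on the second line of~\eqref{complex:H:jpdf-general}, since the entire $\theta$-dependence then lives inside a single $2N\times 2N$ determinant. Setting $z_k = r_k e^{i\theta_k}$ and using $w(z_k) = w(r_k)$, I expand
\[
\prod_{k=1}^N (z_k^*-z_k) \det M = \sum_{\pi\in S_{2N}}\sign(\pi)\prod_{k=1}^N (z_k^* - z_k)\,z_k^{a_k-1}z_k^{*b_k-1},
\]
with $a_k:=\pi(2k-1)$ and $b_k:=\pi(2k)$. The $\theta_k$-dependent part of the $k$-th factor is
$r_k^{a_k+b_k-1}\bigl(e^{i(a_k-b_k-1)\theta_k} - e^{i(a_k-b_k+1)\theta_k}\bigr)$.

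Integrating each $\theta_k$ over $[-\pi,\pi]$ via the orthogonality relation~\eqref{complex:C:unitary-ortho} forces $\abs{a_k-b_k} = 1$, i.e.\ each pair $\{a_k,b_k\}$ must be a consecutive pair. Since $\{a_k,b_k\}_{k=1}^N$ must partition $\{1,\ldots,2N\}$, the only admissible partition into consecutive pairs is $\{1,2\},\{3,4\},\ldots,\{2N-1,2N\}$. Surviving permutations are therefore parametrised by a permutation $\tau \in S_N$ (assigning variable $k$ to the block $\{2\tau(k)-1,2\tau(k)\}$) together with a binary vector $\epsilon \in \{0,1\}^N$ that selects the order within each block.

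A short sign calculation based on cycle decomposition gives $\sign(\pi) = (-1)^{\abs{\epsilon}}$, while the integrated $\theta_k$-factor evaluates to $(-1)^{1-\epsilon_k}\pi\, r_k^{4\tau(k)-1}$. The $\epsilon$-dependence therefore cancels, each surviving $\pi$ contributes the common weight $(-\pi)^N\prod_k r_k^{4\tau(k)-1}$, the sum over the $2^N$ choices of $\epsilon$ supplies a benign factor $2^N$, and the remaining sum over $\tau \in S_N$ produces
\[
\sum_{\tau\in S_N} \prod_k r_k^{4\tau(k)-1} = \per_{1\leq k,j\leq N}\bigl[r_k^{4j-1}\bigr].
\]
Pulling $w(r_k)$ out of the integration (permissible as it depends only on $r_k$), factoring the normalisation constants $h_{2j-1}$ from~\eqref{complex:H:h-thm} into the columns of the permanent, and matching overall constants through the identity $\cZ = (-1)^N 2^N N!\prod_{j=1}^N h_{2j-1}$ then yields~\eqref{complex:H:permanent}.

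The main obstacle in executing this plan is the sign bookkeeping: three independent sources of sign must be tracked simultaneously---the alternating sign $\sign(\pi)$ from the determinant expansion, the factor $(z_k^*-z_k)$ in the density, and the intra-block orientation encoded by $\epsilon$---and one has to verify that they combine into a uniform $(-1)^N$ independent of $\epsilon$. It is precisely this cancellation that converts the alternating sum (a determinant) into the plain sum (a permanent) asserted by the proposition; once it is in hand, the remaining combinatorial recognition is routine.
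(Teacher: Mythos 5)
Your route is the paper's route: expand the $2N\times 2N$ Vandermonde, integrate each phase against the odd prefactor $(z_k^*-z_k)$ via~\eqref{complex:C:unitary-ortho}, observe that only permutations pairing consecutive columns survive, and factor the surviving sum into $S_N\times\{0,1\}^N$. The paper states the resulting combinatorial identity and defers its verification to~\cite{Rider:2004,AIS:2014}; you actually carry out the sign-tracking (cycle decomposition of $\sign\pi$, $\epsilon$-independence of the combined sign), which is a welcome addition of detail, and the bookkeeping you present is internally consistent.

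However, the final step fails: you conclude via the identity $\cZ = (-1)^N 2^N N!\prod_{j=1}^N h_{2j-1}$, which cannot hold for odd $N$ since $\cZ$ is a normalisation constant and each $h_{2j-1}>0$ by~\eqref{complex:H:h-thm}. The culprit is the $(z_k^*-z_k)$ factor you inherit from the second line of~\eqref{complex:H:jpdf-general}. With the interleaved row ordering ($z_k^{j-1}$ in row $2k-1$, $z_k^{*\,j-1}$ in row $2k$), the Vandermonde determinant itself already equals $\prod_i(z_i^*-z_i)\prod_{i<j}\abs{z_j-z_i}^2\abs{z_j-z_i^*}^2$, so $\prod_k(z_k^*-z_k)\det M$ produces $\prod_k(z_k^*-z_k)^2 = (-1)^N\prod_k\abs{z_k-z_k^*}^2$, i.e.\ the second line of~\eqref{complex:H:jpdf-general} is off by $(-1)^N$ as written. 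The paper's own proof quietly avoids this by writing the prefactor as $r_je^{+i\theta_j}-r_je^{-i\theta_j}=z_j-z_j^*$, not $z_j^*-z_j$, which flips each integrated $\theta_j$ contribution and removes your spurious $(-1)^N$. Replacing $(z_k^*-z_k)$ by $(z_k-z_k^*)$ throughout (or, equivalently, starting from the manifestly positive first line of~\eqref{complex:H:jpdf-general} and re-deriving the determinantal form with a consistent orientation) repairs the final normalisation and yields $\cZ = 2^N N!\prod_{j=1}^N h_{2j-1}$ as in the paper.
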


\begin{proof}
We start with the joint density~\eqref{complex:H:jpdf-general}
\begin{multline}
\prod_{j=1}^N\frac12\int_{-\pi}^\pi d\theta_j\, r_j\,\cP_\jpdf(r_1e^{i\theta_1},\ldots,r_Ne^{i\theta_N})\\
=\frac{1}{\cZ}\prod_{j=1}^N\frac12\int_{-\pi}^\pi d\theta_j\, r_j\,w(r_j)(r_je^{+i\theta_j}-r_je^{-i\theta_j})
\det_{\substack{k=1,\ldots,N\\ \ell=1,\ldots,2N}}
\begin{bmatrix} r_k^{\ell-1}e^{+i(\ell-1)\theta_k} \\[.1em] r_k^{\ell-1}e^{-i(\ell-1)\theta_k}\end{bmatrix}.
\end{multline}
Expanding the Vandermonde determinant yields
\begin{multline}
\prod_{j=1}^N\frac12\int_{-\pi}^\pi d\theta_j\, r_j\,\cP_\jpdf(r_1e^{i\theta_1},\ldots,r_Ne^{i\theta_N})
=\frac{1}{\cZ}\sum_{\sigma\in S_{2N}} \sign\sigma \prod_{j=1}^Nw(r_j)\,r_j^{\sigma(j)+\sigma(j+N)} \\
\times\frac12\int_{-\pi}^\pi d\theta_j(e^{i(\sigma(j)-\sigma(j+N)+1)\theta_j}-e^{i(\sigma(j)-\sigma(j+N)-1)\theta_j}).
\label{complex:H:permanent-proof-1}
\end{multline}
Here, the integral on the second line can be performed using the orthogonality relation~\eqref{complex:C:orthogonal},
\begin{equation}
\frac12\int_{-\pi}^\pi d\theta_j(e^{i(\sigma(j)-\sigma(j+N)+1)\theta_j}-e^{i(\sigma(j)-\sigma(j+N)-1)\theta_j})
=\pi(\delta_{\sigma(j)+1,\sigma(j+N)}-\delta_{\sigma(j),\sigma(j+N)+1}).
\label{complex:H:delta}
\end{equation}
This implies that that only terms with $\abs{\sigma(j)-\sigma(j+N)}=1$ contribute to the sum in~\eqref{complex:H:permanent-proof-1}, hence the sum over the permutation group $S_{2N}$ reduces to a sum over pairs, i.e. $S_{N}$. It is a straightforward combinatorial task to use~\eqref{complex:H:delta} to show that (see~\cite{Rider:2004,AIS:2014})
\begin{equation}
\prod_{j=1}^N\frac12\int_{-\pi}^\pi d\theta_j\, r_j\,\cP_\jpdf(r_1e^{i\theta_1},\ldots,r_Ne^{i\theta_N})
=\frac{1}{\cZ}\sum_{\sigma\in S_{N}} \prod_{j=1}^N 2\pi\,w(r_{\sigma(j)})\,r_{\sigma(j)}^{4j-1}.
\end{equation}
Using that $\cZ=N!\prod_{j=0}^{N-1}2h_{2j+1}$, we recognise the right hand side as the permanent~\eqref{complex:H:permanent}.
\end{proof}

\begin{remark}
As for complex matrices, the joint density~\eqref{complex:H:permanent} corresponds to the symmetrised density for $N$ independent random variables $r_k$ ($k=1,\ldots,N$) with densities
\begin{equation}
f_k(r_k)=\frac{\pi\, r_k^{4k-1}w(r_k)}{h_{2k-1}},\qquad k=1,\ldots,N.
\end{equation}
The normalisation of the densities, $f_k$, is ensured by~\eqref{complex:H:h-thm}.
\end{remark}

\subsection{Asymptotic formulae for large matrix dimension}

Evaluation of the asymptotic limits for large matrix dimension is more difficult in the quaternionic case than in the complex case (section~\ref{sec:complex:C}). For this reason, some questions remain unanswered even for Gaussian matrices. In this section we repeat the known results originally presented in~\cite{Ipsen:2013,IK:2014}.

First, let us consider the macroscopic density. The macroscopic density is expected to be invariant under rotations in the complex plane (as is well-known for $n=1$) even though the finite-$N$ spectral density is not, see figure~\ref{fig:complex:scat-intro}. For simplicity we will look at the macroscopic limit of the phase-averaged spectral density. 

\begin{lemma}
Let $\nu_1\leq\cdots\leq\nu_n$ be a collection of non-negative integers and let $s>-\nu_1-1$ be a real parameter, then the $2s$-th absolute moment for the spectral density (one-point correlation function) reads
\begin{equation}
\frac12\int_\C d^2z\,R_1^n(z)\,\abs{z}^{2s}
=\sum_{k=0}^{N-1}\prod_{\ell=1}^n\frac{\Gamma[2\nu_\ell+2k+2+s]}{2^s\,\Gamma[2\nu_\ell+2k+2]}.
\label{complex:H:moments-finite}
\end{equation}
The $0$-th moment is normalised to $N$ in the present notation.
\end{lemma}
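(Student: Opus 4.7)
The plan is to exploit the rotation invariance $w_n^{\beta=4}(z)=w_n^{\beta=4}(\abs z)$ of the weight function in order to collapse the two-dimensional integral over $\C$ into a sum of one-dimensional radial Mellin transforms, each of which can be evaluated in closed form via~\eqref{special:meijer:meijer-moment}. The key observation is that Proposition~\ref{thm:complex:permanent-H} applies to the quaternionic Gaussian product ensemble, so after integrating out the phases the joint density of the moduli becomes a symmetrised product of independent one-particle densities
\[
f_k(r)=\frac{\pi\, r^{4k-1}w_n^{\beta=4}(r)}{h_{2k-1}}, \qquad k=1,\ldots,N,
\]
with $h_{2k-1}$ evaluated explicitly in~\eqref{complex:H:h}.

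With this in hand, the moment of the spectral density decouples across the $f_k$. Since $\abs z^{2s}$ is rotation invariant and the one-point function is the single-particle marginal of the permanental density, one obtains
\[
\frac12\int_\C d^2z\,R_1^n(z)\abs z^{2s}=\sum_{k=1}^{N}\int_0^\infty r^{2s}f_k(r)\,dr=\sum_{k=1}^{N}\frac{\int_0^\infty r^{4k+2s-1}w_n^{\beta=4}(r)\,dr}{\int_0^\infty r^{4k-1}w_n^{\beta=4}(r)\,dr}.
\]
Both integrals on the right are of the same type. Substituting $u=2^n r^2$ turns each into a Mellin transform of the Meijer $G$-function in~\eqref{complex:H:norm+weight}, and~\eqref{special:meijer:meijer-moment} yields
\[
\int_0^\infty r^{4k+2s-1}w_n^{\beta=4}(r)\,dr=\frac{1}{2^{2nk+ns+1}}\prod_{\ell=1}^{n}\Gamma[2\nu_\ell+2k+s].
\]
Taking the ratio, the factors of $2^{2nk+1}$ cancel; the surviving $2^{-ns}$ redistributes as $\prod_{\ell=1}^n 2^{-s}$ across the product, and a relabelling $k\mapsto k+1$ of the summation index produces the claimed formula.

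The step that requires the most attention is the bookkeeping of powers of $2$: the Jacobian of the substitution $u=2^n r^2$ and the normalisation $h_{2k-1}$ each carry factors of $2^{2nk}$ that must cancel cleanly, leaving only the $s$-dependent prefactor $2^{-ns}$ which distributes into the product over $\ell$. I would also briefly note that the same identity can be derived directly by substituting the explicit pre-kernel from Corollary~\ref{thm:complex:prekernel-finite} into $R_1^n(z)=(z^*-z)w_n^{\beta=4}(z)\kappa_N^n(z,z^*)$ and performing the phase integral --- which, thanks to the cancellation of $\cos((2k-2j)\theta)-\cos((2k-2j+2)\theta)$, collapses the double sum to its diagonal $j=k$ --- but that alternative route requires an additional application of the Legendre duplication formula $\Gamma[z]\Gamma[z+\tfrac12]=\sqrt\pi\,2^{1-2z}\Gamma[2z]$ to reach the same gamma-function structure, and is therefore less transparent than the permanental route outlined above.
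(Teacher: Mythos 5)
Your proposal is correct and follows essentially the same route as the paper: both reduce the moment via Proposition~\ref{thm:complex:permanent-H} to a sum of normalised one-dimensional radial integrals $\int_0^\infty r^{2s} f_k(r)\,dr$, then evaluate these with the Meijer $G$-function Mellin formula~\eqref{special:meijer:meijer-moment}. Your more explicit tracking of the powers of two through the substitution $u=2^n r^2$ and the normalisation $h_{2k-1}$ is a helpful expansion of the paper's terser statement, and the index shift $k\mapsto k+1$ correctly reconciles your $k=1,\ldots,N$ labelling with the lemma's $k=0,\ldots,N-1$.
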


\begin{proof}
It follows from proposition~\ref{thm:complex:permanent-H} that the absolute moments are given by
\begin{equation}
\frac12\int_\C d^2z\,R_1^n(z)\,\abs{z}^{2s}=\sum_{k=0}^{N-1}\int_0^\infty dr\,\frac{\pi\, r^{4k+2s+1}w_n^{\beta=4}(r)}{h_{2k+1}^n}.
\end{equation}
Using weight~\eqref{complex:H:norm+weight} and the constant~\eqref{complex:H:h} in this expression, we can perform the integral using~\eqref{special:meijer:meijer-moment}.
\end{proof}

\begin{proposition}\label{thm:complex:macro-H}
Let $n$ be a positive integer and $\nu_i=\alpha_iN$ with $\alpha_i\in[0,\infty)$, then there exists a macroscopic limit for the phase-averaged spectral density,
\begin{equation}
\lim_{N\to\infty}\pi\,r \int_{-\pi}^\pi d\theta\,N^{n-1} R_1^n(N^{n/2}r\,e^{i\theta})=\rho_\macro^{n,\alpha}(r)
\end{equation}
where the macroscopic density is defined as in proposition~\ref{thm:complex:macro-C}.
\end{proposition}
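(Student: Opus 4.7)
The plan is to mirror the proof of proposition~\ref{thm:complex:macro-C}: compute the Mellin transform (i.e.\ the $2s$-th absolute moments) of the rescaled phase-averaged one-point density at finite $N$, and show that it converges to the moments characterising $\rho_\macro^{n,\alpha}$. Uniqueness of the inverse Mellin transform — available since all positive moments exist — then identifies the limit. Either the permanental reduction from proposition~\ref{thm:complex:permanent-H} (which writes the phase-averaged modulus process as a symmetrised product of independent radial densities $f_j(r)\propto r^{4j-1}w_n^{\beta=4}(r)/h_{2j-1}$) or the moment formula from the lemma preceding the proposition gives immediate access to these moments.

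Concretely, writing $d^2z = r\,dr\,d\theta$ and substituting $w=N^{n/2}z$ reduces the finite-$N$ moments to
\[\int_0^\infty dr\,r^{2s}\Bigl(\pi r\!\int_{-\pi}^\pi d\theta\,N^{n-1}R_1^n(N^{n/2}re^{i\theta})\Bigr)
=\frac{2\pi}{N^{ns+1}}\sum_{k=0}^{N-1}\prod_{\ell=1}^n\frac{\Gamma[2\alpha_\ell N+2k+2+s]}{2^s\,\Gamma[2\alpha_\ell N+2k+2]}\]
after setting $\nu_\ell=\alpha_\ell N$. The asymptotic $\Gamma[N\beta+a]/\Gamma[N\beta+b]=(N\beta)^{a-b}(1+O(N^{-1}))$ from appendix~\ref{app:special} shows that the argument-doubling $2\nu_\ell+2k=2N(\alpha_\ell+k/N)$ is precisely compensated by the explicit factor $2^{-s}$, so every term asymptotes to $N^s(\alpha_\ell+k/N)^s$ — exactly the structure encountered in the complex case. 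A Riemann-sum sandwich, bounding the sum above and below by integrals of the monotone integrand $\prod_\ell(t+\alpha_\ell)^s$, together with dominated convergence, yields $\int_0^1 dt\prod_{\ell=1}^n(\alpha_\ell+t)^s$ in the limit. This matches the Mellin transform appearing in proposition~\ref{thm:complex:macro-C}, and the identification with $\rho_\macro^{n,\alpha}$ follows.

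The obstacle here is mild and largely bookkeeping. The only conceptually new feature compared with the complex case is the cancellation between the quaternionic argument-doubling and the compensating $2^{-s}$ from the weight normalisation; once observed, the asymptotic analysis is a direct transcription of the complex-case argument. The only technical care is needed in the Riemann-sum sandwich when some $\alpha_\ell$ vanish, so the integrand of the lower bound is no longer strictly monotone in $k$ near $k=0$; this is handled by separating off a bounded number of small-$k$ terms whose contribution to the rescaled sum is $O(N^{-1})$, exactly as in the proof of proposition~\ref{thm:complex:macro-C}.
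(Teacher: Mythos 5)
Your argument is, in essence, the paper's own: you compute the finite-$N$ Mellin transform using the quaternionic moment lemma (which in turn follows from proposition~\ref{thm:complex:permanent-H}), observe that the factor $2^{-s}$ exactly compensates the argument-doubling in the quaternionic gamma ratios, and then invoke the Riemann-sum sandwich and dominated convergence of the complex case (proposition~\ref{thm:complex:macro-C}) verbatim. One small caution: by your own integral identity the rescaled moment tends to $2\pi\int_0^1 dt\,\prod_\ell(t+\alpha_\ell)^s$, not $\int_0^1 dt\,\prod_\ell(t+\alpha_\ell)^s$ as you state; but since the prefactor in the proposition statement itself is not fully consistent with the normalisation $\int_\C d^2z\,\rho_\macro^{n,\alpha}=1$, this is bookkeeping noise rather than a gap in the argument, and the limit identification is sound.
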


\begin{proof}
Similar to the proof of proposition~\ref{thm:complex:macro-C}, we can find the phase-averaged density by looking at the absolute moments~\eqref{complex:H:moments-finite}. With proper rescaling, we have
\begin{equation}
\frac1{2N}\int_\C d^2z\,N^{n-1}R_1^n(N^{n/2}z)\,\abs{z}^{2s}
=\frac1N\sum_{k=0}^{N-1}\prod_{\ell=1}^n\frac{\Gamma[2(\nu_\ell+k+1)+s]}{(2N)^s\Gamma[2(\nu_\ell+k+1)]}.
\end{equation}
Up to a factor of two this expression is identical to~\eqref{complex:C:macro-proof-1}, and we can follow exactly the steps as in the proof of proposition~\ref{thm:complex:macro-C} (the additional factor of two cancels since it appears identically in the denominator and numerator in the fraction).
\end{proof}

\begin{proposition}
Let $n$ be a positive integer and let $\nu_1\leq\cdots\leq\nu_n$ be fixed constants, then
\begin{equation}
\lim_{N\to\infty}\kappa_N^{n}(x,y)=\kappa_\origin^{n,\nu}(x,y)
\end{equation}
for $x$ and $y$ belonging to compact subsets of $\C$, where $\kappa_N^n(x,y)$ is the pre-kernel~\eqref{complex:H:pre-kernel-gauss} and
\begin{equation}
\kappa_\origin^{n,\nu}(x,y)=2\pi^{n/2-1}\sum_{k=0}^{\infty}\sum_{j=0}^k \frac{x^{2k+1}y^{2j}-y^{2k+1}x^{2j}}{\prod_{\ell=1}^n4^{\nu_\ell}\Gamma[\nu_\ell+j+1]\Gamma[\nu_\ell+k+3/2]}.
\label{complex:H:pre-kernel-origin}
\end{equation}
\end{proposition}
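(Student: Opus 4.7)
The key observation is that the finite-$N$ pre-kernel $\kappa_N^n(x,y)$ in~\eqref{complex:H:pre-kernel-gauss} is literally the partial sum of the infinite double series defining $\kappa_\origin^{n,\nu}(x,y)$ in~\eqref{complex:H:pre-kernel-origin}: truncating the outer sum at $k=N-1$ recovers the finite expression exactly. Consequently, the proposition reduces to showing that the double series $\sum_{k=0}^\infty\sum_{j=0}^k a_{k,j}(x,y)$ with
\begin{equation*}
a_{k,j}(x,y)=\frac{x^{2k+1}y^{2j}-y^{2k+1}x^{2j}}{\prod_{\ell=1}^n 4^{\nu_\ell}\Gamma[\nu_\ell+j+1]\Gamma[\nu_\ell+k+3/2]}
\end{equation*}
converges uniformly for $(x,y)$ in compact subsets of $\C\times\C$; uniform convergence of the partial sums then gives the stated limit.

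The plan is to apply the Weierstrass M-test. Fix $R>0$ and restrict to the compact set $\{|x|\leq R,\ |y|\leq R\}$. The elementary bound $|x^{2k+1}y^{2j}-y^{2k+1}x^{2j}|\leq 2R^{2(k+j)+1}$ yields
\begin{equation*}
|a_{k,j}(x,y)|\leq \frac{2R^{2(k+j)+1}}{\prod_{\ell=1}^n 4^{\nu_\ell}\Gamma[\nu_\ell+j+1]\Gamma[\nu_\ell+k+3/2]}=:M_{k,j}.
\end{equation*}
Since the $\nu_\ell$ are fixed, Stirling's formula (or the ratio test applied to $M_{k+1,j}/M_{k,j}$ and $M_{k,j+1}/M_{k,j}$) shows that the gamma functions in the denominator grow faster than any exponential in $k$ or $j$; in fact the series $\sum_{k,j\geq 0} M_{k,j}$ converges (one can bound it by a product of two convergent series, since $M_{k,j}\leq C\rho^k\sigma^j$ for appropriate decaying factors once one applies $\Gamma[\nu_\ell+k+3/2]^{-1}\leq C_\ell (k!)^{-1}k^{-\nu_\ell-1/2}$ for large $k$). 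Hence $\sum_{k=0}^\infty\sum_{j=0}^k M_{k,j}<\infty$, independently of the point $(x,y)$ in the compact set, and the M-test applies.

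The uniform convergence now gives the claim: for any $\epsilon>0$ there exists $N_0$ such that $|\kappa_N^n(x,y)-\kappa_\origin^{n,\nu}(x,y)|<\epsilon$ for all $N\geq N_0$ and all $(x,y)$ in the compact set. I expect no serious obstacle in this argument — the only minor technical point is to produce a clean, self-contained estimate of the gamma-function growth that justifies termwise majorization and makes the absolute convergence transparent. Notably, unlike the proofs of propositions~\ref{thm:singular:meijer-kernel} or~\ref{thm:complex:origin}, no contour deformation or saddle-point analysis is required here, because the series representation of $\kappa_\origin^{n,\nu}$ is already entire and is obtained simply by sending the truncation to infinity.
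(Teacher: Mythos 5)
Your argument is correct and is exactly the paper's reasoning: the paper's proof simply observes that $\kappa_N^n(x,y)$ is the truncation of the series for $\kappa_\origin^{n,\nu}(x,y)$ at $k=N-1$ and asserts that the series "is clearly convergent for $x$ and $y$ in compact subsets of $\C$." You have supplied the details the paper leaves implicit — the explicit majorant $M_{k,j}$ and the Weierstrass M-test — but this is a fleshed-out version of the same one-line argument, not a different route.
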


\begin{proof}
It is evident from the finite-$N$ expression~\eqref{complex:H:pre-kernel-gauss} that the large-$N$ limit tends to~\eqref{complex:H:pre-kernel-origin} if the limit exists. The series~\eqref{complex:H:pre-kernel-origin} is clearly convergent for $x$ and $y$ in compact subsets of $\C$, thus the proposition follows.
\end{proof}

\begin{remark}
For the Ginibre ensemble ($n=1,\nu=0$), it is known that the series representation~\eqref{complex:H:pre-kernel-origin} may be evaluated as~\cite{Kanzieper:2002}
\begin{equation}
\frac12\kappa_\origin^{n=1,\nu=0}\Big(\frac x{\sqrt2},\frac y{\sqrt2}\Big)
=\frac1{\sqrt\pi}\exp\Big[\frac{x^2+y^2}2\Big]\erf\Big[\frac{x-y}{\sqrt2}\Big].
\label{complex:H:origin-error}
\end{equation}
The idea used in~\cite{Kanzieper:2002} was to first realise that the series was a solution to a pair of first order inhomogeneous differential equations. Solving these equations yielded~\eqref{complex:H:origin-error}. A similar idea was later used for a product of two matrices in~\cite{Akemann:2005}.

It also possible to show that the general expression~\eqref{complex:H:pre-kernel-origin} is a solution to a pair of differential equations,
\begin{multline}
\bigg[\frac{1}{z_\pm}\prod_{\ell=1}^n\Big(z_\pm\frac{\p}{\p z_\pm}+2\nu_\ell\Big)-2z_\pm\bigg]\kappa_\origin^{n,\nu}(z_+,z_-)\\
=\pm4\pi^{\frac{n-3}2}\MeijerG{1}{1}{1}{n+1}{0}{0,-2\nu_1,\ldots,-2\nu_n}{-2^nz_+z_-}.
\label{complex:H:diff-eq-evil}
\end{multline}
For $n=1$ and $\nu=0$, these equations reduces to 
\begin{align}
\bigg[\frac{\p}{\p z_\pm}-z_\pm\bigg]\frac12\kappa_\origin^{n=1,\nu=0}\Big(\frac{z_+}{\sqrt2},\frac{z_-}{\sqrt2}\Big)
=\pm\frac{\sqrt2}\pi e^{z_+z_-}, 
\end{align}
which are equations from~\cite{Kanzieper:2002}. However, it is highly non-trivial to use~\eqref{complex:H:diff-eq-evil} to evaluate the series~\eqref{complex:H:pre-kernel-origin} in the general case. We will not pursue this direction further.
\end{remark}

\begin{remark}
For comparison with numerical data, it is convenient to introduce the phase-averaged density (albeit the microscopic density is not rotational invariant). It follows from~\eqref{complex:H:permanent} with~\eqref{complex:H:norm+weight} and~\eqref{complex:H:h} that
\begin{equation}
\int_{-\pi}^\pi\, d\theta R_1(z)=
2^{n+1}\MeijerG{n}{0}{0}{n}{-}{2\nu_1,\ldots,2\nu_n}{2^nr^2}\sum_{k=0}^{\infty}\frac{(2^{2n}r^4)^{k}}{\prod_{\ell=1}^n\Gamma[2(\nu_\ell+k+1)]}.
\end{equation}
The sum is recognised as a Meijer $G$-function by first rewriting the gamma function using the multiplication formula~\eqref{special:gamma:multiplication} and then using~\eqref{special:meijer:hypergeometric} and~\eqref{special:meijer:hyper-meijer},
\begin{multline}
\int_{-\pi}^\pi\, d\theta R_1(r\,e^{i\theta})=
\frac{2\pi^{n/2}}{\prod_{\ell=1}^n2^{2\nu_\ell}}\MeijerG{n}{0}{0}{n}{-}{2\nu_1,\ldots,2\nu_n}{2^nr^2}\\
\times\MeijerG{1}{1}{1}{2n+1}{0}{0,-\nu_1,\ldots,-\nu_n,-\nu_1-\frac12,\ldots,-\nu_n-\frac12}{-r^4}.
\end{multline}
\end{remark}

\section{Products of real Ginibre matrices}
\label{sec:complex:R}

Finally, let us look at a product of real induced Ginibre matrices. Here, the main difficulty is that the Schur decomposition is incomplete; this means that there exists no (real) orthogonal similarity transformation which brings a real matrix to an upper-triangular form (unless all eigenvalues real). Here, we will follow the approach in~\cite{IK:2014} and write down the joint density for the eigenvalues as a function of the real eigenvalues and $2\times2$ real matrices with eigenvalues that can be either a real pair or a complex conjugate pair. The expression for the joint density simplifies considerably in the case where all eigenvalues are real, which was studied independently by Forrester in~\cite{Forrester:2014a}.  

\begin{proposition}\label{thm:complex:jpdf-R}
Let $N,K,L $ be non-negative integers such that $N=K+2L $ and let $Y_n$ be a product matrix constructed from a product of $n$ independent $N\times N$ real induced Ginibre matrices with parameters as in~\eqref{complex:intro:density-matrix}. Assume that $Y_n$ has at least $K$ real eigenvalues denoted by $x_i$ ($i=1,\ldots,K$), then the joint probability density function of the $K$ real eigenvalues may written as
\begin{multline}
\cP_\jpdf^{\beta=1}(x_1,\ldots,x_K,(Z)_{11},\ldots,(Z)_{L ,L })
=\frac{1}{\cZ^{\beta=1}_{K,L }}\prod_{a=1}^Kw_n^{\beta=1}(x_a)\prod_{b=1}^L  W_n^{\beta=1}((Z)_{bb}) \\
\times\prod_{1\leq i<j\leq K}\abs{x_j-x_i}
\prod_{1\leq i<j\leq L }\abs{\det[(Z)_{jj}^\transpose\otimes\one-\one\otimes (Z)_{ii}]}
\prod_{i=1}^K\prod_{j=1}^L \abs{\det[(Z)_{jj}-\one\otimes x_i]}.
\label{complex:R:jpdf}
\end{multline}
where the remaining $2L $ eigenvalues (real or complex) are the eigenvalues of the $2\times2$ real matrices $(Z)_{ii}$ ($i=1,\ldots,L $).
Here, the normalisation constants are given by
\begin{equation}
\cZ^{\beta=1}_{K,L }=K!L !\pi^{3nL /2}2^{5nL /2}2^{nN(N+1)/4}\prod_{i=1}^n\prod_{j=1}^N\Gamma[(j+\nu_i)/2],
\label{complex:R:normal}
\end{equation}
while the weight functions, $w_n^{\beta=1}(x)$ and $W_n^{\beta=1}(Z)$, are given by
\begin{align}
w_n^{\beta=1}(x)&=\MeijerG[\bigg]{n}{0}{0}{n}{-}{\frac{\nu_1}2,\ldots,\frac{\nu_n}2}{\frac1{2^n}x^2},
\qquad\text{and} \label{complex:R:weight1}\\
W_n^{\beta=1}(Z)&=\prod_{i=1}^n\int_{\R^{2\times 2}}dZ_i\,\det \Big(\frac{Z_i^\transpose Z_i}2\Big)^{\nu_i/2}e^{-\frac12\tr Z_i^\transpose Z_i}\delta^{2\times 2}(Z_n\cdots Z_1-Z).
\label{complex:R:weight2}
\end{align}
\end{proposition}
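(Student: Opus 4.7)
The plan is to mimic the derivations of propositions~\ref{thm:complex:jpdf-C} and~\ref{thm:complex:jpdf-H}, but with the real analogue of the generalised Schur decomposition from the appendix in place of the complex/quaternionic one. Concretely, I would parametrise each factor as $X_i = U_i (\Lambda_i + T_i) U_{i-1}^{-1}$, where each $U_i$ is a real orthogonal matrix (with $U_0 := U_n$), $T_i$ is strictly block upper-triangular with the same block pattern as $\Lambda_i$, and $\Lambda_i$ is block diagonal with $K$ scalar blocks $\lambda_{i,a}\in\R$ ($a=1,\dots,K$) and $L$ blocks $Z_{i,b}\in\R^{2\times2}$ ($b=1,\dots,L$). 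The fact that one must use a block form, rather than a genuine upper-triangular one, is precisely what makes the real case incomplete; the block pattern, however, is fixed once we condition on exactly $K$ real eigenvalues.

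Because the individual densities are isotropic and $Y_n$ is block upper-triangular under this simultaneous parametrisation, the real eigenvalues of $Y_n$ are $x_a = \prod_i \lambda_{i,a}$ and the remaining $2\times2$ diagonal blocks are $(Z)_{bb} = Z_{n,b}\cdots Z_{1,b}$. The Jacobian (proposition of the appendix) produces the block Vandermonde factor
\[
\prod_{a<a'}|x_{a'}-x_a|\prod_{b<b'}|\det[(Z)_{b'b'}^\transpose\otimes\one-\one\otimes(Z)_{bb}]|\prod_{a,b}|\det[(Z)_{bb}-\one\otimes x_a]|
\]
together with the flat measures on the $T_i$, on each $\lambda_{i,a}$ and on each $Z_{i,b}$, and the Haar measure on the orthogonal group quotient. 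Substituting into~\eqref{complex:intro:density-matrix} and~\eqref{complex:intro:jpdf-def} and using the isotropy of each $P_{\nu_i}^{\beta=1}$ to absorb the $U_i$ (they decouple from everything except the measure and contribute only to the normalisation), the integrals over the strictly block-upper-triangular parts $T_i$ also decouple and produce Gaussian volume factors.

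What remains after these decouplings factorises over the $K$ scalar chains and the $L$ block chains. Each scalar chain gives
\[
w_n^{\beta=1}(x)=\pi\prod_{i=1}^n\frac{1}{\sqrt{2\pi}}\int_\R d\lambda_i\,|\lambda_i|^{\nu_i}e^{-\lambda_i^2/2}\,\delta(\lambda_n\cdots\lambda_1-x),
\]
which, by the Mellin-transform argument used throughout this chapter (cf.~\eqref{prologue:scalar:Meijer}), collapses to the Meijer $G$-function~\eqref{complex:R:weight1}. Each $2\times2$ block chain, on the other hand, is the matrix analogue of the same convolution but now on the non-commutative semigroup of $2\times2$ real matrices, which has no scalar Mellin-transform shortcut; it is therefore left as the $n$-fold integral~\eqref{complex:R:weight2}. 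The normalisation~\eqref{complex:R:normal} is fixed by a separate bookkeeping of the volumes of the orthogonal-group quotient, the Gaussians over the $T_i$, and the factors $K!\,L!$ coming from relabellings of identical real eigenvalues and of identical $2\times2$ blocks.

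The main obstacle I expect is organising the Jacobian for the real block Schur decomposition and the associated $T_i$ integrals cleanly enough to read off the answer: unlike the complex and quaternionic cases, one must track mixed scalar/block contributions to the Jacobian, and the $T_i$ blocks couple $\lambda_{i,a}$ with $Z_{i,b}$ through strictly upper-triangular couplings whose Gaussian integration is what eventually absorbs the $\det(Z_{i,b}^\transpose Z_{i,b})$ factors into $W_n^{\beta=1}$. Once that combinatorics is handled, everything else is assembly.
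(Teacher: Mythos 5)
Your plan is essentially the paper's proof: both apply the generalised real Schur decomposition (proposition~\ref{thm:decomp:gen-schur-R}), read off the block-Vandermonde Jacobian, use isotropy to decouple the orthogonal integrations, integrate out the strictly block upper-triangular parts as pure Gaussians, and then recognise the scalar chain as a Mellin convolution giving the Meijer $G$-function while leaving the $2\times2$ chain as the $n$-fold integral $W_n^{\beta=1}$.

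One claim in your last paragraph is wrong, however, and contradicts your own earlier (correct) statement that the $T_i$ integrals ``decouple and produce Gaussian volume factors.'' You say the $T_i$ blocks couple $\lambda_{i,a}$ to $Z_{i,b}$ and that their Gaussian integration ``absorbs the $\det(Z_{i,b}^\transpose Z_{i,b})$ factors into $W_n^{\beta=1}$.'' In fact there is no such coupling and no such absorption. Inserting $X_i=U_i(\Lambda_i+T_i)U_{i-1}^{-1}$ into the induced density~\eqref{prologue:outline:induced}, the trace exponent cleanly splits, $\tr X_i^\transpose X_i=\tr\Lambda_i^\transpose\Lambda_i+\tr T_i^\transpose T_i$, because $\Lambda_i$ is block diagonal and $T_i$ strictly block upper-triangular, so the cross term has zero trace; and the determinantal prefactor factorises over the diagonal blocks of the block-triangular matrix,
\begin{equation*}
\det(X_i^\transpose X_i)=\bigl(\det(\Lambda_i+T_i)\bigr)^2=\prod_a\lambda_{i,a}^{2}\prod_b\det\bigl(Z_{i,b}^\transpose Z_{i,b}\bigr),
\end{equation*}
independently of $T_i$. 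So the $\det(Z_{i,b}^\transpose Z_{i,b})^{\nu_i/2}$ in $W_n^{\beta=1}$ comes directly from the density's determinantal weight restricted to the $2\times2$ diagonal blocks, while the $T_i$ integrals are fully decoupled Gaussians contributing only to $\cZ^{\beta=1}_{K,L}$ via the factor $(2\pi)^{N(N-1)/4-L/2}$ used in the normalisation bookkeeping.
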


\begin{proof}
The main idea of the proof is similar to that of proposition~\ref{complex:C:jpdf} and~\ref{complex:H:jpdf}. We use the generalised real Schur decomposition (proposition~\ref{thm:decomp:gen-schur-R}) to parametrise the matrices $X_i$ ($i=1,\ldots,n$) as
\begin{equation}
X_i=U_i\begin{bmatrix}\Lambda_i + T^{11}_i & T^{12}_i \\ 0 & Z_i + T^{22}_i \end{bmatrix}U_{i-1}^{-1},\qquad i=1,\ldots,n,
\end{equation}
with notation as in proposition~\ref{thm:decomp:gen-schur-R}.
The corresponding change of variables gives rise to a Jacobian
\begin{multline}
\prod_{i=1}^ndX_i=\prod_{1\leq i<j\leq K}\abs{\lambda_j-\lambda_i}
\prod_{1\leq i<j\leq L }\abs{\det[(Z)_{jj}^\transpose\otimes\one-\one\otimes (Z)_{ii}]}
\prod_{i=1}^K\prod_{j=1}^L \abs{\det[(Z)_{jj}-\one\otimes \lambda_i]} \\
\times \prod_{i=1}^ndT^{11}_idT^{12}_idT^{22}_id\mu(U_i)\prod_{j=1}^Kd\lambda_{i,j}\prod_{h=1}^L  dZ_{i,h},
\end{multline} 
where $d\mu(U_i)$ is the Haar measure on $\gO(N)/\gO(2)^L $, while $dT^{11}_i$, $dT^{12}$ and $dT^{22}$ are the flat measures on $\R^{K(K-1)/2}$, $(\R^{1\times 2})^{KL }$, and $(\R^{2\times 2})^{L (L -1)/2}$, respectively. Inserting this back into~\eqref{complex:intro:density-matrix} yields the joint density~\eqref{complex:R:jpdf} with weight functions
\begin{align}
w_n^{\beta=1}(x)&=\prod_{i=1}^n\int_{\R}d\lambda_i\,
\Big(\frac{\lambda_i^2}2\Big)^{\nu_i/2}e^{-\frac12\lambda_i^2}\delta(\lambda_n\cdots \lambda_1-x), 
\label{complex:R:weight1-proof}\\
W_n^{\beta=1}(Z)&=\prod_{i=1}^n\int_{\R^{2\times 2}}dZ_i\,\det \Big(\frac{Z_i^\transpose Z_i}2\Big)^{\nu_i/2}e^{-\frac12\tr Z_i^\transpose Z_i}\delta^{2\times 2}(Z_n\cdots Z_1-Z),
\end{align}
and normalisation constant
\begin{equation}
\frac1{\cZ_{K,L }^{\beta=1}}=\frac1{K!L !}\prod_{i=1}^n\frac{2^{\nu_i/2}}{Z_i^{\beta=1}}
\int d\mu(U_i)\int dT_i^{11}dT_i^{12}dT_i^{22}
e^{-\frac12\tr[ (T_i^{11})^\transpose T_i^{11}+(T_i^{12})^\transpose T_i^{12}+(T_i^{22})^\transpose T_i^{22}]},
\label{complex:R:normal-proof}
\end{equation}
where $Z_i^{\beta=1}$ is the normalisation constant for single induced real Ginibre matrix with charge $\nu_i$. 

It remains to perform the integrals in~\eqref{complex:R:weight1-proof} and in~\eqref{complex:R:normal-proof}. 

For $\nu_1=\cdots=\nu_n=0$, the weight function~\eqref{complex:R:weight1-proof} reduces to the (un-normalised) density for a product of $n$ independent Gaussian random variables, which was considered in section~\ref{sec:prologue:scalar}; the expression~\eqref{complex:R:weight1} is a straightforward generalisation.

In order to determine the normalisation constant, we first recall that~\cite{FBKSZ:2012}
\begin{equation}
Z_i^{\beta=1}=(2\pi)^{N^2/2}2^{N\nu_i/2}\prod_{j=1}^N\frac{\Gamma[(\nu_i+j)/2]}{\Gamma[j/2]}.
\end{equation}
The integral over the orthogonal group is well-known (see~\cite{Fischmann:2013} for a nice review)
\begin{equation}
\int d\mu(U_i)=\frac{\pi^{N(N+1)/4}}{(4\pi)^L }\prod_{j=1}^N\frac1{\Gamma[j/2]},
\end{equation}
while the integral over $T_i^{11}$, $T_i^{12}$, and $T_i^{22}$ is Gaussian, thus
\begin{equation}
\int dT_i^{11}dT_i^{12}dT_i^{22}
e^{-\frac12\tr[ (T_i^{11})^\transpose T_i^{11}+(T_i^{12})^\transpose T_i^{12}+(T_i^{22})^\transpose T_i^{22}]}
=(2\pi)^{N(N-1)/4-L /2}.
\end{equation}
Combining these formulae gives the normalisation~\eqref{complex:R:normal}.
\end{proof}

\begin{remark}\label{remark:complex:real}
The normalisation constants in proposition~\ref{thm:complex:jpdf-R} are such that
\begin{equation}
\P[\,\#\{\ev\in\R\}\geq K]= \prod_{i=1}^K\int_\R dx_i\prod_{j=1}^L  \int_{\R^{2\times 2}}dZ_j\,
\cP_\jpdf^{\beta=1}(x_1,\ldots,x_K,Z_1,\ldots,Z_L )
\end{equation}
gives the probability that there is at least $K$ real eigenvalues, while
\begin{equation}
\P[\,\#\{\ev\in\R\}= K]=  \prod_{i=1}^K\int_\R dx_i\prod_{j=1}^L  \int_{\det Z_j>(\tr Z_j/2)^2}dZ_j\,
\cP_\jpdf^{\beta=1}(x_1,\ldots,x_K,Z_1,\ldots,Z_L )
\end{equation}
gives the probability that there are exactly $K$ real eigenvalues. Normalisation on the full matrix space is ensured by the sum rule
\begin{equation}
\sum_{K=0}^N\P[\,\#\{\ev\in\R\}= K]=1.
\end{equation}
\end{remark}

\begin{remark}
To get the complete expression for the joint density of the eigenvalues, it remains to express the weight function~\eqref{complex:R:weight2} in terms of the eigenvalues only. One way to achieve this is to use the relation between the eigen- and singular values of a $2\times2$ matrix as was done in~\cite{IK:2014}. However, we will not repeat this derivation here, since it gives rise to a rather involved $n$-fold integral representation for the two-point weight function, which seems too complicated for further computations. It remains an open problem to obtain a more compact expression for the two-point weight function; for a product of two Ginibre matrices such an expression was obtained in~\cite{APS:2009,APS:2010}.  
\end{remark}

\subsection{Probability of a purely real spectrum}

In general, the spectrum of a real asymmetric random matrix will consist of both real and complex eigenvalues. Nonetheless, we will see in chapter~\ref{chap:lyapunov} that it will be fruitful to consider the special case where the spectrum is purely real. Thus, it is intriguing to ask: \emph{What is the probability that all eigenvalues are real?} For square matrices (i.e. $\nu_1=\cdots=\nu_n=0$) this question was answered by Forrester in~\cite{Forrester:2014a}. Here, we present the general result for rectangular matrices (for even matrix dimension this result was stated without proof in the review~\cite{AI:2015}).

\begin{proposition}\label{thm:complex:real}
Given a product matrix, $Y_n$, with parameters as in proposition~\ref{thm:complex:jpdf-R}, then the probability that all eigenvalues are real is
\begin{equation}
\P[\,\#\{\ev\in\R\}= N]= \det
\begin{bmatrix}
M_{1,1} & \ldots & M_{1,\frac N2} \\
\vdots & & \vdots \\
M_{\frac N2,1} & \ldots & M_{\frac N2,\frac N2} \\
\end{bmatrix}
\label{complex:R:real-even}
\end{equation}
for $N$ even and
\begin{equation}
\P[\,\#\{\ev\in\R\}= N]= \det
\begin{bmatrix}
M_{1,1} & \ldots & M_{1,\frac {N-1}2} & m_{1} \\
\vdots & & \vdots & \vdots \\
M_{\frac {N+1}2,1} & \ldots & M_{\frac {N+1}2,\frac {N-1}2} & m_{\frac{N+1}2} \\
\end{bmatrix}
\label{complex:R:real-odd}
\end{equation}
for $N$ odd, where
\begin{align}
M_{ij}:=
\frac{\MeijerG[\bigg]{n+1}{n}{n+1}{n+1}{\frac32-\frac{\nu_1}2-i,\ldots,\frac32-\frac{\nu_n} 2-i,1}{\frac{\nu_1}2+j,\ldots,\frac{\nu_n}2+j,0}{1}}
{\prod_{m=1}^n\Gamma[(\nu_m+2j-1)/2]\Gamma[(\nu_m+2j)/2]},\quad
m_{j}:=\prod_{i=1}^n\frac{\Gamma[(\nu_i+2j-1)/2]}{\Gamma[(\nu_i+N)/2]}.
\end{align}
(The Meijer $G$-function which appears in the definition of $M_{ij}$ is continuous at unity.)
\end{proposition}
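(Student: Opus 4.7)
The plan is to specialise Proposition~\ref{thm:complex:jpdf-R} to the sector with $K=N$ real eigenvalues (so $L=0$), apply de~Bruijn's integration formula to the resulting integral, and then exploit the fact that the one-point weight $w_n^{\beta=1}$ is even in $x$ to collapse the ensuing Pfaffian to a determinant of half the size. Concretely, by Remark~\ref{remark:complex:real} the quantity of interest is
\[
P_N = \frac{1}{\cZ^{\beta=1}_{N,0}} \int_{\R^N} \prod_{a=1}^N w_n^{\beta=1}(x_a) \prod_{1\le i<j\le N}\abs{x_j-x_i}\,dx_1\cdots dx_N,
\]
and after symmetrising, restricting to $x_1<\cdots<x_N$, and expanding the Vandermonde as $\det[x_j^{i-1}]$, de~Bruijn's identity~\cite{deBruijn:1955} produces
\[
P_N=\frac{N!}{\cZ^{\beta=1}_{N,0}}\pf[\widetilde A_{ij}],\qquad
\nu_{ij}:=\int_{\R^2}\sign(y-x)w_n^{\beta=1}(x)w_n^{\beta=1}(y)\,x^{i-1}y^{j-1}\,dx\,dy,
\]
where the Pfaffian is taken over $[\nu_{ij}]_{i,j=1}^N$ when $N$ is even and over the augmented antisymmetric matrix of size $(N+1)\times(N+1)$ with extra row/column $c_i=\int_\R w_n^{\beta=1}(x)x^{i-1}\,dx$ (and corner zero) when $N$ is odd.

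Because $w_n^{\beta=1}(-x)=w_n^{\beta=1}(x)$ (the weight~\eqref{complex:R:weight1} depends only on $x^2$), the substitution $(x,y)\mapsto(-x,-y)$ yields $\nu_{ij}=(-1)^{i+j-1}\nu_{ij}$, so $\nu_{ij}$ vanishes whenever $i+j$ is even, and similarly $c_i=0$ whenever $i$ is even. Permuting indices into the order (odd, even, extra) therefore turns the full antisymmetric Pfaffian matrix into a block anti-diagonal form $\bigl(\begin{smallmatrix}0&C\\-C^\transpose&0\end{smallmatrix}\bigr)$, where for even $N$ the off-diagonal block is $C=[\nu_{2a-1,2b}]_{a,b=1}^{N/2}$ and for odd $N$ it is the $\frac{N+1}{2}\times\frac{N+1}{2}$ matrix obtained by appending the single non-vanishing column $(c_{2a-1})_{a=1}^{(N+1)/2}$ of moments to $[\nu_{2a-1,2b}]_{a=1,\ldots,(N+1)/2,\,b=1,\ldots,(N-1)/2}$. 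The identity $\pf\bigl(\begin{smallmatrix}0&C\\-C^\transpose&0\end{smallmatrix}\bigr)=\pm\det C$ then produces the claimed determinantal formulae~\eqref{complex:R:real-even} and~\eqref{complex:R:real-odd}, with the overall sign fixed by the parity of the reordering permutation and the positivity of $P_N$.

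It remains to identify the entries with the stated Meijer $G$-functions. The single moments $c_{2j-1}=\int_\R w_n^{\beta=1}(x)x^{2j-2}\,dx$ can be evaluated by changing variables to $t=x^2/2^n$ and applying the Meijer $G$-moment identity~\eqref{special:meijer:meijer-moment}, which yields products of $\Gamma[(\nu_\ell+2j-1)/2]$; matching against the $\Gamma[(\nu_\ell+N)/2]$ factors coming from $\cZ^{\beta=1}_{N,0}$ in~\eqref{complex:R:normal} produces the explicit expression for $m_j$. For $\nu_{2i-1,2j}$, split $\R^2$ into four quadrants and use the evenness of $w_n^{\beta=1}$: the two ``mixed-sign'' quadrants contribute a factorised term $\pm M_{2i-2}M_{2j-1}$ (with $M_k=\int_0^\infty w_n^{\beta=1}(x)x^k\,dx$), while the two ``same-sign'' quadrants combine into $2\int_0^\infty\!\int_0^\infty\sign(y-x)w_n^{\beta=1}(x)w_n^{\beta=1}(y)(x^{2i-2}y^{2j-1}-x^{2j-1}y^{2i-2})\,dx\,dy$. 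The inner $x$-integral is handled by splitting at $x=y$ and writing each piece as a Mellin--Barnes contour integral using the representation~\eqref{special:meijer:meijer-def}; the outer $y$-integral is then a Meijer $G$-moment of the form~\eqref{special:meijer:meijer-moment}, after which the two surviving Mellin--Barnes contours collapse, via~\eqref{special:meijer:int-meijer-meijer}, into a single $G^{n+1,n}_{n+1,n+1}$ evaluated at argument $1$.

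The main obstacle in the programme is precisely this last computation: the integral defining $\nu_{2i-1,2j}$ naturally produces a double Mellin--Barnes representation whose reduction to the specific Meijer $G$-function of~\eqref{complex:R:real-even} requires careful bookkeeping of gamma-function shifts and contour deformations (in particular, tracking which parameter bundles go into the upper versus lower Meijer $G$-parameter lists), together with the use of Gauss' multiplication formula~\eqref{special:gamma:multiplication} to harmonise the half-integer shifts appearing in the final expression. The denominator $\prod_m\Gamma[(\nu_m+2j-1)/2]\Gamma[(\nu_m+2j)/2]$ of $M_{ij}$ arises as the combination of the Meijer $G$-moment prefactors with the $\prod_{i=1}^n\prod_{j=1}^N\Gamma[(j+\nu_i)/2]$ factor in $\cZ^{\beta=1}_{N,0}$ after the parity decomposition of the product over $j$ into its even-$j$ and odd-$j$ parts.
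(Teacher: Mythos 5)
Your overall strategy matches the paper's: specialise Proposition~\ref{thm:complex:jpdf-R} to the sector $K=N$, $L=0$, invoke de~Bruijn (equivalently, Mehta's integration over alternate variables) to obtain a Pfaffian of the bimoments $\cI_{ij}=\int\!\!\int\sign(y-x)\,w_n^{\beta=1}(x)w_n^{\beta=1}(y)\,x^{i-1}y^{j-1}\,dx\,dy$, use the evenness of $w_n^{\beta=1}$ to kill the even-parity entries, and thereby collapse the Pfaffian to a determinant of half the size whose entries are to be identified with Meijer $G$-functions. The Pfaffian-to-determinant reduction and the evaluation of the single moments $I_{2j-1}$ are handled correctly.

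Where your route diverges from the paper's, it runs into trouble. To evaluate $\cI_{2i-1,2j}$, the paper uses one clean parity observation — $w_n^{\beta=1}(x)x^{2i-2}$ is even while $w_n^{\beta=1}(y)y^{2j-1}$ is odd, so the reflection $(x,y)\mapsto(-x,-y)$ immediately folds the $\sign$ away and gives $\cI_{2i-1,2j}=2\int_{-\infty}^\infty dx\int_x^\infty dy\,w(x)w(y)x^{2i-2}y^{2j-1}$, a form that is then evaluated directly via the Meijer $G$-convolution identities~\eqref{special:meijer:beta-conv} and~\eqref{special:meijer:int-meijer-meijer}. Your quadrant decomposition is more cumbersome and, as written, is also not right. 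With $i-1$ even and $j-1$ odd, the two mixed-sign quadrants each contribute $+M_{i-1}M_{j-1}$ (not ``$\pm$''; they add rather than cancel), giving a factorised piece $2M_{i-1}M_{j-1}$; and the two same-sign quadrants give $2\int_0^\infty\!\int_0^\infty\sign(y-x)w(x)w(y)\,x^{2i-2}y^{2j-1}\,dx\,dy$, whereas your antisymmetrised expression $2\int_0^\infty\!\int_0^\infty\sign(y-x)w(x)w(y)\bigl(x^{2i-2}y^{2j-1}-x^{2j-1}y^{2i-2}\bigr)$ equals twice that (the two monomials contribute equally after swapping $x\leftrightarrow y$, picking up a sign from $\sign$). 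More importantly, the proof's entire content is showing that these pieces assemble into the specific $G^{n+1,n}_{n+1,n+1}$ at argument $1$ appearing in $M_{ij}$; you acknowledge this is ``the main obstacle'' and sketch a Mellin--Barnes collapse without carrying it out, whereas the paper's folded representation lets the whole double integral be done in one shot. As it stands the proposal is an outline rather than a proof, and the quadrant bookkeeping would need to be corrected before the outline can be made to close.
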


\begin{proof}
We follow the same steps as in~\cite{Forrester:2014a}: From proposition~\ref{complex:R:jpdf}, we know that the probability that all eigenvalues are real is given by
\begin{equation}
 \P[\,\#\{\ev\in\R\}= N]=\frac1{\cZ_{N,0}^{\beta=1}}\prod_{a=1}^N\int_\R dx_a w_n^{\beta=1}(x_a)\prod_{1\leq i<j\leq N}\abs{x_j-x_i}.
\end{equation}
For $N$ even, it follows from the method of integration over alternate variables~\cite{Mehta:2004,Forrester:2010}
\begin{equation}
\P[\,\#\{\ev\in\R\}= N]=
\frac{2^{N/2}N!}{\cZ_{N,0}^{\beta=1}}\pf_{1\leq i,j\leq N}[\cI_{ij}]
\label{complex:R:real-proof-1}
\end{equation}
with
\begin{equation}
\cI_{ij}:=\int_{\R} dx\int_\R dy\, w_n^{\beta=1}(x)w_n^{\beta=1}(y)x^{i-1} y^{j-1}\sign(y-x).
\label{complex:R:real-cI}
\end{equation}
The weight function, $w_n^{\beta=1}(x)$ is an even function, thus it follows that $\cI_{2i,2j}=\cI_{2i-1,2j-1}=0$. This allows us to write the Pfaffian in~\eqref{complex:R:real-proof-1} as a determinant,
\begin{equation}
\P[\,\#\{\ev\in\R\}= N]=
\frac{2^{N/2}N!}{\cZ_{N,0}^{\beta=1}}\det_{1\leq i,j\leq N/2}[\cI_{2i-1,2j}].
\label{complex:R:real-even-pre}
\end{equation}
Following the usual modification for odd $N$ (see~\cite{Mehta:2004}) gives
\begin{equation}
\P[\,\#\{\ev\in\R\}= N]=\frac{2^{(N+1)/2}N!}{\cZ_{N,0}^{\beta=1}}
\det\bigg[\{\cI_{2i-1,2j}\}^{i=1,\ldots, \frac{N+1}2}_{j=1,\ldots, \frac{N-1}2}\ \bigg\vert\ \{I_{2i-1}\}_{i=1,\ldots,\frac{N+1}2}\bigg]
\label{complex:R:real-odd-pre}
\end{equation}
with
\begin{equation}
I_i:=\int_{\R} dx\, w_n^{\beta=1}(x) x^{i-1}.
\label{complex:R:real-I}
\end{equation}
The formulae~\eqref{complex:R:real-even-pre} and~\eqref{complex:R:real-odd-pre} have the same structure as~\eqref{complex:R:real-even} and~\eqref{complex:R:real-odd}, hence it only remains to compute the integrals~\eqref{complex:R:real-cI} and~\eqref{complex:R:real-I}.

After a simple change of variables, it follows from~\eqref{special:meijer:meijer-moment} that
\begin{equation}
I_{2j-1}=\frac{2^{n(2j-1)/2}}2\prod_{i=1}^n \Gamma[(\nu_i+2j-1)/2].
\label{complex:R:proof-int-1}
\end{equation}
To evaluate~\eqref{complex:R:real-cI}, we first note that $w_n^{\beta=1}(y)y^{2j-1}$ is an odd function, thus
\begin{equation}
\cI_{ij}=2\int_{-\infty}^\infty dx\int_x^\infty dy\, w_n^{\beta=1}(x)w_n^{\beta=1}(y)x^{i-1} y^{j-1}.
\end{equation}
The two remaining integrals can be performed using~\eqref{special:meijer:beta-conv} and~\eqref{special:meijer:int-meijer-meijer}, and it is seen that
\begin{equation}
\cI_{2i-1,2j}=\frac{2^{n(2i+2j-1)/2}}{2}
\MeijerG{n+1}{n}{n+1}{n+1}{\frac32-\frac{\nu_1}2-i,\ldots,\frac32-\frac{\nu_n}2-i,1}{\frac{\nu_1}2+j,\ldots,\frac{\nu_n}2+j,0}{1}.
\label{complex:R:proof-int-2}
\end{equation}
The proposition follows after inserting~\eqref{complex:R:proof-int-2} and~\eqref{complex:R:proof-int-1} into~\eqref{complex:R:real-even-pre} and~\eqref{complex:R:real-odd-pre}.
\end{proof}

\begin{remark}
Note that for $\nu_1=\cdots=\nu_n=0$, proposition~\ref{thm:complex:real} reduces to the result for square matrices presented in~\cite{Forrester:2014a}. Moreover, for $n=1$ and $\nu=0$ the Meijer $G$-functions are easily evaluated and it is seen that the probability that the spectrum is purely real is $2^{-N(N-1)/4}$ in agreement with~\cite{Edelman:1997}. It is slightly more challenging to evaluate the Meijer $G$-function for $n\geq2$ (see~\cite{Kumar:2015} for $n=2$). Table~\ref{table:complex:real} lists the numerical evaluation of~\eqref{complex:R:real-even} for various values of $N$ and $n$, which gives the probability of finding a purely real spectrum. The table suggests that the probabilities increase when $n$ increases, but decrease when $N$ increases. We will return to this question in chapter~\ref{chap:lyapunov} where we will look at the probability of finding a purely real spectrum for $n\to\infty$.   
\end{remark}

\begin{table}[tp]
\centering
\caption{Numerical evaluations of the probability that all eigenvalues are real~\eqref{complex:R:real-even} for various values of matrix dimension, $N$, and the number of factors, $n$, with charges $\nu_1=\cdots=\nu_n=0$ (top panel) and $\nu_1=\cdots=\nu_n=1$ (bottom panel).}
\label{table:complex:real}
\begin{tabular}{>{$}l<{$} | >{$}l<{$} >{$}l<{$} >{$}l<{$} >{$}l<{$} >{$}l<{$}}
\hline\hline
\nu_i=0 & n=2 & n=4 & n=6 & n=8 & n=10 \\
\hline
% N=2 & 0.707 & 0.785 & 0.836 & 0.872 & 0.898 \\
% N=4 & 0.125 & 0.242 & 0.339 & 0.418 & 0.485 \\
% N=6 & 5.52\times 10^{-3} & 2.89\times 10^{-2} & 6.58\times 10^{-2} & 0.110 & 0.156 \\
% N=8 & 6.10\times 10^{-5} & 1.32\times 10^{-3} & 6.11\times 10^{-3} & 1.57\times 10^{-2} & 2.99\times 10^{-2} \\
% N=10 & 1.69\times 10^{-7} & 2.32\times 10^{-5} & 2.70\times 10^{-4} & 1.22\times 10^{-3} & 3.43\times 10^{-3} \\
N=2 &  0.785 & 0.872 & 0.919 & 0.947 & 0.965 \\
N=4 &  0.242 & 0.418 & 0.540 & 0.629 & 0.696 \\
N=6 &  2.89\times 10^{-2} & 0.110 & 0.201 & 0.288 & 0.365 \\
N=8 &  1.32\times 10^{-3} & 1.57\times 10^{-2} & 4.79\times 10^{-2} & 9.18\times 10^{-2} & 0.141 \\
N=10 &  2.32\times 10^{-5} & 1.22\times 10^{-3} & 7.30\times 10^{-3} & 2.05\times 10^{-2} & 4.06\times 10^{-2} \\
\hline\hline
\nu_i=1 & n=2 & n=4 & n=6 & n=8 & n=10 \\
\hline
% N=2 & 0.646 & 0.705 & 0.747 & 0.780 & 0.806 \\
% N=4 & 7.54\times 10^{-2} & 0.152 & 0.222 & 0.285 & 0.341 \\
% N=6 & 2.00\times 10^{-3} & 1.20\times 10^{-2} & 3.07\times 10^{-2} & 5.58\times 10^{-2} & 8.50\times 10^{-2} \\
% N=8 & 1.26\times 10^{-5} & 3.56\times 10^{-4} & 2.00\times 10^{-3} & 5.93\times 10^{-3} & 1.27\times 10^{-2} \\
% N=10 & 1.93\times 10^{-8} & 3.98\times 10^{-6} & 6.17\times 10^{-5} & 3.44\times 10^{-4} & 1.13\times 10^{-3} \\
N=2 & 0.705 & 0.780 & 0.828 & 0.863 & 0.890 \\
N=4 & 0.152 & 0.285 & 0.391 & 0.476 & 0.546 \\
N=6 & 1.20\times 10^{-2} & 5.58\times 10^{-2} & 0.117 & 0.182 & 0.246 \\
N=8 & 3.56\times 10^{-4} & 5.93\times 10^{-3} & 2.22\times 10^{-2} & 4.86\times 10^{-2} & 8.19\times 10^{-2} \\
N=10 & 3.98\times 10^{-6} & 3.44\times 10^{-4} & 2.71\times 10^{-3} & 9.11\times 10^{-3} & 2.03\times 10^{-2} \\
\hline\hline
\end{tabular}
\end{table}

\section{Summary, discussion and open problems}
\label{sec:complex:discuss}

In this chapter, we showed that it is possible find explicit expressions for the joint probability density functions for eigenvalues of a product of an arbitrary number induced Ginibre matrices with arbitary matrix dimension (proposition~\ref{thm:complex:jpdf-C}, \ref{thm:complex:jpdf-H}, and~\ref{thm:complex:jpdf-R}). To this end, it should be mentioned that similar formulae have also been obtained for mixed products involving truncated unitary matrices, and inverse Ginibre matrices~\cite{ARRS:2013,IK:2014,ABKN:2014}, see~\cite{AI:2015} for a review.

In particular, we have verified that the exact formulae for product ensembles reproduce the (universal) macroscopic density previously obtained using other methods~\cite{BJW:2010,BJLNS:2010,GT:2010,OS:2011,GKT:2014}. In fact, the ``product law'' for the macroscopic density is known to be even more general. It was shown in~\cite{GNT:2014elliptic,ORSV:2014}, that the ``product law'' is reobtained when considering products of elliptic matrices as long as the product has at least two factors (see the aforementioned references for precise statements). It would be interesting to see whether it possible to obtain exact expressions for the joint density of the eigenvalues for a product of elliptic Gaussian matrices with arbitrary matrix dimension. We recall that an $N\times N$ random matrix is said to be an elliptic Gaussian matrix if it is distributed according to the density~\cite{SCSS:1988}
\begin{equation}
P(E)=\frac1{Z}\exp\Big[-\frac1{1-\tau^2}\tr\big(EE^\dagger-\frac\tau2(E^2+E^{\dagger\,2})\big)\Big],
\label{complex:discuss:elliptic}
\end{equation}
where $Z$ is a normalisation constant and $\tau\in[-1,1]$ is a parameter such that $E$ is a Ginibre matrix for $\tau=0$ and a Wigner-Dyson matrix (i.e. Hermitian) for $\tau=1$. Figure~\ref{fig:complex:scatter-elliptic} shows scatter plots for products of complex elliptic Gaussian matrices with $\tau=1$ (left column), $\tau=1/2$ (centre column), and $\tau=0$ (right column). The first row on the figure shows the eigenvalue distribution for a single matrix and we recognise the semi-circle law (the eigenvalues are real and in the interval $[-2,2]$), the elliptic law, and the circular law. At first glance the spectra seem rotational symmetric in the second row, which show products of two matrices (in agreement with~\cite{GNT:2014elliptic,ORSV:2014}), but a closer look reveals that the rotational invariance is not exactly true for finite matrix dimension. In particular, we note that the eigenvalues of a product of two Hermitian matrices are either real or come as a complex conjugate pair, which follows 
from the fact that given two $N\times N$ matrices, $E_1$ and $E_2$, then
\begin{equation}
\det(E_1E_2-\lambda)=\det(E_2E_1-\lambda).
\end{equation}
This symmetry between eigenvalues is similar to the symmetry between eigenvalues stemming from the real Ginibre ensemble. It is clear that some non-trivial scaling behaviour must occur if we were to study the microscopic correlations rather than the macroscopic densities. Perhaps, the easiest way to approach these microscopic scaling regimes is to start with elliptic Gaussian ensembles~\eqref{complex:discuss:elliptic}.

\begin{figure}[tp]
\centering

\includegraphics[width=.3\textwidth]{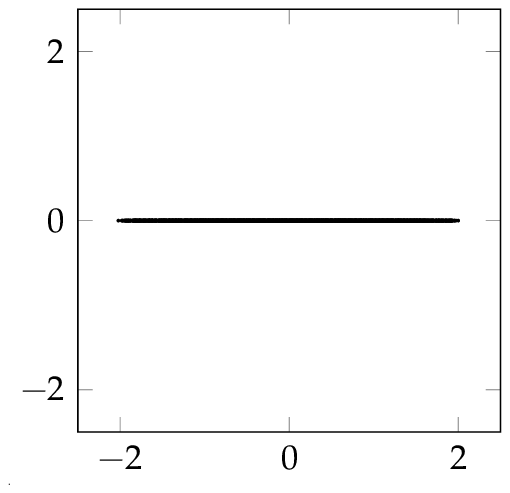}
\includegraphics[width=.3\textwidth]{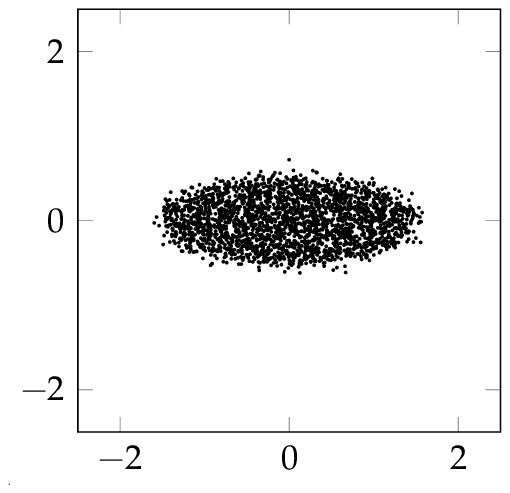}
\includegraphics[width=.3\textwidth]{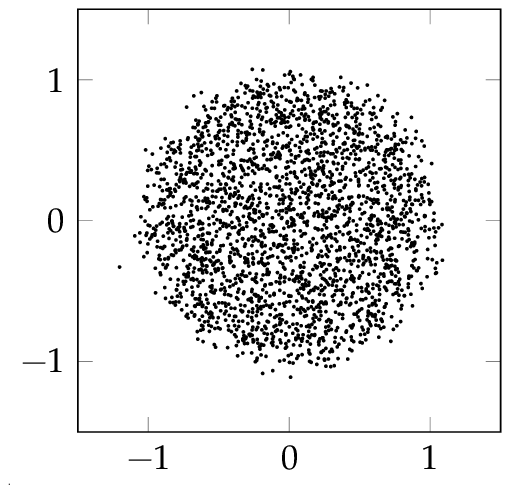}

\includegraphics[width=.3\textwidth]{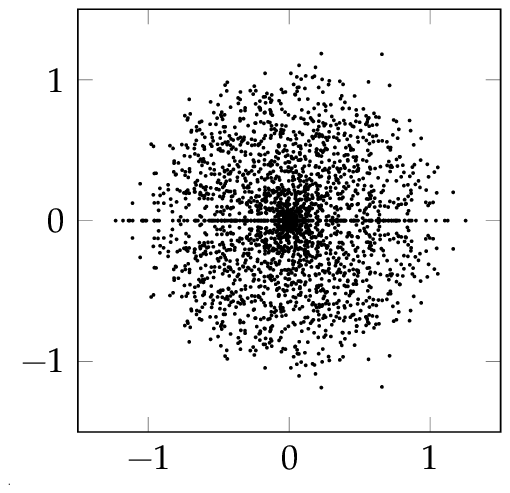}
\includegraphics[width=.3\textwidth]{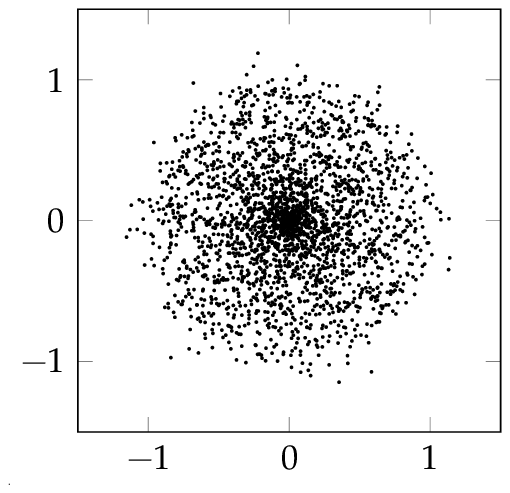}
\includegraphics[width=.3\textwidth]{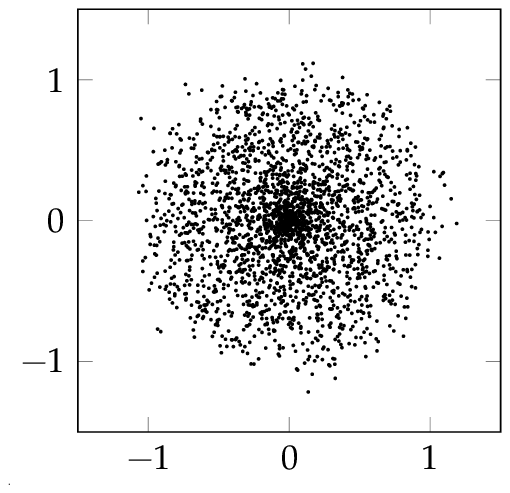}

\caption{The figure shows scatter plots for $50$ realisations for products of one (top row) and two (bottom row) $50\times 50$ complex elliptic Gaussian random matrices with $\tau=1$ i.e. Hermitian matrices (left column), $\tau=1/2$ (centre column), and $\tau=0$ i.e. Ginibre matrices (right column). Note that the top left and centre panels have different units than the rest.}
\label{fig:complex:scatter-elliptic}
\end{figure}
\begin{figure}[bp]
\centering
\includegraphics{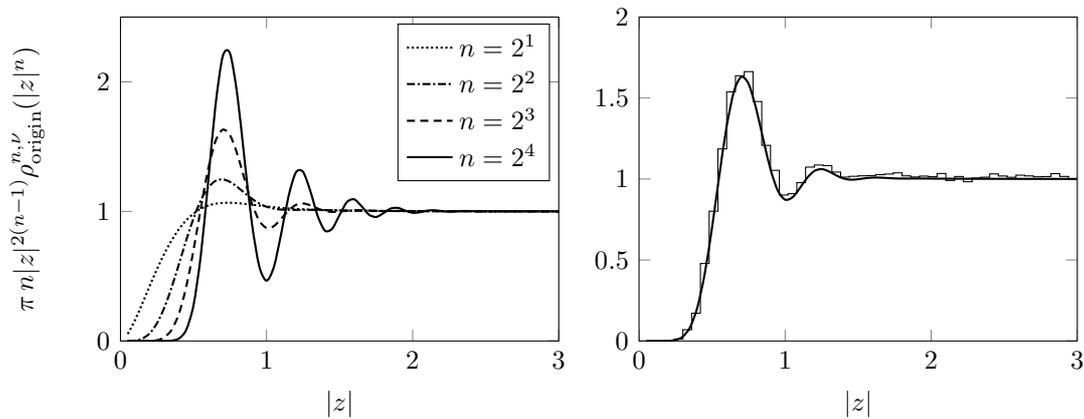}
\caption{The figure shows the microscopic density as described by the Meijer kernel, $\rho_\origin^{n,\nu}(z)=K_\origin^{n,\nu}(z,z)$. The left panel shows the density for $n=2,4,8,16$; for comparison between densities, we have unfolded the densities such that they all tend to unity as their argument tends to infinity. The right panel compares the $n=8$ case with numerical data from an ensemble of $20\,000$ realisations of a product of $8$ independent $100\times100$ complex Ginibre matrices.}
\label{fig:complex:origin}
\end{figure}

While formulae for the macroscopic densities of products of induced Ginibre matrices were obtained prior to the exactly solvable models presented in this chapter, the microscopic scaling regimes were complete unknown. In fact, so far the only results regarding microscopic correlations for the eigenvalues of product matrices have been obtained using the exactly solvable matrix models. 

In this chapter, we have seen that the microscopic correlations of the eigenvalues in the bulk and at the edge of spectrum for a product of (complex) induced Ginibre matrices are identical to those of the standard Ginibre ensemble (and they are, in this sense, universal); equivalent results hold for other product ensembles, see~\cite{ABKN:2014,LW:2014}. At the origin a new family of microscopic correlation kernels arose. These new kernels are most conviniently expressed in terms of Meijer $G$-functions (see proposition~\ref{thm:complex:jpdf-C}). This is very similar to the new family of kernels, which arose at the hard edge for the squared singular values (see proposition~\ref{thm:singular:meijer-kernel}). In fact, the two families of kernels have a striking resemblance, 
\begin{align}
K_\origin^{n,\nu}(x,y)&=\frac1\pi\MeijerG{n}{1}{1}{n+1}{0}{\nu_1,\ldots,\nu_n,0}{\abs{x}^2}\MeijerG{1}{1}{1}{n+1}{0}{0,-\nu_1,\ldots,-\nu_n}{-xy^*}, 
\label{complex:discuss:origin} \\
K_\meijer^{n,\nu}(x,y)&=\int_0^1du\,\MeijerG{n}{0}{0}{n+1}{-}{\nu_1,\ldots,\nu_n,0}{uy}\MeijerG{1}{0}{0}{n+1}{-}{0,-\nu_1,\ldots,-\nu_n}{ux}.
\end{align}
(In~\eqref{complex:discuss:origin} we have used~\eqref{special:meijer:reduce-1} to rewrite~\eqref{complex:C:kernel-origin}.) It is, of course, natural to expect that there is some type of relation between these two families of kernels, since they both consider microscopic spectral correlations near the origin of a product of induced Ginibre matrices (that is correlations for eigenvalues and squared singular values, respectively). However, it remains unclear whether some more precise statement can be made. The structure of the microscopic correlations near the origin is visualised on figure~\ref{fig:complex:origin}. We note that (similar to figure~\ref{fig:singular:meijer}) the amplitude of the oscillations in the microscopic density seems to increase as the number of factors increases. We investigate the large-$n$ limit in chapter~\ref{chap:lyapunov}.

\chapter{Stability and Lyapunov exponents}
\label{chap:lyapunov}

Let us for a brief remark return to the dynamical system mentioned in section~\ref{sec:moti:stab}. We consider a system in the variables $u(n)=\{u_1(n),\ldots,u_N(n)\}$ which evolves according to $u_i(n+1)=f_i[u(n)]$, where each $f_i$ is some smooth function. Such systems arise in a vast variety of contexts in the physical, biological and social sciences. In order to study the separation between two nearby trajectories, we introduce the tangent vector $\delta u(t)$ which may be regarded as a small perturbation. This gives rise to the linearised problem
\begin{equation}
\delta u(n+1)=X_n\,\delta u(n) \qquad\text{with}\qquad (X_n)_{ij}:= \frac{\p f_i(u)}{\p u_j}\bigg\vert_{u(n)},
\label{lya:intro:dynamical}
\end{equation}
where $X_n$ is known as the stability matrix at ``time'' $n$. The evolution of the system is described by a product matrix,
\begin{equation}
Y_n:= X_nX_{n-1}\cdots X_1,\qquad n\in\N,
\label{lya:intro:product}
\end{equation}
which is obtained by evaluating the stability matrix along the trajectory. We will typically be interested in the large-$n$ limit.

The simplest examples occur when we consider the evolution in the vicinity of a fixed point. In these cases the stability matrices $X_i$ ($i=1,\ldots,n$) become delta-correlated and we have $Y_n=(X_1)^n$. For many large complex systems a reasonable approximation can be obtained by choosing $X_1$ as a random matrix with symmetries determined by the underlying theory~\cite{GA:1970}. An example is provided by the evolution of large ecosystems where minimal requirements demand that the stability matrix is an asymmetric real matrix~\cite{May:1972}.

More challenging problems occur when studying evolution away from the fixed points. A first non-trivial approximation is provided by choosing the stability matrices $X_i$ ($i=1,\ldots,n$) as independent random matrices~\cite{Benettin:1984,PV:1986}, see also~\cite{CPV:1993}. This approximation is reasonable in the limit of strong chaoticity where the stability matrices are only weakly correlated.

In this chapter we will consider one of the simplest toy models for such evolutions. The stability matrices $X_i$ ($i=1,\ldots,n$) are assumed to be independent $N\times N$ induced Ginibre matrices. The benefit of considering these models is, of course, that we (from previous chapters) have explicit formulae for joint densities valid for an arbitrary number of factors and arbitrary matrix dimension (proposition~\ref{thm:singular:jpdf-determinantal}, \ref{thm:complex:jpdf-C}, \ref{thm:complex:jpdf-H}, and~\ref{thm:complex:jpdf-R}). We will see that these exact formulae give new insights regarding such stability problems.

Contrary to the descriptions given in chapter~\ref{chap:singular} and~\ref{chap:complex}, our main focus in this chapter will be directed towards the limit $n\to\infty$ rather than $N\to\infty$. It is known from Oseledec's multiplicative ergodic theorem~\cite{Oseledec:1968,Raghunathan:1979} that if $Y_n:=X_n\cdots X_1$ where $X_i$ ($i=1,\ldots,n$) are independent $N\times N$ random matrices with measure $P(dX)$ such that $\lim_{n\to\infty}(\log\tr Y_nY_n^\dagger)/n<\infty$, then we have almost sure convergence
\begin{equation}
\lim_{n\to\infty} (Y_nY_n^\dagger)^{1/2n}=e^H,
\label{lya:intro:oseledec}
\end{equation}
where $H$ is a (non-random) Hermitian matrix with eigenvalues $\mu_1\leq\cdots\leq\mu_n$. Here, $\mu_i$ ($i=1,\ldots,n$) are known as the Lyapunov exponents. This can, of course, also be formulated as a statement about the singular values. If $x_1(n)\leq\cdots\leq x_N(n)$ denote the eigenvalues of the Wishart product matrix $Y_nY_n^\dagger$ (i.e. the squared singular values), then Oseledec's theorem tells that asymptotically we have
\begin{equation}
x_k(n)\sim e^{2n\,\mu_k},\qquad k=1,\ldots,N.
\label{lya:intro:singular-asymp}
\end{equation}
Due to this asymptotic relation, we introduce the finite-$n$ (or finite time) Lyapunov exponents defined by
\begin{equation}
\lambda_k(n):=\frac{\log x_k(n)}{2n},\qquad k=1,\ldots,N.
\label{lya:intro:lyapunov-def}
\end{equation}
The Lyapunov exponents are a natural measure for stability or instability of a dynamical system. Comparing with~\eqref{lya:intro:dynamical}, we see that if all Lyapunov exponents are less than zero then the system is stable in the sense that
\begin{equation}
\norm{\delta u(n)}\to0\qquad\text{for}\qquad n\to\infty,
\end{equation}
but if one or more of the Lyapunov exponents are positive then nearby trajectories diverge and the system is said to be unstable.

Although the Lyapunov exponents are the most common measure of stability, other measures exists (and are sometimes more appropriate). One choice is the so-called stability exponents~\cite{GSO:1987} related to the eigenvalues of the product matrix (instead of the singular values as for the Lyapunov exponents).  In most cases, it is expected that the absolute value of the eigenvalues will grow (decay) exponentially with the number of factors. Thus, if $z_1(n)\leq\cdots\leq z_N(n)$ denote the eigenvalues of the product matrix $Y_n$, then it is sensible to introduce the finite-$n$ (or finite time) stability exponents defined by
\begin{equation}
\zeta_k(n):=\frac{\log \abs{z_k(n)}}{n}, \qquad k=1,\ldots,N.
\label{lya:intro:stability-def}
\end{equation}
It has been conjectured in~\cite{GSO:1987} that if the spectrum is non-degenerate, then the Lyapunov and the stability exponents will agree in the large-$n$ limit, i.e. they are equivalent as measures of stability. 

The simplest possible example is, of course, the $N=1$ case, where we have a product of scalar-valued random variables, where the definitions of the Lyapunov exponents~\eqref{lya:intro:lyapunov-def} and stability exponents~\eqref{lya:intro:stability-def} are identical. In section~\ref{sec:prologue:scalar}, we recalled the well-known structure of a product of Gaussian distributed random \emph{scalars}. We saw that the product grew exponentially with a growth rate (i.e. Lyapunov/Stability exponent) subject to Gaussian fluctuations. The main purpose of this chapter is to extent this result to the case of Gaussian \emph{matrices}. 

The rest of the chapter is organised as follows: In section~\ref{sec:lya:finiteN} we consider stability as well as the Lyapunov exponents in large-$n$ limit when $N$ is kept fixed; we find their limiting values as well as their fluctuations. Section~\ref{sec:lya:macro} deals with macroscopic limit for the spectral density in the case where both $n$ and $N$ tend to infinity. A discussion of the results as well as open problems is given in section~\ref{sec:lya:discuss}.

\section{Stability and Lyapunov exponents at finite matrix dimension}
\label{sec:lya:finiteN}

\subsection{Stability exponents}
\label{sec:lya:stability}

In this section, we will consider a product of induced Ginibre matrices with our attention directed towards the statistical properties of the stability exponents~\eqref{lya:intro:stability-def}, while a similar description of the Lyapunov exponents are postponed to section~\ref{sec:lya:lyapunov}. For our purpose, the benefit of starting with the stability (rather than the Lyapunov exponents) is that the stability exponents are based on the eigenvalues of the product matrix~\eqref{lya:intro:stability-def} while the Lyapunov exponents are based on singular values~\eqref{lya:intro:lyapunov-def}. As we recall from chapter~\ref{chap:singular} and~\ref{chap:complex}, we have explicit formulae for the joint density of the \emph{eigenvalues} in all three Dyson classes ($\beta=1,2,4$), while an explicit expression for the joint density of the \emph{singular values} is only known in the $\beta=2$ case. For this reason, we can treat all Dyson classes equally when we consider the stability exponents, while we need to 
prescribe different treatments to the $\beta=2$ and the $\beta=1,4$ cases when we return to Lyapunov exponents in section~\ref{sec:lya:lyapunov}. 

We recall from propositions~\ref{thm:complex:jpdf-C}, \ref{thm:complex:jpdf-H}, and~\ref{thm:complex:jpdf-R}, that the joint densities for the eigenvalues on $\R$ ($\beta=1$), $\C$ ($\beta=2$), and $\C_+$ ($\beta=4$) are
\begin{subequations}
\label{lya:stab:jpdf-all}
\begin{align}
\cP_{\jpdf}^{\beta=1,n}(x_1,\ldots,x_N)&=\frac{1}{\cZ^{\beta=1}}\prod_{j=1}^N w_{n}^{\beta=1}(x_j)\prod_{1\leq k<\ell\leq N}\abs{x_k-x_\ell}, \label{lya:stab:jpdf-real} \\
\cP_{\jpdf}^{\beta=2,n}(z_1,\ldots,z_N)&=\frac{1}{\cZ^{\beta=2}}\prod_{j=1}^N w_{n}^{\beta=2}(z_j)\prod_{1\leq k<\ell\leq N}\abs{z_k-z_\ell}^2, \label{lya:stab:jpdf-complex} \\
\cP_{\jpdf}^{\beta=4,n}(z_1,\ldots,z_N)&=\frac{1}{\cZ^{\beta=4}}\prod_{j=1}^N w_{n}^{\beta=4}(z_j)\abs{z_j-z_j^*}^2\prod_{1\leq k<\ell\leq N}\abs{z_k-z_\ell}^2\abs{z_k-z_\ell^*}^2,
\label{lya:stab:jpdf-quaternion}
\end{align}
\end{subequations}
where $w_{n}^{\beta}(z)$ are positive weight functions,
\begin{equation}
w_{n}^{\beta}(z)=\MeijerG[\bigg]{n}{0}{0}{n}{-}{\frac{\beta\nu_1}{2},\ldots,\frac{\beta\nu_n}{2}}{\Big(\frac{\beta}{2}\Big)^n\,\abs z^2}, \label{lya:stab:weight} 
\end{equation}
and $\cZ^{\beta}$ are constants,
\begin{equation}
\cZ^{\beta}=\frac{N!\pi^{N(\beta-\gamma)/\gamma}}{2^{(2-\beta)\gamma nN(N+1)/4}}\prod_{k=1}^N\prod_{\ell=1}^n\Gamma\Big[\frac{\beta(\nu_\ell+k)}{2}\Big] \label{lya:stab:normalisation} 
\end{equation}
with $\gamma=1,1,2$ for $\beta=1,2,4$ and $\nu_i$ ($i=1,\ldots,n$) denoting non-negative constants.

Note that for $\beta=1$ we have taken the joint density which assumes that all eigenvalues are real and the normalisation constant, $\cZ^{\beta=1}$, is chosen such that integration over the eigenvalues yields the probability that all eigenvalues are real (cf. proposition~\ref{thm:complex:jpdf-R} and remark~\ref{remark:complex:real}). Although at first sight this seems like a strong restriction, it turns out to be no restriction at all, since (as long as the matrix dimension is kept fixed) all eigenvalues are real, almost surely, in the large-$n$ limit (as we will see below). This surprising property of real matrices was first observed numerically in~\cite{Lakshminarayan:2013}, while a proof for square matrices ($0=\nu_1=\nu_2=\cdots$) was provided in~\cite{Forrester:2014a}. Here, we will proof this property for the more general case, where $\{\nu_i\}_\N$ is a convergent non-negative sequence, i.e. $\nu_i\geq 0$ for all $i$ and $\nu_i\to\nu_\infty<\infty$ for $n\to\infty$. This result will be obtained using an idea presented in~\cite{Ipsen:2015}, which differs considerably from the approach used in~\cite{Forrester:2014a}.

It will be convenient to change to exponential variables when we consider the large-$n$ limit. Thus, we parametrise the eigenvalues as
\begin{equation}
z_k(n)=e^{n\zeta_k(n)+i\theta_k(n)}, \qquad k=1,\ldots,N,
\end{equation}
where $\zeta_k(n)$ ($k=1,\ldots,N$) are the finite-$n$ stability exponents~\eqref{lya:intro:stability-def}, while $\theta_k(n)$ ($k=1,\ldots,N$) are real phases which take the discrete values $0$ and $\pi$ for $\beta=1$, values in the interval $[0,2\pi)$ for $\beta=2$, and values in $[0,\pi]$ for $\beta=4$. The restriction of the phases appear since the eigenvalues belong to the real line, complex plane, and the complex upper half-plane for $\beta=1,2,4$, respectively.

We introduce the joint density of the finite-$n$ stability exponents and their phases defined by
\begin{equation}
p_\jpdf^{\beta,n}(\zeta_1,\theta_1,\ldots,\zeta_N,\theta_N):=\prod_{j=1}^Nne^{\beta n\zeta_j/\gamma}\chi^{\beta}(\theta_j)\,\cP_\jpdf^{\beta,n}(e^{n\zeta_1+i\theta_1},\ldots,e^{n\zeta_N+i\theta_N}),
\label{lya:stab:jpdf-stab}
\end{equation}
where $\cP_\jpdf^{\beta,n}(z_1,\ldots,z_N;t)$ is given by~\eqref{lya:stab:jpdf-all} and
\begin{equation}
\chi^{\beta}(\theta)=\delta_{\beta1}(\delta(\theta)+\delta(\pi-\theta))+\delta_{\beta2}+\delta_{\beta4}\one_{0\leq\theta\leq\pi}.
\label{lya:stab:indicator-func}
\end{equation}
is an indicator function which incorporates the fact that the eigenvalues belong to the line $\R$, the plane $\C$, and the half-plane $\C_+$ for real, complex, and quaternionic matrices, respectively. With this notation, we have $\zeta_i\in\R$ while $\theta_i\in[0,2\pi)$ ($i=1,\ldots,N$) for all three Dyson indices ($\beta=1,2,4$). Alternatively, we could have left out the indicator function~\eqref{lya:stab:indicator-func} and imposed the individual constraints  $\theta_i\in\{0,\pi\}$ for $\beta=1$, $\theta_i\in [0,2\pi)$ for $\beta=2$, and $\theta_i\in [0,\pi]$ for $\beta=4$. It is important to note that~\eqref{lya:stab:jpdf-stab} is nothing but a change of variables; the prefactor on the right hand side is simply the Jacobian arising from the change of variables.

It turns out that the stability exponents and their phases become independent in the large-$n$ limit if $N$ is kept fixed (as we will see). For this reason, it is often sufficient to consider the one-point correlation function (the spectral density), which is defined in the usual manner,
\begin{equation}
\rho^{\beta,n}(\zeta_1,\theta_1):=N\bigg[\prod_{i=2}^N\int_\R d\zeta_i\int_0^{2\pi}d\theta_i\bigg]p_\jpdf^{\beta,n}(\zeta_1,\theta_1,\ldots,\zeta_N,\theta_N).
\label{lya:stab:dens-stab}
\end{equation}

We are now ready to formulate the main statement of this section together with an important corollary. This statement was originally presented in~\cite{Ipsen:2015}, while the partial result for products of square complex matrices (i.e. $\beta=2$ and $0=\nu_1=\nu_2=\cdots$) was given previously by Akemann, Burda and Kieburg in~\cite{ABK:2014}. The striking statement of corollary~\ref{thm:lya:stab:-real-eigenvalues} was first shown for $N=2$ in~\cite{Lakshminarayan:2013}; based on numerical evidence it was conjectured to hold for $N\geq3$ as well. This was shown analytically for square matrices (i.e. $0=\nu_1=\nu_2=\cdots$) by Forrester in~\cite{Forrester:2014a}, while proposition~\ref{thm:lya:stab-exp} provides us with an alternative proof as observed in~\cite{Ipsen:2015}.

\begin{proposition}\label{thm:lya:stab-exp}
Let $\{\nu_i\}$ be a non-negative convergent sequence with the limit $\nu_i\to\nu_\infty<\infty$.
For $N$ fixed, the large-$n$ limit of the joint density for the stability exponents and their phases~\eqref{lya:stab:jpdf-stab} becomes
\begin{equation}
\lim_{n\to\infty}p_\jpdf^{\beta,n}(\zeta_1,\theta_1,\ldots,\zeta_N,\theta_N)
=\frac{1}{N!}\per_{1\leq k,\ell\leq N}\bigg[K^\beta(\theta_k)\delta(\zeta_\ell-\mu_k^{\beta}) \bigg],
\label{lya:stab:permanent}
\end{equation}
while the density~\eqref{lya:stab:dens-stab} allows the limit 
\begin{equation}
\lim_{n\to\infty}\frac{\sigma_k^\beta}{\sqrt n}\rho^{\beta,n}\Big(\mu_k^\beta+\zeta\frac{\sigma_k^\beta}{\sqrt{n}},\theta\Big)=K^\beta(\theta)\frac{e^{-\zeta^2/2}}{\sqrt{2\pi}}
,\qquad k=1,\ldots,N.
\label{lya:stab:dens-stab-limit}
\end{equation}
where
\begin{equation}
K^{\beta}(\theta):=\delta_{\beta1}\frac{\delta(\theta)+\delta(\pi-\theta)}{2}+\delta_{\beta2}\frac{1}{2\pi}+\delta_{\beta4}\frac{2\sin^2\theta}{\pi}\one_{0\leq\theta\leq\pi}.
\label{lya:stab:phase-func}
\end{equation}
and
\begin{equation}
\mu_k^\beta=\frac{1}{2}\log\frac{2}{\beta}+\frac{1}{2}\psi\Big[\frac{\beta(\nu_\infty+k)}{2}\Big],\qquad
(\sigma_k^\beta)^2=\frac{1}{4}\psi'\Big[\frac{\beta(\nu_\infty+k)}{2}\Big],
\label{lya:intro:mean+var}
\end{equation}
with $\psi(x)$ denoting the digamma function~\eqref{special:gamma:digamma-def}.
\end{proposition}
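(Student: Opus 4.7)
The plan is to exploit the fact that the weight function $w_n^\beta(z)$ admits, via its derivation in the proofs of propositions~\ref{thm:complex:jpdf-C},~\ref{thm:complex:jpdf-H},~\ref{thm:complex:jpdf-R}, an interpretation as the radial density of $\abs{y_n}^2=\prod_{i=1}^n\abs{x_i}^2$, where each $x_i$ is an induced Ginibre scalar of charge $\nu_i$. Consequently $\log\abs{y_n}^2=\sum_i\log\abs{x_i}^2$ is a sum of independent log-Gamma-type random variables, and the central limit theorem applied to this sum will underlie the Gaussian limit. The $k$-dependence of $\mu_k^\beta$ and $\sigma_k^\beta$ arises from a monomial tilting $\abs{z}^{\beta(k-1)/\gamma}$ that the Vandermonde contributes to the $k$-th ordered eigenvalue, which shifts the shape parameter of each constituent Gamma by $k-1$.

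For $\beta=2,4$ the starting point is the permanental representations of the phase-averaged joint density proved in propositions~\ref{thm:complex:permanent-C} and~\ref{thm:complex:permanent-H}. After substituting $\zeta=(\log r)/n$, the $(k,j)$-entry of the permanent becomes the density of a tilted $\log\abs{y_n}^2$ with Gamma shape parameters $\beta(\nu_i+j)/2$, and a direct Mellin/cumulant calculation parallel to~\eqref{prologue:scalar:generate-cumulant} yields mean $\mu_j^\beta$, variance $(\sigma_j^\beta)^2/n$, and higher cumulants of order $O(n^{1-k/2})$; the classical CLT then gives~\eqref{lya:stab:dens-stab-limit} for each diagonal entry. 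The phase density $K^\beta(\theta)$ is reinstated by noting that $w_n^\beta$ depends only on $\abs{z}$ (producing the uniform law for $\beta=2$) and that for $\beta=4$ the single-eigenvalue factor $\abs{z_j-z_j^*}^2=4\abs{z_j}^2\sin^2\theta_j$ in~\eqref{lya:stab:jpdf-quaternion} survives the exponential separation and contributes the $\sin^2\theta_j$ weight, which normalises to $2\sin^2\theta/\pi$ on $[0,\pi]$. The pairwise Vandermonde factors $\abs{z_k-z_\ell}^2\abs{z_k-z_\ell^*}^2$ for $k<\ell$ each behave as $e^{4n\zeta_\ell}(1+o(1))$ in the sorted sector $\zeta_1<\cdots<\zeta_N$, reproducing exactly the tilting already absorbed into the permanent.

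For $\beta=1$ no direct permanental formula is available, so the plan is to work in the sorted sector $\zeta_1<\cdots<\zeta_N$ and apply exponential separation to the Vandermonde directly: writing $x_j=\epsilon_j e^{n\zeta_j}$ with $\epsilon_j\in\{\pm1\}$, one has $\abs{x_\ell-x_k}=e^{n\zeta_\ell}\abs{1-\epsilon_k\epsilon_\ell e^{-n(\zeta_\ell-\zeta_k)}}=e^{n\zeta_\ell}(1+o(1))$ for $\ell>k$, uniformly on sets where $\zeta_\ell-\zeta_k\gg 1/\sqrt{n}$. Combined with the Jacobian $\prod_j ne^{n\zeta_j}$ and the evenness of $w_n^{\beta=1}$ in $x$ (which decouples the sign), the same Mellin/CLT argument produces Gaussian marginals for each ordered $\zeta_k$ with parameters~\eqref{lya:intro:mean+var} and a uniform $\pm$-sign. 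Symmetrising over the $N!$ orderings and the $2^N$ sign patterns reassembles the permanent~\eqref{lya:stab:permanent}, and since this leading density integrates to unity the normalisation identity forces the probability of a non-real spectrum to vanish as $n\to\infty$ (cf.~remark~\ref{remark:complex:real}).

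The principal technical obstacle is the $\beta=1$ case, where (i) the approximation $\abs{x_\ell-x_k}\sim e^{n\zeta_\ell}$ fails on the scale of the Gaussian fluctuations $\zeta_\ell-\zeta_k=O(1/\sqrt{n})$, so one must show that this near-diagonal region has vanishing probability in the limit; (ii) a dominated-convergence argument requires uniform tail control of $w_n^{\beta=1}$, which should follow from the saddle-point asymptotics of the Meijer $G$-function in appendix~\ref{app:special}; and (iii) because the sequence $\{\nu_i\}$ need only be convergent, the CLT has to be applied to a triangular array and the Ces\`aro-type convergence $\frac{1}{n}\sum_{i=1}^n\psi(\beta(\nu_i+k)/2)\to\psi(\beta(\nu_\infty+k)/2)$ must be verified, together with the analogous statement for $\psi'$.
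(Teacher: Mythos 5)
Your proposal is correct in outline and, for $\beta=1$, takes essentially the same route as the paper: write $x_j=\epsilon_j e^{n\zeta_j}$, observe that in the sorted sector $\abs{x_\ell-x_k}\sim e^{n\zeta_\ell}$ up to exponentially small corrections, apply the Mellin/cumulant argument to the one-point weight, and sum over orderings and signs. The paper organises this by writing $\cP_\jpdf^{\beta=1}$ as $\frac{1}{2^NN!}\abs{\det[\ldots]}$ and showing that in each ordering sector only one term of the determinant survives, but this is the same mechanism.

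For $\beta=2,4$ your route is genuinely different from the paper's and worth contrasting. You begin from the phase-averaged permanental representations (propositions~\ref{thm:complex:permanent-C} and~\ref{thm:complex:permanent-H}), whose diagonal entries, after $\zeta=(\log r)/n$, are precisely the paper's functions $f_{kk}^{\beta,n}(\zeta)$; the cumulant argument then gives the Gaussian permanent for the exponents. You reinstate the phase factor by arguing that the pairwise Vandermonde factors $\abs{z_k-z_\ell}^\beta$ (and for $\beta=4$ the cross-terms $\abs{z_k-z_\ell^*}^2$) become $e^{\beta n\zeta_\ell/\gamma}(1+o(1))$ in the sorted sector and hence phase-independent, so that the only phase-dependent contributions are the self-interaction $\abs{z_j-z_j^*}^2=4\abs{z_j}^2\sin^2\theta_j$ (quaternionic case) and the Jacobian indicator, yielding $K^\beta(\theta)$ exactly. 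The paper instead never integrates out the phases: it expands the two Vandermonde determinants in~\eqref{lya:stab:jpdf-detdet}, combines them into a single determinant whose $(k,\ell)$-entry carries a factor $e^{i(k-\ell)\theta}$ together with a coefficient $D^{\beta,n}_{k\ell}$, and proves $D^{\beta,n}_{k\ell}\to\delta_{k\ell}$ via a Cauchy--Schwarz inequality for gamma-function ratios~\eqref{lya:stab:gamma-inequality} (for $\beta=4$ the coefficient depends on the whole $S_{2N}$-permutation and only pairings $\abs{\sigma(j)-\sigma(j+N)}=1$ survive). Each approach buys something: the paper's determinantal bookkeeping is cleaner because the phase independence drops out of $D_{k\ell}\to\delta_{k\ell}$ without estimating $\abs{z_k-z_\ell}^\beta$ uniformly in the phases, whereas your probabilistic route is more transparent and is in fact the simplification the paper itself notes, in the remark following the proposition, for the phase-averaged density.

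Two small points on your technical concerns. On (i), the near-diagonal region $\zeta_\ell-\zeta_k=O(1/\sqrt n)$ is actually much easier to dismiss than you suggest: the limiting means $\mu_k^\beta$ are $O(1)$ apart, and the exponential-separation approximation already holds once $\zeta_\ell-\zeta_k\gg(\log n)/n$, so the failure region requires an $O(1)$ deviation and has exponentially small probability (this is exactly what the paper's $\erfc$ estimate quantifies in the $\beta=1$ case). On (iii), the triangular-array worry disappears immediately because the cumulants are Ces\`aro means of $\psi^{(j-1)}(\beta(\nu_i+k)/2)$ and the polygammas are continuous; convergence of $\nu_i$ then suffices, which the paper uses without comment.
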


Before we provide a proof of proposition~\ref{thm:lya:stab-exp}, let us show a corollary which has fundamental importance for the interpretation of the proposition in the $\beta=1$ case.

\begin{corollary}\label{thm:lya:stab:-real-eigenvalues}
The eigenvalues of a product of independent real $N\times N$ induced Ginibre matrices (with charges $\nu_i$ as in proposition~\ref{thm:lya:stab-exp}) become real, almost surely, as the number of factors tends to infinity.
\end{corollary}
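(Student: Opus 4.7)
The strategy is to read the corollary off Proposition~\ref{thm:lya:stab-exp} via the special role that the normalisation~$\cZ^{\beta=1}$ plays in the $\beta=1$ case. Recall from Proposition~\ref{thm:complex:jpdf-R} and Remark~\ref{remark:complex:real} that $\cP_\jpdf^{\beta=1,n}$ is exactly the ``all eigenvalues real'' sector, i.e.\
\begin{equation*}
\P[\,\#\{\ev\in\R\}=N\,] \;=\; \prod_{j=1}^N\int_\R dx_j\;\cP_\jpdf^{\beta=1,n}(x_1,\dots,x_N).
\end{equation*}
After the change of variables $x_j=\pm e^{n\zeta_j}$, with $\theta_j\in\{0,\pi\}$ encoding the sign, this probability equals the total integral of $p_\jpdf^{\beta=1,n}$ over $\R^N\times[0,2\pi)^N$. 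Thus it suffices to prove that $\lim_{n\to\infty}\int p_\jpdf^{\beta=1,n}=1$.

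I would then invoke the permanent formula~\eqref{lya:stab:permanent} from Proposition~\ref{thm:lya:stab-exp} with $\beta=1$. The limiting $N\times N$ matrix has entries $K^{\beta=1}(\theta_k)\,\delta(\zeta_\ell-\mu_k^{\beta=1})$, and since $\int_\R d\zeta\,\delta(\zeta-\mu_k^{\beta=1})=1$ and $\int_0^{2\pi}K^{\beta=1}(\theta)\,d\theta=\tfrac12+\tfrac12=1$, every entry contributes exactly~$1$ upon integration. The permanent of the $N\times N$ all-ones matrix is $N!$, which cancels the overall prefactor $1/N!$, so formally the limit of the total mass is~$1$.

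The only technical point is that~\eqref{lya:stab:permanent} is a distributional statement (the right-hand side is a sum of atoms, not a classical density), so one must justify exchanging limit and integration. I would do this via the companion statement~\eqref{lya:stab:dens-stab-limit}, which is a \emph{pointwise} local limit of the spectral density $\rho^{\beta=1,n}$ near each $\mu_k^{\beta=1}$. The relation $\int d\zeta\,d\theta\,\rho^{\beta=1,n}=N\cdot\P[\,\#\{\ev\in\R\}=N\,]$ together with Fatou's lemma applied to the $N$ separated Gaussian bumps gives
\begin{equation*}
\liminf_{n\to\infty}\;N\cdot\P[\,\#\{\ev\in\R\}=N\,] \;\geq\; \sum_{k=1}^N\int_0^{2\pi}d\theta\,K^{\beta=1}(\theta) \;=\; N.
\end{equation*}
Combined with the trivial a priori bound $\P[\,\#\{\ev\in\R\}=N\,]\leq 1$, this squeezes the probability to~$1$, which is the corollary.

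The main obstacle is indeed this interchange: one needs that the Gaussian bumps at distinct $\mu_k^{\beta=1}$ remain well separated (so that no mass is lost to overlap or to the tails), which is automatic from~\eqref{lya:intro:mean+var} provided the digamma values $\psi[(\nu_\infty+k)/2]$ are strictly increasing in~$k$; this holds since $\psi$ is strictly increasing on $(0,\infty)$. Everything else is bookkeeping, and in particular no new estimates beyond those already encoded in Proposition~\ref{thm:lya:stab-exp} are required.
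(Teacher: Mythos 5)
Your proposal follows the same route as the paper: identify $\bigl[\prod_i\int d\zeta_i\,d\theta_i\bigr]p_\jpdf^{\beta=1,n}$ with $\P[\,\#\{\ev\in\R\}=N\,]$ and invoke the permanent formula of Proposition~\ref{thm:lya:stab-exp} to conclude that this mass tends to $1$. The one genuine addition is that you flag the interchange of limit and integration (which the paper states as immediate, even though the pointwise limit is a sum of atoms rather than a density) and justify it via Fatou on the rescaled Gaussian bumps at the strictly increasing points $\mu_k^{\beta=1}$, squeezed against the a priori bound $\P\leq 1$; this tightens the paper's argument without changing its strategy.
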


\begin{proof}
We recall that integration over the eigenvalues in~\eqref{lya:stab:jpdf-real}, and therefore also in the $\beta=1$ case of~\eqref{lya:stab:permanent}, gives the probability that all eigenvalues are real, but
\begin{equation}
\bigg[\prod_{i=1}^N\int_\R d\zeta_i\int_0^{2\pi}d\theta_i\bigg]\frac{1}{N!}\per_{1\leq k,\ell\leq N}\bigg[K^{\beta=1}(\theta)\delta(\zeta_\ell-\mu_k^{\beta=1})\bigg]=1.
\end{equation}
Thus the eigenvalues are real, almost surely.
\end{proof}

\begin{remark}
Corollary~\ref{thm:lya:stab:-real-eigenvalues} tells us that the imposed assumption that all eigenvalues are real is no real restriction in the large-$n$ limit, since this happens with probability one.
\end{remark}

\begin{proof}[Proof of proposition~\ref{thm:lya:stab-exp}]

The main idea of the proof is to show that the correlations between eigenvalues (stability exponents and their phases) decay exponentially for large $n$, while the distribution of individual stability exponents become Gaussians with variances of order $O(n^{-1/2})$ for large $n$. For accessibility, the proof is divided into three parts: \textit{(i) the complex case,} \textit{(ii) the quaternionic case,} and finally \textit{(iii) the real case.}

\paragraph{(i) Complex case. }
We write the joint density~\eqref{lya:stab:jpdf-stab} using~\eqref{lya:stab:jpdf-complex}
\begin{multline}
p_{\jpdf}^{\beta=2,n}(\zeta_1,\theta_1,\ldots,\zeta_N,\theta_N)=
\frac{1}{\cZ^{\beta=2}}\prod_{j=1}^N n w_{n}^{\beta=2}(e^{n\zeta_j}) \\
\times\det_{1\leq k,\ell\leq N}\Big[e^{k(n\zeta_\ell+i\theta_\ell)}\Big]
\det_{1\leq k,\ell\leq N}\Big[e^{k(n\zeta_\ell-i\theta_\ell)}\Big].
\label{lya:stab:jpdf-detdet}
\end{multline}
Here we have absorbed the prefactor from~\eqref{lya:stab:jpdf-stab} into the Vandermonde determinants; $w_n^{\beta=2}(z)$ and $\cZ^{\beta=2}$ denote the weight~\eqref{lya:stab:weight} and normalisation~\eqref{lya:stab:normalisation}, respectively. We want to rewrite the right hand side of~\eqref{lya:stab:jpdf-stab} as a sum involving a single determinant. To achieve this, we expand the first determinant as
\begin{equation}
\det_{1\leq k,\ell\leq N}[e^{k(n\zeta_\ell+i\theta_\ell)}]=\sum_{\sigma\in S_N} \sign\sigma \prod_{j=1}^N e^{j(n\zeta_{\sigma(j)}+i\theta_{\sigma(j)})},
\label{lya:stab:expansion}
\end{equation}
where the sum is over all permutations and `$\sign\sigma$' denotes the sign of the permutation. This expansion can be used to rewrite the joint density~\eqref{lya:stab:jpdf-detdet} as a sum involving a single determinant. After application of this expansion, we can pull the weight functions into the the second determinant and, by reordering the rows in the determinant, the `$\sign\sigma$' may be included in the determinant as well. Finally, we pull the product from the expansion~\eqref{lya:stab:expansion} into the determinant which yields
\begin{equation}
p_{\jpdf}^{\beta=2,n}(\zeta_1,\theta_1,\ldots,\zeta_N,\theta_N)=
\frac{1}{\cZ^{\beta=2}}\sum_{\sigma\in S_N} 
\det_{1\leq k,\ell\leq N}\Big[e^{n(k+\ell)\zeta_{\sigma(\ell)}}e^{i(k-\ell)\theta_{\sigma(\ell)}} n\, w_{n}^{\beta=2}(e^{n\zeta_{\sigma(\ell)}})\Big]
\label{lya:stab:jpdf-det}
\end{equation}
with weight and normalisation as above.

In order to simplify notation in the calculations below, we introduce the normalised non-negative function
\begin{equation}
f_{k\ell}^{\beta=2,n}(\zeta):= \frac{2n\,\MeijerG{n}{0}{0}{n}{-}{\nu_1+\frac{k+\ell}{2},\ldots,\nu_n+\frac{k+\ell}{2}}{e^{2n\zeta}}}{\prod_{i=1}^n\Gamma\left[\nu_i+(k+\ell)/2\right]}.
\label{lya:stab:f-complex}
\end{equation}
The normalisation may be checked using the integration formula~\eqref{special:meijer:meijer-moment}, while the non-negativity can be seen by writing the Meijer-$G$ function as an $n$-fold integral similar to the discussion of product of Gaussian random scalars in section~\ref{sec:prologue:scalar}.

With definition~\eqref{lya:stab:f-complex} in mind, we return to the joint density~\eqref{lya:stab:jpdf-det}. If we write the weight as a Meijer $G$-function~\eqref{lya:stab:weight} and use the identity~\eqref{special:meijer:meijer-shift}, then we see that the first exponential in~\eqref{lya:stab:jpdf-det} and the weight can be combined into a new Meijer $G$-function identical to the one appearing in~\eqref{lya:stab:f-complex}. We insert the explicit expression for the normalisation constant~\eqref{lya:stab:normalisation} and after some standard manipulations we find
\begin{equation}
p_{\jpdf}^{\beta=2,n}(\zeta_1,\theta_1,\ldots,\zeta_N,\theta_N)=
\frac{1}{N!}\sum_{\sigma\in S_N} 
\det_{1\leq k,\ell\leq N}\bigg[\frac{e^{i(k-\ell)\theta_{\sigma(\ell)}} D_{k\ell}^{\beta=2,n} f_{k\ell}^{\beta=2,n}(\zeta)}{2\pi}\bigg]
\label{lya:complex:jpdf-final-det}
\end{equation}
with the coefficient $D_{k\ell}^{\beta=2,n}$ defined by
\begin{equation}
D_{k\ell}^{\beta=2,n}:=\prod_{i=1}^n\frac{\Gamma\left[\nu_i+(k+\ell)/2\right]}{\Gamma[\nu_i+k]^{1/2}\Gamma[\nu_i+\ell]^{1/2}}.
\label{lya:complex:coefficient}
\end{equation}
So far no approximations have been used, hence the joint density~\eqref{lya:complex:jpdf-final-det} is valid at any $n$. In order to understand the large-$n$ asymptotics, we will look at $D_{k\ell}^{\beta=2,n}$ and $f_{k\ell}^{\beta=2,n}(\zeta)$ separately.

We start by evaluating the coefficient $D_{k\ell}^{\beta=2,n}$. We recall that $k$ and $\ell$ are positive integers and notice that
\begin{equation}
\Gamma[\nu_i+(k+\ell)/2]\leq \Gamma[\nu_i+k]^{1/2}\Gamma[\nu_i+\ell]^{1/2}
\label{lya:stab:gamma-inequality}
\end{equation}
with equality if and only if $k=\ell$. This can be seen by writing the gamma functions as their integral representations and interpreting the expression using the Cauchy--Schwarz inequality. Comparing this inequality with the expression for the coefficient~\eqref{lya:complex:coefficient}, we see that (recall that the sequence of charges is convergent)
\begin{equation}
D_{k\ell}^{\beta=2,n}\longrightarrow \delta_{k\ell}\qquad \text{for}\qquad n\to\infty,
\label{lya:complex:coefficient-limit}
\end{equation}
where $\delta_{k\ell}$ is a Kronecker delta. Note that $D_{kk}^{\beta=2,n}=1$ even for finite $n$, while the off-diagonal entries, $D_{k\ell}^{\beta=2,n}$ ($k\neq\ell$), decay exponentially with $n$.

Similar to the product of Gaussian random scalars discussed in section~\ref{sec:prologue:scalar}, we will use the cumulant expansion to argue that the function $f_{k\ell}^{\beta=2,n}(\zeta)$ reduces to a Gaussian in the large-$n$ limit. In the following, we will interpret $f_{k\ell}^{\beta=2,n}(\zeta)$ as the probability density for a random variable. The cumulant-generating function $g_{k\ell}^{\beta=2,n}(\xi)$ is given by
\begin{equation}
g_{k\ell}^{\beta=2,n}(\xi):= \log\Big( \int_{-\infty}^\infty d\zeta\, e^{\xi\zeta} f_{k\ell}^{\beta=2,n}(\zeta)\Big)
=\sum_{i=1}^n\log\frac{\Gamma[\nu_i+(k+\ell)/2+\xi/2n]}{\Gamma[\nu_i+(k+\ell)/2]}.
\end{equation}
Here we have used formula~\eqref{special:meijer:meijer-moment} to evaluate the integral. The cumulants are found by expanding the cumulant-generating function around $\xi=0$, hence the $j$-th cumulant is given by
\begin{equation}
\kappa_{k\ell;j}^{\beta=2,n}:= \frac{\p^jg_{k\ell}^{\beta=2,n}(\xi)}{\p\xi^j} \bigg\vert_{\xi=0}
% =\Big(\frac{\p}{\p\xi}\Big)^j g_{k\ell}^{\beta=2,n}(\xi)\Big\vert_{\xi=0}
=\sum_{i=1}^n\frac{1}{(2n)^j}\psi^{(j-1)}\Big(\nu_i+\frac{k+\ell}{2}\Big),
\end{equation}
where $\psi^{(j)}(x)$ is the $j$-th derivative of the digamma function (also known as the polygamma function). Due to convegence of the charges ($\nu_i\to\nu_\infty$) and continuity of the polygamma functions on the positive half-line, we have convergence of the arithmetic means. In particularly, we have
\begin{subequations}
\label{lya:complex:mean_kl}
\begin{align}
\lim_{n\to\infty}\kappa_{k\ell;1}^{\beta=2,n}&=\frac{1}{2}\psi\Big(\nu_\infty+\frac{k+\ell}{2}\Big)=:\mu_{k\ell}^{\beta=2}, \\
\lim_{n\to\infty}n\,\kappa_{k\ell;1}^{\beta=2,n}&=\frac{1}{4}\psi'\Big(\nu_\infty+\frac{k+\ell}{2}\Big)=:(\sigma_{k\ell}^{\beta=2})^2,
\end{align}
\end{subequations}
such that $\mu_{kk}^{\beta=2}$ and $(\sigma_{kk}^{\beta=2})^2$ are the mean and variance from~\eqref{lya:intro:mean+var}. To find the limiting distribution at large $n$, we switch from the variable $\zeta$ to the standardised variable $\widetilde \zeta:= \sqrt{n}(\zeta-\mu_{k\ell}^{\beta})/\sigma_{k\ell}^{\beta}$, which has mean zero and unit variance. The corresponding standardised cumulants are
\begin{equation}
\widetilde\kappa_{k\ell;1}^{\beta=2,n}=0,\qquad
\widetilde\kappa_{k\ell;2}^{\beta=2,n}=1 \qquad\text{and}\qquad
\widetilde\kappa_{k\ell;n}^{\beta=2,n}= O(n^{1-j/2}) \quad\text{for}\quad j\geq 3.
\end{equation}
We see that the higher order cumulants tend to zero for large $n$, and it follows by standard arguments that the limiting distribution is a Gaussian. In the original variables, we have
\begin{equation}
f_{k\ell}^{\beta=2,n}(\zeta)\sim\sqrt{\frac{n}{2\pi(\sigma_{k\ell}^{\beta=2})^2}} \exp\bigg[-n\frac{(\zeta-\mu_{k\ell}^{\beta=2})^2}{2(\sigma_{k\ell}^{\beta=2})^2}\bigg]
\label{lya:complex:f-limit}
\end{equation}
with mean and variance in terms by~\eqref{lya:complex:mean_kl}.

We can now return to the joint density. Without further rescaling, the function~\eqref{lya:complex:f-limit} becomes a delta peak located at $\mu_{k\ell}^{\beta=2}$ in the large-$n$ limit. Thus, using the asymptotic behaviour~\eqref{lya:complex:coefficient-limit} and~\eqref{lya:complex:f-limit} in the joint density~\eqref{lya:complex:jpdf-final-det} we obtain the $\beta=2$ case of~\eqref{lya:stab:permanent}; the dependence of the phases cancels out due to the Kronecker delta stemming from~\eqref{lya:complex:coefficient-limit}.

To see that~\eqref{lya:stab:dens-stab-limit} holds, we notice that the strength of correlations between exponents is determined by the coefficient $D_{k\ell}^{\beta=2,n}$ which is exponentially suppressed compared to the self-correlations. Thus, for large $n$, we have
\begin{multline}
\rho^{\beta,n}(\zeta_1,\theta_1)
\sim\frac{1}{(N-1)!}\bigg[\prod_{i=2}^N\int_\R d\zeta_i\int_0^{2\pi}d\theta_i\bigg]\sum_{\sigma\in S_N} \prod_{n=1}^N  \frac{1}{2\pi}f_{kk}^{\beta=2,n}(\zeta_{\sigma(n)})\\
=\sum_{k=1}^N  \frac{1}{2\pi}f_{kk}^{\beta=2,n}(\zeta_1).
\end{multline}
Now the limit density~\eqref{lya:stab:dens-stab-limit} follows since $\abs[\big]{\mu_k^{\beta=2}-\mu_\ell^{\beta=2}}$ is non-zero and independent of~$n$ for $k\neq\ell$.

\paragraph{(ii) Quaternionic case. }

The main idea for quaternionic matrices is the same as it was for complex matrices. Our starting point is the joint density~\eqref{lya:stab:jpdf-stab} expressed using~\eqref{lya:stab:jpdf-quaternion},
\begin{equation}
p_{\jpdf}^{\beta=4,n}(\zeta_1,\theta_1,\ldots,\zeta_N,\theta_N)
=\frac{1}{\cZ^{\beta=4}}
\det_{\substack{k=1,\ldots,2N \\ \ell=1,\ldots,N}}\begin{bmatrix} e^{k(n\zeta_\ell+i\theta_\ell)} \\ e^{k(n\zeta_\ell-i\theta_\ell)} \end{bmatrix}
\prod_{j=1}^N2ine^{n\zeta_j}\sin\theta_j w_n^{\beta=4}(e^{n\zeta_j}).
\end{equation}
Expanding the determinant yields
\begin{align}
p_{\jpdf}^{\beta=4,n}(\zeta_1,\theta_1,\ldots,\zeta_N,\theta_N)&= \label{lya:quaternion:jpdf-expand} \\
\frac{1}{\cZ^{\beta=4}}\sum_{\sigma\in S_{2N}} \sign\sigma
\prod_{j=1}^N&2in\sin\theta_j e^{i(\sigma(j)-\sigma(j+N))\theta_j}e^{n(\sigma(j)+\sigma(j+N)+1)\zeta_j}w_n^{\beta=4}(e^{n\zeta_j}). \nn
\end{align}
Analogously to the proof for complex matrices, we introduce a normalised non-negative function,
\begin{equation}
f_{k\ell}^{\beta=4,n}(\zeta):= \frac{2n\,\MeijerG[\bigg]{n}{0}{0}{n}{-}{2\nu_1+\frac{k+\ell+1}{2},\ldots,2\nu_n+\frac{k+\ell+1}{2}}{2^ne^{2n\zeta}}}{\prod_{i=1}^n\Gamma\left[2\nu_i+(k+\ell+1)/2\right]}.
\label{lya:quaternion:f_kl}
\end{equation}
Now, we can use the explicit form of the weight function~\eqref{lya:stab:weight} in the expression for the joint density~\eqref{lya:quaternion:jpdf-expand} and use~\eqref{special:meijer:meijer-shift} to absorb the prefactor involving Lyapunov exponents into the Meijer $G$-function. With the notation~\eqref{lya:quaternion:f_kl} and the explicit expression for the normalisation constant~\eqref{lya:stab:normalisation}, we write the joint density as  
\begin{multline}
p_{\jpdf}^{\beta=4,n}(\zeta_1,\theta_1,\ldots,\zeta_N,\theta_N)= \\
\frac{2^{N(N+1)n}}{\pi^NN!}\sum_{\sigma\in S_{2N}}\sign\sigma\, D_\sigma^{\beta=4,n}\prod_{j=1}^N \frac{i\sin\theta_j e^{i(\sigma(j)-\sigma(j+N))\theta_j}}{2^{(\sigma(j)+\sigma(j+N)+1)n/2}}f_{\sigma(j),\sigma(j+N)}^{\beta=4,n}(\zeta_n),
\label{lya:quaternion:jpdf-long}
\end{multline}
where the coefficient $D_\sigma^{\beta=4,n}$ depends on the permutation $\sigma$,
\begin{equation}
D_\sigma^{\beta=4,n}\equiv\prod_{i=1}^n\prod_{j=1}^N\frac{\Gamma[2\nu_i+(\sigma(j)+\sigma(j+N)+1)/2]}{\Gamma[2\nu_i+2j]}.
\label{lya:quaternion:coefficient}
\end{equation}
As in the proof for complex matrices, we will discuss the asymptotic limit of the coefficient~\eqref{lya:quaternion:coefficient} and the function~\eqref{lya:quaternion:f_kl} separately, and then combine the results.

In order to evaluate the coefficient~\eqref{lya:quaternion:coefficient} we will use the inequality~\eqref{lya:stab:gamma-inequality} once again to write
\begin{equation}
\Gamma[2\nu_i+(k+\ell+1)/2]\leq (2\nu_i+\ell)^{1/2}\Gamma[2\nu_i+k]^{1/2}\Gamma[2\nu_i+\ell]^{1/2},
\end{equation}
where we have also used the recursive property $\Gamma[x+1]=x\Gamma[x]$. Note that $k$ and $\ell$ are interchangeable. Given a permutation, $\sigma\in S_{2N}$, it follows that 
\begin{multline}
\prod_{j=1}^N\Gamma[2\nu_i+(\sigma(j)+\sigma(j+N)+1)/2]\leq \\
\prod_{n=1}^N(2\nu_i+\min\{\sigma(j),\sigma(j+N)\})^{1/2}\Gamma[2\nu_i+2j]^{1/2}\Gamma[2\nu_i+2j-1]^{1/2}.
\end{multline}
The maximal value of right hand side of this inequality only depends on the permutation through `$\min\{\sigma(j),\sigma(j+N)\}$'. It follows that the product takes its maximal value if and only if $\sigma(j)$ and $\sigma(j+N)$ comes in successive pairs, hence
\begin{equation}
\prod_{j=1}^N\Gamma[2\nu_i+(\sigma(j)+\sigma(j+N)+1)/2]\leq \prod_{j=1}^N\Gamma[2\nu_i+2j]
\end{equation}
with equality if and only if $\abs{\sigma(j)-\sigma(j+N)}=1$ for all $j$. Inserting this into the expression for the coefficient~\eqref{lya:quaternion:coefficient}, we see that (recall that the sequence of charges is convergent)
\begin{equation}
D_\sigma^{\beta=4,n}\longrightarrow \prod_{j=1}^N(\delta_{\sigma(j)+1,\sigma(j+N)}+\delta_{\sigma(j),\sigma(j+N)+1})
\qquad\text{for}\qquad
n\to\infty.
\label{lya:quaternion:coefficient-limit}
\end{equation}
This implies that in the large-$n$ limit the sum over all permutations, $S_{2N}$, reduces to a sum over permutations of pairs, $S_N$. The other permutations will be exponentially suppressed at large $n$.

The evaluation of the asymptotic behaviour of the function $f_{k\ell}^{\beta=4,n}(\zeta)$ proceeds exactly as for complex matrices and it follows that the limiting distribution is a Gaussian,
\begin{equation}
f_{k\ell}^{\beta=4,n}(\zeta)\sim\sqrt{\frac{n}{2\pi}} \frac1{\sigma_{k\ell}^{\beta=4}}\exp\bigg[-n\frac{(\zeta-\mu_{k\ell}^{\beta=4})^2}{2(\sigma_{k\ell}^{\beta=4})^2}\bigg]
\label{lya:quaternion:f-limit}
\end{equation}
with
\begin{equation}
\mu_{k\ell}^{\beta=4}:=\frac12\log\frac12+\frac{1}{2}\psi\Big(2\nu_\infty+\frac{k+\ell+1}{2}\Big)
\quad\text{and}\quad
(\sigma_{k\ell}^{\beta=4})^2:=\frac{1}{4}\psi'\Big(2\nu_\infty+\frac{k+\ell+1}{2}\Big),
\label{lya:quaternion:mean_kl}
\end{equation}
where the first term in the mean stems from the prefactor inside the argument of the Meijer $G$-function~\eqref{lya:quaternion:f_kl}. Note that with the above notation $\mu_{2k-1,2k}^{\beta=4}$ and $\sigma_{2k-1,2k}^{\beta=4}$ agree with~\eqref{lya:intro:mean+var}.

Combining the asymptotic behaviour for the coefficient~\eqref{lya:quaternion:coefficient-limit} and the function~\eqref{lya:quaternion:f-limit} in the expression for the joint density~\eqref{lya:quaternion:jpdf-long} we see that the large-$n$ limit yields the $\beta=4$ case of~\eqref{lya:stab:permanent}. Moreover, we notice that, similar to the complex case, the strength of the correlations between exponents is determined by the coefficient $D_{\sigma}^{\beta=4,n}$ which is exponentially suppressed compared to the self-correlations. Thus~\eqref{lya:stab:dens-stab-limit} follows.

\paragraph{(iii) Real case. }
Finally, let us look at a product of real matrices with eigenvalues distributed according to~\eqref{lya:stab:jpdf-real}. The joint density~\eqref{lya:stab:jpdf-stab} reads
\begin{equation}
p_{\jpdf}^{\beta=1,n}(\zeta_1,\theta_1,\ldots,\zeta_N,\theta_N)
=\frac{1}{\cZ^{\beta=1}}\prod_{1\leq k<\ell\leq N}\abs{e^{n\zeta_k+i\theta_k}-e^{n\zeta_\ell+i\theta_\ell}}\prod_{j=1}^N ne^{n\zeta_j}w_{n}^{\beta=1}(e^{n\zeta_j}).
\label{lya:real:jpdf-cov}
\end{equation}
Recall that the phases $\theta_j$ only take the discrete values $0$ and $\pi$ because we only consider the eigenvalues which are located on the real axis. 

Our first step is to rewrite the joint density~\eqref{lya:real:jpdf-cov} as a determinant,
\begin{equation}
p_{\jpdf}^{\beta=1,n}(\zeta_1,\theta_1,\ldots,\zeta_N,\theta_N)
=\frac{2^{-nN(N+1)/4}}{2^NN!}\abs[\bigg]{\det_{1\leq k,\ell\leq N}\bigg[\frac{e^{kn\zeta_\ell+i(k-1)\theta_\ell}2nw_{n}^{\beta=1}(e^{n\zeta_\ell})}{\prod_{i=1}^n\Gamma[(\nu_i+\ell)/2]}\bigg]},
\label{lya:real:jpdf-det}
\end{equation}
where we have used the explicit form of the normalisation constant~\eqref{lya:stab:normalisation}.
As for complex and quaternionic matrices, we introduce a normalised and non-negative function
\begin{equation}
f_{k}^{\beta=1,n}(\zeta):= \frac{2n\,\MeijerG{n}{0}{0}{n}{-}{(\nu_1+k)/2,\ldots,(\nu_n+k)/2}{2^{-n}e^{2n\zeta}}}{\prod_{i=1}^n\Gamma\left[(\nu_i+k)/2\right]}.
\label{lya:real:f_k}
\end{equation}
The joint density~\eqref{lya:real:jpdf-det} simplifies considerably if we use the definition~\eqref{lya:real:f_k} together with the explicit expression for the weight function~\eqref{lya:stab:weight} and the identity~\eqref{special:meijer:meijer-shift},
\begin{equation}
p_{\jpdf}^{\beta=1,n}(\zeta_1,\theta_1,\ldots,\zeta_N,\theta_N)
=\frac{1}{2^NN!}\abs[\Big]{\det_{1\leq k,\ell\leq N}\big[e^{i(k-1)\theta_\ell}f_{k}^{\beta=1,n}(\zeta_\ell)\big]}.
\label{lya:real:jpdf-simple}
\end{equation}
Note that the stability exponents in the joint density~\eqref{lya:real:jpdf-simple} are unordered. This can equivalently be expressed as a sum over all possible orderings, 
\begin{equation}
p_{\jpdf}^{\beta=1,n}(\zeta_1,\theta_1,\ldots,\zeta_N,\theta_N)=
\frac{1}{2^NN!}\sum_{\sigma\in S_N}\abs[\Big]{\det_{1\leq k,\ell\leq N}\big[e^{i(k-1)\theta_{\ell}}f_{k}^{\beta=1,n}(\zeta_{\ell})\big]}
\prod_{j=1}^{N-1}\one_{\zeta_{\sigma(j)}\leq\zeta_{\sigma(j+1)}},
\end{equation}
where $\one_{x\leq y}$ denotes the indicator function which is equal to unity if $x\leq y$ and zero otherwise.
Now, if we write the determinant as a sum over permutation, then the joint density becomes
\begin{multline}
p_{\jpdf}^{\beta=1,n}(\zeta_1,\theta_1,\ldots,\zeta_N,\theta_N)=\\
\frac{1}{2^NN!}\sum_{\sigma\in S_N}\abs[\Bigg]{\sum_{\omega\in S_N}\sign\omega\prod_{k=1}^N e^{i(k-1)\theta_{\omega(k)}}f_{k}^{\beta=1,n}(\zeta_{\omega(k)})
\prod_{j=1}^{N-1}\one_{\zeta_{\sigma(j)}\leq\zeta_{\sigma(j+1)}}}.
\label{lya:real:jpdf-long}
\end{multline}
So far no approximation have been used. 

At large $n$ the normalised function~\eqref{lya:real:f_k} can be evaluated using the same technique as for complex and quaternionic matrices; we find that
\begin{equation}
f_{k}^{\beta=1,n}(\zeta)\sim\sqrt{\frac{n}{2\pi(\sigma_{k}^{\beta=1})^2}} \exp\bigg[-n\frac{(\zeta-\mu_{k}^{\beta=1})^2}{2(\sigma_{k}^{\beta=1})^2}\bigg]
\label{lya:real:f-limit}
\end{equation}
with $\mu_{k}^{\beta=1}$ and $\sigma_{k}^{\beta=1}$ as in~\eqref{lya:intro:mean+var}.
In the strict $n\to\infty$ limit, the variances tends to zero and the Gaussians~\eqref{lya:real:f-limit} become delta-peaks. Inserting this into~\eqref{lya:real:jpdf-long}, we obtain
\begin{multline}
\lim_{n\to\infty}p_{\jpdf}^{\beta=1,n}(\zeta_1,\theta_1,\ldots,\zeta_N,\theta_N)=\\
\frac{1}{2^NN!}\sum_{\sigma\in S_N}\abs[\Bigg]{\sum_{\omega\in S_N}\sign\omega\prod_{k=1}^N e^{i(k-1)\theta_{\omega(k)}}\delta(\zeta_{\omega(k)}-\mu_k^{\beta=1})
\prod_{j=1}^{N-1}\one_{\zeta_{\sigma(j)}\leq\zeta_{\sigma(j+1)}}}.
\end{multline}
We know from~\eqref{lya:intro:mean+var} that $\mu_k^{\beta=1}<\mu_\ell^{\beta=1}$ for $k<\ell$, hence the only contribution to the sum within the absolute value comes from the term $\omega=\sigma$ (all other terms are zero due the indicator functions). The absolute value cancels any overall sign of the remaining term, such that
\begin{equation}
\lim_{n\to\infty}p_{\jpdf}^{\beta=1,n}(\zeta_1,\theta_1,\ldots,\zeta_N,\theta_N)
=\frac{1}{2^NN!}\sum_{\sigma\in S_N}\prod_{k=1}^N \delta(\zeta_{\sigma(k)}-\mu_k^{\beta=1})
\label{lya:real:lyapunov-delta}
\end{equation}
with the constraint that the phases only takes the distributed values $0$ and $\pi$, which is the $\beta=1$ case of~\eqref{lya:stab:permanent}.

At large but finite $n$, we have the Gaussian approximation~\eqref{lya:real:f-limit} rather than delta-peaks. Note that the means of the Gaussians are independent of $n$ while the variances decrease for increasing $n$, see~\eqref{lya:real:f-limit}. It follows that at sufficiently large $n$, $f_{k}^{\beta=1,n}(\zeta)$ ($k=1,\ldots,N$) are well-separated Gaussian-peaks. Inserting this approximation into~\eqref{lya:real:jpdf-long}, we see that at large $n$ the dominant term in the sum inside the absolute value is $\omega=\sigma$, while all other terms will be exponentially suppressed due the indicator functions. The simplest illustration of this is when there is only one indicator function, i.e. only one constraint. In this case the Gaussian approximation~\eqref{lya:real:f-limit} yields
\begin{equation}
\int_\R d\zeta_k \int_\R d\zeta_\ell\, f_{\ell}^{\beta=1,n}(\zeta_k) f_{k}^{\beta=1,n}(\zeta_\ell)\one_{\zeta_k\leq\zeta_\ell}
\sim\frac12\erfc\Bigg[\frac{\mu_k^{\beta=1}-\mu_\ell^{\beta=1}}{\big(2(\sigma_k^{\beta=1})^2+2(\sigma_\ell^{\beta=1})^2\big)^{1/2}}\Bigg].
\end{equation}
This integral gives the probability for a particular ordering of Lyapunov exponents and we see that it is exponentially close to unity for $k<\ell$ and exponentially suppressed for $k>\ell$. It is clear that the idea of this argument can be extended to the general case, and it follows that asymptotically we have
\begin{equation}
p_{\jpdf}^{\beta=1,n}(\zeta_1,\theta_1,\ldots,\zeta_N,\theta_N)
\sim\frac{1}{2^NN!}\sum_{\sigma\in S_N}\prod_{k=1}^Nf_{k}^{\beta=1,n}(\zeta_{\sigma(k)}).
\end{equation}
The density limit~\eqref{lya:stab:dens-stab-limit} follows similarly to the complex and quaternionic cases.
\end{proof}

\begin{remark}
Proposition~\ref{thm:lya:stab-exp} tells us that the eigenvalues of a product of independent induced Ginibre matrices become independent and exponentially separated when the number of factors tends to infinity. Furthermore, the absolute value of an individual eigenvalue becomes log-normal distributed. 
\end{remark}

\begin{remark}
If we are interested in the stability exponents only (and \emph{not} their phases), then the proofs for the complex and quaternionic case can be simplified using propositions~\ref{thm:complex:permanent-C} and~\ref{thm:complex:permanent-H}, respectively. After integration over the phases the joint densities for the stability exponents becomes independent random variables even at finite $n$. Their distribution is identical to that of a product of gamma-distributed random variables~\cite{AS:2013,AIS:2014}, thus the problem reduces to show that a product of gamma-distributed random variables becomes log-normal distributed in the large-$n$ limit, which is well-known. 
\end{remark}

\subsection{Lyapunov exponents}
\label{sec:lya:lyapunov}

In this section we will investigate the statistical properties of the Lyapunov exponents~\eqref{lya:intro:lyapunov-def} for a product of induced Ginibre matrices. We will focus on complex ($\beta=2$) case, where we have an exact formula for the joint density of the squared singular values. However, there exist results for all three Dyson classes ($\beta=1,2,4$), see~\cite{Newman:1986,Forrester:2013,Kargin:2014,Forrester:2015}. In fact, the analytic expression for the Lyapunov exponents (but not their fluctuations!) pre-dates the exact formula for the joint density of the singular values from chapter~\ref{chap:singular}. The approach used in~\cite{Newman:1986,Forrester:2013,Kargin:2014,Forrester:2015} is based on an idea  valid for isotropic matrices introduced by Cohen and Newman in~\cite{CN:1984}. We will return to this method in remark~\ref{remark:lya:cohen-newman}.

We recall from proposition~\ref{thm:singular:jpdf-determinantal} that if $x_1,\ldots,x_N$ denote squared singular values of the product matrix (the $n$-dependence is suppressed for notational simplicity), then the joint density is given by
\begin{equation}
\cP_\jpdf^{\beta=2}(x_1,\ldots,x_N)=\frac1{\cZ}
\prod_{1\leq i<j\leq N}(x_j-x_i)
\det_{1\leq i,j\leq N}\big[w_{j-1}(x_i)\big],
\label{lya:lya:jpdf-singular}
\end{equation}
where $\cZ$ is normalisation constant and $w_{j-1}(x)\ (j=1,\ldots,N)$ is a family of positive weight functions given by
\begin{equation}
\cZ=N!\prod_{k=0}^{N-1}k!\prod_{i=1}^n\Gamma[\nu_i+k+1]
\quad\text{and}\quad
w_{j}(x)=\MeijerG{n}{0}{0}{n}{-}{\nu_1+j,\nu_2,\ldots,\nu_n}{x},
\label{lya:lya:weight+normal}
\end{equation}
respectively. As above, $N$ denotes the matrix dimension and $n$ the number of factors.

Like for the stability exponents, it is convenient to change to exponential variables due to the expected asymptotic growth of the singular values~\eqref{lya:intro:singular-asymp}  when $n$ becomes large. For this reason, we introduce the joint density for the finite-$n$ Lyapunov exponents defined by
\begin{equation}
p_\jpdf^{n}(\lambda_1,\ldots,\lambda_N):=\prod_{j=1}^N2ne^{2n\lambda_j}\,\cP_\jpdf^{\beta=2}(e^{2n\lambda_1},\ldots,e^{2n\lambda_N}),
\label{lya:lya:jpdf-lya}
\end{equation}
where joint density on the right hand side is~\eqref{lya:lya:jpdf-singular}. Once again, it is important to note that~\eqref{lya:lya:jpdf-lya} is nothing but a change of variables; the prefactor on the right hand side is simply the Jacobian arising from the change of variables. The spectral density is (as always) given by
\begin{equation}
\rho^{n}(\lambda_1)=N\bigg[\prod_{i=2}^N\int_\R d\lambda_i\bigg]p_\jpdf^{n}(\lambda_1,\ldots,\lambda_N).
\label{lya:lya:dens-lya}
\end{equation}
We will not consider higher point correlations function, since it turns out that the Lyapunov exponents become independent in the large-$n$ if $N$ is kept fixed (as we will see).

Now, let us formulate the main statement of this section, which regards the Lyapunov exponents and their limits. This result was originally shown by Akemann, Burda and Kieburg in~\cite{ABK:2014} for square matrices (i.e. $0=\nu_1=\nu_2=\cdots$). The result given below, which is valid for a non-negative convergent sequence of charges, is a straightforward generalisation of their result, but it is included here for completeness and because it will be useful in the following discussion.

\begin{proposition}\label{thm:lya:lya}
Let $\{\nu_i\}$ be a non-negative convergent sequence with the limit $\nu_i\to\nu_\infty<\infty$.
For $N$ fixed, the large-$n$ limit of the joint density for the Lyapunov exponents~\eqref{lya:lya:jpdf-lya} becomes
\begin{equation}
\lim_{n\to\infty}p_\jpdf^{\beta=2,n}(\lambda_1,\ldots,\lambda_N)
=\frac{1}{N!}\per_{1\leq k,\ell\leq N}\big[\delta(\lambda_\ell-\mu_k^{\beta=2}) \big],
\label{lya:lya:permanent}
\end{equation}
while the density~\eqref{lya:lya:dens-lya} allows the limits
\begin{equation}
\lim_{n\to\infty}\frac{\sigma_k^{\beta=2}}{\sqrt{n}}\rho^{n}\Big(\mu_k^{\beta=2}+\lambda\frac{\sigma_k^{\beta=2}}{\sqrt{n}}\Big)=\frac{e^{-\lambda^2/2}}{\sqrt{2\pi}},\qquad k=1,\ldots,N.
\label{lya:lya:dens-limit}
\end{equation}
where $\mu_k^{\beta=2}$ and $\sigma_k^{\beta=2}$ are defined as in proposition~\ref{thm:lya:stab-exp}.
\end{proposition}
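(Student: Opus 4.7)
The plan is to mimic the complex case of proposition~\ref{thm:lya:stab-exp} while exploiting the polynomial-ensemble structure of the joint singular-value density. After the substitution $x_i=e^{2n\lambda_i}$, the identity $\Delta_N(e^{2n\lambda})=\det[e^{2n(k-1)\lambda_\ell}]$ together with a column rescaling by $e^{2n\lambda_\ell}$ turns the Jacobian-corrected density into
\begin{equation}
p_\jpdf^n(\lambda_1,\ldots,\lambda_N)=\frac{(2n)^N}{\cZ}\det_{k,\ell}[e^{2nk\lambda_\ell}]\det_{k,\ell}[w_{k-1}(e^{2n\lambda_\ell})].
\end{equation}
Expanding the first determinant via Leibniz, absorbing $\sign\sigma$ by a column permutation of the second determinant, and applying the Meijer-$G$ shift identity~\eqref{special:meijer:meijer-shift} rewrites each entry as $\MeijerG{n}{0}{0}{n}{-}{\nu_1+k+\ell-1,\nu_2+\ell,\ldots,\nu_n+\ell}{e^{2n\lambda}}$. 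Normalising this $G$ by its Mellin value at $s=0$ produces a probability density $f_{k\ell}^n(\lambda)$ on $\R$, and pulling the remaining column factors through the determinant leads to the representation
\begin{equation}
p_\jpdf^n(\lambda_1,\ldots,\lambda_N)=\frac{1}{N!}\sum_{\sigma\in S_N}\det_{k,\ell}\bigl[D_{k\ell}\,f_{k\ell}^n(\lambda_{\sigma(\ell)})\bigr],\qquad D_{k\ell}:=\frac{\Gamma[\nu_1+k+\ell-1]}{(\ell-1)!\,\Gamma[\nu_1+\ell]}.
\end{equation}

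A cumulant analysis of $f_{k\ell}^n$, essentially identical to the complex part of the proof of proposition~\ref{thm:lya:stab-exp}, shows that its cumulant-generating function is a sum of $n$ digamma-type terms; the first two cumulants satisfy $\mu_{k\ell}^n\to\tfrac12\psi[\nu_\infty+\ell]=\mu_\ell^{\beta=2}$ and $n(\sigma_{k\ell}^n)^2\to(\sigma_\ell^{\beta=2})^2$, with the $j$th standardised cumulant $O(n^{1-j/2})$. Crucially these limits depend only on $\ell$ (the $k$-dependence enters only through a single term of order $1/n$), so $f_{k\ell}^n$ converges distributionally to $\delta(\,\cdot\,-\mu_\ell^{\beta=2})$ and, after centring and rescaling by $\sigma_\ell^{\beta=2}/\sqrt n$, to a standard Gaussian with a limit that is independent of $k$.

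The main obstacle, and the genuinely new step compared with the stability proof, is that the coefficients $D_{k\ell}$ are $n$-independent and are \emph{not} exponentially suppressed off-diagonal, so compatibility with the permanental form must come from a global algebraic identity. I would establish $\det D=1$ by writing $\Gamma[\nu_1+k+\ell-1]/\Gamma[\nu_1+\ell]=(\nu_1+\ell)_{k-1}$, which is a monic polynomial of degree $k-1$ in $\ell$; factoring the matrix $[(\nu_1+\ell)_{k-1}]$ as (unitriangular coefficient matrix in the monomial basis) $\times$ (Vandermonde in $\ell$) reduces its determinant to $\prod_\ell(\ell-1)!$, which cancels against the $\prod_\ell(\ell-1)!$ coming from the denominator of $D_{k\ell}$. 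Given this, the $n\to\infty$ limit of each $\sigma$-term in the representation collapses to $\det[D_{k\ell}]\prod_\ell\delta(\lambda_{\sigma(\ell)}-\mu_\ell^{\beta=2})=\prod_\ell\delta(\lambda_{\sigma(\ell)}-\mu_\ell^{\beta=2})$, which is precisely~\eqref{lya:lya:permanent}.

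For~\eqref{lya:lya:dens-limit} I would integrate out $\lambda_2,\ldots,\lambda_N$ in the Leibniz expansion of the determinant: since each $f_{k\ell}^n$ is normalised, only the factor associated with $\sigma^{-1}(1)=:\ell^*$ survives, and grouping the $(N-1)!$ permutations sharing the same $\ell^*$ with a cofactor expansion yields
\begin{equation}
\rho^n(\lambda_1)=\sum_{\ell=1}^N\sum_{k=1}^N D_{k\ell}\,\mathrm{cof}_{k\ell}(D)\,f_{k\ell}^n(\lambda_1).
\end{equation}
In the scaling window around $\mu_k^{\beta=2}$ only the terms with $\ell=k$ contribute, as the remaining Gaussians are centred at the distinct points $\mu_\ell^{\beta=2}$ and are suppressed by $e^{-O(n)}$; the column-$k$ cofactor identity $\sum_{k'}D_{k',k}\,\mathrm{cof}_{k',k}(D)=\det D=1$ together with the uniform-in-$k'$ Gaussian limit of $f_{k',k}^n$ then delivers~\eqref{lya:lya:dens-limit}.
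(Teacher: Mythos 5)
Your proposal follows essentially the same route as the paper's proof: change to exponential variables, Leibniz-expand the Vandermonde to obtain the representation $\tfrac{1}{N!}\sum_{\sigma\in S_N}\det_{1\leq k,\ell\leq N}[D_{k\ell}\,f_{k\ell}^n(\lambda_{\sigma(\ell)})]$, show via a cumulant expansion that $f_{k\ell}^n$ converges to a Gaussian depending only on the permuted index, and invoke $\det D = 1$ to pass from determinant to permanent (the paper transposes the index conventions, but that is immaterial). Your identification of $\det D=1$ as the algebraic crux --- in contrast to the stability case, where $D_{k\ell}^{\beta=2,n}\to\delta_{k\ell}$ exponentially --- is apt, your Pochhammer--Vandermonde factorisation is a clean explicit proof of that identity, which the paper only asserts via a ``straightforward induction argument'', and your cofactor expansion for the density limit makes precise a step the paper's proof leaves implicit.
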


\begin{proof}
We follow the same steps as in~\cite{ABK:2014}: Our starting point is the joint density for the finite-$n$ Lyapunov exponents~\eqref{lya:lya:jpdf-lya}. After absorbing the prefactor in~\eqref{lya:lya:jpdf-lya} into the Vandermonde determinant stemming from~\eqref{lya:lya:jpdf-singular}, we write
\begin{equation}
p_\jpdf^{\beta=2,n}(\lambda_1,\ldots,\lambda_N)=\frac{(2n)^N}{\cZ}\det_{1\leq k,\ell\leq N}\big[e^{2n\,\ell\lambda_k}\big]\det_{1\leq k,\ell\leq N}\big[w_{\ell-1}(e^{2n\lambda_k})\big].
\end{equation}
Expanding the first determinant in this expression gives
\begin{equation}
p_\jpdf^{\beta=2,n}(\lambda_1,\ldots,\lambda_N)=\frac{(2n)^N}{\cZ}\sum_{\sigma\in S_N}\sign\sigma\, e^{2n\sigma(k)\lambda_k}\det_{1\leq k,\ell\leq N}\big[w_{\ell-1}(e^{2n\lambda_k})\big],
\end{equation}
where the sum is over all permutations and `$\sign\sigma$' denotes the sign of the permutation. 

For the next step we recall the explicit form of the normalisation constant~\eqref{lya:lya:weight+normal}. Absorbing factors into the remaining determinant and reordering columns within the determinant to cancel $\sign\sigma$, we see that
\begin{equation}
p_\jpdf^{\beta=2,n}(\lambda_1,\ldots,\lambda_N)=\frac1{N!}\sum_{\sigma\in S_N} \det_{1\leq k,\ell\leq N}
\bigg[\frac{2n\,e^{2nk\lambda_{\sigma(k)}}w_{\ell-1}(e^{2n\lambda_{\sigma(k)}})}{\Gamma[k]\prod_{i=1}^n\Gamma[\nu_i+k]}\bigg],
\label{lya:lya:jpdf-proof-1}
\end{equation}
with weight and normalisation as above.

In order to simplify notation, let us introduce the normalised, non-negative function
\begin{equation}
f_{k\ell}^{n}(\lambda):= \frac{2n\,\MeijerG{n}{0}{0}{n}{-}{\nu_1+k+\ell-1,\nu_2+k\ldots,\nu_n+k}{e^{2n\lambda}}}{\Gamma[\nu_1+k+\ell-1]\prod_{i=2}^n\Gamma[\nu_i+k]}.
\label{lya:lya:f-proof}
\end{equation}
Here, the normalisation may again be checked using the integration formula~\eqref{special:meijer:meijer-moment}, while the non-negativity can be seen by writing the Meijer-$G$ function as an $n$-fold integral similar to the discussion of product of Gaussian distributed random scalars in section~\ref{sec:prologue:scalar}. In fact, $f_{k\ell}^{n}(\lambda)$ has a natural interpretation as the density for a product of random variables, which explains its non-negativity.

Using the relation~\eqref{special:meijer:meijer-shift} together with the definition~\eqref{lya:lya:f-proof}, the joint density~\eqref{lya:lya:jpdf-proof-1} becomes
\begin{equation}
p_\jpdf^{\beta=2,n}(\lambda_1,\ldots,\lambda_N)=\frac1{N!}\sum_{\sigma\in S_N} \det_{1\leq k,\ell\leq N}
\bigg[\frac{\Gamma[\nu_1+k+\ell-1]}{\Gamma[k]\Gamma[\nu_1+k]}f_{k\ell}^n(\lambda_{\sigma(k)})\bigg].
\label{lya:lya:jpdf-proof-2}
\end{equation}
The prefactor within the determinant does not depend on $n$, thus we can restrict our attention to asymptotic behaviour of~\eqref{lya:lya:f-proof}. 

Completely analogously to the discussion of a product of Gaussian random scalars in section~\ref{sec:prologue:scalar} and the proof from the previous section, we can use the cumulant expansion to find the asymptotic behaviour of the function $f_{k\ell}^{n}(\lambda)$. In the following, we will interpret $f_{k\ell}^{n}(\lambda)$ as the probability density for a random variable. The corresponding cumulant-generating function $g_{k\ell}^{n}(\xi)$ is given by
\begin{equation}
g_{k\ell}^{n}(\xi):= \log\Big( \int_{-\infty}^\infty d\lambda\, e^{\xi\lambda} f_{k\ell}^{n}(\lambda)\Big)
\end{equation}
The integral within the logarithm may be performed using~\eqref{special:meijer:meijer-moment}, which yields
\begin{equation}
g_{k\ell}^{n}(\xi)=\log\frac{\Gamma[\nu_i+k+\ell-1+\xi/2n]}{\Gamma[\nu_i+k+\ell-1]}+\sum_{i=2}^n\log\frac{\Gamma[\nu_i+k+\xi/2n]}{\Gamma[\nu_i+k]}.
\end{equation}
The cumulants are found by expanding the cumulant-generating function around $\xi=0$, hence the $j$-th cumulant is given by
\begin{equation}
\kappa_{k\ell;j}^{n}:= \frac{\p^jg_{k\ell}^{n}(\xi)}{\p\xi^j} \bigg\vert_{\xi=0}
=\frac{1}{(2n)^j}\psi^{(j-1)}(\nu_1+k+\ell-1)+\sum_{i=2}^n\frac{1}{(2n)^j}\psi^{(j-1)}(\nu_i+k).
\end{equation}
In particular, we have (recall that $\nu_1$ is a finite $n$-independent constant)
\begin{subequations}
\begin{align}
\lim_{n\to\infty}\kappa_{k\ell;1}^{n}&=\frac{1}{2}\psi(\nu_\infty+k)=\mu_{k}^{\beta=2}, \\
\lim_{n\to\infty}n\,\kappa_{k\ell;2}^{n}&=\frac{1}{4}\psi'(\nu_\infty+k)=(\sigma_{k}^{\beta=2})^2.
\end{align}
\end{subequations}
To find the limiting distribution in the large-$n$ limit, we again switch from the variable $\lambda$ to the standardised variable $\widetilde \lambda:= \sqrt{n}(\lambda-\mu_{k\ell}^{\beta})/\sigma_{k\ell}^{\beta}$, which has mean zero and unit variance. The corresponding standardised cumulants are
\begin{equation}
\widetilde\kappa_{k\ell;1}^{n}=0,\qquad
\widetilde\kappa_{k\ell;2}^{n}=1 \qquad\text{and}\qquad
\widetilde\kappa_{k\ell;n}^{n}= O(n^{1-j/2}) \quad\text{for}\quad j\geq 3.
\end{equation}
We see that the higher order cumulants tend to zero for large $n$, and, as in the proof of proposition~\ref{thm:lya:stab-exp},
it follows that the limiting distribution is a Gaussian. In terms of the original variables, we have
\begin{equation}
f_{k\ell}^{n}(\lambda)\sim\sqrt{\frac{n}{2\pi(\sigma_k^{\beta=2})^2}}\exp\bigg[-n\frac{(\lambda-\mu_{k}^{\beta=2})^2}{2(\sigma_{k}^{\beta=2})^2}\bigg].
\end{equation}
Inserting this asymptotic form back into the joint density~\eqref{lya:lya:jpdf-proof-2} yields
\begin{multline}
p_\jpdf^{\beta=2,n}(\lambda_1,\ldots,\lambda_N)\sim\frac1{N!}\sum_{\sigma\in S_N} \det_{1\leq k,\ell\leq N}
\bigg[\frac{\Gamma[\nu_1+k+\ell-1]}{\Gamma[k]\Gamma[\nu_1+k]}\bigg]\\
\times\sqrt{\frac{n}{2\pi(\sigma_k^{\beta=2})^2}}\exp\bigg[-n\frac{(\lambda-\mu_{k}^{\beta=2})^2}{2(\sigma_{k}^{\beta=2})^2}\bigg].
\label{lya:lya:jpdf-proof-3}
\end{multline}
It is straightforward to show using an induction argument that the determinant in this expression is equal to unity independently of the value of $\nu_1$, thus the right hand side is recognised as a permanent and the limits~\eqref{lya:lya:permanent} and~\eqref{lya:lya:dens-limit} follows.
\end{proof}

\begin{remark}\label{remark:lya:cohen-newman}
As mentioned in the beginning of this section, the Lyapunov exponents for a product of $N\times N$ induced Ginibre matrices are actually known for all three Dyson classes ($\beta=1,2,4$). We have~\cite{Newman:1986,Forrester:2013,Kargin:2014}
\begin{equation}
\mu_k^\beta=\frac{1}{2}\log\frac{2}{\beta}+\frac{1}{2}\psi\Big(\frac{\beta(\nu_\infty+k)}{2}\Big),\qquad k=1,\ldots,N.
\label{lya:lya:lyapunov-general}
\end{equation}
In fact, this result pre-dates the proof for $\beta=2$ given above (at least for $0=\nu_1=\nu_2=\cdots$). The method used in~\cite{Newman:1986,Forrester:2013,Kargin:2014} is based on an approach introduced Cohen and Newman in~\cite{CN:1984} and is valid for isotropic matrices (we will refer to this approach as the Cohen--Newman method). 

Let us briefly recall the main idea of the Cohen--Newman method~\cite{CN:1984,Newman:1986} (here we follow a formulation of the Cohen--Newman method similar to that presented in~\cite{ABK:2014}). The starting point is to note that the sum of the $K$ largest Lyapunov exponents can be written as
\begin{equation}
\Lambda_K(n):=\lambda_N(n)+\cdots+\lambda_{N-K+1}(n)=\frac1{2n}\sup_{M\in\F_\beta^{N\times K}}\log\frac{\det M^\dagger Y_n^\dagger Y_n M}{\det M^\dagger M},
\end{equation}
where $Y_n=X_n\cdots X_1$ is the product matrix. We may write this as a telescopic sum
\begin{equation}
\Lambda_K(n)=\frac1{2n}\sup_{M\in\F_\beta^{N\times K}}\sum_{i=1}^n\log\frac{\det M_i^\dagger X_i^\dagger X_i M_i}{\det M_i^\dagger M_i}
\end{equation}
with $M_i:=Y_{i-1}M$ for $i\geq2$ and $M_1=M$. Each of these matrices may be decomposed as $M_i=U_iP\Sigma_iV_i$ with $U_i\in\gU(N,\F_\beta)$ and $V_i\in\gU(K,\F_\beta)$ while $\Sigma_i$ is a $K\times K$ positive semi-definite diagonal matrix and $P=[\delta_{ij}]$ is the $N\times K$ projection matrix, which projects the $N$ dimensional vector space down to a $K$ dimensional vector space. If the matrices $X_i$ ($i=1,\ldots,n$) are statistically isotropic then we have
\begin{equation}
\Lambda_K(n)
%  \sup_{M\in\F_\beta^{N\times K}}\sum_{i=1}^n\log\frac{\det M_i^\dagger X_i^\dagger X_i M_i}{\det M_i^\dagger M_i}
\stackrel{d}{=}\frac1{2n}\sum_{i=1}^n\log\det P^\dagger X_i^\dagger X_i P
\stackrel{d}{=}\frac1{2n}\sum_{i=1}^n\sup_{M\in\F_\beta^{N\times K}}\log\frac{\det M^\dagger X_i^\dagger X_i M}{\det M^\dagger M}.
\end{equation}
Here, we may apply Kolmogorov's law of large numbers to the right hand side, which in the case of induced Ginibre matrices gives~\eqref{lya:lya:lyapunov-general}.

Very recently, it was pointed out in~\cite{Forrester:2015} that the Cohen--Newman method also may be used to make predictions for the variances of the Lyapunov exponents. Here, we look at
\begin{equation}
\average{(\Lambda_K)^2}-\average{\Lambda_K}^2=\sum_{k=1}^K\var(\lambda_{N-k+1})+2\sum_{1\leq i<j\leq K}\cov(\lambda_{N-i+1},\lambda_{N-j+1}),
\label{lya:lya:var-CN}
\end{equation}
where expectation is with respect to the joint density~\eqref{lya:lya:jpdf-lya}. The left hand side may be evaluated using ideas similar to those discussed above. We note that if $K=1$ then the second sum on the right hand side disappears, which allow us to determine the variance of the largest Lyapunov exponent (this was done as early as~\cite{CN:1984}). It was argued in~\cite{Forrester:2015} that this approach may be extended to $K>1$ if the Lyapunov spectrum is discrete and non-degenerate, since the covariances are expected to decay exponentially fast in this case. If we believe this assumption, then the variances of the Lyapunov exponents are
 \begin{equation}
(\sigma_k^\beta)^2=\frac{1}{4}\psi'\Big(\frac{\beta(\nu_\infty+k)}{2}\Big),\qquad k=1,\ldots,N.
\label{lya:lya:variance-general}
\end{equation}
We note that this prediction is in agreement with the exact evaluation for $\beta=2$ given in proposition~\ref{thm:lya:lya} as well as the stability exponents from proposition~\ref{thm:lya:stab-exp}.
\end{remark}

\section{Macroscopic density for large matrix dimension}
\label{sec:lya:macro}

In this section, we will consider the macroscopic spectral density for a product of induced Ginibre matrices as both matrix dimension and the number of factors tends to infinity. We will see that the two limits commute and that the macroscopic spectral density converges in distribution to Newman's triangular law~\cite{Newman:1986,IN:1992}. We note that the triangular law also has been revisited in the context of free probability~\cite{Kargin:2008,Tucci:2010} and in the context of exactly solvable matrix models in~\cite{ABK:2014}. 

Here, we will present separate proofs for the convergence of the eigenvalue spectrum (stability exponents) and of the spectrum of the singular values (Lyapunov exponents). 

\begin{proposition}\label{thm:lya:triangular-stab}
Let $\{\nu_i\}$ be a non-negative convergent sequence with the limit $\nu_i\to\nu_\infty<\infty$. Consider the spectral density, $R_1^{\beta,n}(x)$, for the eigenvalues of a product of $n$ independent $N\times N$ real ($\beta=1$), complex ($\beta=1$), or quaternionic ($\beta=4$) induced Ginibre matrices with charges $\nu_1,\ldots,\nu_n$. We have
\begin{align}
\int_0^adr\,\rho_\vartriangle(r)
&=\lim_{N\to\infty}\lim_{n\to\infty}\int_0^adr\,2nN^{n-1}r^{2n-1}\int_{-\pi}^\pi  d\theta\,R_1^n(N^{n/2}r^{n}e^{i\theta})\label{lya:asymp:tri-limit-stab-Nn} \\
&=\lim_{n\to\infty}\lim_{N\to\infty}\int_0^adr\,2nN^{n-1}r^{2n-1}\int_{-\pi}^\pi  d\theta\,R_1^n(N^{n/2}r^{n}e^{i\theta})\label{lya:asymp:tri-limit-stab-nN}
\end{align}
pointwise for $a\geq0$, where
\begin{equation}
\rho_\vartriangle(x):=
\begin{cases}
2x & \textup{for }\ 0<x<1 \\
0 & \textup{otherwise}
\end{cases}
\label{lya:asymp:triangular}
\end{equation}
is the density of the triangular law.
\end{proposition}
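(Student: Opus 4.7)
The substitution $|z|=N^{n/2}r^n$ is equivalent to the identity $\zeta:=\log|z|/n=\tfrac12\log N+\log r$, so the variable $r$ is precisely the stability exponent rescaled to absorb the $\sqrt N$-growth, namely $r=e^{\zeta}/\sqrt N$. The prefactor $2nN^{n-1}r^{2n-1}$ in the proposition is (up to an overall normalisation depending on the Dyson class) the Jacobian converting the spectral density $R_1^n$ into a probability density on the $r$-line. I would prove the two iterated limits separately.

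\textbf{First order ($n$ before $N$).} By proposition~\ref{thm:lya:stab-exp}---together with corollary~\ref{thm:lya:stab:-real-eigenvalues} in the real case---the empirical joint law of the stability exponents and their phases concentrates on the discrete set $\{(\mu_k^\beta,\theta):k=1,\ldots,N\}$ with phases distributed according to $K^\beta(\theta)$. Integrating out the phase (using that $K^\beta$ integrates to one) and changing variable to $r=e^{\zeta}/\sqrt N$ would give
\begin{equation*}
\lim_{n\to\infty}\bigl[\text{integrand in }r\bigr]\;=\;\frac{1}{N}\sum_{k=1}^N \delta\!\Big(r-\tfrac{e^{\mu_k^\beta}}{\sqrt N}\Big).
\end{equation*}
For the subsequent $N\to\infty$ step I would apply the digamma asymptotic $\psi(x)=\log x+O(x^{-1})$, which for all three Dyson classes yields $\mu_k^\beta=\tfrac12\log k+O(1/k)$, and hence $e^{\mu_k^\beta}/\sqrt N=\sqrt{k/N}\,(1+O(1/k))$, uniformly for $k/N$ in compact subsets of $(0,1]$. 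The empirical cumulative distribution of $\{\sqrt{k/N}\}_{k=1}^N$ converges to $a\mapsto a^2$ on $[0,1]$, whose density is the triangular law $\rho_\vartriangle(r)=2r\,\one_{[0,1]}$.

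\textbf{Second order ($N$ before $n$).} Here I would use proposition~\ref{thm:complex:macro-C} for $\beta=2$ and proposition~\ref{thm:complex:macro-H} for $\beta=4$, together with the corresponding ``product law'' for $\beta=1$ (cf.\ the discussion in section~\ref{sec:complex:discuss}), to replace $N^{n-1}R_1^n(N^{n/2}z)$ by the rotationally invariant macroscopic density $\rho_\macro^{n,0}(z)=|z|^{2(1-n)/n}/(\pi n)\,\one_{|z|\leq 1}$, obtained by specialising to $\alpha_1=\cdots=\alpha_n=0$. Substituting $z=r^n e^{i\theta}$, performing the $\theta$-integration by rotational invariance, and simplifying the powers of $r$ shows that after the inner limit the integrand equals $\rho_\vartriangle(r)$ \emph{independently of $n$}; the outer $n\to\infty$ limit is therefore trivial. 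Note that the cancellation of the $n$-dependence in $\rho_\macro^{n,0}$ under the radial rescaling $|z|\mapsto|z|^{1/n}$ is the analytic reflection of the fact that, for every fixed $n$, the mapping $r\mapsto\sqrt{k/N}$ already produces the triangular limit.

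\textbf{Main obstacle.} The principal difficulty is the real case ($\beta=1$) in the second order of limits, since the excerpt only establishes the macroscopic density in closed form for $\beta=2$ and $\beta=4$; one has to import the ``product law'' for real matrices from the external universality literature. A secondary, more technical issue is uniform control of $\mu_k^\beta$ for $k$ fixed as $N\to\infty$ (where the digamma asymptotic is poor), but since this concerns only $o(N)$ eigenvalues it contributes only an $O(1/N)$ correction to the empirical CDF. The interchange of the large-$n$ (resp.\ large-$N$) limit with the $\theta$- and $r$-integrations is in both cases justified by dominated convergence, a bounded compactly supported envelope being available either from the Gaussian bound behind proposition~\ref{thm:lya:stab-exp} or from the explicit closed form of $\rho_\macro^{n,0}$.
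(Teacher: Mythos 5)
Your proposal follows essentially the same two-step route as the paper: for the order $n\to\infty$ then $N\to\infty$ you invoke proposition~\ref{thm:lya:stab-exp} (with corollary~\ref{thm:lya:stab:-real-eigenvalues} to handle $\beta=1$), change variables to $r=e^{\zeta}/\sqrt{N}$, and use the digamma asymptotic to identify the empirical measure of $\{e^{\mu_k^\beta}/\sqrt{N}\}$ with the triangular law; for the reverse order you invoke propositions~\ref{thm:complex:macro-C} and~\ref{thm:complex:macro-H} together with the product-law literature for $\beta=1$, exactly as the paper does. The only cosmetic difference is in the second step: where the paper recasts the remaining $n\to\infty$ limit as a Hausdorff moment problem with $n$-independent answer $2/(s+2)$, you note directly that the power cancellation $r^{2n-1}\cdot r^{2(1-n)}=r$ makes the rescaled macroscopic density $n$-independent, which is the same observation expressed at the density rather than the moment level.
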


\begin{proof}
Let us start with the iterated limit~\eqref{lya:asymp:tri-limit-stab-Nn}.
We know from proposition~\ref{thm:lya:stab-exp} that
\begin{equation}
\lim_{n\to\infty}2ne^{2n\zeta}\int_{-\pi}^\pi d\theta\, R_1^{\beta,n}(e^{n\zeta+i\theta})=\sum_{k=1}^N\delta(\zeta-\mu_k^\beta),
\label{lya:asymp:sum-stab-proof}
\end{equation}
where $\mu_k^\beta$ ($k=1,\ldots,N$) are given by~\eqref{lya:intro:mean+var}.
It follows by a change of variables that
\begin{equation}
\lim_{n\to\infty}2nN^{n-1}r^{2n-1}\int_{-\pi}^\pi d\theta\, R_1^{\beta,n}(N^{n/2}r^{n}e^{i\theta})=\frac{1}{N}\sum_{k=1}^N\delta(r-e^{\mu_k^\beta-\frac12\log N}).
\end{equation}
In order to evaluate the large-$N$ limit, we need to look at the asymptotic behavior of $\mu_k^\beta$ for large $k$. The asymptotic formula~\eqref{special:gamma:digamma-asymp} may be used to evaluate $\mu_k^\beta$ for large $k$, since $\nu_\infty$ is a finite constant. If $\alpha>0$ and $k=\alpha N$, then we have
\begin{equation}
e^{\mu_k^\beta-\frac12\log N}=\sqrt{\alpha}\,(1+O(N^{-1}))
\end{equation}
and it follows that
\begin{equation}
\lim_{N\to\infty}\lim_{n\to\infty}\int_0^adr\,2nN^{n-1}r^{2n-1}\int_{-\pi}^\pi d\theta\, R_1^{\beta,n}(N^{n/2}r^{n}e^{i\theta})
=
\begin{cases}
a^2 & \textup{for }\ 0<a<1 \\
1 & \textup{otherwise}
\end{cases},
\end{equation}
which confirms that~\eqref{lya:asymp:tri-limit-stab-Nn} converges as claimed.

We can now turn to the to the other iterated limit~\eqref{lya:asymp:tri-limit-stab-nN}.
We recall from proposition~\ref{thm:complex:macro-C} and~\ref{thm:complex:macro-H} that for $\beta=2,4$, we have
\begin{equation}
\lim_{N\to\infty}2nN^{n-1}r^{2n-1}\int_{-\pi}^\pi d\theta\, R_1^{\beta,n}(N^{n/2}r^{n}e^{i\theta})=4\pi nr^{2n-1}\rho_\macro^{n,\alpha=0}(r^{n}),
\end{equation}
where $\rho_\macro^{n,\alpha=0}(r^{n})$ is the macroscopic density~\eqref{complex:C:density-macro0}; the same is known to hold for $\beta=1$, see~\cite{GT:2010,OS:2011,GKT:2014}. In order to evaluate the large-$n$ limit, we first recall that the outer edge by definition is located at unity. Since we have integrated out the phase, the task of finding the limiting density is a Haussdorff moment problem if the moments converge. It follows from the explicit expression for the density~\eqref{complex:C:density-macro0} that
\begin{align}
\int_0^1 dr\,4\pi nr^{2n-1}\rho_\macro^{n,\alpha=0}(r^{n})r^s=\frac{2}{s+2},
\end{align}
thus the large-$n$ limit for the moments becomes trivial. The density~\eqref{lya:asymp:triangular} is obtained by an inverse Mellin transform or by a direct calculation of the moments of~\eqref{lya:asymp:triangular}.
\end{proof}

\begin{remark}
We emphasise that the phase-average included in~\eqref{lya:asymp:tri-limit-stab-Nn} and~\eqref{lya:asymp:tri-limit-stab-nN} is necessary for the proposition to hold, since the $N$- and $n$-limit do not commute without it. For instance, if we look at the spectrum for a product of real ($\beta=1$) matrices and we take $N\to\infty$ before we take $n\to\infty$ then we find a limiting density which is invariant under rotations in the complex plane; on the other hand, if we take $n\to\infty$ before we take $N\to\infty$ then we find a limiting density restricted to the real line. A similar discrepancy is present in the quaternionic case, while the spectrum is rotational symmetric in both cases for complex matrices.
\end{remark}

Let us formulate the equivalent statement for the singular values.

\begin{proposition}\label{thm:lya:triangular}
Let $\{\nu_i\}$ be a non-negative convergent sequence with the limit $\nu_i\to\nu_\infty<\infty$. The spectral density, $R_1^n(x)$, for the squared singular values for a product of $n$ independent $N\times N$ induced Ginibre matrices with charges $\nu_1,\ldots,\nu_n$ satisfies,
\begin{align}
\int_0^adx\,\rho_\vartriangle(x)
&=\lim_{N\to\infty}\lim_{n\to\infty}\int_0^adx\,2nN^{n-1}x^{2n-1}R_1^{\beta,n}(N^nx^{2n}) \label{lya:asymp:tri-limit-Nn}\\
&=\lim_{n\to\infty}\lim_{N\to\infty}\int_0^adx\,2nN^{n-1}x^{2n-1}R_1^{\beta,n}(N^nx^{2n}) \label{lya:asymp:tri-limit-nN}
\end{align}
pointwise for $a\geq0$, where $\rho_\vartriangle(x)$ is given by~\eqref{lya:asymp:triangular}
\end{proposition}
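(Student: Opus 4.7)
The plan is to establish the two iterated limits by matching the two different routes already worked out in this chapter: for~\eqref{lya:asymp:tri-limit-Nn} (inner $n\to\infty$) I would leverage the Gaussian/delta structure of the Lyapunov exponents from proposition~\ref{thm:lya:lya} (extended to $\beta=1,4$ via the Cohen--Newman argument in remark~\ref{remark:lya:cohen-newman}), and for~\eqref{lya:asymp:tri-limit-nN} (inner $N\to\infty$) I would use the Fuss--Catalan macroscopic density from proposition~\ref{thm:singular:fuss-catalan} together with a Mellin-transform computation of the moments after rescaling.

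For~\eqref{lya:asymp:tri-limit-Nn}, writing the squared singular values as $e^{2n\lambda_k}$ and setting $e^{2n\lambda}=N^n x^{2n}$, the inner limit forces the density $2nN^{n-1}x^{2n-1}R_1^{\beta,n}(N^n x^{2n})$ to concentrate on the $N$ atoms located at $x_k=e^{\mu_k^\beta-\frac{1}{2}\log N}$, in direct analogy with~\eqref{lya:asymp:sum-stab-proof} in the proof of proposition~\ref{thm:lya:triangular-stab}. Using the asymptotic $\psi(z)=\log z+O(z^{-1})$ from appendix~\ref{app:special}, the $k$-th atom sits at $\sqrt{k/N}\,(1+O(N^{-1}))$, so that the resulting atomic measure is a Riemann sum for $2x\,dx$ on $[0,1]$, yielding the limiting distribution function $a^2\one_{0<a<1}+\one_{a\geq 1}$. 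For~\eqref{lya:asymp:tri-limit-nN}, proposition~\ref{thm:singular:fuss-catalan} (and its $\beta=1,4$ counterparts from the free-probability literature cited in chapter~\ref{chap:singular}) gives $N^{1-n}R_1^n(N^n y)\to\rho_\fuss^n(y)$; after substituting $y=x^{2n}$, what remains is to show that $\int_0^{a^{2n}}\rho_\fuss^n(y)\,dy\to a^2$ for $0<a<1$ (and trivially $\to 1$ for $a\geq1$). Analytically continuing~\eqref{singular:asymp:moments-fc} in the exponent and setting $t=s/(2n)$ gives
\begin{equation*}
\int_0^{K_n}\rho_\fuss^n(y)\,y^{s/(2n)}\,dy=\frac{\Gamma((n+1)s/(2n)+1)}{\Gamma(s/2+2)\,\Gamma(s/(2n)+1)}\xrightarrow{n\to\infty}\frac{1}{s/2+1}=\frac{2}{s+2},
\end{equation*}
which are precisely the moments of the triangular law $\rho_\vartriangle$ on $[0,1]$. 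Since $\rho_\vartriangle$ has compact support, its moment problem is determinate, and the method of moments delivers pointwise convergence of the distribution functions on $[0,a]$.

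The main obstacle is the justification of the two limit exchanges. In the first iterated limit, one must verify that the Gaussian broadening of width $O(n^{-1/2})$ around each $\mu_k^\beta$ in~\eqref{lya:lya:dens-limit} does not affect the integral against $\one_{[0,a]}$ even as $N$ grows, which follows because the atoms in the variable $x$ are $O(1)$-separated while the widths remain exponentially small in $x$ (so long as $N$ is kept fixed while $n$ is sent to infinity). In the second, the delicate point is the analytic continuation of~\eqref{singular:asymp:moments-fc} to non-integer $t$ and the uniformity required to pass to a pointwise statement at $a=1$ (the only point of discontinuity); both are best handled by working directly with the Meijer $G$-representation~\eqref{singular:asymp:density-fc} and performing Mellin inversion within its strip of analyticity, which also yields the requisite uniform bounds near $y=0$ using the near-origin asymptotics recorded in~\eqref{singular:asymp:limits}.
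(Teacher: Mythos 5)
Your proposal is correct and follows essentially the same route as the paper: the inner $n\to\infty$ limit in~\eqref{lya:asymp:tri-limit-Nn} is handled by the delta-concentration from proposition~\ref{thm:lya:lya} (extended via Cohen--Newman to $\beta=1,4$) together with the digamma asymptotic turning the atoms into a Riemann sum for $2x\,dx$, while the inner $N\to\infty$ limit in~\eqref{lya:asymp:tri-limit-nN} is handled by the Fuss--Catalan density and the rescaled moments $\int_0^{K_n}\rho_\fuss^n(y)\,y^{s/(2n)}\,dy\to\tfrac{2}{s+2}$. Your gamma-ratio rewriting of the moment formula~\eqref{singular:asymp:moments-fc} is equivalent to the paper's binomial form, and your remarks on moment determinacy via compact support supply a justification the paper leaves implicit.
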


\begin{proof}
We will follow exactly the same idea as for the proof of~\ref{thm:lya:triangular-stab}. First, we look at the limit~\eqref{lya:asymp:tri-limit-Nn}.
We know from proposition~\ref{thm:lya:lya} and remark~\ref{remark:lya:cohen-newman} that
\begin{equation}
\lim_{n\to\infty}2ne^{2n\lambda}R_1^{\beta,n}(e^{2n\lambda})=\sum_{k=1}^N\delta(\lambda-\mu_k^\beta),
\end{equation}
where $\mu_k^\beta$ ($k=1,\ldots,N$) are as in~\eqref{lya:asymp:sum-stab-proof}. Thus, the large-$n$ limit can be evaluated in the same way as in the proof of proposition~\ref{thm:lya:stab-exp}.

It remains to investigate the limit~\eqref{lya:asymp:tri-limit-nN}.
We know that~\cite{Muller:2002,BJLNS:2010,BBTCC:2011,PZ:2011,GKT:2014}
\begin{equation}
\lim_{N\to\infty}2nN^{n-1}x^{2n-1}R_1^n(N^nx^{2n})=2nx^{2n-1}\rho_\fuss^n(x^{2n}),
\end{equation}
where $\rho_\fuss^n(x)$ is the Fuss--Catalan density~\eqref{singular:asymp:density-fc} (for $\beta=2$ this was stated as proposition~\ref{thm:singular:fuss-catalan}). In order to evaluate the large-$n$ limit, we first recall that the right edge for the Fuss--Catalan distribution is located at $(n+1)^{n+1}/n^n$. Thus, if the large-$n$ limit exists, then its edge must be located at unity, since
\begin{equation}
\lim_{n\to\infty}\bigg(\frac{(n+1)^{n+1}}{n^n}\bigg)^{1/2n}=1.
\end{equation}
This means that the task of finding the limiting density reduces to a Haussdorff moment problem if the moments converge. We have 
\begin{align}
\int_0^\infty dx\,2nx^{2n-1}\rho_\fuss^n(x^{2n})x^s&=\int_0^\infty dy\,\rho_\fuss^n(y)y^{s/2n}\nn\\
&=\frac{1}{n(s/2n)+1}\binom{(n+1)(s/2n)}{s/2n}
\xrightarrow{n\to\infty}\frac{2}{s+2},
\end{align}
where the second equality follows from~\eqref{singular:asymp:moments-fc}. As in the proof of proposition~\ref{thm:lya:lya}, the density~\eqref{lya:asymp:triangular} can be obtained by an inverse Mellin transform.
\end{proof}

\section{Summary, discussion and open problems}
\label{sec:lya:discuss}

In section~\ref{sec:lya:finiteN}, we have seen that the eigenvalues as well as the singular values for a product of induced Ginibre matrices diverge exponentially for a large number of factors. As a consequence, the stability and Lyapunov exponents become independent Gaussian random variables, asymptotically; a structure which conveniently can be written as a permanent.  Furthermore, it was shown that the stability and Lyapunov exponents converge to the same limit when the matrix dimension is kept fixed (in agreement with a conjecture from~\cite{GSO:1987}), i.e. the two approaches are (in this sense) equivalent. More precisely, (i) on the scale which includes all exponents, we have convergence to a non-random limit where the locations of the Lyapunov (stability) exponents may be expressed through a digamma function, while (ii) on the scale of the fluctuations of the individual exponents (which is suppressed compared to exponent interspacing for large $n$) we have Gaussian fluctuations with variances which 
can be expressed through a trigamma function. The Gaussian prediction for the smallest Lyapunov exponent is compared with numerical data for different values of $n$ on figure~\ref{fig:lya:gauss}.

\begin{figure}[tp]
\centering
\includegraphics{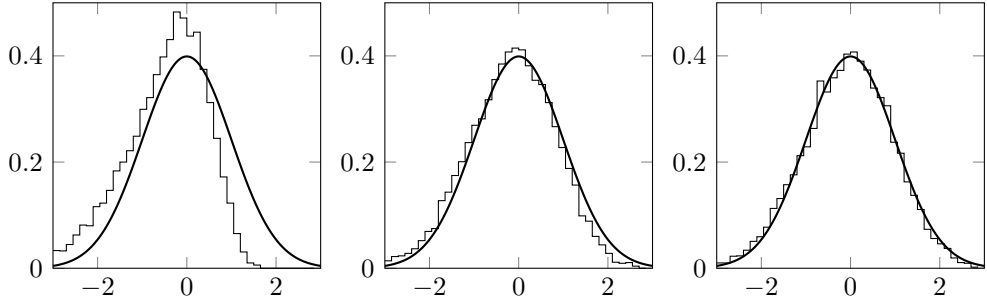}
\caption{Comparison of the analytic prediction for the fluctuations of the smallest Lyapunov exponent [i.e. a Gaussian with mean and variance given according to~\eqref{lya:intro:mean+var}] with numerical data stemming from $10\,000$ realisations of a product of $4$ (left panel), $40$ (centre panel), and $400$ (right panel) independent $4\times4$ real Ginibre matrices.}
\label{fig:lya:gauss}
\end{figure}

Furthermore, the complex phases of the eigenvalues have a striking dependence on the Dyson class in the large-$n$ limit: for $\beta=1$ the spectrum is purely real, for $\beta=2$ the spectrum is rotational symmetric, while for $\beta=4$ there is a sine square repulsion from the real axis (this corresponds to the Vandermonde repulsion for a single complex conjugate pair). The structure is visualised by a numerical simulation on figure~\ref{fig:lya:scatter}.

\begin{figure}[tp]
\centering
\includegraphics{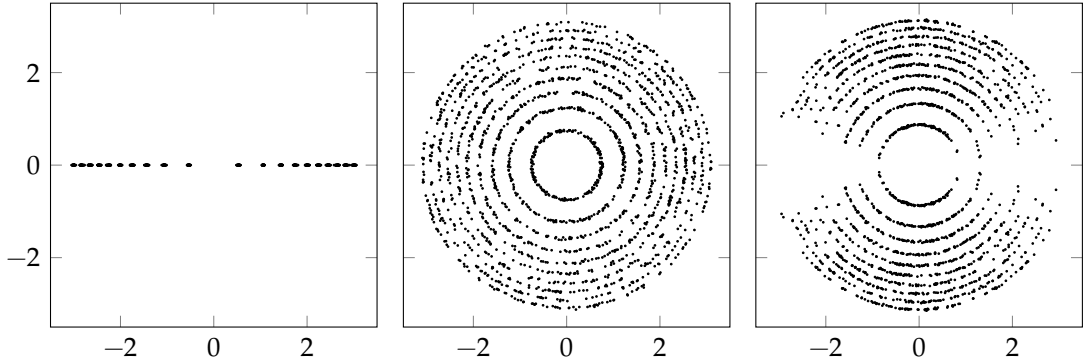}
\caption{Scatter plots of the eigenvalues of a product of $1\,000$ independent real (left panel), complex (centre panel), and quaternionic (right panel) $10\times 10$ Gaussian random matrices. The ensemble consists of $100$ matrices for $\beta=1,4$ and $200$ matrices for $\beta=2$. With the notation from section~\ref{sec:lya:stability}, the plotted values corresponds to $e^{\zeta_k+i\theta}$.}
\label{fig:lya:scatter}
\end{figure}

It might seem surprising that all eigenvalues of a product of real independent Gaussian matrices become real when taking the number of factors to infinity, assuming that the matrix dimension is kept fixed (as illustrated by the left panel on figure~\ref{fig:lya:scatter}). In fact, this phenomenon is far from fully explained. So far, only the Gaussian case has been proven to have a real spectrum in the limit, but the phenomenon is highly expected to hold for a more general class of matrices. This conjecture is supported by numerical results~\cite{HJL:2015}. A first analytical check for the generality of the statement could be to verify that the other known exactly solvable models (i.e. mixed products including truncated orthogonal and inverse Ginibre matrices~\cite{IK:2014,AI:2015}) also have purely real spectra in the large-$n$ limit (this can be done using either the method presented in this chapter or using an approach similar to~\cite{Forrester:2014a}). It is an interesting, but challenging, task to 
establish more general conditions implying that the spectrum becomes real. This would also be of physical interest since it would imply that the limiting eigenbasis (the so-called stability basis~\cite{GSO:1987}) is real.  It should be mentioned that in the opposite limit (i.e. large matrix dimension, fixed number of factors) the macroscopic eigenvalue spectrum becomes rotational invariant~\cite{GT:2010,OS:2011}.

In section~\ref{sec:lya:macro}, we saw that the macroscopic spectral density for both the Lyapunov exponents (singular values) and the stability exponents (phase-averaged eigenvalues) tends to the same $\beta$-independent distribution, known as the triangular law~\cite{Newman:1986,IN:1992}, in the limit where both matrix dimensions and the number of factors become large (see left panel on figure~\ref{fig:lya:triangular}). In particular, we showed that the two limits were interchangeable (a similar discussion was given for complex matrices in~\cite{ABK:2014}). This triangular law dates back to~\cite{Newman:1986}, where it was shown to hold for the singular values of a product of real Ginibre matrices under the assumption that the number of factors was taken to infinity before the matrix size.

\begin{figure}[tp]
\centering
\includegraphics{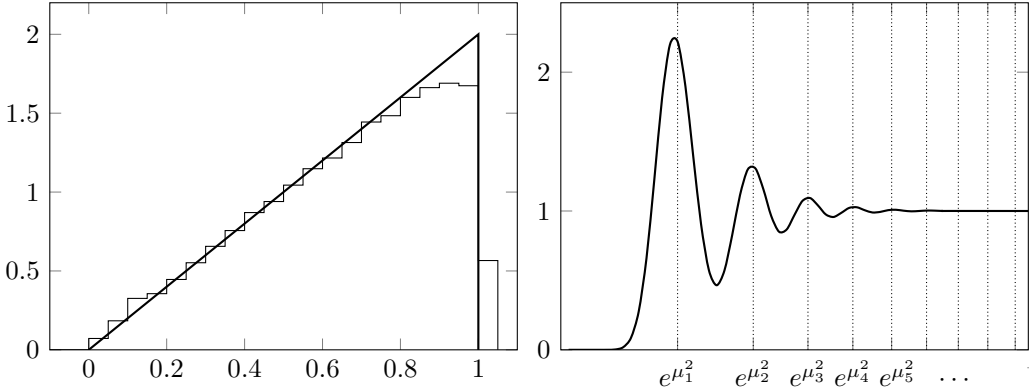}
\caption{The left panel compares the triangular law from proposition~\ref{thm:lya:triangular} with singular values numerically generated from $100$ realisations of a product of $100$ independent $100\times 100$ real Ginibre matrices. The right panel shows the radial microscopic density at the origin for $n=16$ (also seen on figure~\ref{fig:complex:origin}) together with asymptotic prediction of the ``crystallised eigenvalues for $n\to\infty$.}
\label{fig:lya:triangular}
\end{figure}

It is, of course, highly desirable to establish results for the microscopic regimes in the double scaling limit. It was argued in~\cite{ABK:2014}, that in general the large-$n$ and -$N$ limit will not commute in the microscopic scaling regime. This must be true, since the microscopic limits in the bulk and at the (soft) edge obtained for large $N$ (see proposition~\ref{thm:singular:sine}, \ref{thm:singular:airy}, and~\ref{thm:complex:bulk+edge}) are incompatible with limits obtained for large $n$ (see proposition~\ref{thm:lya:lya} and~\ref{thm:lya:stab-exp}). The distribution of the eigenvalue furthest from origin for a product of complex Ginibre matrices was considered in~\cite{JQ:2014}. Here, a crossover between the Gumbel distribution (Ginibre universality) and the log-normal distribution (stability exponent universality) was observed. 

Let us attempt to provide a heuristic understanding of the interplay between the two limits. In this chapter, we have seen that the Lyapunov (stability) exponents become independent Gaussians for large $n$ but fixed $N$. The intuitive interpretation is that the decorrelation occurs because the singular (eigen-) values separate exponentially fast with $n$ while the correlation length does not. Thus, it seems justified to believe that the Gaussian structure is a reasonable approximation as long as the separation between Lyapunov (stability) exponents is much larger than the average amplitude of their fluctuations. In other words, the approximation of independent Gaussians is expected to be reasonable if  
\begin{equation}
\frac{\mu_{k+1}^{\beta}-\mu_{k}^{\beta}}{\sigma_{k+1}^{\beta}+\sigma_{k}^\beta}\gg n^{-1/2}
\label{lya:discuss:CN-regime}
\end{equation}
with $\mu_k^\beta$ and $\sigma_k^\beta$ as in~\eqref{lya:intro:mean+var}.
Let us assume that we are away from the hard edge (origin) and set $k=\alpha N$ where $\alpha>0$ is a positive parameter. Using the asymptotic formula~\eqref{special:gamma:digamma-asymp}, we see that
\begin{equation}
\mu_{\alpha N+1}^{\beta}-\mu_{\alpha N}^{\beta}=O(N^{-1})
\qquad\text{and}\qquad
\sigma_{\alpha N+1}^{\beta}+\sigma_{\alpha N}^\beta=O(N^{-1/2})
\end{equation}
for large $N$. Comparing with~\eqref{lya:discuss:CN-regime}, we therefore expect that the Gaussian approximation is reasonable if $n\gg N$. On the other hand, at the hard edge (the origin) we have $k=O(1)$ independently of the value of $N$, hence the amplitude of the fluctuations of the Lyapunov (stability) exponents will always decay compared to their separation in this regime. This heuristic argument leads us to believe that the microscopic spectrum at the hard edge (the origin) will ``crystallise'' in a similar manner as in proposition~\ref{thm:lya:lya} (proposition~\ref{thm:lya:stab-exp}). This is consistent with our observation for small $n$ given on figure~\ref{fig:singular:meijer} (figure~\ref{fig:complex:origin}), where we saw that the amplitude of the oscillation in the spectrum grew with $n$. In the case of the complex spectrum, we can formalise this notion.

\begin{proposition}
Let $\{\nu_i\}$ be a non-negative convergent sequence with the limit $\nu_i\to\nu_\infty<\infty$. Introduce the microscopic eigenvalue density at the origin for a product of $n$ independent complex Ginibre matrices with charges $\nu_1,\ldots,\nu_n$ as given by proposition~\ref{thm:complex:origin}, $\rho_\origin^{n,\nu}(z):=K_\origin^{n,\nu}(z,z)$, then 
\begin{equation}
\lim_{n\to\infty}2ne^{2n\zeta}\rho_\origin^{n,\nu}(e^{n\zeta+i\theta})=\sum_{k=1}^\infty \frac1{2\pi}\delta(\zeta-\mu_k^{\beta=2})
\end{equation}
and
\begin{equation}
\lim_{n\to\infty}2\sqrt n\sigma^{\beta=2}_ke^{2n\zeta}\rho_\origin^{n,\nu}\Big(\exp\Big[n\Big(\mu_k^{\beta=2}+\zeta\frac{\sigma^{\beta=2}_k}{\sqrt n}\Big)+i\theta\Big]\Big)=\frac1{2\pi}\frac{e^{-\zeta^2/2}}{\sqrt{2\pi}},\qquad k=1,2,\ldots
\end{equation}
with $\mu_k^{\beta=2}$ and $\sigma_k^{\beta=2}$ defined as in~\eqref{lya:intro:mean+var}.
\end{proposition}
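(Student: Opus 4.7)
The plan is to recycle the ``crystallisation'' strategy underlying propositions~\ref{thm:lya:stab-exp} and~\ref{thm:lya:lya} by writing $\rho_\origin^{n,\nu}(z)=K_\origin^{n,\nu}(z,z)$ as an \emph{infinite} sum whose terms are normalised probability densities in~$\zeta$, each of which asymptotically concentrates at a different $\mu_m^{\beta=2}$. Taking $N\to\infty$ in the series representation of $K_N^n(z,z)$ from~\eqref{complex:C:kernel-finite} and applying the shift identity~\eqref{special:meijer:meijer-shift} to absorb the powers $\abs{z}^{2k}$ into the Meijer $G$-function gives
\begin{equation}
\rho_\origin^{n,\nu}(z)=\frac{1}{\pi}\sum_{k=0}^{\infty}\frac{\MeijerG{n}{0}{0}{n}{-}{\nu_1+k,\ldots,\nu_n+k}{\abs{z}^2}}{\prod_{\ell=1}^n\Gamma[\nu_\ell+k+1]}.
\end{equation}
Putting $z=e^{n\zeta+i\theta}$, multiplying by $2ne^{2n\zeta}$, and applying~\eqref{special:meijer:meijer-shift} a second time to absorb the $e^{2n\zeta}$-prefactor, one finds after the reindexing $m=k+1$
\begin{equation}
2ne^{2n\zeta}\rho_\origin^{n,\nu}(e^{n\zeta+i\theta})
=\frac{1}{\pi}\sum_{m=1}^{\infty}\tilde f_m^n(\zeta),\qquad
\tilde f_m^n(\zeta):=\frac{2n\,\MeijerG{n}{0}{0}{n}{-}{\nu_1+m,\ldots,\nu_n+m}{e^{2n\zeta}}}{\prod_{\ell=1}^n\Gamma[\nu_\ell+m]}.
\end{equation}
The Mellin integral~\eqref{special:meijer:meijer-moment} yields $\int_\R\tilde f_m^n(\zeta)\,d\zeta=1$, while positivity is inherited from the $n$-fold integral representation of the Meijer $G$-function, so $\tilde f_m^n$ is precisely the building block $f_{kk}^{\beta=2,n}$ already analysed in the proof of proposition~\ref{thm:lya:stab-exp}.

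The concentration of each individual $\tilde f_m^n$ follows from the cumulant expansion already exploited twice in this chapter. Its Laplace transform is $\prod_\ell\Gamma[\nu_\ell+m+s/(2n)]/\prod_\ell\Gamma[\nu_\ell+m]$, so the $j$-th cumulant is $(2n)^{-j}\sum_\ell\psi^{(j-1)}(\nu_\ell+m)$. Using $\nu_\ell\to\nu_\infty$ one reads off mean $\to\mu_m^{\beta=2}=\tfrac12\psi(\nu_\infty+m)$ and $n\times\textup{variance}\to(\sigma_m^{\beta=2})^2=\tfrac14\psi'(\nu_\infty+m)$, while the standardised cumulants of order $\geq 3$ vanish as $n\to\infty$. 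The usual cumulant argument then yields the Gaussian asymptotics
\begin{equation}
\tilde f_m^n(\zeta)\sim\sqrt{\frac{n}{2\pi(\sigma_m^{\beta=2})^2}}\exp\bigg[-\frac{n(\zeta-\mu_m^{\beta=2})^2}{2(\sigma_m^{\beta=2})^2}\bigg],
\end{equation}
so that $\tilde f_m^n(\zeta)\to\delta(\zeta-\mu_m^{\beta=2})$ in the distributional sense for each fixed~$m$.

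Granted these term-by-term limits, the two claims follow by essentially disjoint localisation arguments. For the second (Gaussian fluctuation) limit, the rescaling $\zeta=\mu_k^{\beta=2}+\lambda\sigma_k^{\beta=2}/\sqrt n$ probes a window of width $O(n^{-1/2})$ around $\mu_k^{\beta=2}$, whereas $\min_{m\neq k}\abs{\mu_m^{\beta=2}-\mu_k^{\beta=2}}$ is bounded below uniformly in~$n$; the Gaussian tails consequently suppress every $m\neq k$ summand exponentially fast, and direct substitution in the surviving $m=k$ term (the prefactor $\sigma_k^{\beta=2}/\sqrt n$ converts width $\sigma_k^{\beta=2}/\sqrt n$ into unit variance) yields the claimed density. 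For the first (distributional) limit, the substantive task is to justify interchange of $\lim_{n\to\infty}$ with $\sum_{m=1}^{\infty}$, and this is the step I expect to be the main obstacle: although each individual $\tilde f_m^n$ concentrates, in principle the accumulated tails of infinitely many narrowing Gaussians could contribute a non-trivial piece. The key input to rule this out is the asymptotic $\mu_m^{\beta=2}\sim\tfrac12\log m$ coming from~\eqref{special:gamma:digamma-asymp}: on any compact $\zeta$-window only finitely many centres $\mu_m^{\beta=2}$ can lie within $O(1)$ of it, and a uniform (in $m$ and $n$ large) Gaussian tail bound on $\tilde f_m^n$ -- where the cumulant expansion remains controlled -- should produce a summable dominant and allow termwise passage to the limit against any test function of compact support. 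Beyond this domination step, the proof reduces to a direct transcription of the methods already validated in propositions~\ref{thm:lya:stab-exp} and~\ref{thm:lya:lya}.
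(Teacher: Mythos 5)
Your decomposition is exactly the paper's: both arrive at the same representation $2ne^{2n\zeta}\rho_\origin^{n,\nu}(e^{n\zeta+i\theta}) = c\sum_{m\geq1}f^{\beta=2,n}_{mm}(\zeta)$, where $f^{\beta=2,n}_{mm}$ is the normalised Meijer $G$-density from~\eqref{lya:stab:f-complex}, and both then invoke the cumulant analysis already carried out in the proof of proposition~\ref{thm:lya:stab-exp}. The only superficial difference is the entry point: you take $N\to\infty$ in the series form of the finite-$N$ kernel~\eqref{complex:C:kernel-finite} and shift the powers into the Meijer $G$-function, whereas the paper starts from the closed-form kernel~\eqref{complex:C:kernel-origin} and re-expands the second Meijer $G$-function as a power series via~\eqref{special:meijer:hyper-meijer} and~\eqref{special:meijer:hypergeometric}. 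These are two equivalent ways of writing the same thing.

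Two remarks of substance. First, your explicit attention to the interchange of $\lim_{n\to\infty}$ and $\sum_{m=1}^\infty$ is warranted: the paper dispatches the rest with ``follows exactly as described there'', but propositions~\ref{thm:lya:stab-exp} and~\ref{thm:lya:lya} involve a \emph{finite} sum, so the reduction is not literal. Your proposed fix---using $\mu_m^{\beta=2}\sim\tfrac12\log m$ from~\eqref{special:gamma:digamma-asymp} so that on any compact $\zeta$-window only finitely many centres are nearby, together with a uniform Gaussian-tail bound on $f^{\beta=2,n}_{mm}$---is exactly the right idea, and a clean way to package it is to pair it with the normalisation $\int_\R f^{\beta=2,n}_{mm}\,d\zeta = 1$ and a concentration inequality (e.g.\ Chebyshev on the cumulant-controlled tails), which makes the domination summable against compactly supported test functions. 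Second, your computed constant $\tfrac1\pi$ in front of the sum is what the algebra actually gives with the Jacobian factor $2ne^{2n\zeta}$ as written; the paper's $\tfrac1{2\pi}$ on the right-hand side is consistent only with $ne^{2n\zeta}$ on the left (which is the true Jacobian $d^2z = ne^{2n\zeta}\,d\zeta\,d\theta$ for $z=e^{n\zeta+i\theta}$, and also what makes the mass of each delta-peak come out to $\tfrac1{2\pi}$ once the $\theta$-average is accounted for). So you have in fact located a factor-of-two typo in the proposition statement rather than making an error yourself.
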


\begin{proof}
The main idea is the same as in the proofs of proposition~\ref{thm:lya:stab-exp} and~\ref{thm:lya:lya}. In this case, our starting point is the density
\begin{equation}
2ne^{2n\zeta}\rho_\origin^{n,\nu}(e^{n\zeta+i\theta})=
\frac{2ne^{2n\zeta}}{\pi}\MeijerG{n}{0}{0}{n}{-}{\nu_1,\ldots,\nu_n}{e^{2n\zeta}}\MeijerG{1}{1}{1}{n+1}{0}{0,-\nu_1,\ldots,\nu_n}{-e^{2n\zeta}},
\end{equation}
which we have from proposition~\ref{thm:complex:origin}. We can write the second Meijer $G$-function as a sum, using~\eqref{special:meijer:hyper-meijer} and~\eqref{special:meijer:hypergeometric}. This gives
\begin{equation}
2ne^{2n\zeta}\rho_\origin^{n,\nu}(e^{n\zeta+i\theta})=
\sum_{k=1}^\infty \frac1{2\pi}\frac{2n\MeijerG{n}{0}{0}{n}{-}{\nu_1,\ldots,\nu_n}{e^{2n\zeta}}}{\prod_{i=1}^n\Gamma[\nu_i+k]}e^{2nk\zeta}.
\end{equation}
Here, the last exponential can be absorbed into the Meijer $G$-function using~\eqref{special:meijer:meijer-shift}, which allow us to write the density as
\begin{equation}
2ne^{2n\zeta}\rho_\origin^{n,\nu}(e^{n\zeta+i\theta})=
\sum_{k=1}^\infty \frac1{2\pi}f^{\beta=2,n}_{kk}(\zeta)
\end{equation}
with $f^{\beta=2,n}_{kk}(\zeta)$ defined by
\begin{equation}
f^{\beta=2,n}_{kk}(\zeta):= \frac{2n\MeijerG{n}{0}{0}{n}{-}{\nu_1+k,\ldots,\nu_n+k}{e^{2n\zeta}}}{\prod_{i=1}^n\Gamma[\nu_i+k]}.
\end{equation}
This function is recognised as~\eqref{lya:stab:f-complex} from the proof of proposition~\ref{thm:lya:stab-exp} and the rest of this proof follows exactly as described there.
\end{proof}

It is remarkable that the same method used to study the large-$n$ limit for fixed $N$ can be used to study the microscopic behaviour of eigenvalues near the origin. On the right panel of figure~\ref{fig:lya:triangular}, we compare the microscopic density for $n=16$ (see also figure~\ref{fig:complex:origin}) with the asymptotic values for the eigenvalues. We see emerging peaks around the asymptotic values of the stability exponents even at relatively small $n$. It would be interesting to see whether a similar approach can be applied to the microscopic density for the singular values near the hard edge (given in terms of the Meijer $G$-kernel, see proposition~\ref{thm:singular:meijer-kernel}).

Finally, we emphasise that it is a non-trivial task to find an efficient algorithm for the numerical calculation of the Lyapunov exponents from a matrix product with large matrix dimensions and a large number of factors. A direct implementation of the mathematical definition of matrix multiplication gives an algorithm that takes time on the order $O(nN^3)$, where (as above) $n$ denotes the number of factors and $N$ denotes the matrix dimension. Moreover, high precision computations are often required for the smallest Lyapunov exponents (singular values) when using the direct implementation of the matrix product. In the regime where the Cohen--Newman method can be trusted ($n\gg N$) a more efficient (but less accurate) approach would be a Monte Carlo algorithm based on the Cohen--Newman method. We refer to~\cite{WT:2001,Bai:2007,Vanneste:2010,Pollicott:2010,Forrester:2013} for a discussion of numerical methods. Analytic results, such as those presented in this chapter provide a natural standard against which 
numerical methods can be compared. We stress that all numerical data presented here (and throughout the thesis) have been obtained using a direct implementation of the matrix product.

% \chapter{Viscous Brownian flows}

% \chapter{Conclusion and outlook}

% \appendices
\appendix

\chapter{Matrix decompositions and their Jacobians}
\label{app:decompositions}

% It is difficult to overestimate the importance of matrix factorisations in both the analytical and the numerical sciences. Here, we focus on analytical aspects, but it should be mentioned that the decompositional approach to matrix computations and the QR algorithm for calculating eigenvalues have been emphasised as two out of the ten most influential algorithms for the development and practice of science and engineering in the twentieth century~\cite{DS:2000}.

In this appendix we introduce several matrix decompositions, which are used throughout the thesis.
The appendix is organised as follows: In section~\ref{sec:decomp:2by2} we briefly review some properties of $2\times 2$ matrices and their relation to the algebras of quaternions and split-quaternions, while section~\ref{sec:decomp:decomp} summarises some important matrix decompositions.

\section{Two-by-two matrices and their algebras}
\label{sec:decomp:2by2}

It is well-known from numerical linear algebra that $2\times 2$ matrices are at the foundation of many numerical computations. We believe that it is reasonable to say the same about analytical computations. For this reason, we devote this section to the description of two types of $2\times 2$ matrices with particular importance in random matrix theory; namely, matrices which are (ring-)isomorphic to either the quaternions or split-quaternions.

It is often useful to imagine the underlying algebraic structure as a consequence of imposing certain symmetry constraints on the system under consideration. From a physical point of view, we can imagine that we replace a physical operator by a random matrix restricted by some global symmetries determined by physical considerations. This approach is known to be very successful; sometimes it is even possible to establish an exact link between the random matrix model and some (universal) regime of the corresponding effective field theory, see e.g. the reviews~\cite{Beenakker:1997,VW:2000}. In fact, it often fruitful to classify random matrix models according to their symmetries~\cite{Dyson:1962,AZ:1997,BLC:2002,Magnea:2008}. Any symmetry classification, of course, depends on what type of symmetries we allow and on which types of matrices we consider to be equivalent. We will not give any details regarding such classification schemes here, but rather refer the reader to the aforementioned references.

Let us consider a $2N\times 2N$ complex matrix, $X$, and impose an anti-unitary symmetry given by
\begin{equation}
JX^*=XJ,
\label{decomp:2by2:antiU}
\end{equation}
where $J$ is either a symmetric ($\beta=1$) or an anti-symmetric ($\beta=4$) unitary matrix. We will use the canonical representations: an identity matrix for $\beta=1$ or an antisymmetric quasi-diagonal matrix
\begin{equation}
J=\diag(\big[\begin{smallmatrix} 0 & +1 \\ -1 & 0 \end{smallmatrix}\big],\ldots,\big[\begin{smallmatrix} 0 & +1 \\ -1 & 0 \end{smallmatrix}\big]\big)
\label{decomp:2by2:J}
\end{equation}
for $\beta=4$.
These are the so-called $K$-type symmetries in the classification from~\cite{BLC:2002,Magnea:2008}. If $J$ is the identity then the anti-unitary symmetry~\eqref{decomp:2by2:antiU} implies that $X$ is a $2N\times 2N$ real matrix, or equivalently an $N\times N$ split-quaternionic matrix as we will see in section~\ref{sec:decomp:real}. If $J$ is given by~\eqref{decomp:2by2:J} then symmetry implies that the $X$ consists $2\times 2$ blocks of the form
\begin{equation}
\begin{bmatrix} u & v \\ -v^* & u^*\end{bmatrix},\qquad u,v\in\C.
\label{decomp:2by2:H-structure}
\end{equation}
Such matrices will be referred to as quaternions for reasons which will be made clear in section~\ref{sec:decomp:H}.

\subsection{Symplectic symmetry and quaternions}
\label{sec:decomp:H}

Quaternions are described in several textbooks on random matrix theory, see e.g.~\cite{Mehta:2004,Forrester:2010,AGZ:2010}. Nonetheless, we briefly review some well-known properties of quaternions for easy reference and to avoid any confusion about notation. In addition to the aforementioned textbooks, we refer the reader to~\cite{Zhang:1997,GW:2009}.

\begin{definition}
The skew-field of quaternions, $\H$, is a four dimensional vector space $\R^4$ with basis $\{\one,\qi,\qj,\qk\}$ equipped with associative multiplication defined such that $\one$ is the identity and $\qi^2=\qj^2=\qk^2=\qi\qj\qk=-\one$.
\end{definition}

\begin{definition}
Given $q=a+b\qi+c\qj+d\qk\in\H$ ($a,b,c,d\in\R$), then we say that $q$ is a (real) quaternion and define
\begin{equation}
q^\dagger=a-b\qi-c\qj-d\qk,\qquad
q^*=a-b\qi+c\qj-d\qk \qquad\text{and}\qquad 
\abs{q}=\sqrt{q^\dagger q}
\end{equation}
called the quaternion conjugate, the partial conjugate, and the modulus, respectively.
\end{definition}

\begin{corollary}\label{thm:decomp:H-relations}
Let $q\in\H$, then 
\begin{inparaenum}[(i)]
\item the inverse of $q\neq0$ is $q^{-1}=q^\dagger/\abs{q}^2$, 
\item\label{thm:decomp:H-relations(ii)} the partial conjugate $q$ is $q^*=-\qj q\qj$, and 
\item the modulus, $\abs{\,\cdot\,}$, is a norm on $\H$.
\end{inparaenum}
\end{corollary}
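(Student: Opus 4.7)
The plan is to handle the three claims in order, with each one reducing to a short computation using the defining relations $\qi^2=\qj^2=\qk^2=\qi\qj\qk=-\one$. First I would derive the companion identities $\qi\qj=\qk$, $\qj\qk=\qi$, $\qk\qi=\qj$ (obtained by multiplying $\qi\qj\qk=-\one$ on the right by $\qk^{-1}=-\qk$, etc.) together with their anti-commutators $\qj\qi=-\qk$, $\qk\qj=-\qi$, $\qi\qk=-\qj$. These will be the only algebraic inputs needed.

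For \emph{(i)}, write $q=a+b\qi+c\qj+d\qk$ and expand $qq^\dagger$ using bilinearity. The diagonal contributions give $a^2-b^2\qi^2-c^2\qj^2-d^2\qk^2=a^2+b^2+c^2+d^2=\abs q^2$, and the cross terms cancel pairwise thanks to the anti-commutators above (e.g.\ the $\qi\qj$ term from $b\qi\cdot(-c\qj)$ cancels against the $\qj\qi$ term from $c\qj\cdot(-b\qi)$). The analogous computation for $q^\dagger q$ gives the same value, so $q^\dagger/\abs q^2$ is a two-sided inverse when $\abs q\neq 0$.

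For \emph{(ii)}, it suffices to verify the identity on each basis element by direct computation. One gets $-\qj\cdot\one\cdot\qj=-\qj^2=\one$; $-\qj\qi\qj=-(-\qk)\qj=\qk\qj=-\qi$; $-\qj\qj\qj=-\qj^3=\qj$; and $-\qj\qk\qj=-\qi\qj=-\qk$. Combining these by linearity produces exactly $a-b\qi+c\qj-d\qk=q^*$.

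For \emph{(iii)}, observe that under the $\R$-linear isomorphism $\H\simeq\R^4$ sending $a+b\qi+c\qj+d\qk$ to $(a,b,c,d)$, the map $\abs{\,\cdot\,}$ becomes the standard Euclidean norm (by the formula $\abs q^2=a^2+b^2+c^2+d^2$ established in \emph{(i)}). Hence positivity, definiteness, absolute homogeneity and the triangle inequality all transfer from the Euclidean norm on $\R^4$. I would expect the only mildly delicate point to be absolute homogeneity with respect to quaternionic (rather than real) scalars, but this follows from the multiplicativity $\abs{pq}^2=\abs p^2\abs q^2$, which itself drops out of the same cancellation pattern used in \emph{(i)}; so no real obstacle arises.
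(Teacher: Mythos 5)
The paper states this corollary without proof, so there is no authorial argument to compare against; your direct computations fill the gap correctly. The basis-element verification for part (ii), the bilinear expansion showing $q^\dagger q = qq^\dagger = a^2+b^2+c^2+d^2$ for part (i), and the reduction of part (iii) to the Euclidean norm on $\R^4$ are all sound, and your derivation of the pairwise products $\qi\qj=\qk$, $\qj\qk=\qi$, $\qk\qi=\qj$ and their anti-commutators from the single axiom $\qi\qj\qk=-\one$ is the right starting point.

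One small refinement worth making: for part (iii) the multiplicativity $\abs{pq}=\abs p\,\abs q$ is cleanest not by "the same cancellation pattern" (which would be a genuinely longer expansion) but by first noting the anti-automorphism property $(pq)^\dagger = q^\dagger p^\dagger$, and then writing $\abs{pq}^2 = (pq)^\dagger(pq) = q^\dagger (p^\dagger p) q = \abs p^2\, q^\dagger q = \abs p^2\abs q^2$, where the middle step uses that $p^\dagger p$ is a real scalar (established in part (i)) and hence central. Also, since the norm property (iii) is a statement about $\H$ as a real vector space, the only homogeneity actually required is $\abs{\lambda q}=\abs\lambda\abs q$ for $\lambda\in\R$, which is immediate once $\abs q^2 = a^2+b^2+c^2+d^2$ is known; the quaternionic multiplicativity is a bonus rather than a necessity.
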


% Now, we can turn our attention towards quaternionic matrices; these are (naturally) given as matrices with quaterionic entries.

\begin{definition}
Let $q_{ij}$ ($i=1,\ldots,N;j=1,\ldots,M$) be quaternions, then $X=[q_{ij}]\in\H^{N\times M}$ is an $N\times M$ quaternionic matrix, and $X^\dagger=[q_{ij}^*]^\transpose$ is said to be its quaternion dual. Moreover, if $N=M$ and either $X^\dagger=X$ or $X^\dagger=X^{-1}$, then we say that $X$ is either quaternion self-dual or quaternionic unitary, respectively.
\end{definition}

We can now return to the symmetries mentioned earlier. Let us consider matrices with the structure~\eqref{decomp:2by2:H-structure}, if we choose a basis
\begin{equation}
\one=\begin{bmatrix} +1 & 0 \\ 0 & +1 \end{bmatrix},\quad
\qi=\begin{bmatrix} +i & 0 \\ 0 & -i \end{bmatrix},\quad
\qj=\begin{bmatrix} 0 & +1 \\ -1 & 0 \end{bmatrix},\quad
\qk=\begin{bmatrix} 0 & +i \\ +i & 0 \end{bmatrix},
\end{equation}
then we see that the space of such matrices are (ring-)isomorphic to the skew-field of quaternions (note that $\qi^2=\qj^2=\qk^2=\qi\qj\qk=-\one$). The anti-unitary symmetry given by~\eqref{decomp:2by2:antiU} and~\eqref{decomp:2by2:J} becomes
\begin{equation}
JX^*=XJ\qquad\text{with}\qquad J=\diag(\qj,\ldots,\qj),\quad X\in\H^{N\times N}\subset\C^{2N\times2N}.
\end{equation}
Here, the equality may be thought of as a consequence of corollary~\ref{thm:decomp:H-relations} (\ref{thm:decomp:H-relations(ii)}). Also other concepts can be carried over to the matrix representation without too much effort, e.g. the groups of invertible and unitary quaternionic matrices become
\begin{equation}
\gGL(N,\H)\cong\gU^*(2N)\qquad \text{and}\qquad
\gU(N,\H)\cong\gUSp(2N),
\end{equation}
respectively. The former is the non-compact group related to the unitary group through Weyl's unitary trick, while the latter is the unitary symplectic group.

Henceforth, quaternion will always refer to the matrix representation. Thus, when we refer to the eigenvalues of an $N\times N$ quaternion matrix, then we mean the $2N$ eigenvalues (counted with multiplicity) of the corresponding $2N\times 2N$ complex matrix; and equivalently for singular values, determinants, traces, etc.
Note that we have chosen notation for the quaternions such that it matches our notation for matrices, i.e. the quaternion dual ``$\dagger$'' corresponds to the Hermitian conjugation, while ``$\ast$'' corresponds to complex conjugation. It follows that a quaternion self-dual matrix is Hermitian and a quaternionic unitary matrix is also unitary in the usual sense.

Keeping the notation from the $2\times 2$ matrix representation~\eqref{decomp:2by2:H-structure} it is trivially seen that the eigenvalues of a quaternion are
\begin{equation}
\lambda_\pm=\Re u\pm i\sqrt{(\Im u)^2+\abs{v}^2}.
\label{decomp:2by2:eigenvalue-H}
\end{equation}
Thus the eigenvalues of a quaternion are a complex conjugate pair with non-zero imaginary part unless it is proportional to the identity in which case there is one real double degenerate eigenvalue. We see that~\eqref{decomp:2by2:eigenvalue-H} is invariant under a transformation $v\mapsto ve^{i\theta}$, $\theta\in[0,2\pi)$, and it is easily verified that any quaternion~\eqref{decomp:2by2:H-structure} can be diagonalised by a unitary symplectic similarity transformation, $q\mapsto U^{-1}qU$, with $U\in\gUSp(2)/\gU(1)$.

The benefit of introducing the quaternion algebra is that many results which hold for vector spaces over the fields of real and complex numbers can be extended to include the skew-field of quaternions (essentially using the same proofs). Thus, these results hold for the matrices introduced above (at least up to $2\times2$ blocks). 

Finally, let us extend the concept of a standard Gaussian random variable to the quaternion case (we emphasise this, since not all authors use the same definition):

\begin{definition}
A standard Gaussian quaternionic random variable is defined by the probability density
\begin{equation}
p(q)=\Big(\frac{2}{\pi}\Big)^2e^{-\tr q^\dagger q},\qquad q=\begin{bmatrix}u&v\\ -v^*&u*\end{bmatrix}\in\H\subset\C^{2\times 2}
\label{decomp:2by2:gauss-H}
\end{equation}
with respect to the flat measure, $d^4q=d^2ud^2v$, on $\H$.
\end{definition}

\begin{remark}
As usual, \emph{standard} refers to the fact that the constants are chosen such that
\begin{equation}
\E[q]=\int_\H d^4q\, p(q)q=0 \qquad\text{and}\qquad
\E[q^\dagger q]=\int_\H d^4q\, p(q)q^\dagger q=\one.
\end{equation}
This seems like a natural extension from the scalars fields to the skew-field of quaternions and we will use this definition consistently throughout the thesis. However, it should be noted that many authors prefer to work with an exponential that differs from ours by a factor of two.
\end{remark}

\subsection{Real matrices and split-quaternions}
\label{sec:decomp:real}

The ring-isomorphism between the space of $2\times2$ real matrices and the algebra of split-quaternions is rarely mentioned in the random matrix literature. Nonetheless, it might be useful to keep this algebra in mind when considering real asymmetric matrices, where $2\times2$ blocks play an important r\^ole. Let us first recall the structure of the split-quaternion algebra.

\begin{definition}
The ring of split-quaternions, $\H_s$, is a four dimensional vector space $\R^4$ with basis $\{\one,\qi,\qj,\qk\}$ equipped with associative multiplication defined such that $\one$ is the identity and $-\qi^2=\qj^2=\qk^2=\qi\qj\qk=\one$.
\end{definition}

\begin{remark}
The split-quaternions constitute an associative algebra but (unlike the quaternions) not a division algebra.
\end{remark}

Let us return to real matrices. If our real matrix is even dimensional then we might divide it up into $2\times 2$ blocks and by choosing a basis (this is the same basis as in~\cite{Edelman:1997}),
\begin{equation}
\one=\begin{bmatrix} 1 & 0 \\ 0 & 1 \end{bmatrix},\quad
\qi=\begin{bmatrix} 0 & 1 \\ -1 & 0 \end{bmatrix},\quad
\qj=\begin{bmatrix} 1 & 0 \\ 0 & -1 \end{bmatrix},\quad
\qk=\begin{bmatrix} 0 & 1 \\ 1 & 0 \end{bmatrix},
\label{decomp:2by2:Hs-basis}
\end{equation}
we see that the $2\times 2$ blocks are ring-isomorphic to the split-quaternions\footnote{The split-quaternions have another commonly chosen representation as $2\times 2$ matrices,
\begin{equation*}
\one=\begin{bmatrix} 1 & 0 \\ 0 & 1 \end{bmatrix},\quad
\qi=\begin{bmatrix} i & 0 \\ 0 & -i \end{bmatrix},\quad
\qj=\begin{bmatrix} 0 & i \\ -i & 0 \end{bmatrix},\quad
\qk=\begin{bmatrix} 0 & 1 \\ 1 & 0 \end{bmatrix}.
\end{equation*}
This basis is natural if we consider matrices satisfying the anti-unitary symmetry
\begin{equation*}
JX^*=XJ\qquad\text{with}\qquad J=\diag(\qj,\ldots,\qj),\quad X\in\H_s^{N\times N}\subset\C^{2N\times2N},
\end{equation*}
which is equivalent to the case of real matrices within the classification scheme of~\cite{BLC:2002,Magnea:2008}.
}. If our real matrix has odd dimension then there will remain an unpaired row and column after dividing the matrix into $2\times 2$ blocks. However, this does not cause any major difficulties, since the block structure is most useful when considering complex eigenvalues and any odd dimensional real matrix has at least one real eigenvalue.

It is useful to study the eigenvalues of a $2\times 2$ block in greater detail. Due to the basis~\eqref{decomp:2by2:Hs-basis}, it is convenient to use a parametrisation,
\begin{equation}
\R^{2\times2}\ni X=\frac{a\one+b\qi+c\qj+d\qk}{\sqrt{2}}=
\frac1{\sqrt{2}}
\begin{bmatrix}
a+d & c+b \\ c-b & a-d
\end{bmatrix},
\qquad a,b,c,d\in\R.
\end{equation}
It follows immediately from the characteristic equation that the eigenvalues of $X$ are real if and only if $c^2+d^2\geq b^2$; otherwise the eigenvalues will be a complex conjugate pair. It is fruitful to consider the same problem from a different point of view. Let us look at a real orthogonal (unit determinant) similarity transformation,
\begin{equation}
X'=UXU^\transpose,\qquad
U=\begin{bmatrix}\cos\theta/2 & \sin\theta/2 \\ -\sin\theta/2 & \cos\theta/2\end{bmatrix}\in\gSO(2).
\label{decomp:2by2:ortho-R-matrix}
\end{equation}
Using our split-quaternionic basis, this may be written as
\begin{equation}
\begin{bmatrix} a' \\ b' \\ c' \\ d' \end{bmatrix}
=\begin{bmatrix}
 1 & 0 & 0 & 0 \\
 0 & 1 & 0 & 0 \\
 0 & 0 & \cos\theta & -\sin\theta \\
 0 & 0 & \sin\theta & \cos\theta
\end{bmatrix}
\begin{bmatrix} a \\ b \\ c \\ d \end{bmatrix}.
\label{decomp:2by2:ortho-R}
\end{equation}
If $X$ has real eigenvalues then $c^2+d^2\geq b^2$ and it follows from~\eqref{decomp:2by2:ortho-R} that we can choose the rotation $\theta$ so that $b'=c'$; in this case $X'$ becomes upper-triangular and the eigenvalues are simply given as the diagonal entries. If $X$ has non-real eigenvalues then $c^2+d^2<b^2$, hence there exists no rotation which brings $X$ to an upper-triangular form and therefore no obvious choice for fixing $\theta$. Nonetheless, two different choices appear frequently in the literature: $\theta$ is chosen such that either $c'=0$ or $d'=0$. 
However, neither of these choices, nor any other value of $\theta\in\R$, is such that the remaining algebraic structure is closed under multiplication, which suggests that new ideas are needed when considering \emph{products} of real random matrices with complex eigenvalues.

\section{Matrix decompositions}
\label{sec:decomp:decomp}

\subsection{Standard decompositions}
\label{sec:decomp:standard}

In this section, we recall the structure of some standard matrix decompositions and their corresponding Jacobians. Most results in this section are well-known and will therefore be presented without proof. The reader is referred to~\cite{GL:1996} and references therein for a comprehensive description of different matrix factorisations (and related algorithms). For derivation of the corresponding Jacobians we refer to~\cite{Mehta:2004,Forrester:2010,Muirhead:2009,Mathai:1997,Olkin:2002}.

In the following $\F_{\beta=1,2,4}=\R,\C,\H$ denotes the real numbers, the complex numbers, or the quaternions, while $\gU(N,\F_{\beta=1,2,4})=\gO(N),\gU(N),\gUSp(2N)$ denotes the unitary group over the corresponding (skew-)field.

\begin{proposition}[Spectral decomposition]\label{thm:decomp:spectral}
Let $H$ be a Hermitian (real symmetric, quaternionic self-dual) $N\times N$ matrix, then there exists a unitary (orthogonal, unitary symplectic) matrix $U$ and a real diagonal matrix $\Lambda$ such that
\begin{equation}
H=U\Lambda U^{-1}.
\end{equation}
The diagonal entries of $\Lambda$ are referred to as the eigenvalues and they are (up to an ordering) uniquely determined by $H$. Furthermore, the decomposition becomes unique if we fix the phases of the first column in $U$ and the ordering of the diagonal elements of $\Lambda$. The Jacobian of the decomposition reads
\begin{equation}
d^\beta H=\prod_{1\leq i<j\leq N}\abs{\lambda_j-\lambda_i}^\beta \prod_{k=1}^Nd\lambda_kd\mu(U),
\end{equation} 
where $\lambda_i$ denotes the $i$-th eigenvalue (double degenerate for $\beta=4$) and $d\mu(U)=[U^{-1}dU]$ is the Haar measure on $\gU(N,\F_\beta)/\gU(1,\F_\beta)^N$.
\end{proposition}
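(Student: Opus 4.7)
The plan is to split the proof into two parts: existence and uniqueness of the decomposition, and then the computation of the Jacobian. For existence, over $\F_\beta=\R,\C$ the result is the standard spectral theorem. For $\F_\beta=\H$, I would pass to the $2N\times 2N$ complex representation discussed in section~\ref{sec:decomp:H}: a quaternion self-dual matrix becomes a $2N\times 2N$ complex Hermitian matrix commuting with the symplectic structure $J$, so its real eigenvalues come in degenerate pairs and one can select an orthonormal eigenbasis that is quaternionic unitary, yielding a matrix $U\in\gUSp(2N)$. In all three cases uniqueness is the usual statement: the diagonal entries of $\Lambda$ are determined up to permutation, and $U$ is determined up to right multiplication by an element of the stabiliser of $\Lambda$, which (for generic $\Lambda$ with distinct entries) is $\gU(1,\F_\beta)^N$; fixing the ordering of the $\lambda_k$ and the phases of the first nonzero entries of the columns of $U$ makes the decomposition unique on the open set of matrices with simple spectrum. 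Matrices with coinciding eigenvalues form a measure-zero subset and may be ignored for the Jacobian computation.

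For the Jacobian, I would exploit unitary invariance. Differentiating $H=U\Lambda U^{-1}$ and conjugating by $U^{-1}$,
\begin{equation}
U^{-1}\,dH\,U = d\Lambda + [\Omega,\Lambda], \qquad \Omega := U^{-1}dU,
\end{equation}
where $\Omega$ is anti-Hermitian (and trace-free in the $\beta=4$ case, with appropriate symplectic constraints) because $U$ is unitary. The diagonal entries of $[\Omega,\Lambda]$ vanish, so the diagonal of $U^{-1}dH\,U$ is precisely $d\Lambda$. The $(i,j)$ off-diagonal entry with $i\neq j$ equals $\Omega_{ij}(\lambda_j-\lambda_i)$, an element of $\F_\beta$ with $\beta$ independent real components. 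Since left multiplication by $U^{-1}$ and right multiplication by $U$ preserves the flat measure $d^\beta H$, the change of variables from the independent components of $H$ to $(\lambda_1,\ldots,\lambda_N)$ together with the independent components of $\Omega$ has Jacobian $\prod_{1\leq i<j\leq N}|\lambda_j-\lambda_i|^\beta$, each factor coming from the $\beta$ real components of $\Omega_{ij}(\lambda_j-\lambda_i)$.

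Finally, I would identify the collection of off-diagonal components of $\Omega$ with the Maurer--Cartan form on the coset $\gU(N,\F_\beta)/\gU(1,\F_\beta)^N$, which, after fixing the phase convention used in the uniqueness statement, coincides with the invariant (Haar) measure $d\mu(U)=[U^{-1}dU]$ on the coset. Assembling these pieces gives
\begin{equation}
d^\beta H \;=\; \prod_{1\leq i<j\leq N}|\lambda_j-\lambda_i|^\beta \prod_{k=1}^N d\lambda_k\, d\mu(U),
\end{equation}
as stated. The main obstacle is not any single step of this outline but rather a uniform bookkeeping of the real dimensionality of off-diagonal entries across $\beta=1,2,4$ (in particular, verifying carefully for $\beta=4$ that the quaternionic off-diagonal component of $\Omega$ indeed contributes four real parameters per pair $(i,j)$, and that the diagonal quaternionic phase freedom yields exactly $\gU(1,\F_\beta)^N = \gUSp(2)^N$ as the stabiliser of a generic real diagonal $\Lambda$); once this is done, the Jacobian follows from the linear algebra above.
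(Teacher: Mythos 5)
The paper does not prove this proposition; section~\ref{sec:decomp:standard} explicitly states that the decompositions collected there are well known and will be presented without proof, referring the reader to textbooks for the Jacobians, so there is no in-paper argument to compare against. Your outline is the standard Maurer--Cartan-form derivation and is correct. One small imprecision is worth flagging: a quaternion self-dual $H$, viewed as a $2N\times 2N$ complex Hermitian matrix, does not commute with the symplectic structure $J$; it satisfies the \emph{antilinear} Kramers condition $J H^{*} = H J$ (equivalently $H = J\bar H J^{-1}$), and it is this antilinear symmetry --- not a linear commutation --- that forces the double degeneracy of the real eigenvalues and lets one choose a quaternionic unitary eigenbasis. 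Apart from that phrasing, your derivation --- conjugation invariance of $d^{\beta}H$, the identity $U^{-1}\,dH\,U = d\Lambda + [\Omega,\Lambda]$ with anti-Hermitian $\Omega = U^{-1}dU$, the count of $\beta$ real parameters per off-diagonal block of $\Omega$, and the identification of the off-diagonal part of the Maurer--Cartan form with the Haar measure on $\gU(N,\F_\beta)/\gU(1,\F_\beta)^N$ --- is exactly the textbook computation the paper cites.
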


\begin{remark}
Before we go on, it is worthwhile to pause for a minute to see how the non-uniqueness of the spectral decomposition manifests itself in the unitary transformation and why this is cured by fixing phases in the unitary matrix and ordering the eigenvalues. Let $H$ be as in proposition~\ref{thm:decomp:spectral} and $U\in\gU(N,\F_\beta)$ be a unitary matrix which diagonalises $H$. If we introduce another unitary matrix, $V=\diag(v_1,\ldots,v_N)$ with $v_1,\ldots, v_N\in\gU(1,\F_\beta)$, then it is evident that
\begin{equation}
H=U\Lambda U^{-1}=UV\Lambda (UV)^{-1}.
\end{equation}
This invariance may be removed by inserting constraints in the first column of $U$. Next, consider the almost diagonal matrix
\begin{equation}
P_i=\diag(1,\ldots,1,\big[\begin{smallmatrix} 0 & 1 \\ 1 & 0 \end{smallmatrix}\big],1,\ldots,1)\in\gU(N,\F_\beta),
\end{equation}
which contains a $2\times 2$ block on the $i$-th position. We see that $\Lambda$ and $P_i\Lambda P_i^{-1}$ are identical except for an interchange of two eigenvalues. Moreover, $P_i$ ($i=1,\ldots,N-1$) are the generators of the permutation group of the eigenvalues. All such matrices are indistinguishable in the decomposition unless we introduce an ordering of the eigenvalues, e.g. $\lambda_1\leq \cdots\leq\lambda_N$ (the eigenvalues are, of course, double degenerate for $\beta=4$).
\end{remark}

\begin{proposition}[Singular value decomposition]\label{thm:decomp:SVD}
Given a complex (real, quaternionic) $(N+\nu)\times N$ matrix, $X$, then there exist two unitary (orthogonal, unitary symplectic) matrices, $U$ and $V$, and an $(N+\nu)\times N$ diagonal matrix $\Sigma$ with non-negative entries, such that
\begin{equation}
X=U\Sigma V^{-1}.
\end{equation}
We refer to the diagonal entries of $\Sigma$ as the singular values of $X$, and they are (up to an ordering) uniquely determined by $X$. If we fix the phases of the first $N$ entries in the first column of $U$ as well as a $\nu\times\nu$ matrix subgroup of $U$, then the decomposition is unique up to a ordering of the singular values. The Jacobian reads
\begin{equation}
d^\beta X=\prod_{1\leq i<j\leq N}\abs{\sigma_j^2-\sigma_i^2}^\beta \prod_{k=1}^N\sigma_k^{\beta(\nu+1)-1}d\sigma_kd\mu(U)d\mu(V),
\end{equation} 
with $\sigma_i$ denoting the $i$-th singular value (double degenerate for $\beta=4$) and $d\mu(U)=[U^{-1}dU]$ and $d\mu(V)=[V^{-1}dV]$ are the Haar measures on $\gU(N+\nu,\F_\beta)/\gU(1,\F_\beta)^N\times\gU(\nu,\F_\beta)$ and $\gU(N,\F_\beta)$, respectively.
\end{proposition}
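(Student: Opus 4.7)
\emph{Plan for the proof of Proposition~\ref{thm:decomp:SVD}.} The plan is to establish existence by reducing to the spectral decomposition (Proposition~\ref{thm:decomp:spectral}), to discuss the source of non-uniqueness separately, and finally to compute the Jacobian by differentiating the decomposition and analysing the block structure of the resulting one-form.

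For existence, I would first observe that $X^\dagger X$ is an $N\times N$ Hermitian positive semi-definite matrix over $\F_\beta$. By Proposition~\ref{thm:decomp:spectral} there exist $V\in\gU(N,\F_\beta)$ and a real non-negative diagonal matrix $D$ such that $X^\dagger X=VDV^{-1}$; set $\sigma_k:=\sqrt{D_{kk}}$ and $\Sigma$ equal to the $(N+\nu)\times N$ matrix whose top block is $\diag(\sigma_1,\ldots,\sigma_N)$ and whose bottom $\nu\times N$ block is zero. Assume generically that all $\sigma_k>0$ (the degenerate case follows by density / continuity once the generic statement is proven). Define the first $N$ columns of $U$ by $u_k:=\sigma_k^{-1}Xv_k$, where $v_k$ is the $k$-th column of $V$; the relation $X^\dagger X=VDV^{-1}$ gives $\langle u_i,u_j\rangle=\delta_{ij}$, so these columns are orthonormal and may be extended to a full element of $\gU(N+\nu,\F_\beta)$ by completing to an orthonormal basis of the remaining $\nu$-dimensional orthogonal complement. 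By construction $X=U\Sigma V^{-1}$. The same argument works over $\R$, $\C$, $\H$ because the spectral theorem is available in each case.

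For the uniqueness statement, I would catalogue the residual freedoms. First, the standard ambiguity $U\mapsto UD_\phi$, $V\mapsto VD_\phi$ with $D_\phi=\diag(d_1,\ldots,d_N,1_{\nu\times \nu})$, $d_k\in\gU(1,\F_\beta)$, preserves $U\Sigma V^{-1}$; this is removed by fixing the phases of the first $N$ diagonal entries of $U$, which is exactly the quotient $\gU(1,\F_\beta)^N$ in $d\mu(U)$. Second, the freedom $U\mapsto U\,\diag(1_N,W)$, $W\in\gU(\nu,\F_\beta)$, preserves the product as well, explaining the additional $\gU(\nu,\F_\beta)$ quotient. Third, the ordering of $\sigma_1\leq\cdots\leq\sigma_N$ removes the permutation symmetry exactly as in Proposition~\ref{thm:decomp:spectral}.

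The main obstacle is the Jacobian, and I would carry it out as follows. Differentiate the identity $X=U\Sigma V^{-1}$ to obtain $U^{-1}\,dX\,V=A\Sigma-\Sigma B+d\Sigma$, where $A:=U^{-1}dU$ and $B:=V^{-1}dV$ are anti-Hermitian over $\F_\beta$, and observe that the map $dX\mapsto U^{-1}dX V$ is an isometry so that $[d^\beta X]=[U^{-1}dX V]$. Split $A$ into the blocks compatible with $\Sigma$: an $N\times N$ block $A_{11}$, an $N\times\nu$ block $A_{12}$, a $\nu\times N$ block $A_{21}$, and a $\nu\times\nu$ block $A_{22}$. Then the $(i,j)$ entry of $U^{-1}dX V$ for $i,j\leq N$ reads $A_{11,ij}\sigma_j-\sigma_i B_{ij}+\delta_{ij}\,d\sigma_i$, while for $i>N$, $j\leq N$ it reads $A_{21,i-N,j}\sigma_j$, and the block $A_{12}$, $A_{22}$ columns correspond to the directions quotiented out of $U$. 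The off-diagonal entries with $i<j\leq N$ pair with $j<i\leq N$ via the anti-Hermiticity relations $A_{11,ji}=-A_{11,ij}^\dagger$, $B_{ji}=-B_{ij}^\dagger$, giving a $2\beta$-dimensional linear system per pair whose determinant is $|\sigma_j^2-\sigma_i^2|^\beta$; this gives the Vandermonde-type factor. The $A_{21}$ block produces $\prod_k\sigma_k^{\beta\nu}$ (one factor of $\sigma_k^\beta$ per row). The diagonal contribution $A_{11,ii}\sigma_i-\sigma_i B_{ii}+d\sigma_i$, after using the phase quotient to eliminate the $(\beta-1)$-dimensional anti-Hermitian part of $A_{11,ii}$, leaves $d\sigma_i$ together with the $(\beta-1)$ Haar directions of $B_{ii}$, contributing $\sigma_i^{\beta-1}d\sigma_i$ (the quaternionic case requires tracking that $\sigma_i$ is a real scalar so it commutes with the anti-Hermitian part of $B_{ii}$). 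Multiplying all three contributions and identifying the measures on the quotient $\gU(N+\nu,\F_\beta)/\gU(1,\F_\beta)^N\times\gU(\nu,\F_\beta)$ and on $\gU(N,\F_\beta)$ with $d\mu(U)$ and $d\mu(V)$ respectively yields the claimed Jacobian. The book-keeping in the quaternionic case, where all scalars are really $2\times 2$ blocks and one must be careful with the anti-self-dual components of $B_{ii}$, is the only place where a little attention is required.
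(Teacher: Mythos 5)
The paper does not prove Proposition~\ref{thm:decomp:SVD}: it is introduced in the paragraph that says ``Most results in this section are well-known and will therefore be presented without proof,'' and the Jacobian is attributed to standard references. So there is nothing to compare against; your proposal supplies a proof where the paper supplies a citation.

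That said, your argument is correct and is the standard derivation. The existence part via the spectral theorem for $X^\dagger X$ and the catalogue of residual freedoms are fine, and the degrees-of-freedom book-keeping for the Jacobian comes out right: $N$ real directions from $d\Sigma$, $\beta N(N-1)/2$ from each of the off-diagonals of $A_{11}$ and $B$, $(\beta-1)N$ from the diagonal of $B$, and $\beta\nu N$ from $A_{21}$, totalling $\beta N(N+\nu)$, matching the number of real entries of $X$. The $2\times 2$ (over $\F_\beta$) coefficient matrix $\bigl(\begin{smallmatrix}\sigma_j & -\sigma_i\\-\sigma_i & \sigma_j\end{smallmatrix}\bigr)$ per pair $i<j$ gives $|\sigma_j^2-\sigma_i^2|^\beta$, the $A_{21}$ block gives $\prod_k\sigma_k^{\beta\nu}$, and the diagonal (after quotienting out the $(\beta-1)$-dimensional $A_{11,ii}$) gives $\sigma_k^{\beta-1}\,d\sigma_k$; the powers combine to $\beta(\nu+1)-1$.

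One sentence is slightly misstated: you write that ``the block $A_{12}$, $A_{22}$ columns correspond to the directions quotiented out of $U$.'' Only $A_{22}$ (together with the diagonal of $A_{11}$) is quotiented out by $\gU(1,\F_\beta)^N\times\gU(\nu,\F_\beta)$. The block $A_{12}$ is not an independent direction at all: anti-Hermiticity of $A=U^{-1}dU$ forces $A_{12}=-A_{21}^\dagger$, so it is determined by $A_{21}$, which you do correctly use. This is a wording slip rather than a computational error, since your final count and the resulting Jacobian are both right.
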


\begin{proposition}[QR decomposition]\label{thm:decomp:QR}
Let $\widetilde X$ be a complex (real, quaternionic) $(N+\nu)\times N$ matrix, then there exists a decomposition,
\begin{equation}
\widetilde X=\widetilde U\begin{bmatrix} R+T \\ 0\end{bmatrix},
\end{equation}
where $\widetilde U$ is a unitary (orthogonal, unitary symplectic) matrix, $T$ is a complex (real, quaternionic) $N\times N$ strictly upper-triangular matrix, and $R$ is a positive semi-definite diagonal matrix. If $\widetilde X$ has full rank, then the decomposition can be made unique by fixing a $\nu\times\nu$ matrix subgroup of $\widetilde U$. The corresponding Jacobian is given by
\begin{equation}
d^\beta\widetilde X=\prod_{i=1}^N r_i^{\beta(N+\nu-i+1)-1}dr_id^\beta Td\mu(\widetilde U),
\end{equation}
where $r_i$ are the (positive) diagonal entries of $R$ (double degenerate for $\beta=4$) and $d\mu(\widetilde U)=[\widetilde U^{-1}d\widetilde U]$ is the Haar measure on $\gU(N+\nu,\F_\beta)/\gU(\nu,\F_\beta)$.
\end{proposition}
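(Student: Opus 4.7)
The plan is to establish existence constructively by Gram--Schmidt, identify the residual symmetries that give uniqueness, and derive the Jacobian by differentiating the decomposition and reading off a triangular change of variables.

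For existence, I would apply the Gram--Schmidt procedure to the $N$ columns of $\widetilde X$, viewed as vectors in $\F_\beta^{N+\nu}$: the orthonormal output vectors become the first $N$ columns of $\widetilde U$, the projection coefficients produce the strictly upper-triangular part $T$, the residual norms produce the non-negative diagonal $R$, and any orthonormal basis of the $\nu$-dimensional orthogonal complement supplies the remaining $\nu$ columns of $\widetilde U$. When $\widetilde X$ has full rank all $r_i>0$, so the only ambiguity in the first $N$ columns of $\widetilde U$ consists of diagonal phases in $\gU(1,\F_\beta)^N$, which are fixed by the convention $r_i\geq 0$; the ambiguity in the remaining columns is a right action of $\gU(\nu,\F_\beta)$, exactly the subgroup one divides out in the stated quotient.

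For the Jacobian I would differentiate the decomposition to obtain
\begin{equation*}
\widetilde U^{-1}\,d\widetilde X \;=\; \Omega\begin{bmatrix}R+T\\ 0\end{bmatrix}+\begin{bmatrix}dR+dT\\ 0\end{bmatrix},\qquad \Omega:=\widetilde U^{-1}d\widetilde U.
\end{equation*}
Since $\widetilde U$ is unitary, the flat measure $d^\beta \widetilde X$ equals the wedge product of the entries on the right. Decomposing $\Omega=\bigl[\begin{smallmatrix}A & B \\ -B^\dagger & C\end{smallmatrix}\bigr]$ with $A,C$ anti-Hermitian of sizes $N$ and $\nu$, the block $C$ is exactly what is killed when one passes to the Haar measure on $\gU(N+\nu,\F_\beta)/\gU(\nu,\F_\beta)$, while $A$ and $B$ together parametrise $d\mu(\widetilde U)$. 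I would then read off the Jacobian in four triangularly ordered pieces: (i) the bottom block $-B^\dagger(R+T)$ is column-triangular in $B^\dagger$ with diagonal $\{r_j\}$, contributing $\prod_j r_j^{\beta\nu}$; (ii) the strictly lower-triangular part $(i>j)$ of the top block equals $\sum_{k\leq j}A_{ik}(R+T)_{kj}$ and, ordered by column, is triangular in $\{A_{ij}\}_{i>j}$ with diagonal $r_j$, contributing $\prod_j r_j^{\beta(N-j)}$; (iii) the strictly upper-triangular part $(i<j)$ absorbs $d^\beta T$ once the earlier variables are held fixed; (iv) on the diagonal $(i=j)$, the real degree is absorbed by $dR_{ii}$, while the remaining $\beta-1$ anti-Hermitian degrees of $A_{ii}$ are each rescaled by $r_i$, giving $r_i^{\beta-1}$ (with $A_{ii}=0$ for $\beta=1$ supplying the vacuous case).

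Collecting powers, each $r_i$ appears with exponent $\beta\nu+\beta(N-i)+(\beta-1)=\beta(N+\nu-i+1)-1$, exactly as claimed, and the remaining differentials assemble into $d^\beta T\,\prod_i dr_i\,d\mu(\widetilde U)$. The main obstacle is bookkeeping in the quaternionic case: the $2\times 2$-block realisation of $\H$ recalled in section~\ref{sec:decomp:2by2} must be invoked consistently so that each quaternionic entry carries weight $\beta=4$, the anti-self-dual diagonal of $A$ has exactly three real parameters, and the double-degeneracy of each $r_i$ is correctly tracked; this is routine but requires care to avoid double-counting.
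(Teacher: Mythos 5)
Your proof is correct. Note that the paper states this proposition without proof, explicitly deferring to standard references (it says at the top of that section that the results ``are well-known and will therefore be presented without proof''). Your Gram--Schmidt existence argument, the identification of the $\gU(\nu,\F_\beta)$ residual freedom in the last $\nu$ columns, and the exterior-calculus computation of the Jacobian via the one-form identity $\widetilde U^{-1}d\widetilde X=\Omega\begin{bmatrix}R+T\\0\end{bmatrix}+\begin{bmatrix}dR+dT\\0\end{bmatrix}$ are the standard derivation one finds in the cited textbooks, and the approach is also consistent in style with the proofs the paper \emph{does} supply for the generalised decompositions (proposition~\ref{thm:decomp:gen-QR-block} onwards), which use the same matrix-of-one-forms technique and the same block-triangular bookkeeping. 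Your power counting $\beta\nu+\beta(N-i)+(\beta-1)=\beta(N+\nu-i+1)-1$ reproduces the stated exponent; the only place deserving a slightly more explicit statement is the global ordering that makes the whole Jacobian matrix lower-block-triangular (order by column $j=1,\dots,N$, and within column $j$ treat the new variables $B^\dagger_{ij}$, $\{A_{ij}\}_{i>j}$, $r_j$, $\Im A_{jj}$, $\{T_{ij}\}_{i<j}$ as the inputs and the $(N+\nu)$ entries of column $j$ of $\widetilde U^{-1}d\widetilde X$ as the outputs), since the informal ``read off four triangular pieces'' only works once this ordering is fixed so that each output depends only on current-column new variables plus already-processed ones.
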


\begin{remark}
There is no permutation invariance of diagonal entries of $R$, such transformations require pivoting, see~\cite{GL:1996}.
\end{remark}

\begin{remark}
As we will see in section~\ref{sec:decomp:generalised}, the QR decomposition is essential for the study of products of random matrices. In addition to this, the QR decomposition also plays a prominent r\^ole in numerical linear algebra as a part of the QR algorithm which is used to obtain the Schur form (and therefore also the eigenvalues) of a matrix, see~\cite{GL:1996}. Here, input from random matrix theory is relevant as well; Edelman et al.~\cite{ESW:2014} comment: ``Notice that in earlier versions of \textsc{lapack} and \textsc{matlab} \verb![Q,R]=qr(randn(n))! did not always yield \verb!Q! with Haar measure.
Random matrix theory provided the impetus to fix this!'' 
% 
% The spectral properties for the non-negative values, $r_i$ ($i=1,\ldots,N$), henceforth called $r$-values, are rarely mentioned in the random matrix literature. However, their importance are illustrated by the fact that they are interlinked with other spectral properties. Let $\sigma_i$ and $z_i$ denote the singular values and the induced (see section~\ref{sec:prologue:gauss}) eigenvalues of the matrix $\widetilde X$, then we have the relation
% \begin{equation}
% \prod_{i=1}^Nr_i=\prod_{i=1}^N\sigma_i=\prod_{i=1}^N\abs{z_i}.
% \end{equation}
% The case of Ginibre matrices, see definition~\ref{def:pro:ginibre}, has the particular nice property that all entries in the strictly upper-triangular matrix, $T$, is independent and identically distributed Gaussians, while the $r$-values are independent (but not identically) chi-squared distributed random variables, see e.g.~\cite{ESW:2014}. For reference, we state the (symmetrised) joint probability density function for the $r$-values in the Ginibre case,
% \begin{equation}
% \cP_\jpdf(r_1,\ldots,r_n)=\frac{1}{N!}\per_{1\leq i,j\leq N}\Big[\frac{2^{\beta(2-\beta)(\nu+j)/2}r_i^{\beta(\nu+j)-1}e^{-\beta r_i^2/2}}{2\Gamma[\beta(\nu+j)/2]}\Big]
% \end{equation}
% Here, '$\per$' denote the permanent and each $r_i$ is distributed on the positive half-line.
\end{remark}

\begin{proposition}[Block-QR decomposition]\label{thm:decomp:QR-block}
Let $N$ be a positive integer and let $\nu,\mu$ be non-negative integers. Given an $(N+\nu)\times (N+\mu)$ complex (real, quaternionic) matrix, $\widetilde X$, then there exist complex (real, quaternionic) matrices $X$ and $T$ of size $N\times N$ and $(N+\nu)\times \mu$, respectively, as well as a unitary (orthogonal, unitary symplectic) matrix $U$, such that
\begin{equation}
\widetilde X=\widetilde U\bigg[\begin{matrix} X \\ 0 \end{matrix}\, \bigg\vert\, T\,\bigg]
\end{equation}
The decomposition becomes unique if the matrix $\widetilde X$ has full rank and we fix a diagonal sub-group $\gU(N,\F_\beta)\times\gU(\nu,\F_\beta)$. The Jacobian is
\begin{equation}
d^\beta\widetilde X=\det(X^\dagger X)^{\beta\nu/2\gamma}d^\beta Xd^\beta Td\mu(U),
\end{equation}
with $d\mu(\widetilde U)=[\widetilde U^{-1}d\widetilde U]$ denoting the Haar measure on $\gU(N+\nu,\F_\beta)/\gU(N,\F_\beta)\times\gU(\nu,\F_\beta)$.
\end{proposition}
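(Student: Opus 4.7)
The plan is to prove existence and (coset-)uniqueness first, and then to compute the Jacobian by writing $\widetilde X=\widetilde U M$ (with $M$ the middle matrix), left-multiplying $d\widetilde X$ by $\widetilde U^{-1}$, and analysing the resulting block structure.

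\textbf{Existence and uniqueness.} I would split $\widetilde X=[\widetilde X_1\mid \widetilde X_2]$ where $\widetilde X_1$ is the $(N+\nu)\times N$ block of the first $N$ columns. Assuming $\widetilde X$ has full rank, $\widetilde X_1$ has column rank $N$, so its column span is an $N$-dimensional subspace $V\subset \F_\beta^{N+\nu}$. Choose $\widetilde U\in\gU(N+\nu,\F_\beta)$ whose first $N$ columns form an orthonormal basis of $V$ and whose last $\nu$ columns span $V^\perp$. Then $\widetilde U^{-1}\widetilde X_1=\bigl[\begin{smallmatrix}X\\ 0\end{smallmatrix}\bigr]$ with $X\in\F_\beta^{N\times N}$, and setting $T:=\widetilde U^{-1}\widetilde X_2$ gives the decomposition. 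For uniqueness, suppose $\widetilde U$ and $\widetilde U'$ both work; then $\widetilde U^{-1}\widetilde U'$ must preserve the span and orthogonal complement of the first $N$ coordinate vectors, which forces $\widetilde U^{-1}\widetilde U'=\mathrm{diag}(Q_1,Q_2)$ with $(Q_1,Q_2)\in \gU(N,\F_\beta)\times\gU(\nu,\F_\beta)$. Quotienting by this subgroup pins the decomposition down.

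\textbf{Jacobian computation.} Write $M=\bigl[\begin{smallmatrix}X & T_u\\ 0 & T_l\end{smallmatrix}\bigr]$ with $T=\bigl[\begin{smallmatrix}T_u\\ T_l\end{smallmatrix}\bigr]$, and differentiate:
\begin{equation}
\widetilde U^{-1}\,d\widetilde X \;=\; \Omega\,M + dM,\qquad \Omega:=\widetilde U^{-1}d\widetilde U.
\end{equation}
The form $\Omega$ is anti-Hermitian; its block-diagonal part is the tangent direction along $\gU(N,\F_\beta)\times\gU(\nu,\F_\beta)$ and is killed by the coset quotient, so only the off-diagonal block survives,
\begin{equation}
\Omega\;=\;\begin{bmatrix} 0 & A\\ -A^{\dagger} & 0\end{bmatrix},
\end{equation}
with $A\in\F_\beta^{N\times\nu}$ free. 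A short computation gives
\begin{equation}
\widetilde U^{-1}d\widetilde X \;=\; \begin{bmatrix} dX & dT_u + A T_l\\ -A^{\dagger}X & dT_l - A^{\dagger}T_u\end{bmatrix}.
\end{equation}
Left multiplication by $\widetilde U$ is isometric and contributes only the Haar measure $d\mu(\widetilde U)$ on the coset. On the space of matrix differentials, the top-left block is $dX$ (trivial Jacobian), the top-right block is $dT_u$ shifted by terms already involving $A$ and $T_l$ (trivial Jacobian), and the bottom-right block is $dT_l$ shifted by $A,T_u$ (trivial Jacobian). The only non-trivial piece is the change of variables $A\mapsto -A^{\dagger}X$ in the bottom-left block.

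\textbf{The non-trivial block.} Regarding $A^{\dagger}\in\F_\beta^{\nu\times N}$ as $\nu$ row vectors, each row is right-multiplied by $X$; the Jacobian of this linear map is $|\det X|^{\beta}$ per row, giving $|\det X|^{\beta\nu}=(\det X^{\dagger}X)^{\beta\nu/2}$ in total, i.e. $(\det X^{\dagger}X)^{\beta\nu/(2\gamma)}$ once we account for the $\gamma$-convention used for quaternionic determinants (where a quaternionic $N\times N$ block is realised as a $2N\times 2N$ complex block, so real moduli enter with exponent divided by $\gamma=2$). Putting the four blocks together yields
\begin{equation}
d^{\beta}\widetilde X \;=\; (\det X^{\dagger}X)^{\beta\nu/(2\gamma)}\,d^{\beta}X\,d^{\beta}T\,d\mu(\widetilde U).
\end{equation}

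\textbf{Main obstacle.} The conceptual steps are routine, but the bookkeeping is delicate. The real difficulty lies in correctly identifying the coset tangent space (so that only the off-diagonal $A$-block of $\Omega$ contributes and the block-diagonal gauge part is properly absorbed into $d\mu(\widetilde U)$), and in threading the quaternionic convention $\gamma=2$ consistently through the $(\det X^{\dagger}X)^{\beta\nu/(2\gamma)}$ exponent; the rest of the Jacobian factorises cleanly because the shifts by $A$ in three of the four blocks are lower-triangular in the ordering $(X,A,T_u,T_l)$.
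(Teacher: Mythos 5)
The paper presents this proposition in its ``Standard decompositions'' subsection, where results are stated without proof and the reader is pointed to textbook references. However, the paper's proof of the $n$-fold generalisation (proposition~\ref{thm:decomp:gen-QR-block}) computes its Jacobian by precisely the differentiation argument you give: form $\widetilde U^{-1}[d\widetilde X]=[\widetilde U^{-1}d\widetilde U]M+[dM]$, observe that on the coset $\gU(N+\nu,\F_\beta)/\gU(N,\F_\beta)\times\gU(\nu,\F_\beta)$ only the off-diagonal block $[dA]$ of $[\widetilde U^{-1}d\widetilde U]$ is free, note that the $T$-blocks contribute trivially because the shifts by $A$ are lower triangular in the ordering $(X,A,T)$, and pull the determinantal factor out of the block $[dA]X$. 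Your proposal is correct and is essentially that argument specialised to $n=1$, with the same existence/uniqueness reasoning (column span and its orthogonal complement, then showing the stabiliser is block diagonal).

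One small remark on the bookkeeping you flag as the ``main obstacle'': the sentence ``$|\det X|^{\beta\nu}=(\det X^{\dagger}X)^{\beta\nu/2}$, i.e. $(\det X^{\dagger}X)^{\beta\nu/(2\gamma)}$ once we account for the $\gamma$-convention'' conflates two different exponents, since the displayed equality only holds for $\gamma=1$. The cleaner route, which avoids a post hoc correction, is to compute the real Jacobian of right-multiplication by $X$ on a single row $\F_\beta^{1\times N}$ directly; it equals $\det(X^\dagger X)^{\beta/(2\gamma)}$ (for $\beta=4$ the determinant is taken in the $2N\times 2N$ complex representation, which builds the $\gamma=2$ in from the start), and raising this to the $\nu$-th power for the $\nu$ rows of $A^\dagger$ gives $\det(X^\dagger X)^{\beta\nu/(2\gamma)}$ uniformly across the three Dyson classes. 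Incidentally, the paper's own statement of the generalised Jacobian (and its corollary) prints the exponent as $2\beta\nu_i/2\gamma$, which is a typo; the value $\beta\nu/(2\gamma)$ in the proposition under review (and in your proposal) is the correct one.
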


Above we have considered only real-valued spectral properties. Now, let us turn to eigenvalues of non-Hermitian matrices which may be complex. One way to approach this is by means of an eigenvalue decomposition as in~\cite{Ginibre:1965}, see also~\cite{Mehta:2004}. However, we will use a more modern approach, namely the Schur decomposition (which is also valuable for numerical computation due to the QR algorithm). This decomposition is more involved than the decompositions given above and we will state the complex, quaternionic, and real case separately. 

\begin{proposition}[Complex Schur decomposition]\label{thm:decomp:schur-C}
Let $X$ be a complex $N\times N$ matrix; there exist a unitary matrix $U$, a diagonal matrix $Z$, and strictly upper-triangular complex matrix $T$, such that
\begin{equation}
X=U(Z+T)U^{-1}.
\end{equation}
The diagonal entries of $Z$ are the eigenvalues of $X$ and they are (up to an ordering) uniquely determined by $X$. If we fix the phases in the first column of $U$, then the decomposition becomes unique up to the order of the eigenvalues. The Jacobian reads
\begin{equation}
d^2 X=\prod_{1\leq i<j\leq N}\abs{z_j-z_i}^2 \prod_{k=1}^Nd^2z_kd^2Td\mu(U),
\end{equation} 
where $z_i$ denote the $i$-th eigenvalue and $d\mu(U)=[U^{-1}dU]$ is the Haar measure on $\gU(N)/\gU(1)^N$.
\end{proposition}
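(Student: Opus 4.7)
The plan is to prove the existence of the decomposition by induction on $N$, then discuss uniqueness, and finally carry out the Jacobian computation by an exterior-algebra calculation.

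For existence, I would proceed by induction on $N$. The base case $N=1$ is trivial. For the inductive step, the fundamental theorem of algebra guarantees at least one eigenvalue $z_1\in\C$ of $X$ with a unit eigenvector $u_1$. Completing $u_1$ to an orthonormal basis gives a unitary $U_1$ whose first column is $u_1$, and $U_1^{-1}XU_1$ has the block form $\bigl[\begin{smallmatrix} z_1 & v^\transpose \\ 0 & X' \end{smallmatrix}\bigr]$ with $X'\in\C^{(N-1)\times(N-1)}$. Applying the inductive hypothesis to $X'$ and embedding the resulting unitary into an $N\times N$ block-diagonal unitary yields the decomposition. Uniqueness up to eigenvalue ordering follows since the characteristic polynomial of $X$ equals that of $Z+T$, whose roots are the diagonal entries of $Z$; the remaining $U(1)^N$ redundancy $U\mapsto UD$ (with $D=\mathrm{diag}(e^{i\phi_1},\dots,e^{i\phi_N})$) replaces $T$ by $D^{-1}TD$ without changing $Z$, and is removed by fixing the phases in the first column of $U$.

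For the Jacobian I would parametrise $X=U(Z+T)U^{-1}$ and compute
\begin{equation}
U^{-1}(dX)U = dZ + dT + [\,\Omega,\,Z+T\,], \qquad \Omega:=U^{-1}dU,
\nn
\end{equation}
where $\Omega$ is anti-Hermitian with diagonal entries removed after quotienting by $U(1)^N$. The independent variables are $dz_k$ ($k=1,\dots,N$), $dT_{ij}$ ($i<j$), and $\Omega_{ij}$ for $i>j$ (with $\Omega_{ji}=-\Omega_{ij}^*$ giving the upper off-diagonal part). Splitting $U^{-1}dX\,U$ according to its diagonal, strictly upper-triangular, and strictly lower-triangular parts, I would use that $[\Omega,Z]_{ij}=\Omega_{ij}(z_j-z_i)$ and that $[\Omega,T]$ contributes only terms involving $\Omega_{kl}$ with either $l<j$ or $k>i$ to the $(i,j)$ lower entry. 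In particular, for $i>j$,
\begin{equation}
(U^{-1}dX\,U)_{ij} = (z_j-z_i)\,\Omega_{ij} + \sum_{k<j}\Omega_{ik}T_{kj} - \sum_{k>i}T_{ik}\Omega_{kj}.
\nn
\end{equation}

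Reading this off antidiagonal by antidiagonal (i.e.\ ordered by $i-j$ decreasing), the matrix expressing the lower entries of $U^{-1}dX\,U$ in terms of the $\Omega_{ij}$'s is upper-triangular in this ordering with diagonal entries $z_j-z_i$. Hence the wedge product of the lower entries equals $\prod_{i>j}|z_j-z_i|^2$ (one $|z_j-z_i|^2$ per complex parameter) times the wedge of the $\Omega_{ij}$'s, which by definition gives $d\mu(U)$ on $\gU(N)/\gU(1)^N$. The diagonal part yields $dz_k + (\text{terms in already-fixed } \Omega\text{'s and }T\text{'s})$, contributing $\prod_k d^2z_k$, and the strictly upper part similarly yields $dT_{ij}+(\ldots)$, contributing $d^2T$. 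Since the transformation $X\mapsto U^{-1}XU$ preserves the flat measure on $\C^{N\times N}$, we obtain the claimed Jacobian.

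The main obstacle in this plan is justifying the triangular structure rigorously: one must check that, when the variables are ordered suitably (first the $\Omega_{ij}$'s ordered by increasing $i-j$, then the $dT_{ij}$'s and $dz_k$'s), the contributions from $[\Omega,T]$ to the lower entries and from $[\Omega,Z+T]$ to the diagonal and upper entries involve only variables ``earlier'' in the ordering, so that the Jacobian determinant reduces to the product of the diagonal factors $z_j-z_i$. I would make this precise by writing the linearised map as a block-triangular matrix whose blocks act on the antidiagonals of $dX$ and verifying the triangularity by the index-range analysis above.
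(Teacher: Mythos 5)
The paper does not actually prove this proposition: it appears in the section on ``standard decompositions'', which opens with the statement that ``most results in this section are well-known and will therefore be presented without proof'' and defers to textbook references such as~\cite{GL:1996,Mehta:2004,Forrester:2010}. So there is no proof in the paper for you to match. Your proposal is the standard textbook argument and it is correct. Existence by induction on $N$ and uniqueness up to $U(1)^N$ and eigenvalue ordering (for distinct eigenvalues, via the fact that a unitary upper-triangular matrix is diagonal) are both sound. For the Jacobian, the key step you flagged as an obstacle does in fact go through: in $(U^{-1}dX\,U)_{ij}=(z_j-z_i)\Omega_{ij}+\sum_{k<j}\Omega_{ik}T_{kj}-\sum_{k>i}T_{ik}\Omega_{kj}$ with $i>j$, the first sum has $k<j<i$ so $i-k>i-j$, and the second has $k>i>j$ so $k-j>i-j$; in both cases the spurious $\Omega$'s lie strictly farther from the diagonal, so processing the lower antidiagonals from the corner inward yields a triangular linear system with diagonal entries $z_j-z_i$, each complex entry contributing $\abs{z_j-z_i}^2$ to the real Jacobian. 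Your handling of the diagonal and strictly upper-triangular rows is also right, provided one uses (as you do) that $\Omega_{ij}$ with $i<j$ equals $-\bar\Omega_{ji}$ and is therefore already accounted for, so those rows pick up $dz_k$ and $dT_{ij}$ respectively. Invariance of the flat measure under the unitary conjugation $X\mapsto U^{-1}XU$ finishes the argument.
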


\begin{proposition}[Quaternionic Schur decomposition]\label{thm:decomp:schur-H}
Given a quaternionic $N\times N$ matrix, $X$, then there exist a unitary symplectic matrix $U$, a diagonal matrix $Z$ consisting of complex conjugate pairs, and strictly upper-triangular quaternionic matrix $T$, such that
\begin{equation}
X=U(Z+T)U^{-1}.
\end{equation}
Here, the diagonal entries of $Z$ are the eigenvalues of $X$ and they come in complex conjugate pairs. If the eigenvalues are non-degenerate, then the decomposition becomes unique by fixing their order as well as the phases in the first column of $U$. The corresponding Jacobian is
\begin{equation}
d^4 X=\prod_{1\leq i<j\leq N}\abs{z_j-z_i}^2 \abs{z_j-z_i^*}^2 \prod_{k=1}^N\abs{z_k-z_k^*}^2d^2z_kd^4Td\mu(U),
\end{equation} 
where $z_i$ denotes an eigenvalue in the upper-half plane stemming from the $i$-th eigenvalue pair and $d\mu(U)=[U^{-1}dU]$ is the Haar measure on $\gUSp(2N)/\gU(1)^N$.
\end{proposition}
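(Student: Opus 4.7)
The plan has two parts: establish the existence and uniqueness of the decomposition, then compute the Jacobian. For existence I would follow the standard ``symplectic Schur'' construction. A quaternionic matrix $X \in \H^{N\times N}$ viewed as a $2N \times 2N$ complex matrix satisfies $JX^* = XJ$ with $J$ the standard symplectic form, and this anti-unitary symmetry forces the $2N$ complex eigenvalues to come in conjugate pairs $\{z_i, z_i^*\}$ (indeed if $Xv = z_1v$ then $X(Jv^*) = z_1^* (Jv^*)$). Label them with $z_i$ in the closed upper half-plane. The key construction is inductive: the two-dimensional subspace spanned by $v$ and $Jv^*$ is invariant under $X$ and closed under the quaternionic structure, so it is a ``quaternionic line'' admitting a symplectic orthonormal basis. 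The corresponding first column of $U$ gives a unitary symplectic matrix whose conjugation of $X$ produces a block-upper-triangular form with upper-left $2\times 2$ block $\diag(z_1,z_1^*)$. Iterating on the symplectic-orthogonal complement yields $U\in\gUSp(2N)$, $Z$ of the required conjugate-pair form, and a strictly upper-triangular quaternionic $T$. Uniqueness under non-degeneracy is standard: ordering fixes the permutation ambiguity, and quotienting by the diagonal $\gU(1)^N\subset\gUSp(2N)$ fixes the remaining phase freedom in the first column of $U$.

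For the Jacobian I differentiate $X = U(Z+T)U^{-1}$ and exploit the bi-unitary invariance of the flat measure on $\H^{N\times N}$. Setting $R=Z+T$ and $\omega=U^{-1}dU\in\ausp(2N)$ (modulo the diagonal $\au(1)^N$), a direct computation gives $U^{-1}(dX)U = dR + [\omega,R]$. Hence $d^4X$ equals the quaternionic flat measure of the right-hand side, and the Jacobian amounts to the determinant of the linear map $(dZ,dT,\omega)\mapsto dR+[\omega,R]$. Organising variables by block position $(i,j)$ with the ordering ``strict lower--diagonal--strict upper,'' the contributions of $[\omega,T]$ couple only in directions that can be absorbed into shifts of $dT$ without altering the determinant, so the Jacobian factorises into localized block contributions indexed by pairs $(i\le j)$.

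The two local computations are: (i) at position $(i,i)$, after modding out the $\au(1)$-direction the residual freedom in $\omega_{ii}$ is the complex ``$v$-component'' $v_{ii}$; writing a quaternion as $\bigl[\begin{smallmatrix} u & v \\ -v^* & u^* \end{smallmatrix}\bigr]$ one checks that $\omega_{ii}z_i-z_i\omega_{ii}$ has vanishing $u$-part and $v$-part equal to $v_{ii}(z_i^*-z_i)$, so the map $(dz_i,v_{ii})\mapsto dH_{ii}$ has Jacobian $\abs{z_i-z_i^*}^2$; and (ii) for $i<j$ the pair $(dT_{ij},\omega_{ij})$ maps to $(dH_{ij},dH_{ji})$ where $dH_{ij}=dT_{ij}+(\omega_{ij}z_j-z_i\omega_{ij})$ and $dH_{ji}=\omega_{ji}z_i-z_j\omega_{ji}$ (with $\omega_{ji}=-\omega_{ij}^\dagger$); writing $\omega_{ji}=\bigl[\begin{smallmatrix} a & b \\ -b^* & a^* \end{smallmatrix}\bigr]$, the $u$- and $v$-components of $dH_{ji}$ are $(z_i-z_j)a$ and $(z_i^*-z_j)b$, giving Jacobian $\abs{z_j-z_i}^2\abs{z_j-z_i^*}^2$. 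Multiplying the block contributions produces exactly the stated formula.

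The main obstacle is the existence step: the complex Schur decomposition alone does not guarantee that the intertwining unitary lies in $\gUSp(2N)$, and one must show that the full flag can be assembled from quaternionic lines. This uses the conjugate pairing of eigenvalues together with care about potential degeneracies (real eigenvalues appear with even multiplicity by Kramers' degeneracy, so they can be handled, but uniqueness genuinely requires the stated non-degeneracy). The Jacobian bookkeeping is notationally heavy because each quaternion must be expanded into a $2\times 2$ complex block and the $v$-part of diagonal $\omega$-entries tracked separately, but the triangular structure of the differential map reduces it to the two localized determinants above.
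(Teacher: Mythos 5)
The paper states this proposition without proof in the ``standard decompositions'' section, deferring to references for the Jacobian; the closest in-paper argument is the proof of the generalised quaternionic Schur decomposition (proposition~\ref{thm:decomp:gen-schur-H}), which computes the $n$-factor Jacobian. Your proposal is essentially the $n=1$ specialisation of that argument, and it is correct. You differentiate $X=URU^{-1}$ with $R=Z+T$ to obtain $U^{-1}\,dX\,U = dR+[\omega,R]$ and observe that the triangular structure makes the resulting linear map block-triangular in the lower/diagonal/upper ordering, so the determinant factorises into local blocks. The diagonal computation is right: in $\ausp(2)/\au(1)$ the residual degree of freedom is the complex $v$-component of $\omega_{ii}$, and $[\omega_{ii},\diag(z_i,z_i^*)]$ kills the $u$-part and multiplies $v$ by $(z_i^*-z_i)$, giving $\abs{z_i-z_i^*}^2$. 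The lower-triangular computation is also right: the leading contribution to $(dH)_{ji}$ is $\omega_{ji}Z_{ii}-Z_{jj}\omega_{ji}$, which multiplies the $u$-part of $\omega_{ji}$ by $(z_i-z_j)$ and the $v$-part by $(z_i^*-z_j)$, giving $\abs{z_j-z_i}^2\abs{z_j-z_i^*}^2$, while $(dH)_{ij}$ absorbs its $\omega_{ij}$ contribution into a shear of $dT_{ij}$. The paper's proof of proposition~\ref{thm:decomp:gen-schur-H} does the same thing but packages the off-diagonal bookkeeping in Kronecker-product notation via $\vec(ABC)=(C^\transpose\otimes A)\vec B$, which scales more cleanly to $n$ factors; your hands-on $2\times2$ block expansion is equivalent at $n=1$ and arguably more transparent. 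The one point worth making explicit in the existence step: when you take the symplectic-orthogonal complement of $\Span\{v,Jv^*\}$, you need this complement to be $X$-invariant, which it generally is not for non-normal $X$; you are really building a flag, not an orthogonal direct sum, and the upper-triangular $T$ records exactly the failure of invariance. Your wording (``iterating on the symplectic-orthogonal complement'') should be read in that flag sense, and with that understanding the construction goes through.
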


\begin{proposition}[Real Schur decomposition]\label{thm:decomp:schur-R}
Let $K$ and $L$ be integers such that $K+2L=N$. Given $N\times N$ real asymmetric matrix, $X$, with $K$ real eigenvalues (and therefore $L$ complex conjugate pairs), then there exists a factorisation in terms of real matrices,
\begin{equation}
X=U\begin{bmatrix}\Lambda + T^{11} & T^{12} \\ 0 & Z + T^{22} \end{bmatrix}U^{-1}.
\end{equation}
Here $T^{11}$ is a $K\times K$ strictly upper-triangular matrix, $T^{12}$ is a $K\times 2L$ matrix, and $T^{12}$ is a $2L\times 2L$ matrix which is strictly upper-triangular in terms of $2\times 2$ blocks; $U$ is an orthogonal matrix, $\Lambda=\diag(\lambda_1,\ldots,\lambda_K)$ is a $K\times K$ diagonal matrix where the entries are (unordered) real eigenvalues, while $Z$ is a $2L\times 2L$ block diagonal matrix where the $j$-th entry given by
\begin{equation}
(Z)_{jj}=\begin{bmatrix} x_j & c_j+b_j \\ c_j-b_j & x_j \end{bmatrix},
\end{equation}
with $b_j^2>c_j^2$, such that $z_j=x_j+ i\sqrt{b_j^2-c_j^2}$ is a complex complex eigenvalue of $X$. The corresponding Jacobian is given by
\begin{multline}
dX=\prod_{1\leq i<j\leq K}\abs{\lambda_j-\lambda_i}
\prod_{1\leq i<j\leq L}\abs{z_j-z_i}^2\abs{z_j-z_i^*}^2
\prod_{i=1}^K\prod_{j=1}^L\abs{z_j-\lambda_i}^2\\
\times dT^{11}dT^{12}dT^{22}d\mu(U)\prod_{i=1}^Kd\lambda_i\prod_{j=1}^L dx_j\,db_j\,c_jdc_j,
\end{multline}
where $d\mu(U)$ is the Haar measure on $\gO(N)/\gO(1)^N$.
\end{proposition}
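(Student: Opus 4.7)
The plan is to first establish the existence of the decomposition by induction on the pair $(K,L)$, and then derive the Jacobian by differentiating the parametrisation $X = U M U^{-1}$ with $M = \Lambda \oplus Z + (T\text{-parts})$. For existence, I would mimic the standard proof of the complex Schur form but peel off one invariant subspace at a time according to whether the corresponding eigenvalue is real or complex. If $\lambda$ is a real eigenvalue, choose a unit real eigenvector $v$, extend to an orthonormal basis, and collect these columns into $U_1 \in \gO(N)$; then $U_1^{-1}XU_1$ has the shape $\bigl[\begin{smallmatrix} \lambda & * \\ 0 & X'\end{smallmatrix}\bigr]$ with $X'\in \R^{(N-1)\times (N-1)}$, and we apply the induction hypothesis to $X'$. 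If $z = x + iy$ with $y>0$ is a complex eigenvalue with eigenvector $v = v_{\mathrm R} + i v_{\mathrm I}$, then $\Span_\R\{v_{\mathrm R}, v_{\mathrm I}\}$ is a two-dimensional real invariant subspace; orthonormalise it, extend to an orthogonal basis, and obtain $U_1^{-1}XU_1$ in block form with a $2\times 2$ top-left block $Z_0$ whose eigenvalues are $z,z^*$, followed by an $(N-2)$-dimensional real asymmetric block on which induction proceeds.

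Next I would reduce the top-left $2\times 2$ block $Z_0$ to the normal form $(Z)_{jj}$. Using the parametrisation $Z_0 = (a\one + b\qi + c\qj + d\qk)/\sqrt 2$ of section~\ref{sec:decomp:real}, the $\gSO(2)$ similarity~\eqref{decomp:2by2:ortho-R-matrix} acts on $(c,d)$ as a rotation while leaving $a,b$ invariant; since $Z_0$ has non-real eigenvalues we have $b^2 > c^2 + d^2$ and we may rotate so that $d'=0$, giving precisely the form $\bigl[\begin{smallmatrix} x_j & c_j+b_j \\ c_j-b_j & x_j \end{smallmatrix}\bigr]$ with $x_j = a/\sqrt 2$ (and $b_j,c_j$ the rescaled $b,c$). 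This rotation is absorbed into the orthogonal factor, which is why the stabiliser of the final form is only discrete (sign flips), accounting for the $\gO(1)^N$ quotient. Iterating exhausts all $L$ complex pairs and $K$ real eigenvalues, proving existence.

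The Jacobian I would obtain by differentiating $X = UMU^{-1}$: writing $\omega := U^{-1}dU$ (antisymmetric), one finds $U^{-1}(dX)U = dM + [\omega, M]$, where $M$ is the block upper-triangular middle factor. The independent differentials are: $d\lambda_i$ (from the real diagonal), $dx_j, db_j, dc_j$ (from each $Z$-block, after imposing $d'_j=0$), the entries of $dT^{11}, dT^{12}, dT^{22}$ (upper parts), and the entries of $\omega$ below the block diagonal. The commutator $[\omega, M]$ is strictly lower triangular in block form, so reading it off block-by-block and computing the wedge product produces exactly the repulsion factors: each pair of real eigenvalues $(\lambda_i,\lambda_j)$ yields $|\lambda_j - \lambda_i|$ (a single power because one real $\omega_{ij}$ is integrated out), each pair of complex blocks yields $|\det(Z_j^{\!\transpose}\otimes\one - \one\otimes Z_i)| = |z_j-z_i|^2|z_j-z_i^*|^2$ (four $\omega$-entries), and each real--complex pair yields $|z_j - \lambda_i|^2$ (two $\omega$-entries). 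The extra factor $c_j$ in $c_j\,dc_j$ is the Jacobian of the rotation that enforces $d'_j=0$: the mapping $(c,d)\mapsto (c',\theta)$ with $d'=0$ has Jacobian $|c|$, and this absorbed angle $\theta$ becomes the $\gSO(2)$-direction in $\omega$ attached to the $j$-th block.

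The main obstacle will be the careful bookkeeping of the $2\times 2$ blocks in the Jacobian calculation --- in particular, (i) isolating the rotation angle inside each diagonal $Z$-block from the rest of $\omega$ so that it correctly produces the factor $c_j$, and (ii) verifying that the subdiagonal commutator entries really match an invertible linear combination of the remaining $\omega$-entries, so that the Vandermonde-type factors appear as stated. The real--complex and complex--complex cases cannot be reduced to the complex Schur calculation because $\omega$ is strictly real antisymmetric rather than anti-Hermitian, so one must work block-by-block at the level of $2\times 2$ (resp.\ $2\times 1$) sub-blocks of the commutator and use the identity $\det(Z^{\transpose}\otimes\one-\one\otimes Z') = (z-z')(z-z'^*)(z^*-z')(z^*-z'^*)$ to reassemble the factors in terms of the complex eigenvalues.
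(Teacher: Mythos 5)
The paper does not supply a proof of this proposition: it appears in section~\ref{sec:decomp:standard}, whose opening paragraph declares the results there well known and ``presented without proof,'' and the remark following the statement simply cites the literature (Edelman, and Lehoucq--Sommers) for the Jacobian. The closest thing to an internal proof is the paper's derivation of the generalised version, proposition~\ref{thm:decomp:gen-schur-R}, whose $n=1$ specialisation recovers the present statement once the $d_j=0$ gauge is imposed and the $\gSO(2)^L$ directions are absorbed into $U$. Your approach --- peeling off invariant subspaces for existence, rotating each $2\times 2$ diagonal block to the normal form, and then computing $U^{-1}\,dX\,U=dM+[\omega,M]$ with $\omega=U^{-1}dU$ and wedging block by block --- is exactly the standard one and is consistent with that calculation, so this counts as ``the same route.''

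Two small corrections are worth making. First, the claim that ``$[\omega,M]$ is strictly lower triangular in block form'' is not true: $\omega$ is antisymmetric (not lower triangular) and $M$ is block upper triangular, so $[\omega,M]$ has nontrivial entries on and above the block diagonal as well. What your argument actually needs, and what is true, is that the strictly block-lower part of $U^{-1}\,dX\,U$ coincides with that of $[\omega,M]$ (since $dM$ contributes nothing there), and that the map from $\{\omega_{ij}\}_{i>j}$ to those blocks is triangular in a suitable ordering with diagonal factors $\det\bigl(M_{jj}^{\transpose}\otimes\one-\one\otimes M_{ii}\bigr)$; wedging the lower blocks first then lets the $\omega$-terms appearing in the diagonal and upper blocks drop out. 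Second, your heuristic for the factor $c_j\,dc_j$ is valid (the change of variables $(c,d)\mapsto(c',\theta)$ with $d'=0$ has Jacobian $|c'|$, and $\theta$ is absorbed into the $\gSO(2)$-direction of $U$, explaining the enlargement of the quotient from $\gO(N)/\gO(1)^K\times\gO(2)^L$ to $\gO(N)/\gO(1)^N$). But it is cleaner --- and closer to how proposition~\ref{thm:decomp:gen-schur-R} is proved --- to read this directly off the $2\times 2$ diagonal block of $U^{-1}\,dX\,U$: with $M_{jj}=\bigl[\begin{smallmatrix}x & c+b\\ c-b & x\end{smallmatrix}\bigr]$ and $\omega_{jj}=\bigl[\begin{smallmatrix}0&\omega_\theta\\-\omega_\theta&0\end{smallmatrix}\bigr]$ one finds $[\omega_{jj},M_{jj}]=\diag(2c\,\omega_\theta,-2c\,\omega_\theta)$, so the four entries of the diagonal block are $dx\pm 2c\,\omega_\theta$ and $dc\pm db$, whose exterior product is $8c\,dx\wedge\omega_\theta\wedge dc\wedge db$ --- producing the factor $c_j$ directly without any auxiliary change of variables.
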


\begin{remark}
The Jacobian given in proposition~\ref{thm:decomp:schur-R} is closely related to the Jacobian presented in~\cite{Edelman:1997}, but other parametrisations for the $2\times 2$ block matrices are possible, see e.g.~\cite{LS:1991}.
\end{remark}

\subsection{Generalised decompositions}
\label{sec:decomp:generalised}

In this section, we present some generalisations of the decompositions from the previous section. These decompositions are closely related to products of random matrices and they have only recently appeared in the literature.

\begin{proposition}[Generalised block-QR decomposition]\label{thm:decomp:gen-QR-block}
Let each $\widetilde X_i$ ($i=1,\ldots,n$) be an $(N+\nu_i)\times (N+\nu_{i-1})$ complex (real, quaternionic) matrix, where $N$ is a positive integer and $\nu_0,\nu_1,\ldots,\nu_n$ are non-negative integers. Then we have the decomposition,
\begin{equation}
\widetilde X_i=\widetilde U_i\bigg[\begin{matrix} X_i \\ 0 \end{matrix}\, \bigg\vert\, T_i\,\bigg]\widetilde U_{i-1}^{-1},\qquad
i=1,\ldots,n,
\label{decomp:decomp:gen-QR-block}
\end{equation}
where $\widetilde U_i$ are unitary (orthogonal, unitary symplectic) matrices, $X_i$ are $N\times N$ complex (real, quaternionic) matrices, and $T_i$ are $(N+\nu_i)\times \nu_{i-1}$ complex (real, quaternionic) matrices for $i=1,\ldots,n$, while $\widetilde U_0$ is the $(N+\nu_0)\times(N+\nu_0)$ identity matrix. Furthermore, if each individual matrix has full rank and we fix a diagonal sub-group $\gU(N,\F_\beta)\times\gU(\nu_i,\F_\beta)$ for each $\widetilde U_i$, then the decomposition becomes unique and the corresponding Jacobian is
\begin{equation}
\prod_{i=1}^nd^\beta \widetilde X_i=\prod_{i=1}^n\det(X_i^\dagger X_i)^{2\beta\nu_i/2\gamma}d^\beta  X_id^\beta T_id\mu(\widetilde U_i),
\label{decomp:decomp:gen-QR-block-jacob}
\end{equation}
where $d\mu(\widetilde U_i)=[\widetilde U_i^{-1}d\widetilde U_i]$ is the Haar measure on $\gU(N+\nu_i,\F_\beta)/\gU(N,\F_\beta)\times\gU(\nu_i,\F_\beta)$ and $\gamma=1,1,2$ for $\beta=1,2,4$.
\end{proposition}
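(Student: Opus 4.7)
The plan is to reduce proposition~\ref{thm:decomp:gen-QR-block} to $n$ successive applications of the standard block-QR decomposition (proposition~\ref{thm:decomp:QR-block}), peeling the factors off from left to right. The decomposition~\eqref{decomp:decomp:gen-QR-block} may equivalently be rewritten as
\begin{equation*}
\widetilde X_i \widetilde U_{i-1} = \widetilde U_i \bigg[\begin{matrix} X_i \\ 0 \end{matrix}\, \bigg\vert\, T_i\,\bigg],
\qquad i=1,\ldots,n,
\end{equation*}
so that for each $i$ the right-hand side is exactly the ordinary block-QR decomposition of the $(N+\nu_i)\times(N+\nu_{i-1})$ matrix $\widetilde X_i \widetilde U_{i-1}$.

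The iteration runs as follows. For $i=1$, since $\widetilde U_0$ is the identity, we simply apply proposition~\ref{thm:decomp:QR-block} to $\widetilde X_1$; this produces $\widetilde U_1, X_1, T_1$ (uniquely once the diagonal subgroup $\gU(N,\F_\beta)\times\gU(\nu_1,\F_\beta)$ is fixed, using that $\widetilde X_1$ has full rank) and contributes the Jacobian factor $\det(X_1^\dagger X_1)^{\beta\nu_1/2\gamma} d^\beta X_1 d^\beta T_1 d\mu(\widetilde U_1)$. For $i\geq 2$ I would first change variables $\widetilde X_i \mapsto Y_i := \widetilde X_i \widetilde U_{i-1}$ with $\widetilde U_{i-1}$ determined by the previous step; this is a bijection on $\F_\beta^{(N+\nu_i)\times(N+\nu_{i-1})}$ of unit Jacobian because right multiplication by a unitary (orthogonal, unitary symplectic) matrix preserves the flat measure $d^\beta \widetilde X_i$. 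Applying proposition~\ref{thm:decomp:QR-block} to $Y_i$ then yields $\widetilde U_i, X_i, T_i$ together with the corresponding block-QR Jacobian. Multiplying the factors for $i=1,\ldots,n$ gives the claimed decomposition and expression~\eqref{decomp:decomp:gen-QR-block-jacob}.

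Uniqueness follows by induction: at step $i$, assume $\widetilde U_1,\ldots,\widetilde U_{i-1}$ are already uniquely determined (up to the fixed diagonal subgroups); then $Y_i = \widetilde X_i \widetilde U_{i-1}$ is determined, so the uniqueness clause of proposition~\ref{thm:decomp:QR-block} (which requires $\widetilde X_i$, and hence $Y_i$, to have full rank) fixes $\widetilde U_i, X_i, T_i$. The base case $i=1$ is immediate because $\widetilde U_0 = I$.

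The main obstacle is not computational but rather making the bookkeeping precise: one must check that the successive change of variables $\widetilde X_i \mapsto Y_i$ is measure-preserving even though $\widetilde U_{i-1}$ is itself a \emph{function} of $\widetilde X_1,\ldots,\widetilde X_{i-1}$. This is harmless because when we write the joint measure $\prod_i d^\beta \widetilde X_i$ and integrate in the order $i=1,2,\ldots,n$, at step $i$ the matrix $\widetilde U_{i-1}$ is treated as a constant (determined by the already-integrated variables), and right multiplication by any fixed unitary matrix leaves $d^\beta \widetilde X_i$ invariant. A clean Fubini-style argument, combined with the uniqueness of each block-QR step, makes this rigorous and yields~\eqref{decomp:decomp:gen-QR-block-jacob}.
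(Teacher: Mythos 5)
Your proposal is correct and reaches the decomposition in the same way the paper does (peel the factors off one at a time, applying the ordinary block-QR decomposition of proposition~\ref{thm:decomp:QR-block} to $\widetilde X_i\widetilde U_{i-1}$ at each step, and observe that the $n=1$ case is the base case of an induction). Where the two proofs differ in flavour is the Jacobian. The paper works directly with the matrix of one-forms: it writes out $\widetilde U_i^{-1}[d\widetilde X_i]\widetilde U_{i-1}$ via the product rule, and then argues that the terms containing $[\widetilde U_{i-1}^{-1}d\widetilde U_{i-1}]$ may be dropped because those one-forms have already been exhausted in the wedge product $\bigwedge_{j<i}\widetilde U_j^{-1}[d\widetilde X_j]\widetilde U_{j-1}$; the $\det(X_i^\dagger X_i)$ factor is then extracted from the block $[dA]X_i$. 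Your argument packages exactly this content more conceptually: the map $(\widetilde X_1,\ldots,\widetilde X_n)\mapsto(Y_1,\ldots,Y_n)$ with $Y_i=\widetilde X_i\widetilde U_{i-1}(\widetilde X_1,\ldots,\widetilde X_{i-1})$ is block--lower--triangular, each diagonal block is right multiplication by a unitary (hence an isometry of the flat measure), so the total Jacobian of this intermediate change of variables is unity, and one then applies proposition~\ref{thm:decomp:QR-block} to each $Y_i$ independently. The ``already appeared in the wedge product'' observation in the paper is precisely your ``triangular so only the diagonal blocks matter'' observation; your phrasing is arguably cleaner and makes the factorisation of the Jacobian manifest, at the cost of slightly more bookkeeping about the order of integration (your Fubini remark is exactly the right way to handle it).

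One small but worth flagging point: your calculation yields the exponent $\beta\nu_i/2\gamma$ in $\det(X_i^\dagger X_i)^{\beta\nu_i/2\gamma}$, consistently with the single-factor block-QR of proposition~\ref{thm:decomp:QR-block}, whereas the statement of proposition~\ref{thm:decomp:gen-QR-block} (and the corresponding line in the paper's own proof) carries an extraneous factor of $2$ in the exponent. Cross-checking against proposition~\ref{thm:decomp:QR-block} and against the Jacobian actually used in the proof of proposition~\ref{thm:pro:rect-square} confirms that the factor of $2$ is a typographical slip in the paper and that the exponent obtained by your argument is the correct one.
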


\begin{proof}
We first notice that the $n=1$ case is nothing but the `ordinary' block-QR decomposition (proposition~\ref{thm:decomp:QR-block}). Next, let us assume that~\eqref{decomp:decomp:gen-QR-block} holds for $n-1$. Using this we may write
\begin{equation}
\widetilde X_n\cdots \widetilde X_1=\widetilde X_n\widetilde U_{n-1}
\bigg[\begin{matrix} X_{n-1} \\ 0 \end{matrix}\, \bigg\vert\, T_{n-1}\,\bigg]\cdots\bigg[\begin{matrix} X_1 \\ 0 \end{matrix}\, \bigg\vert\, T_1\,\bigg]
\end{equation}
with matrices defined as in the proposition. Here, we may use the block-QR decomposition of the matrix $\widetilde X_{n}\widetilde U_{n-1}$ which yields
\begin{equation}
\widetilde X_{n}\widetilde U_{n-1}=\widetilde U_{n}\bigg[\begin{matrix} X_n \\ 0 \end{matrix}\, \bigg\vert\, T_n\,\bigg].
\end{equation}
Thus, the complete decomposition follows by induction. Clearly, the generalised block-QR decomposition becomes unique, if make each of $n$ individual block-QR decompositions unique.

Now, let us turn to the Jacobian~\eqref{decomp:decomp:gen-QR-block-jacob}. We assume that~\eqref{decomp:decomp:gen-QR-block-jacob} holds for $n=k-1$ and let $[d\widetilde X_k]$ denote the matrix of one-forms. We have
\begin{equation}
\widetilde U_k^{-1}[d\widetilde X_k]\widetilde U_{k-1}
=[\widetilde U_k^{-1}d\widetilde U_k]\bigg[\begin{matrix} X_k \\ 0 \end{matrix}\, \bigg\vert\, T_k\,\bigg]-\bigg[\begin{matrix} X_k \\ 0 \end{matrix}\, \bigg\vert\, T_k\,\bigg][\widetilde U_{k-1}^{-1}d\widetilde U_{k-1}]
+\bigg[\begin{matrix} [dX_k] \\ 0 \end{matrix}\, \bigg\vert\, [dT_k]\,\bigg].
\label{decomp:decomp:gen-QR-block-proof-1}
\end{equation}
First, we notice that we may ignore the second term since all independent one-forms from $[\widetilde U_{k-1}^{-1}d\widetilde U_{k-1}]$ (by assumption) already have appeared in the wedge product
\begin{equation}
\bigwedge_{1\leq i < k} \widetilde U_i^{-1}d\widetilde X_i\widetilde U_{i-1}.
\end{equation}
Second, since $\widetilde U_{k}\in\gU(N+\nu_k,\F_\beta)/\gU(N,\F_\beta)\times\gU(\nu_k,\F_\beta)$ it follows that
\begin{equation}
[\widetilde U_k^{-1}d\widetilde U_k]=\begin{bmatrix} 0 & -[dA]^\dagger \\ [dA] & 0 \end{bmatrix},
\end{equation}
where $[dA]$ is $\nu_k\times N$ matrix of independent one-forms. Using these two observations, we write~\eqref{decomp:decomp:gen-QR-block-proof-1} as
\begin{equation}
\widetilde U_k^{-1}[d\widetilde X_k]\widetilde U_{k-1}\simeq
\bigg[\begin{matrix} [dX_k] \\ [dA]X_k \end{matrix}\, \bigg\vert\, [dT_k]+[\widetilde U_k^{-1}d\widetilde U_k]T_k\,\bigg].
\end{equation}
Here, `$\simeq$' means equal up to terms depending on $[\widetilde U_{k-1}^{-1}d\widetilde U_{k-1}]$ which do not contribute to the Jacobian. Writing out the wedge product, we have
\begin{equation}
\bigwedge\widetilde U_k^{-1}[d\widetilde X_k]\widetilde U_{k-1}=
\bigwedge[dX_k]\bigwedge[dA]X_k\bigwedge[dT_k]
\label{decomp:decomp:gen-QR-block-proof-2}
\end{equation}
and the well-known identity 
\begin{equation}
\bigwedge_{\substack{a=1,\ldots,\nu_n\\ b=1,\ldots,N}}([dA]X_k)_{ab}=\det(X_k^\dagger X_k)^{2\beta\nu_k/2\gamma}\bigwedge_{\substack{a=1,\ldots,\nu_n\\ b=1,\ldots,N}}[dA]_{ab}.
\label{decomp:decomp:gen-QR-block-proof-3}
\end{equation}
Combining~\eqref{decomp:decomp:gen-QR-block-proof-3} with~\eqref{decomp:decomp:gen-QR-block-proof-2}, we see that if the Jacobian~\eqref{decomp:decomp:gen-QR-block-jacob} is valid for $n=k-1$ then it holds for $n=k$ as well. Since the $n=1$ case is the familiar block-QR decomposition, the Jacobian follows by induction.
\end{proof}

\begin{remark}
Note that if we have $\nu_0=0$ in proposition~\ref{thm:decomp:gen-QR-block}, then it follows that
\begin{equation}
\widetilde X_n\cdots\widetilde X_1=
\widetilde U_n\bigg[\begin{matrix} X_n \\ 0 \end{matrix}\, \bigg\vert\, T_n\,\bigg]\cdots\bigg[\begin{matrix} X_2 \\ 0 \end{matrix}\, \bigg\vert\, T_2\,\bigg]\begin{bmatrix} X_1 \\ 0\end{bmatrix}
=\widetilde U_n\begin{bmatrix} X_n\cdots X_1 \\ 0\end{bmatrix}.
\end{equation}
This is an important property in section~\ref{sec:prologue:rectangular}.
\end{remark}

\begin{corollary}\label{thm:decomp:gen-QR-block-2}
Let $\widetilde X_i$ ($i=1,\ldots,n$) be as in proposition~\ref{thm:decomp:gen-QR-block}. Then we have the decomposition,
\begin{equation}
\widetilde X_i=\widetilde U_i\bigg[\begin{array}{@{}c @{}} \begin{matrix} X_{i} & 0 \end{matrix} \\ \hline T_i \end{array}\bigg]\widetilde U_{i-1}^{-1},\qquad
i=1,\ldots,n,
\end{equation}
where $U_{i-1}$ are unitary (orthogonal, unitary symplectic) matrices, $X_i$ are $N\times N$ complex (real, quaternionic) matrices, and $T_i$ are $\nu_i\times (N+\nu_{i-1})$ complex (real, quaternionic) matrices for $i=1,\ldots,n$, while $\widetilde U_n$ is the $(N+\nu_n)\times(N+\nu_n)$ identity matrix. Fixing a diagonal sub-group $\gU(N,\F_\beta)\times\gU(\nu_{i-1},\F_\beta)$ for each $\widetilde U_i$ makes the decomposition unique assuming that the individual matrices has full rank. The corresponding Jacobian reads
\begin{equation}
\prod_{i=1}^nd^\beta \widetilde X_i=\prod_{i=1}^n\det(X_i^\dagger X_i)^{2\beta\nu_{i-1}/2\gamma}d^\beta  X_id^\beta T_id\mu(\widetilde U_{i-1}),
\end{equation}
where $d\mu(\widetilde U_i)=[\widetilde U_i^{-1}d\widetilde U_i]$ is the Haar measure on $\gU(N+\nu_i,\F_\beta)/\gU(N,\F_\beta)\times\gU(N,\nu_i)$ and $\gamma=1,1,2$ for $\beta=1,2,4$.
\end{corollary}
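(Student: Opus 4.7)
The plan is to deduce the corollary from proposition~\ref{thm:decomp:gen-QR-block} by applying the involution $\widetilde X \mapsto \widetilde X^{\dagger}$ combined with a reversal of the index order. Concretely, set
\[
\widetilde Y_i := \widetilde X_{n-i+1}^{\dagger},\qquad \mu_i := \nu_{n-i},\qquad i=0,1,\ldots,n.
\]
Since $\widetilde X_{n-i+1}$ is an $(N+\nu_{n-i+1})\times(N+\nu_{n-i})$ matrix, $\widetilde Y_i$ is $(N+\mu_i)\times(N+\mu_{i-1})$, so the sequence $\{\widetilde Y_i\}_{i=1}^{n}$ satisfies the hypotheses of proposition~\ref{thm:decomp:gen-QR-block} with charges $\{\mu_i\}$.

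Applying proposition~\ref{thm:decomp:gen-QR-block} to $\{\widetilde Y_i\}$ produces unitary (orthogonal, unitary symplectic) matrices $\widetilde V_i$ with $\widetilde V_0 = I_{N+\mu_0} = I_{N+\nu_n}$, together with $N\times N$ matrices $Z_i$ and $(N+\mu_i)\times \mu_{i-1}$ matrices $S_i$ such that
\[
\widetilde Y_i = \widetilde V_i\bigg[\begin{matrix} Z_i \\ 0\end{matrix}\,\bigg\vert\, S_i\bigg]\widetilde V_{i-1}^{-1}.
\]
Taking the Hermitian conjugate of both sides, relabeling $j = n-i+1$, and defining
\[
X_j := Z_{n-j+1}^{\dagger},\qquad T_j := S_{n-j+1}^{\dagger},\qquad \widetilde U_j := \widetilde V_{n-j},
\]
one obtains precisely the decomposition claimed in the corollary, with $\widetilde U_n = \widetilde V_0$ being the identity, and with each $\widetilde U_{j-1}=\widetilde V_{n-j+1}$ living in $\gU(N+\nu_{j-1},\F_\beta)/[\gU(N,\F_\beta)\times \gU(\nu_{j-1},\F_\beta)]$ after the identification $\mu_{i-1}=\nu_{j-1}$. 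Uniqueness (once a diagonal subgroup is fixed in each $\widetilde U_{j-1}$) follows from the uniqueness clause of proposition~\ref{thm:decomp:gen-QR-block}, assuming that all individual factors have full rank.

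For the Jacobian, note that Hermitian conjugation is a measure-preserving involution on $\F_\beta^{m\times n}$: the wedge product of the independent real/imaginary/quaternionic components of $\widetilde X$ agrees with that of $\widetilde X^{\dagger}$ up to a sign that depends only on the matrix dimensions. The Haar measures on the homogeneous spaces $\gU(N+\mu_i,\F_\beta)/[\gU(N,\F_\beta)\times \gU(\mu_i,\F_\beta)]$ are likewise preserved under the identification $\widetilde U_j = \widetilde V_{n-j}$. Since $X_j$ is square, $\det(Z_i^{\dagger}Z_i) = \det(Z_iZ_i^{\dagger}) = \det(X_j^{\dagger}X_j)$, and $\mu_i = \nu_{j-1}$, so the Jacobian
\[
\prod_{i=1}^{n} d^\beta \widetilde Y_i = \prod_{i=1}^{n} \det(Z_i^{\dagger}Z_i)^{2\beta \mu_i/2\gamma}\, d^\beta Z_i\, d^\beta S_i\, d\mu(\widetilde V_i)
\]
supplied by proposition~\ref{thm:decomp:gen-QR-block} transcribes term by term into the asserted formula
\[
\prod_{j=1}^{n} d^\beta \widetilde X_j = \prod_{j=1}^{n} \det(X_j^{\dagger}X_j)^{2\beta\nu_{j-1}/2\gamma}\, d^\beta X_j\, d^\beta T_j\, d\mu(\widetilde U_{j-1}).
\]
The only real work is bookkeeping: tracking the index-reversal $j=n-i+1$ consistently through the indices on the unitaries, the charges, and the boundary condition. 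That, rather than any new geometric content, is the main (and essentially only) obstacle.
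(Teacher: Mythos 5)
Your argument is exactly the paper's own proof — the paper states the corollary ``Follows from proposition~\ref{thm:decomp:gen-QR-block} by taking the Hermitian conjugate of~\eqref{decomp:decomp:gen-QR-block},'' and you have simply made the index-reversal $j=n-i+1$ and the resulting charge/unitary relabelling explicit, which is sound. The bookkeeping all checks out: the dimensions of $T_j$, the boundary condition $\widetilde U_n=\widetilde V_0=I_{N+\nu_n}$, the identity $\det(Z_i^\dagger Z_i)=\det(X_j^\dagger X_j)$ for square factors, and the measure-preserving character of Hermitian conjugation all transcribe correctly into the stated Jacobian.
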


\begin{proof}
Follows from proposition~\ref{thm:decomp:gen-QR-block} by taking the Hermitian conjugate of~\eqref{decomp:decomp:gen-QR-block}.
\end{proof}

\begin{proposition}[Generalised complex Schur decomposition]\label{thm:decomp:gen-schur-C}
Let $X_i$ ($i=1,\ldots, n$) be a family of non-singular $N\times N$ complex matrices, then there exists a family of unitary, diagonal, and strictly upper-triangular matrices, $U_i$, $Z_i$ and $T_i$ ($i=1,\ldots, n$), such that
\begin{equation}
X_i=U_i(Z_i+T_i)U_{i-1}^{-1},\qquad i=1,\ldots,n\qquad U_0=U_n.
\label{decomp:decomp:gen-schur-C}
\end{equation}
The decomposition becomes unique (up to an ordering) if we fix the phases of the first column in each $U_i$ ($i=1,\ldots, n$). The corresponding Jacobian reads
\begin{equation}
\prod_{k=1}^nd^2X_k=\prod_{1\leq i<j\leq N}\abs{z_j-z_i}^2 \prod_{k=1}^nd^2T_kd\mu(U_k)\prod_{\ell=1}^Nd^2x_{k,\ell},
\end{equation} 
where $x_{k,\ell}\in\C$ is the $\ell$-th diagonal entry in $Z_k$ and $z_\ell:=x_{1,\ell}\cdots x_{n,\ell}$ is the $\ell$-th diagonal entry in $Z:=Z_n\cdots Z_1$ and, thus, is an eigenvalue of the product matrix $X_n\cdots X_1$. Finally, $d\mu(U_k)=[U_k^{-1}dU_k]$ denotes the Haar measure on $\gU(N)/\gU(1)^N$.
\end{proposition}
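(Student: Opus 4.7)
\emph{Existence.} The plan is to reduce to a cyclic sequence of ordinary QR decompositions, anchored by a standard Schur decomposition of the product. First I would set $Y_n:=X_n\cdots X_1$ and apply Proposition~\ref{thm:decomp:schur-C} to obtain a unitary $U_n$ with $U_n^{-1}Y_nU_n=R$ upper triangular. Putting $U_0:=U_n$, I would then define $U_1,\dots,U_{n-1}$ recursively by QR decomposition: choose $U_i$ so that $X_iU_{i-1}=U_iR_i$ with $R_i$ upper triangular (and with positive diagonal up to phase conventions). Unwinding these identities yields $U_n^{-1}Y_nU_0=R_n\cdots R_1$, which must equal the upper-triangular $R$; since both sides are upper triangular and the QR decomposition is essentially unique after phase normalization, one can adjust the phases column-wise so that the ``loop'' closes, i.e.\ the $U_n$ produced by the final QR step agrees with the one fixed at the start. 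Splitting each $R_i$ into its diagonal $Z_i$ and strictly upper-triangular part $T_i$ then gives~\eqref{decomp:decomp:gen-schur-C}. Uniqueness (after fixing the phases of the first column of each $U_i$ and the ordering of the diagonal entries of $Z:=Z_n\cdots Z_1$) follows from uniqueness of the Schur decomposition of $Y_n$ together with the uniqueness of each QR step.

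\emph{Jacobian via differential forms.} I would introduce the anti-Hermitian one-forms $\omega_i:=U_i^{-1}dU_i$ (with the cyclic convention $\omega_0=\omega_n$). Since unitary conjugation preserves the wedge product of independent forms, it suffices to compute
\[
\Omega_i \;:=\; U_i^{-1}\,dX_i\,U_{i-1} \;=\; \omega_i(Z_i+T_i)-(Z_i+T_i)\omega_{i-1}+dZ_i+dT_i.
\]
I would then decompose each $\Omega_i$ into its strictly-lower, diagonal and strictly-upper parts. Fixing the phases of the first column of each $U_i$ gauges away the diagonal of $\omega_i$, so diag$(\omega_i)=0$. A direct inspection shows: (a) the diagonal entries of $\Omega_i$ give $dx_{i,k}$ plus contributions depending only on strictly-lower entries of $\omega_i,\omega_{i-1}$; (b) the strictly-upper entries of $\Omega_i$ give $dT_i$ plus similar ``lower'' contributions; and (c) the strictly-lower entries at position $(\alpha,\beta)$ with $\alpha>\beta$ read
\[
(\Omega_i)_{\alpha\beta}
\;=\; x_{i,\beta}(\omega_i)_{\alpha\beta}-x_{i,\alpha}(\omega_{i-1})_{\alpha\beta}+\bigl[(\omega_iT_i-T_i\omega_{i-1})\bigr]_{\alpha\beta},
\]
where the bracketed term involves only $(\omega_j)_{\alpha k}$ with $k<\beta$ or $(\omega_j)_{k\beta}$ with $k>\alpha$, i.e.\ lower-triangular $\omega$-entries with strictly larger index gap $|\alpha-k|$. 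Ordering pairs by decreasing $\alpha-\beta$ therefore makes the whole system block-triangular, so the full Jacobian factors as a product over pairs $(\alpha,\beta)$ of the determinants of the ``diagonal blocks''.

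\emph{Diagonal block and conclusion.} For each fixed pair $(\alpha,\beta)$ with $\alpha>\beta$, the diagonal block is the cyclic $n\times n$ complex linear system relating $(\Omega_i)_{\alpha\beta}$ to $(\omega_i)_{\alpha\beta}$, $i=1,\dots,n$, with matrix
\[
\begin{pmatrix}
 x_{1,\beta} & 0 & \cdots & 0 & -x_{1,\alpha}\\
-x_{2,\alpha} & x_{2,\beta} & & & 0\\
 0 & -x_{3,\alpha} & \ddots & & \vdots\\
 \vdots & & \ddots & x_{n-1,\beta} & 0\\
 0 & \cdots & 0 & -x_{n,\alpha} & x_{n,\beta}
\end{pmatrix}.
\]
Expanding along the first row gives determinant $\prod_{i=1}^n x_{i,\beta}-\prod_{i=1}^n x_{i,\alpha}=z_\beta-z_\alpha$, and since each entry is complex this contributes $|z_\beta-z_\alpha|^2$ to the real Jacobian. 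Taking the product over all $N(N-1)/2$ pairs yields the Vandermonde factor $\prod_{i<j}|z_j-z_i|^2$, while the diagonal and upper-triangular parts contribute the wedge of $dZ_k$'s and $dT_k$'s, and the remaining $\omega$-entries assemble into the Haar measures $d\mu(U_k)$ on $\gU(N)/\gU(1)^N$. A dimension count ($2nN^2$ real forms on each side) confirms that nothing has been over- or under-counted.

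\emph{Main obstacle.} The chief difficulty is the bookkeeping of step (c): one must verify carefully that the $T$-cross-terms in the strictly-lower entries of $\Omega_i$ couple only to \emph{other} lower-triangular $\omega$-entries with strictly larger index gap, so that after ordering pairs by $\alpha-\beta$ the global linear system is genuinely block-triangular. Once this is established, the computation of the $n\times n$ cyclic determinant is elementary and produces exactly $z_\beta-z_\alpha$, which is the non-trivial matching with the Vandermonde structure claimed in the statement.
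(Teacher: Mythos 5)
Your existence argument follows essentially the same route as the paper: Schur-decompose the product $Y_n$ to fix $U_0=U_n$, then peel off $U_1,\dots,U_{n-1}$ by successive QR decompositions. One small clarification: there is no ``final QR step'' whose output needs to be phase-adjusted to close the loop. Once $U_1,\ldots,U_{n-1}$ are defined via QR and $U_0^{-1}Y_nU_0=Z+T$ is triangular, the identity
\[
U_0^{-1}X_nU_{n-1}=(Z+T)(Z_1+T_1)^{-1}\cdots(Z_{n-1}+T_{n-1})^{-1}
\]
is automatically upper-triangular as a product of invertible upper-triangular matrices (non-singularity of the $X_i$ is used here), so $U_n:=U_0$ already works. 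That is cleaner than invoking an adjustment of phases.

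Where your proposal genuinely diverges from the paper is the Jacobian: the paper cites a reference for this step and does not reproduce the computation, whereas you supply a complete derivation. Your argument is correct. The block-triangularity in decreasing index gap $\alpha-\beta$ holds because the $T$-cross-terms in $[\omega_i(Z_i+T_i)-(Z_i+T_i)\omega_{i-1}]_{\alpha\beta}$ only involve $(\omega)_{\alpha k}$ with $k<\beta$ or $(\omega)_{k\beta}$ with $k>\alpha$, both of strictly larger gap; and the cyclic bidiagonal $n\times n$ block determinant does evaluate to $\prod_i x_{i,\beta}-\prod_i x_{i,\alpha}=z_\beta-z_\alpha$ (your first-row expansion: the two minors are bidiagonal, giving $x_{1,\beta}\prod_{i\ge2}x_{i,\beta}$ and, with signs tracked, $-\prod_i x_{i,\alpha}$). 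Squaring the modulus for the realification of a complex linear map then yields $\abs{z_\beta-z_\alpha}^2$, and the dimension count closes. The one place to be more careful in the write-up is the claim ``diag$(\omega_i)=0$'': fixing the phases of the first column of $U_i$ does not literally annihilate the diagonal of $U_i^{-1}dU_i$; rather, it selects a local gauge slice transverse to the $\gU(1)^N$ fiber, so the $N$ imaginary diagonal one-forms are eliminated from the independent set. The induced measure on the quotient is the same as if one set the diagonal to zero, which is why your computation gives the right answer, but the phrasing as stated is an abuse. Overall this is a correct and self-contained proof that fills in a step the paper delegates to the literature.
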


\begin{proof}
It follows from the complex Schur decomposition (proposition~\ref{thm:decomp:schur-C}), that there exists a unitary matrix $U_n$, such that
\begin{equation}
U_0^{-1}X_n\cdots X_1U_0=Z+T,
\label{decomp:decomp:proof-schur-C-product}
\end{equation}
where $Z$ is a diagonal matrix and $T$ is a strictly upper-triangular matrix. Since $X_1U_0$ is a complex matrix, we know from the QR decomposition (proposition~\ref{thm:decomp:QR}), that there exists a unitary matrix $U_1$ such that
\begin{equation}
X_1U_0=U_1(Z_1+T_1)
\end{equation}
with $Z_1$ and $T_1$ denoting a diagonal and a strictly upper-triangular matrix, respectively. Inserting this decomposition back into~\eqref{decomp:decomp:proof-schur-C-product} and repeating the same idea for $X_2,\ldots,X_{n-1}$ yields
\begin{gather}
X_i=U_i(Z_i+T_i)U_{i-1}^{-1},\qquad i=1,\ldots,n-1,\\
U_0^{-1}X_nU_{n-1}(Z_{n-1}+T_{n-1})\cdots(Z_1+T_1)=Z+T
\end{gather}
with $U_i$, $Z_i$ and $T_i$ denoting unitary, diagonal, and strictly upper-triangular matrices as before.
We recall that each $X_i$ is invertible and therefore so is $Z_i+T_i$, thus
\begin{equation}
U_0^{-1}X_nU_{n-1}=(Z+T)(Z_1+T_1)^{-1}\cdots(Z_{n-1}+T_{n-1})^{-1}.
\end{equation}
Here, the right hand side is an upper-triangular matrix and the decomposition~\eqref{decomp:decomp:proof-schur-C-product} follows with
\begin{equation}
Z_n:=ZZ_1^{-1}\cdots Z_{n-1}^{-1},\qquad
T_n:=(Z+T)(Z_1+T_1)^{-1}\cdots(Z_{n-1}+T_{n-1})^{-1}-Z_n.
\label{decomp:decomp:proof-schur-final}
\end{equation}
As seen, the generalised Schur decomposition is a combination of an `ordinary' Schur decomposition and $n-1$ QR decompositions, thus in order to make the generalised decomposition unique we need to make each of the individual decompositions unique. Up to an ordering of the eigenvalues of the product matrix, this can be done by fixing the phases in the first column in $U_i$ for every $i$.

The derivation of the Jacobian is given in~\cite{AB:2012}.
\end{proof}

\begin{remark}
We note that the generalised Schur decomposition for $n=2$ was previously known in the literature (sometimes called the QZ decomposition), see e.g.~\cite{GL:1996}. As we have seen, the extension to arbitrary $n$ can be done without too much effort; this was originally presented in~\cite{ARRS:2013,Strahov:2013} with inspiration from~\cite{AB:2012}. More surprising is the fact that the Jacobian for the generalised Schur decomposition has the same form as the `ordinary' Schur decomposition, see proposition~\ref{thm:decomp:schur-C}.  
\end{remark}

\begin{proposition}[Generalised quaternionic Schur decomposition]\label{thm:decomp:gen-schur-H}
Given a family of non-singular $N\times N$ quaternionic matrices, $X_i$ ($i=1,\ldots, n$), then there exists a family of unitary symplectic, diagonal, and strictly upper-triangular quaternionic matrices, $U_i$, $Z_i$ and $T_i$ ($i=1,\ldots, n$), such that
\begin{equation}
X_i=U_i(Z_i+T_i)U_{i-1}^{-1},\qquad i=1,\ldots,n\qquad U_0=U_n.
\label{decomp:decomp:gen-schur-H}
\end{equation}
Furthermore, if we order the eigenvalues of the product matrix $X_n\cdots X_1$ and fix the phases of the first column in each $U_i$ ($i=1,\ldots, n$) then the decomposition becomes unique and the Jacobian reads
\begin{equation}
\prod_{k=1}^nd^4X_k=\prod_{1\leq i<j\leq N}\abs{z_j-z_i}^2 \abs{z_j-z_i^*}^2 \prod_{k=1}^nd^4T_kd\mu(U_k)\prod_{\ell=1}^N\abs{z_\ell-z_\ell^*}^2d^2z_{k,\ell}.
\label{decomp:decomp:gen-schur-H-jacob}
\end{equation} 
Here $d\mu(U)=[U^{-1}dU]$ denotes the Haar measure on $\gUSp(2N)/\gU(1)^N$, while $z_{k,\ell}$ and $z_{k,\ell}^*$ are the $(2\ell-1)$-th and the $2\ell$-th diagonal entry in $Z_k$, respectively. Thus $z_\ell:=z_{1,\ell}\cdots z_{n,\ell}$ is an eigenvalue of the product matrix $X_n\cdots X_1$ which by fixing of the phases of the first column in $U_n$ is chosen to belong the upper-half plane.
\end{proposition}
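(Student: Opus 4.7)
The proof strategy follows that of the complex generalised Schur decomposition (proposition~\ref{thm:decomp:gen-schur-C}), with the ordinary complex Schur and QR decompositions replaced by their quaternionic analogues. First I would apply the quaternionic Schur decomposition (proposition~\ref{thm:decomp:schur-H}) to the non-singular quaternionic matrix $Y_n := X_n \cdots X_1$, obtaining a unitary symplectic matrix $U_0 := U_n$, a complex-diagonal matrix $Z$ with diagonal entries in complex conjugate pairs $\{z_\ell, z_\ell^*\}_{\ell=1}^N$, and a strictly upper-triangular quaternionic matrix $T$ such that $U_0^{-1} Y_n U_0 = Z + T$. The ordering of the eigenvalues in the upper half-plane and the residual phase freedom in the first column of $U_0$ are fixed as usual.

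Iteratively, for $i = 1, \ldots, n-1$, I would factorise the quaternionic matrix $X_i U_{i-1}$ as
\[
X_i U_{i-1} = U_i (Z_i + T_i),
\]
with $U_i \in \gUSp(2N)$, $Z_i$ complex-diagonal, and $T_i$ strictly upper triangular. The existence of this factorisation combines two steps: the ordinary quaternionic QR decomposition (Householder reflections over $\H$) produces a unitary symplectic $Q_i$ and an upper-triangular $R_i$ with general-quaternion diagonal; the residual $\gUSp(2)^N$ block-diagonal gauge freedom $Q_i \mapsto Q_i D$, $R_i \mapsto D^{-1} R_i$ is then used to absorb each diagonal quaternion $(R_i)_{\ell\ell}$ into its complex-diagonal form, which is always possible since $\gSU(2) = \gUSp(2)$ acts transitively on the unit 3-sphere and can therefore rotate any quaternion into the complex line $\R\one + \R\qi$. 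The stabiliser of this complex form is $\gU(1)^N$, and fixing the phases of the first column of $U_i$ removes this remaining ambiguity. For $i = n$, the matrix $X_n$ is recovered by consistency: since each $Z_i + T_i$ is invertible, one obtains
\[
U_0^{-1} X_n U_{n-1} = (Z + T)(Z_1 + T_1)^{-1} \cdots (Z_{n-1} + T_{n-1})^{-1},
\]
whose right-hand side is upper triangular with complex diagonal $Z Z_1^{-1} \cdots Z_{n-1}^{-1}$. Defining $Z_n$ and $T_n$ as the diagonal and strictly upper-triangular parts of this expression completes the cyclic system, and yields the product formula $z_\ell = z_{1,\ell} \cdots z_{n,\ell}$ for the eigenvalues of $Y_n$.

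The main obstacle is the Jacobian. Paralleling the derivation in~\cite{AB:2012} for the complex case, I would differentiate each relation $X_i = U_i(Z_i + T_i) U_{i-1}^{-1}$ to obtain
\[
U_i^{-1} [dX_i] U_{i-1} = [U_i^{-1} dU_i](Z_i + T_i) - (Z_i + T_i)[U_{i-1}^{-1} dU_{i-1}] + [dZ_i] + [dT_i],
\]
and then wedge across $i = 1, \ldots, n$. The strictly upper-triangular components of $[U_i^{-1} dU_i]$ are absorbed into $[dT_i]$, while the strictly lower-triangular components, combined with the diagonal matrices $Z_i$ and telescoped through the cyclic condition $U_n = U_0$, yield the quaternionic Vandermonde-like factor $\prod_{i<j} |z_j - z_i|^2 |z_j - z_i^*|^2$ expressed in terms of the product eigenvalues. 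The factor $\prod_{\ell} |z_\ell - z_\ell^*|^2$ arises from the $\gU(1) \backslash \gUSp(2)$ coset freedom associated with each $2\times 2$ diagonal block of $U_0$ and enters only once per eigenvalue pair, in direct analogy with the $n=1$ case (proposition~\ref{thm:decomp:schur-H}); the subsequent $U_i$ for $i = 1, \ldots, n-1$ contribute only the additional complex-diagonal degrees of freedom $d^2 z_{k,\ell}$ because their block-diagonal freedom has been fully used up by the complex-reduction step above. The hard part will be verifying that the mixing between the lower-triangular parts of $[U_j^{-1} dU_j]$ at different $j$ telescopes cleanly — this follows by the same induction on $i$ used in the complex case, where at step $i$ the one-forms $[U_{i-1}^{-1} dU_{i-1}]$ entering with the wrong triangular type are already accounted for in the preceding wedge and so drop out of the remaining product.
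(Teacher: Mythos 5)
Your existence argument is essentially the paper's: perform a quaternionic Schur decomposition of the product, then peel off factors by iterated QR decompositions over $\H$, using the residual $\gUSp(2)$ block-diagonal freedom to rotate each quaternion diagonal entry into the complex line, and recover $X_n$ by inverting the intermediate upper-triangular factors. That part is sound.

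The diagonal piece of the Jacobian is where your argument goes wrong. You assert that the factor $\prod_{\ell}\abs{z_\ell-z_\ell^*}^2$ ``arises from the $\gU(1)\backslash\gUSp(2)$ coset freedom associated with each $2\times 2$ diagonal block of $U_0$'' and that ``the subsequent $U_i$ for $i=1,\ldots,n-1$ contribute only the additional complex-diagonal degrees of freedom $d^2 z_{k,\ell}$ because their block-diagonal freedom has been fully used up.'' This is not the case, and it does not even pass a degree-of-freedom count. Each $U_i$ must enter the Jacobian with the full Haar measure on $\gUSp(2N)/\gU(1)^N$, of real dimension $2N^2$, not on the smaller coset $\gUSp(2N)/\gUSp(2)^N$ of dimension $2N^2-2N$; otherwise the right-hand side is short by $2N(n-1)$ real parameters relative to the $4nN^2$ parameters on the left. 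Using the $\gUSp(2)/\gU(1)$ rotation to put the diagonal of each $R_i$ in complex form is a normalisation choice in the parametrisation, not a consumption of coordinates: those rotation angles remain genuine coordinates of $U_i$ and must be integrated over.

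The factor $\prod_{\ell}\abs{z_\ell-z_\ell^*}^2$ therefore cannot come from $U_0$ alone. Writing the $2\times 2$ diagonal block of $[U_j^{-1}dU_j]$ as off-diagonal with complex one-form $d\alpha_{j,k}$, the diagonal block of the differentiated decomposition produces, for fixed $k$, the family of one-forms
\begin{equation}
\omega_{j,k} = d\alpha_{j,k}\,z_{j,k} - z_{j,k}^*\,d\alpha_{j-1,k},\qquad j=1,\ldots,n,\quad d\alpha_{0,k}=d\alpha_{n,k}.
\end{equation}
This is a cyclic $n\times n$ linear system $\omega=A\,d\alpha$, and the combinatorial point of the computation is that $\det A = z_{1,k}\cdots z_{n,k} - z_{1,k}^*\cdots z_{n,k}^* = z_k - z_k^*$; taking the exterior product of real and imaginary parts then yields $\abs{z_k-z_k^*}^2\bigwedge_{j=1}^n d^2\alpha_{j,k}$. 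The Vandermonde-type factor in the diagonal sector is thus a property of the whole cyclic chain $U_1,\ldots,U_n$, not of $U_0$, and the wedge product $\bigwedge_{j}d^2\alpha_{j,k}$ is precisely the missing $\gUSp(2)/\gU(1)$ contribution from every $U_j$ that your accounting dropped. Your sketch of the off-block-diagonal (lower-triangular) piece is consistent with the paper, which uses vectorisation and the identity $\vec(ABC)=(C^\transpose\otimes A)\vec B$ to expose the Kronecker determinant $\prod_{i<j}\abs{\det[(Z)_{jj}^\transpose\otimes\one-\one\otimes(Z)_{ii}]}$, but you should make that telescoping argument explicit rather than leaving it as an induction you expect to work.
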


\begin{proof}
The derivation~\eqref{decomp:decomp:proof-schur-C-product} to~\eqref{decomp:decomp:proof-schur-final} carries over from the complex to the quaternionic case by simple replacement of the field $\C$ with the skew-field $\H$; finally~\eqref{decomp:decomp:gen-schur-H} follows due to the isomorphism between the $2\times 2$ block structure and the quaternions, see~\cite{AIS:2014} for details.

In order to find the Jacobian we follow~\cite{Ipsen:2013}. Denoting the matrix of one-forms by $[dX_i]$, we have (recall that $U_0=U_n$)
\begin{equation}
U_i^{-1}[dX_i]U_{i-1}=[U_i^{-1}dU_i](Z_i+T_i)-(Z_i+T_i)[U_{i-1}^{-1}dU_{i-1}]+[dZ_i]+[dT_i],
\label{decomp:decomp:gen-schur-H-proof-1}
\end{equation}
for $i=1,\ldots,n$. Here, we have used the anti-Hermiticity, $[U_{i-1}dU_{i-1}^{-1}]=-[U_{i-1}^{-1}dU_{i-1}]$, which follows from the constraint $U_{i-1}^\dagger U_{i-1}=1$. We will consider the strictly lower-triangular, the diagonal, and the strictly upper-triangular part of~\eqref{decomp:decomp:gen-schur-H-proof-1} separately (interpreted  in terms of quaternions or equivalently $2\times 2$ block matrices). 

We start with the upper-triangular part, which is the simplest case. We have
\begin{equation}
\bigwedge_{k>\ell} (U_i^{-1}[dX_i]U_{i-1})_{k\ell}=\bigwedge_{k>\ell}[dT_i]_{k\ell}+\cdots
\end{equation}
but we notice that the one-forms $[dT_i]_{k\ell}$ (the `${k\ell}$' subscript refer to the $(k,\ell)$-th entry of the matrix) only appear above the diagonal in~\eqref{decomp:decomp:gen-schur-H-proof-1} and since they all must appear in the final exterior product it immediately follows that the contribution to the Jacobian is trivial,
\begin{equation}
\prod_{k=1}^nd^4T_k.
\label{decomp:decomp:gen-schur-H-proof-jacob-above}
\end{equation}
Henceforth, we will use `$\simeq$' to mean equal up to terms that can not contribute to the Jacobian.

% This is to be compared with~\eqref{decomp:decomp:gen-schur-H-jacob}.

Now, we can turn the lower-triangular part, which is significantly more difficult. We have
\begin{equation}
(U_i^{-1}[dX_i]U_{i-1})_{k\ell}=\sum_{m\geq \ell}\Big([U_i^{-1}dU_i]_{km}(Z_i+T_i)_{m\ell}-(Z_i+T_i)_{km}[U_{i-1}^{-1}dU_{i-1}]_{m\ell}\Big)
\end{equation}
with $k<\ell$. Here the sum starts at $m=\ell$ since $Z_i+T_i$ consists of only zeros below the diagonal. We would like to take the wedge product with respect to indices $k<\ell$. Since each $[U_i^{-1}dU_i]_{k\ell}$ can appear at most once in the final wedge product (and similarly for $[U_{i-1}^{-1}dU_{i-1}]_{k\ell}$), we have
\begin{equation}
\bigwedge_{k<\ell}(U_i^{-1}[dX_i]U_{i-1})_{k\ell}=\bigwedge_{k<\ell}\Big([U_i^{-1}dU_i]_{k\ell}(Z_i)_{\ell\ell}-(Z_i)_{kk}[U_{i-1}^{-1}dU_{i-1}]_{k\ell}\Big).
\end{equation}
The next step is easier if we introduce some new notation. Let `$\vec$' denote the vectorisation operator which transform a matrix into a vector by stacking its columns; and let $\otimes$ denote the Kronecker product (tensor product). Using the identity $\vec(ABC)=(C^\transpose\otimes A)\vec B$ (see e.g.~\cite{HJ:2012,vanLoan:2000,Edelman:2005}), we see that
\begin{multline}
\bigwedge_{k<\ell}\vec(U_i^{-1}[dX_i]U_{i-1})_{k\ell}=\\
\bigwedge_{k<\ell}\Big(((Z_i)_{\ell\ell}^\transpose\otimes\one)\vec[U_i^{-1}dU_i]_{k\ell}-(\one\otimes(Z_i)_{kk})\vec[U_{i-1}^{-1}dU_{i-1}]_{k\ell}\Big)
\end{multline}
with $\one$ denoting the $2\times 2$ identity matrix. 
We can now take the exterior product with respect to the index $i$, which yields
\begin{multline}
\bigwedge_{i=1}^n\bigwedge_{k<\ell}\vec(U_i^{-1}[dX_i]U_{i-1})_{k\ell}=\\
\prod_{k<\ell}\det\bigg[\prod_{i=1}^n((Z_i)_{\ell\ell}^\transpose\otimes\one)-\prod_{i=1}^n(\one\otimes(Z_i)_{kk})\bigg]\bigwedge_{i=1}^n\bigwedge_{k<\ell}\vec[U_{i}^{-1}dU_{i}]_{k\ell}.
\end{multline}
Let $Z=Z_n\cdots Z_1$, i.e. a (block) diagonal matrix with entries $(Z)_{\ell\ell}=(Z_n)_{\ell\ell}\cdots (Z_1)_{\ell\ell}$, then using identity $(A\otimes B)(C\otimes D)=(AC)\otimes(BD)$ we find
\begin{equation}
\bigwedge_{i=1}^n\bigwedge_{k<\ell}\vec(U_i^{-1}[dX_i]U_{i-1})_{k\ell}=
\prod_{k<\ell}\det\big[(Z)_{\ell\ell}^\transpose\otimes\one-\one\otimes(Z)_{kk}\big]\bigwedge_{i=1}^n\bigwedge_{k<\ell}\vec[U_{i}^{-1}dU_{i}]_{k\ell}.
\end{equation}
Thus, the contribution to the Jacobian~\eqref{decomp:decomp:gen-schur-H-jacob} from the terms below the diagonal in~\eqref{decomp:decomp:gen-schur-H-proof-1} is given by
\begin{equation}
\prod_{1\leq i<j\leq N}\abs*{\det\big[(Z)_{jj}^\transpose\otimes\one-\one\otimes(Z)_{ii}\big] }\prod_{k=1}^nd\mu(U_k),
\label{decomp:decomp:gen-schur-H-proof-jacob-below}
\end{equation} 
where $d\mu(U_k)$ is the Haar measure on $\gUSp(2N)/\gUSp(2)$. Note that the unitary integral differs from~\eqref{decomp:decomp:gen-schur-H-jacob}, since the contribution from $[U_i^{-1}dU_i]_{kk}$ has not been included yet. The determinantal prefactor is the $2\times 2$ block analogue of the Vandermonde determinant and due to the quaternionic structure, $(Z)_{ii}=\diag(z_i,z_i^*)$, we have
\begin{equation}
\prod_{1\leq i<j\leq N}\abs*{\det\big[(Z)_{jj}^\transpose\otimes\one-\one\otimes(Z)_{ii}\big] }
=\prod_{1\leq i<j\leq N} \abs{z_j-z_i}^2 \abs{z_j-z_i^*}^2.
\end{equation}
by evaluation of the determinant.

It remains only to investigate the diagonal terms in~\eqref{decomp:decomp:gen-schur-H-proof-1}, i.e. terms of the form
\begin{equation}
(U_i^{-1}[dX_i]U_{i-1})_{kk}=[U_i^{-1}dU_i]_{kk}(Z_i)_{kk}-(Z_i)_{kk}[U_{i-1}^{-1}dU_{i-1}]_{kk}+[dZ_i]_{kk}.
\label{decomp:decomp:gen-schur-H-proof-diagonal}
\end{equation}
Here, we will exploit the internal structure of the $2\times 2$ block matrices. To do so, we first recall that $(U_i)_{kk}\in\gUSp(2)/U(1)$ and $(Z_i)_{kk}=\diag(z_i,z_i^*)$, hence
\begin{equation}
[U_j^{-1}dU_j]_{kk}=\begin{bmatrix} 0 & -d\alpha_{j,k}^* \\ d\alpha_{j,k} & 0 \end{bmatrix}
\qquad\text{and}\qquad
[dZ_i]_{kk}=\begin{bmatrix} dz_{j,k} & 0 \\ 0 & dz_{j,k}^*  \end{bmatrix},
\end{equation}
where $d\alpha_{j,k}$ and $dz_{j,k}$ are complex one-forms (different for all $j=1,\ldots,n$ and $k=1,\ldots,N$). Writing~\eqref{decomp:decomp:gen-schur-H-proof-diagonal} in matrix form, we find
\begin{equation}
(U_i^{-1}[dX_i]U_{i-1})_{kk}=
\begin{bmatrix} dz_{j,k} & -(d\alpha_{j,k}^*z_{j,k}^*-z_{j,k}d\alpha_{j-1,k}^*) \\ (d\alpha_{j,k}z_{j,k}-z_{j,k}^*d\alpha_{j-1,k}) & dz_{j,k}^*  \end{bmatrix}.
\end{equation}
Now, we can to take the exterior product with respect to the indices $i$ and $k$, and we read off the contribution from the diagonal part of~\eqref{decomp:decomp:gen-schur-H-proof-1} to the Jacobian,
\begin{equation}
\prod_{k=1}^N \abs{z_k-z_k^*}^2d\mu(U_k)\prod_{j=1}^nd^2z_{j,k}
\label{decomp:decomp:gen-schur-H-proof-jacob-diag}
\end{equation}
with $z_k=z_{n,k}\cdots z_{1,k}$ as above and $d\mu(U_k)$ denoting the Haar measure on $\gUSp(2)/U(1)$.

Finally, we can combine~\eqref{decomp:decomp:gen-schur-H-proof-jacob-above}, \eqref{decomp:decomp:gen-schur-H-proof-jacob-below}, and~\eqref{decomp:decomp:gen-schur-H-proof-jacob-diag} which gives~\eqref{decomp:decomp:gen-schur-H-jacob} and completes the derivation of the Jacobian.
\end{proof}

\begin{proposition}[Generalised real Schur decomposition]\label{thm:decomp:gen-schur-R} 
Let $K$ and $L$ be integers such that $K+2L=N$. Consider a family $X_i$ ($i=1,\ldots, n$) of non-singular $N\times N$ real matrices, such that the product matrix $X_n\cdots X_1$ has at least $K$ real eigenvalues. We can decompose each matrix $X_i$ as
\begin{equation}
X_i=U_i\begin{bmatrix}\Lambda_i + T^{11}_i & T^{12}_i \\ 0 & Z_i + T^{22}_i \end{bmatrix}U_{i-1}^{-1},
\label{decomp:decomp:gen-schur-R}
\end{equation}
where $\Lambda_i=\diag(\lambda_{i,1},\ldots,\lambda_{i,K})$ is a $K\times K$ real diagonal matrix, $Z_i=\diag(Z_{i,1},\ldots,Z_{i,L})$ is a block diagonal matrix constructed from $L$ real $2\times2$ block matrices, $T^{11}_i$ is a $K\times K$ real upper-triangular matrix, $T^{12}_i$ is a $K\times 2L$ real matrix, $T^{22}_i$ is a $2L\times 2L$ real $2\times 2$ block upper-triangular matrix, while $U_i$ and $U_{i-1}$ are orthogonal matrices with the constraint $U_0=U_n$. The decomposition becomes unique up to reordering of the (block) diagonal elements of the product matrix if we fix the sign in the first $K$ column as well as a diagonal subgroup, $\gO(2)^L$, of each $U_i$.

Furthermore, $\lambda_j:=\lambda_{1,j}\cdots\lambda_{n,j}$ $(j=1,\ldots,K)$ are real eigenvalues of $X_n\cdots X_1$, while the remaining $2L$ eigenvalues may be either real or complex conjugate pairs; these are given as the eigenvalues of the $2\times2$ matrices defined by $(Z)_{jj}=Z_{1,j}\cdots Z_{n,j}$ ($j=1,\ldots,L$). The corresponding Jacobian reads 
\begin{multline}
\prod_{i=1}^ndX_i=\prod_{1\leq i<j\leq K}\abs{\lambda_j-\lambda_i}
\prod_{1\leq i<j\leq L}\abs{\det[(Z)_{jj}^\transpose\otimes\one-\one\otimes (Z)_{ii}]}
\prod_{i=1}^K\prod_{j=1}^L\abs{\det[(Z)_{jj}-\one\otimes \lambda_i]} \\
 \times \prod_{i=1}^ndT^{11}_idT^{12}_idT^{22}_id\mu(U_i)\prod_{j=1}^Kd\lambda_{i,j}\prod_{h=1}^L dZ_{i,h},
\label{decomp:decomp:gen-schur-R-jacob}
\end{multline} 
where $d\mu(U_i)$ denotes the Haar measure on $\gO(N)/\gO(1)^K\times\gO(2)^L$ and $dZ_{i,h}$ denotes the flat measure on $\R^{2\times 2}$. 
\end{proposition}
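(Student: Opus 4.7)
\emph{Proof proposal for Proposition~\ref{thm:decomp:gen-schur-R}.} The plan is to mirror the strategy used for the generalised complex (Proposition~\ref{thm:decomp:gen-schur-C}) and quaternionic (Proposition~\ref{thm:decomp:gen-schur-H}) Schur decompositions, but with the ordinary complex Schur decomposition replaced by its real counterpart (Proposition~\ref{thm:decomp:schur-R}) and the $2\times 2$ quaternion blocks replaced by the generic real $2\times 2$ blocks which encode a possibly real or possibly complex conjugate pair of eigenvalues. The argument splits naturally into an existence/uniqueness step and a Jacobian step; the latter is where the real case requires genuinely new bookkeeping.

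For existence, I would first apply the ordinary real Schur decomposition (Proposition~\ref{thm:decomp:schur-R}) to the product $X_n\cdots X_1$, which yields an orthogonal matrix $U_0=U_n$ and a block upper-triangular matrix $M$ whose $K\times K$ diagonal subblock carries the $K$ real eigenvalues and whose remaining $2\times 2$ diagonal blocks carry the other eigenvalues. I would then iteratively construct the $U_i$ by applying, for $i=1,\ldots,n-1$, a block-QR-type factorisation of $X_iU_{i-1}$ that preserves the $K+2L$ block pattern (this factorisation exists since the individual $X_i$ are non-singular): one obtains $X_i U_{i-1}=U_i(R_i^{11}+\tilde T_i^{12}+\tilde T_i^{22})$ with $R_i^{11}$ upper-triangular on the real sector, and $\tilde T_i^{22}$ block upper-triangular on the complex sector, so the diagonal blocks define $\Lambda_i$ and the $2\times 2$ blocks $Z_{i,j}$ after separating the triangular part $T_i^{11}$ and the $2\times 2$ block-triangular part $T_i^{22}$. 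The last factor is then determined by $U_0^{-1}X_nU_{n-1}=M\prod_{i<n}(\cdot)^{-1}$ exactly as in the complex case. Uniqueness (up to reordering) follows by fixing the signs of the first $K$ columns of each $U_i$ and fixing the diagonal $\gO(2)^L$ subgroup, because each individual factorisation in the chain is unique up to these choices.

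For the Jacobian, I would write $M_i$ for the block upper-triangular factor of $X_i$ and differentiate $X_i=U_iM_iU_{i-1}^{-1}$ to obtain
\begin{equation*}
U_i^{-1}[dX_i]U_{i-1}=[U_i^{-1}dU_i]M_i-M_i[U_{i-1}^{-1}dU_{i-1}]+[dM_i],
\end{equation*}
with $U_0=U_n$. As in the proof of Proposition~\ref{thm:decomp:gen-schur-H}, I would analyse this block by block. The strictly upper block-triangular part contains only the independent one-forms from $[dT_i^{11}]$, $[dT_i^{12}]$ and $[dT_i^{22}]$ modulo terms already consumed elsewhere, giving the trivial contribution $\prod_i dT^{11}_i\,dT^{12}_i\,dT^{22}_i$. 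The block-diagonal part decouples into $K$ scalar equations (yielding $\prod d\lambda_{i,j}$ together with the $d\gO(1)^K$ piece of $d\mu(U_i)$) and $L$ genuine $2\times 2$ equations (yielding $\prod dZ_{i,h}$ together with the $\gO(2)^L$ piece). The heart of the calculation is the strictly lower block-triangular part, which I would vectorise and convert to a Kronecker-product form exactly as in the quaternionic proof, giving
\begin{equation*}
\bigwedge_{i,\,k<\ell}\vec(U_i^{-1}[dX_i]U_{i-1})_{k\ell}=\prod_{k<\ell}\det\bigl[(M)_{\ell\ell}^{\transpose}\otimes\one_p-\one_q\otimes(M)_{kk}\bigr]\bigwedge_{i,\,k<\ell}\vec[U_i^{-1}dU_i]_{k\ell},
\end{equation*}
where $(M)_{kk}$ runs through the scalar entries $\lambda_k$ and the $2\times 2$ blocks $(Z)_{kk}=Z_{n,k}\cdots Z_{1,k}$, and the sizes $p,q\in\{1,2\}$ are chosen accordingly. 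Evaluating the determinant block-wise produces the three geometric factors in~\eqref{decomp:decomp:gen-schur-R-jacob}: $|\lambda_j-\lambda_i|$ for a pair of real eigenvalues, $|\det[(Z)_{jj}^{\transpose}\otimes\one-\one\otimes(Z)_{ii}]|$ for a pair of $2\times 2$ blocks, and $|\det[(Z)_{jj}-\one\otimes\lambda_i]|$ for a mixed pair.

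The main obstacle I anticipate is the mixed real/complex book-keeping in the lower-triangular Jacobian computation: unlike the quaternionic case, where every diagonal entry is a $2\times 2$ block of a uniform algebraic type, in the real case the diagonal of $M$ is a heterogeneous mixture of $1\times 1$ and $2\times 2$ blocks, so the Kronecker-product argument has to be carried out separately for the three combinations (scalar/scalar, scalar/block, block/block) and the resulting $\gO(N)/\gO(1)^K\times\gO(2)^L$ structure of the Haar measure must be reconstructed by reabsorbing the generators $[U_i^{-1}dU_i]_{kk}$ that survive from the diagonal equations. The complex Schur proof of Proposition~\ref{thm:decomp:gen-schur-C} and, above all, the quaternionic proof of Proposition~\ref{thm:decomp:gen-schur-H} already supply all the algebraic tools needed, so this obstacle is essentially combinatorial rather than conceptual.
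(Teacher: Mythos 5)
Your proposal is correct and follows essentially the same route as the paper's proof: existence via the real Schur decomposition of the product combined with iterated (real, block-patterned) QR factorisations, and the Jacobian via differentiation of $X_i=U_iM_iU_{i-1}^{-1}$ followed by a block-by-block wedge-product analysis, with the vectorised Kronecker argument from the quaternionic case applied separately to the scalar/scalar, scalar/block, and block/block sectors to produce the three geometric factors. One small bookkeeping slip to watch: since the decomposition fixes a full diagonal subgroup $\gO(2)^L$ of each $U_i$, the Haar measure $d\mu(U_i)$ on $\gO(N)/\gO(1)^K\times\gO(2)^L$ has no surviving $\gO(2)^L$ piece — the diagonal blocks $[U_i^{-1}dU_i]_{kk}$ with $k>K$ are entirely gauge-fixed (unlike the quaternionic case, where $\gUSp(2)/\gU(1)$ is two-dimensional and does contribute), so the $L$ block-diagonal equations yield only $\prod_h dZ_{i,h}$ and the $K$ scalar ones only $\prod_j d\lambda_{i,j}$.
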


\begin{proof}
If $K=N$, then the derivation~\eqref{decomp:decomp:proof-schur-C-product} to~\eqref{decomp:decomp:proof-schur-final} carries over from the complex to the real case by simple replacement of the field $\C$ with the field $\R$. If the product matrix has less than $N$ real eigenvalues and, thus, $K<N$, then there is no orthogonal similarity transformation which brings the product matrix to upper-triangular form, cf. section~\ref{sec:decomp:real} and proposition~\ref{thm:decomp:schur-R}. However, if we replace a sufficient number of diagonal elements with $2\times 2$ blocks, then we can follow exactly the same step as in the derivation~\eqref{decomp:decomp:proof-schur-C-product} to~\eqref{decomp:decomp:proof-schur-final}, which gives us~\eqref{decomp:decomp:gen-schur-R}.

In order to find the Jacobian, we essentially follow the same steps as in the proof of proposition~\ref{thm:decomp:gen-schur-H}. Denoting the matrix of one-forms by $[dX_i]$, we have (recall that $U_0=U_n$)
\begin{multline}
U_i^{-1}[dX_i]U_{i-1}=[U_i^{-1}dU_i]\begin{bmatrix}\Lambda_i + T^{11}_i & T^{12}_i \\ 0 & Z_i + T^{22}_i \end{bmatrix}
-\begin{bmatrix}\Lambda_i + T^{11}_i & T^{12}_i \\ 0 & Z_i + T^{22}_i \end{bmatrix}[U_{i-1}^{-1}dU_{i-1}]\\
+\begin{bmatrix}[d\Lambda_i] + [dT^{11}_i] & [dT^{12}_i] \\ 0 & [dZ_i] + [dT^{22}_i] \end{bmatrix}.
\label{decomp:decomp:gen-schur-R-proof-1}
\end{multline}
We need to look at the exterior product $\bigwedge_{a,b}(U_i^{-1}[dX_i]U_{i-1})_{ab}$. For notational simplicity, it is convenient to let the $ab$-index denote $1\times 2$ block matrices for $a\leq K < b$,  $2\times 1$ block matrices for $b\leq K < a$, and $2\times 2$ block matrices for $K < a,b$, i.e. we use the following block structure
\begin{equation}
\left[
\begin{array}{c|c}
1\times1 & 1\times2 \\ \hline
2\times1 & 2\times2
\end{array}
\right].
\end{equation}
Since each of the one-forms must appear in the final exterior product, we immediately see that the contribution to the Jacobian from the (block) upper-triangular part is
\begin{align}
\bigwedge_{a\leq b\leq K} (U_i^{-1}[dX_i]U_{i-1})_{ab}&\simeq \bigwedge_{a< b\leq K}[dT^{11}_i]_{ab}\bigwedge_{c=1}^Kd\lambda_{i,c}, \\
\bigwedge_{K< a\leq b} (U_i^{-1}[dX_i]U_{i-1})_{ab}&\simeq \bigwedge_{K< a\leq b} [dT^{22}_i]_{ab}\bigwedge_{h=1}^L dZ_{i,h}, \\[.5em]
\bigwedge_{a\leq K< b} (U_i^{-1}[dX_i]U_{i-1})_{ab}&\simeq \bigwedge_{a\leq K<b}[dT^{12}_i]_{ab}.
\end{align}
Here `$\simeq$' means equal up to terms that do not contribute to the Jacobian.

The contribution to the Jacobian from the (block) lower-triangular part can be obtained following essentially the same steps as in the proof of proposition~\ref{thm:decomp:gen-schur-H}. This gives
\begin{align}
\bigwedge_{i=1}^n\bigwedge_{b<a\leq K} (U_i^{-1}[dX_i]U_{i-1})_{ab}
&= \prod_{1\leq b<a\leq K}(\lambda_a-\lambda_b)\bigwedge_{i=1}^n\bigwedge_{b< a\leq K}[U_i^{-1}dU_i]_{ab}, \\
\bigwedge_{i=1}^n\bigwedge_{K<b<a} (U_i^{-1}[dX_i]U_{i-1})_{ab}
&= \prod_{1\leq b<a\leq L}\det[(Z)_{aa}^\transpose\otimes\one-\one\otimes (Z)_{bb}]\bigwedge_{i=1}^n\bigwedge_{K<b<a}[U_i^{-1}dU_i]_{ab}, \\
\bigwedge_{i=1}^n\bigwedge_{b\leq K< a} (U_i^{-1}[dX_i]U_{i-1})_{ab}
&= \prod_{a=1}^L\prod_{b=1}^K\det[(Z)_{aa}-\one\otimes\lambda_a]\bigwedge_{i=1}^n\bigwedge_{b\leq K< a}[U_i^{-1}dU_i]_{ab}, 
\end{align}
where we have used the notation $\lambda_a=\lambda_{1,a}\cdots\lambda_{n,a}$ and $(Z)_{aa}=Z_{1,a}\cdots Z_{n,a}$ introduced earlier.
Combining the contributions from the lower- and upper-triangular part yields the Jacobian~\eqref{decomp:decomp:gen-schur-R-jacob} and completes the proof. 
\end{proof}

\begin{remark}\label{remark:decomp:vandermonde-real}
We emphasise that even though the Jacobian~\eqref{decomp:decomp:gen-schur-R-jacob} might look unfamiliar, it is nothing but a Vandermonde determinant. Let us assume that the product matrix has exactly $K$ real eigenvalues such that the eigenvalues of each $(Z)_{ii}$ is a complex conjugate pair denoted by $z_i$ ($z_i^*$), then we have
\begin{align}
\prod_{1\leq i<j\leq K}&\abs{\lambda_j-\lambda_i}
\prod_{1\leq i<j\leq L}\abs{\det[(Z)_{jj}^\transpose\otimes\one-\one\otimes (Z)_{ii}]}
\prod_{\substack{i=1,\ldots,K\\j=1,\ldots,L}}\abs{\det[(Z)_{jj}-\one\otimes \lambda_i]} 
% \label{decomp:decomp:vandermonde-real-1} 
\nn \\
&=
\prod_{1\leq i<j\leq K}\abs{\lambda_j-\lambda_i}
\prod_{1\leq i<j\leq L}\abs{z_j-z_i}^2\abs{z_j-z_i^*}^2
\prod_{\substack{i=1,\ldots,K\\j=1,\ldots,L}}\abs{z_j-\lambda_i}^2.
\label{decomp:decomp:vandermonde-real-2} 
% &=
% \begin{vmatrix}
% 1 & \cdots & 1 & 1 & 1 & \cdots & 1 & 1 \\
% \lambda_1 & \cdots & \lambda_K & z_1 & z_1^* & \cdots & z_L & z_L^* \\
% \vdots &  & \vdots & \vdots & \vdots & & \vdots & \vdots \\
% \lambda_1^{N-1} & \cdots & \lambda_K^{N-1} & z_1^{N-1} & z_1^{*N-1} & \cdots & z_L^{N-1} & z_L^{*N-1}
% \end{vmatrix}.
% &=
% \abs*{\det\left[
% \begin{array}{c|c|c}
% [\lambda_i^{j-1}]^{i=1,\ldots,K}_{j=1,\ldots,N} &
% [z_i^{j-1}]^{i=1,\ldots,L}_{j=1,\ldots,N} &
% [z_i^{*j-1}]^{i=1,\ldots,L}_{j=1,\ldots,N}
% \end{array}
% \right]}.
\end{align}
The equality 
% between~\eqref{decomp:decomp:vandermonde-real-1} and~\eqref{decomp:decomp:vandermonde-real-2} 
is easily seen to hold, if we use the identities
\begin{align}
(B\otimes A)(X\otimes\one-\one\otimes Y)(B^{-1}\otimes A^{-1})&=(BXB^{-1}\otimes\one-\one\otimes AY A^{-1}), \\
(B\otimes A)^{-1}&=B^{-1}\otimes A^{-1},
\end{align}
where $A$ and $B$ are non-singular matrices. Due to these identities, the two matrices $(Z)_{ii}$ and $(Z)_{jj}$ in the second determinant on the first line in~\eqref{decomp:decomp:vandermonde-real-2} may be diagonalised independently and the evaluation of the determinant becomes a trivial matter.
\end{remark}

\chapter{Higher transcendental functions}
\label{app:special}

For easy reference, we will in this appendix collect some definitions and identities for the higher transcendental functions which are used frequently throughout this thesis; namely the gamma, the digamma, the hypergeometric and the Meijer $G$-function. The Meijer $G$-function is probably the less well-known of these functions but it is nonetheless of the utmost importance for the topic of this thesis. For general reference the reader is referred to~\cite{NIST:2010,Bateman:1955function,GR:1987,Luke:1975,WW:1996}.

\section{Gamma and digamma functions}

This section summarises some formulae for the gamma and digamma function which are used repeatedly in this thesis. All the formulae stated in this section can be found in~\cite{NIST:2010}.

The well-known gamma function is a meromorphic function with no zeros and with simple poles at the non-positive integers. It is (typically) defined as
\begin{equation}
\Gamma[z]=\int_0^\infty dt\, e^{-t}t^{z-1}
\label{special:gamma:gamma-def}
\end{equation}
for $\Re z>0$ and by analytic continuation for $\Re z\leq0$. Similarly, the digamma function (also known as the Euler psi function)
\begin{equation}
\psi(z):=\frac{\Gamma'[z]}{\Gamma[z]}.
\label{special:gamma:digamma-def}
\end{equation}
is a meromorphic function with no zeros and with simple poles of residue $-1$ at the non-positive integers. We need the following simple relations,
\begin{align}
\Gamma[z+1]&=z\Gamma[z], &
\psi(z+1)&=\psi(z)+\frac1z, \label{special:gamma:recurrence}\\
\Gamma[z]\Gamma[1-z]&=\frac\pi{\sin\pi z}, &
\psi[z]\psi[1-z]&=-\frac{\pi}{\tan\pi z}, \label{spacial:gamma:reflection}
\end{align}
which are known as the recurrence and reflection formulae, respectively. Additionally, we will use the Gauss multiplication formula
\begin{equation}
\Gamma[nz]=(2\pi)^{(1-n)/2}n^{nz-1/2}\prod_{k=1}^n\Gamma\big[z+\tfrac{k-1}{n}\big],
\label{special:gamma:multiplication}
\end{equation}
where $n$ is a positive integer.

We also need some asymptotic formulae for the gamma function. Stirling's formula states that
\begin{equation}
\Gamma[z]=(2\pi)^{1/2}z^{z-1/2}e^{-z}\Big(1+\frac1{12z}+O(z^{-2})\Big),
\label{special:gamma:stirling}
\end{equation}
for $z\to\infty$ in the sector $\abs{\arg z}<\pi$. We have an equivalent formula for ratios of gamma functions,
\begin{equation}
\frac{\Gamma[z+a]}{\Gamma[z+b]}=z^{a-b}(1+O(z^{-1}))
\label{special:gamma:ratio-asymp}
\end{equation}
for $z\to\infty$ in the sector $\abs{\arg z}<\pi$ with $a$ and $b$ denoting complex constants.

The analogue of Stirling's formula for the digamma function is
\begin{equation}
\psi(z)=\log z-\frac1{2z}+O(z^{-2})
\label{special:gamma:digamma-asymp}
\end{equation}
for $z\to\infty$ in the sector $\abs{\arg z}<\pi$.

\section{Hypergeometric and Meijer \textit{G}-Functions}
\label{sec:special:meijer}

Let us first consider the (generalised) hypergeometric function, which
%usefulness in the context of random matrix theory is, for instance, illustrated by the Askey-scheme of orthogonal polynomials, see~\cite{KS:1996}. The  hypergeometric function
is defined as a power series
\begin{equation}
\hypergeometric{p}{q}{a_1,\ldots,a_p}{b_1,\ldots,b_q}{z}:=\sum_{k=0}^\infty \frac{\prod_{i=1}^p(a_i)_k}{\prod_{j=1}^q(b_j)_k}\frac{z^k}{k!}
\label{special:meijer:hypergeometric}
\end{equation}
in the region of convergence. Here $p$ and $q$ are non-negative integers, $a_i$ ($i=1,\ldots,p$) and $b_j$ ($j=1,\ldots,q$) are real or complex parameters, and the Pochhammer symbol is defined as $(a)_0=1$ and $(a)_k=a(a+1)\cdots(a+k-1)$ for $k\geq1$ (the $b_j$'s cannot be non-positive integers unless the corresponding singularities cancel due to zeros in the numerator). In the special case when one of the $a_i$'s is a non-positive integer, the sum terminates and the hypergeometric function becomes a polynomial,
\begin{equation}
\hypergeometric{p+1}{q}{-n,a_1,\ldots,a_p}{b_1,\ldots,b_q}{z}:=\sum_{k=0}^n \frac{(-1)^kn!}{(n-k)!}\frac{\prod_{i=1}^p(a_i)_k}{\prod_{j=1}^q(b_j)_k}\frac{z^k}{k!},\quad
n=0,1,2,\ldots
\end{equation}
For our purpose, a particularly important example of a hypergeometric polynomial is the (associated) Laguerre polynomial which may be written as
\begin{equation}
L^\nu_n(x)=\frac{(\nu+1)_n}{n!}\hypergeometric{1}{1}{-n}{\nu+1}{x}.
\end{equation}
The Laguerre polynomials appear naturally in the study of the Wishart ensemble (see e.g.~\cite{Forrester:2010}). In section~\ref{sec:singular:correlations}, where we consider the Wishart product ensemble, we will face a generalised version of these polynomials given as a more complicated hypergeometric function or, equivalently, a Meijer $G$-function.

\begin{definition}\label{def:special:meijer}
The Meijer $G$-function is defined through a contour integral in the complex plane~\cite{Bateman:1955function},
\begin{equation}
\MeijerG{m}{n}{p}{q}{a_1,\ldots,a_p}{b_1,\ldots,b_q}{z}:=
\frac{1}{2\pi i}\int_L du\,z^u\frac{\prod_{i=1}^m\Gamma[b_i-u]\prod_{i=1}^n\Gamma[1-a_i+u]}{\prod_{i=n+1}^p\Gamma[a_i-u]\prod_{i=m+1}^q\Gamma[1-b_i+u]}.
\label{special:meijer:meijer-def}
\end{equation}
Here an empty product is interpreted as unity, $0\leq m\leq q$ and $0\leq n\leq p$ are integers, while $a_i$ ($i=1,\ldots,p$) and $b_j$ ($j=1,\ldots,q$) are real or complex parameters with the constraint that no pole of $\prod_{i=1}^m\Gamma[b_i-u]$ coincides with any pole of $\prod_{i=1}^n\Gamma[1-a_i+u]$. There are three possible choices for the contour $L$:
\begin{enumerate}[(i)]
 \item The contour runs from $-i\infty$ to $+i\infty$ and the path is chosen such that all poles $\prod_{i=1}^m\Gamma[b_i-u]$ are to the right of $L$, while all poles of $\prod_{i=1}^n\Gamma[1-a_i+u]$ are to the left of $L$. The integral converges if $m+n>\frac12(p+q)$ and $\abs{\arg z}<\pi(m+n-\frac12(p+q))$.
 \item The contour forms a loop starting and ending at $+\infty$, such that it encircles all poles $\prod_{i=1}^m\Gamma[b_i-u]$ once in the negative direction, but none of the poles stemming from $\prod_{i=1}^n\Gamma[1-a_i+u]$. In this case, the integral converges for all $z(\neq0)$ if $q>p$  and for $0<\abs z<1$ if $q=p\geq1$. 
 \item The contour forms a loop starting and ending at $-\infty$, such that it encircles all poles $\prod_{i=1}^n\Gamma[1-a_i+u]$ once in the positive direction, but none of the poles stemming from $\prod_{i=1}^m\Gamma[b_i-u]$. In this case, the integral converges for all $z(\neq0)$ if $q<p$  and for $1<\abs z$ if $q=p\geq1$. 
\end{enumerate}
The Meijer $G$-function is defined when the integral converges.
\end{definition}

\begin{remark}
The criteria of convergence stated in the definition may be established using Stirling's approximation for the gamma functions in the integrand. Note that if more than one of the contours (i-iii) result in a convergent integral then they will also yield the same result. 
\end{remark}

\begin{remark}
In order to avoid too much repetition, we will assume that the above stated conditions of convergence are fulfilled for all formulae introduced below. We additionally note that, as a consequence of its definition, the Meijer $G$-function is an analytic function for all $z$ (where it is defined) with the possible exceptions of the origin, $z=0$, the unit circle, $\abs z=1$, and infinity. 
\end{remark}

The Meijer $G$-function generalises the hypergeometric function in the sense that any hypergeometric function can be written as Meijer $G$-function. Due to~\eqref{special:meijer:meijer-def}, this also gives an explicit contour integral representation of the hypergeometric functions, which will be useful for analytical computations. In the non-polynomial case, we have~\cite{Bateman:1955function}
\begin{equation}
\hypergeometric{p}{q}{a_1,\ldots,a_p}{b_1,\ldots,b_q}{z}=\frac{\prod_{j=1}^q\Gamma[b_j]}{\prod_{i=1}^p\Gamma[a_i]}\MeijerG{1}{p}{p}{q+1}{1-a_1,\ldots,1-a_p}{0,1-b_1,\ldots,1-b_q}{-z}
\label{special:meijer:hyper-meijer}
\end{equation}
with parameters as above. In the polynomial case, we will need the formula
\begin{equation}
\hypergeometric{p+1}{q}{-n,a_1,\ldots,a_p}{b_1,\ldots,b_q}{z}=\frac{\prod_{j=1}^q\Gamma[b_j]}{\prod_{i=1}^p\Gamma[a_i]}\MeijerG{1}{p}{p+1}{q+1}{1-a_1,\ldots,1-a_p,n+1}{0,1-b_1,\ldots,1-b_q}{z},
\label{special:meijer:poly-meijer}
\end{equation}
which is valid for $p<q$. Both~\eqref{special:meijer:hyper-meijer} and~\eqref{special:meijer:poly-meijer} are obtained by writing the Meijer $G$-function as a sum over the residues.

In this thesis, we will several times be faced with Meijer $G$-functions with $m=p=q$ and $n=0$. In this case the contour changes discontinuously when $z$ crosses the unit circle, which typically gives rise to non-analytic behaviour. Let us mention two simple but important examples on the positive half-line:
\begin{equation}
\MeijerG{1}{0}{1}{1}{\,1}{\,0}{x}=\begin{cases} 1 & \text{for } 0\leq x\leq 1 \\ 0 & \text{for } 1< x \end{cases},\quad
\MeijerG{1}{0}{1}{1}{\,1}{\,-\frac12}{x}=\begin{cases} \sqrt{\frac{4(1-x)}{\pi x}} & \text{for } 0\leq x\leq 1 \\ 0 & \text{for } 1<x \end{cases}.
\end{equation}
The former is the Heaviside step function which is discontinuous at unity, while the latter is the Mar\v cenko--Pastur density (on the unit interval) which is continuous but not differentiable at unity. Both cases are immediate consequences of the residue theorem.

In addition to the cases mentioned above, the following relations between the Meijer $G$-function and other special functions will be used in this thesis~\cite{Bateman:1955function},
\begin{equation}
\begin{aligned}
\MeijerG{1}{0}{0}{1}{-}{\nu}{x}&=x^\nu e^{-x}, &
\MeijerG{1}{0}{0}{2}{-}{\nu,0}{x}&=x^{\nu/2}J_\nu(2\sqrt x), \\
\MeijerG{1}{1}{1}{2}{-n-\nu}{0,-\nu}{x}&=n!e^{-x}L_n^\nu(x), &
\MeijerG{2}{0}{0}{2}{-}{\nu,0}{x}&=2x^{\nu/2}K_\nu(2\sqrt x).
\end{aligned}
\label{special:meijer:meijer-other}
\end{equation}
Here $L_n^\nu(x)$ denote the Laguerre polynomial, while $J_\nu(x)$ and $K_\nu(x)$ are Bessel functions.

Furthermore, we need the following simple identities for the Meijer $G$-function:
by cancellations of gamma functions in the integrand in the definition~\eqref{special:meijer:meijer-def}, we have
\begin{align}
\MeijerG{m}{n}{p}{q}{a_1,\ldots,a_p}{b_1,\ldots,b_{q-1},a_1}{z}&=\MeijerG{m}{n-1}{p-1}{q-1}{a_2,\ldots,a_p}{b_1,\ldots,b_{q-1}}{z},\quad n,p,q\geq1, \label{special:meijer:reduce-1} \\
\MeijerG{m}{n}{p}{q}{a_1,\ldots,a_{p-1},b_1}{b_1,\ldots,b_q}{z}&=\MeijerG{m-1}{n}{p-1}{q-1}{a_1,\ldots,a_{p-1}}{b_2,\ldots,b_{q}}{z},\quad m,p,q\geq1; \label{special:meijer:reduce-2} 
\end{align}
by shifting the contour, we have
\begin{equation}
x^c\MeijerG{m}{n}{p}{q}{a_1,\ldots,a_p}{b_1,\ldots,b_q}{x}=\MeijerG{m}{n}{p}{q}{a_1+c,\ldots,a_p+c}{b_1+c,\ldots,b_q+c}{x}
\label{special:meijer:meijer-shift}
\end{equation}
where $c$ is a constant; and finally by changing the direction of the contour
\begin{equation}
\MeijerG{m}{n}{p}{q}{a_1,\ldots,a_p}{b_1,\ldots,b_q}{\frac1z}=\MeijerG{n}{m}{q}{p}{1-a_1,\ldots,1-a_p}{1-b_1,\ldots,1-b_q}{z}
\label{special:meijer:meijer-inverse}
\end{equation}
which maps zero to infinity and vice versa.

As discussed in section~\ref{sec:prologue:scalar} the Meijer $G$-function plays an important r\^ole when considering products of independent random variables due to its close relation with the Mellin transform. Let us consider the Mellin transform of a Meijer $G$-function~\cite{Bateman:1955function},
\begin{equation}
\int_0^\infty dx\,x^{s-1}\,\MeijerG{m}{n}{p}{q}{a_1,\ldots,a_p}{b_1,\ldots,b_q}{xy}=
\frac{1}{y^s}\,\frac{\prod_{i=1}^m\Gamma(b_i+s)\prod_{j=1}^n\Gamma(1-a_j-s)}{\prod_{k=m+1}^q\Gamma(1-b_k-s)\prod_{\ell=n+1}^p\Gamma(a_\ell+s)}.
\label{special:meijer:meijer-moment}
\end{equation}
This means that we may think of the Meijer $G$-function as the inverse Mellin transform with respect to $s$ of the right hand side in~\eqref{special:meijer:meijer-moment} whenever this transform exists and is unique. 

The relation between the Meijer $G$-function and products of independent random variables may be used to establish the following $n$-fold integral representations~\cite{Springer:1979},
\begin{align}
\MeijerG{n}{0}{0}{n}{-}{\nu_1,\ldots,\nu_n}{x}&=\bigg[\prod_{i=1}^n\int_0^\infty dx_i\, x_i^{\nu_i}e^{-x_i}\bigg]\delta(x_n\cdots x_1-x)  
\label{special:meijer:gamma} \\
% &=\bigg[\prod_{i=1}^{n-1}\int_0^\infty \frac{dy_i}{y_i} \Big(\frac{y_{i+1}}{y_i}\Big)^{\nu_{i+1}}e^{-y_{i+1}/y_i}\bigg] 
% y_1^{\nu_1}e^{-y_1} \nn \\
\MeijerG{n}{0}{n}{n}{\mu_1,\ldots,\mu_n}{\nu_1,\ldots,\nu_n}{x}&=\bigg[\prod_{i=1}^n\int_0^1 dx_i\, \frac{x_i^{\nu_i}(1-x_i)^{\mu_i-\nu_i-1}}{\Gamma[\mu_i-\nu_i]}\bigg]\delta(x_n\cdots x_1-x), \label{special:meijer:beta}
% &=\bigg[\prod_{i=1}^{n-1}\int_0^1 \frac{dy_i}{y_i} \frac{\big(\frac{y_{i+1}}{y_i}\big)^{\nu_{i+1}}
% \big(1-\frac{y_{i+1}}{y_i}\big)^{\mu_{i+1}-\nu_{i+1}-1}}{\Gamma[\mu_{i+1}-\nu_{i+1}]}\bigg] \nn\\
% &{}\qquad\times\frac{y_1^{\nu_1}(1-y_1)^{\mu_1-\nu_1-1}}{\Gamma[\mu_1-\nu_1]}
\end{align}
where $x\in(0,\infty)$ and $-\mu_i>\nu_i>-1$ ($i=1,\ldots,n$). After multiplication with a factor $\prod_{i=1}^n\Gamma[\nu_i+1]^{-1}$, the Meijer $G$-function~\eqref{special:meijer:gamma} becomes the density of a product of $n$ independent gamma-distributed random scalars, while~\eqref{special:meijer:beta} becomes the density of a product of $n$ independent beta-distributed random scalars after multiplication by $\prod_{i=1}^n\Gamma[\nu_i+\mu_i+2]/\Gamma[\nu_i+1]$. 

With a change of variables like in~\eqref{prologue:scalar:y-product}, the representations~\eqref{special:meijer:gamma} and~\eqref{special:meijer:beta} may be written as ($n-1$)-fold Mellin convolutions, where each convolution transforms a lower order Meijer $G$-function into a higher order Meijer $G$-function. These results are part of a more general integration formula which says the Mellin convolution of two Meijer $G$-functions gives another Meijer $G$-function~\cite{PBM:1998},
\begin{multline}
\MeijerG{m+\mu}{n+\nu}{p+\sigma}{q+\tau}{a_1,\ldots,a_n,c_1,\ldots,c_\sigma,\,a_{n+1},\ldots,a_p}{b_1,\ldots,b_m,d_1,\ldots,d_\tau,b_{m+1},\ldots,b_q}{\eta\omega} \\
\begin{aligned}
&=\int_0^\infty \frac{dx}x \MeijerG{m}{n}{p}{q}{a_1,\ldots,a_p}{b_1,\ldots,b_q}{\eta x}\MeijerG{\mu}{\nu}{\sigma}{\tau}{c_1,\ldots,c_\sigma}{d_1,\ldots,d_\tau}{\frac\omega x} \\
&=\int_0^\infty \frac{dx}\omega \MeijerG{m}{n}{p}{q}{a_1,\ldots,a_p}{b_1,\ldots,b_q}{\eta x}\MeijerG{\mu}{\nu}{\sigma}{\tau}{-c_1,\ldots,-c_\sigma}{-d_1,\ldots,-d_\tau}{\frac x\omega};
\end{aligned}
\label{special:meijer:int-meijer-meijer}
\end{multline}
this integration formula is valid whenever each of the individual Meijer $G$-functions is well-defined. For the purpose of this thesis two cases are of particular importance:
\begin{align}
\MeijerG{m+1}{n}{p}{q+1}{a_1,\ldots,a_p}{\nu,b_1,\ldots,b_q}{y}&=\int_0^\infty \frac{dx}x\, e^{-x}x^{\nu}\MeijerG{m}{n}{p}{q}{a_1,\ldots,a_p}{b_1,\ldots,b_q}{\frac yx} \nn \\
&=\int_0^\infty \frac{dx}x\, e^{-1/x}x^{-\nu}\MeijerG{m}{n}{p}{q}{a_1,\ldots,a_p}{b_1,\ldots,b_q}{xy},
\end{align}
which is related to products of gamma distributed random variables~\eqref{special:meijer:gamma} and
\begin{align}
\MeijerG{m+1}{n}{p+1}{q+1}{a_1,\ldots,a_p,\mu}{\nu,b_1,\ldots,b_q}{y}&=\int_0^1 \frac{dx}x\, \frac{x^{\nu}(1-x)^{\mu-\nu-1}}{\Gamma[\mu-\nu]} \MeijerG{m}{n}{p}{q}{a_1,\ldots,a_p}{b_1,\ldots,b_q}{\frac yx} \nn \\
&=\int_1^\infty \frac{dx}x\, \frac{x^{1-\mu}(x-1)^{\mu-\nu-1}}{\Gamma[\mu-\nu]} \MeijerG{m}{n}{p}{q}{a_1,\ldots,a_p}{b_1,\ldots,b_q}{xy}
\label{special:meijer:beta-conv}
\end{align}
which is related to products of beta distributed random variables~\eqref{special:meijer:beta}.

Many integrals found in this thesis either are performed using the above given integration formula for Meijer $G$-functions. One of the benefits of using~\eqref{special:meijer:int-meijer-meijer} to perform (even quite simple) integrals is that the method is very systematic. This is also the reason why similar methods are often favoured in computer implementations; Wolfram's \textsc{Mathematica} writes in \textit{Notes on Internal Implementation}:
``Many other definite integrals are done using Marichev--Adamchik Mellin transform methods. The results are often initially expressed in terms of Meijer $G$-functions, which are converted into hypergeometric functions using Slater's Theorem and then simplified.''

The last property of the Meijer $G$-function that we will need is the asymptotic behaviour for large argument. We have~\cite{Fields:1972,Luke:1975}
\begin{equation}
\MeijerG{m}{0}{0}{m}{-}{b_1,\ldots,b_m}{z}\sim \frac1{\sqrt m}\Big(\frac{2\pi}z\Big)^{\frac{m-1}2}e^{-mz^{1/z}}z^{(b_1+\cdots+b_m)/m}(1+O(z^{-1/m}))
\label{special:meijer:meijer-asymp}
\end{equation}
as $z\to\infty$.

We end this appendix by mentioning the Fox $H$-function which generalises the Meijer $G$-function similarly to the way the Fox--Wright function generalises the hypergeometric function. Analogously to the Meijer $G$-function, the Fox $H$-function is defined as a contour integral in the complex plane,
\begin{equation}
\FoxH[\bigg]{m}{n}{p}{q}{(a_1,A_1),\ldots,(a_p,A_p)}{(b_1,B_1),\ldots,(b_q,B_q)}{z}=\!
\int_L \frac{du\,z^u}{2\pi i}\frac{\prod_{i=1}^m\Gamma[b_i-B_iu]\prod_{i=1}^n\Gamma[1-a_i+A_iu]}{\prod_{i=n+1}^p\Gamma[a_i-A_iu]\prod_{i=m+1}^q\Gamma[1-b_i+B_iu]},
\end{equation}
where $0\leq m\leq q$ and $0\leq n\leq p$ are integers, $a_i,A_i$ ($i=1,\ldots,p$) and $b_j,B_j$ ($j=1,\ldots,q$) are constants and the contour, $L$, separates the poles of the two products in the numerator; see~\cite{MSH:2009} for precise statements of constraints and criteria for convergence. In many aspects the Fox $H$-function is no harder to work with than the Meijer $G$-function, e.g. all the identities for the Meijer $G$-functions mentioned above have direct analogues in terms of the Fox $H$-function (see~\cite{MSH:2009}). A comparison between~\eqref{singular:asymp:density-fc} and \eqref{singular:asymp:fox} shows that the use of the Fox $H$-function sometimes simplifies formulae considerably. Moreover, the Fox $H$-function was given a prominent r\^ole in study of the algebra of random variables~\cite{Springer:1979}. The main drawback of the function is that it is not included in standard tables of special functions, such as~\cite{NIST:2010} and~\cite{GR:1987}, neither is it predefined in 
mathematical software such as \textsc{Mathematica} and \textsc{Maple}.

% \chapter{Jacobians}
% \label{app:jacobians}

%% Bibliography -------------------------------------------------

\raggedright

\bibliographystyle{amsplain}
\bibliography{ref}

\providecommand{\bysame}{\leavevmode\hbox to3em{\hrulefill}\thinspace}
\providecommand{\MR}{\relax\ifhmode\unskip\space\fi MR }
% \MRhref is called by the amsart/book/proc definition of \MR.
\providecommand{\MRhref}[2]{%
  \href{http://www.ams.org/mathscinet-getitem?mr=#1}{#2}
}
\providecommand{\href}[2]{#2}
\begin{thebibliography}{100}

\bibitem{ARRS:2013}
K.~Adhikari, N.~K. Reddy, T.~R. Reddy, and K.~Saha, \emph{Determinantal point
  processes in the plane from products of random matrices}, arXiv:1308.6817 [to
  appear in Ann. Inst. Henri Poincar{\'e}] (2013).

\bibitem{Akemann:2001}
G.~Akemann, \emph{Microscopic correlations for non-{H}ermitian {D}irac
  operators in three-dimensional {QCD}}, Phys. Rev. D \textbf{64} (2001),
  114021.

\bibitem{Akemann:2002}
\bysame, \emph{Microscopic correlation functions for the {QCD} {D}irac operator
  with chemical potential}, Phys. Rev. Lett. \textbf{89} (2002), 072002.

\bibitem{Akemann:2005}
\bysame, \emph{The complex {L}aguerre symplectic ensemble of non-{H}ermitian
  matrices}, Nucl. Phys. B \textbf{730} (2005), 253.

\bibitem{Akemann:2011}
\bysame, \emph{Non-{H}ermitian extensions of {W}ishart random matrix
  ensembles}, Acta Phys. Pol. B \textbf{42} (2011).

\bibitem{ABF:2011}
G.~Akemann, J.~Baik, and Ph. Di~Francesco (eds.), \emph{The {O}xford handbook
  of random matrix theory}, Oxford University Press, 2011.

\bibitem{AB:2012}
G.~Akemann and Z.~Burda, \emph{Universal microscopic correlation functions for
  products of independent {G}inibre matrices}, J. Phys. A \textbf{45} (2012),
  465201.

\bibitem{ABK:2014}
G.~Akemann, Z.~Burda, and M.~Kieburg, \emph{Universal distribution of
  {L}yapunov exponents for products of {G}inibre matrices}, J. Phys. A
  \textbf{47} (2014), 395202.

\bibitem{ABKN:2014}
G.~Akemann, Z.~Burda, M.~Kieburg, and T.~Nagao, \emph{Universal microscopic
  correlation functions for products of truncated unitary matrices}, J. Phys. A
  \textbf{47} (2014), 255202.

\bibitem{ADMN:1997}
G.~Akemann, P.~H. Damgaard, U.~Magnea, and S.~Nishigaki, \emph{Universality of
  random matrices in the microscopic limit and the {D}irac operator spectrum},
  Nucl. Phys. B \textbf{487} (1997), 721.

\bibitem{AI:2015}
G.~Akemann and J.~R. Ipsen, \emph{Recent exact and asymptotic results for
  products of independent random matrices}, Acta Phys. Pol. B \textbf{46}
  (2015), 1747.

\bibitem{AIK:2013}
G.~Akemann, J.~R. Ipsen, and M.~Kieburg, \emph{Products of rectangular random
  matrices: singular values and progressive scattering}, Phys. Rev. E
  \textbf{88} (2013), 052118.

\bibitem{AIS:2014}
G.~Akemann, J.~R. Ipsen, and E.~Strahov, \emph{Permanental processes from
  products of complex and quaternionic induced {G}inibre ensembles}, Random
  Matrices: Theory Appl. \textbf{03} (2014), 1450014.

\bibitem{AKW:2013}
G.~Akemann, M.~Kieburg, and L.~Wei, \emph{Singular value correlation functions
  for products of {W}ishart random matrices}, J. Phys. A \textbf{46} (2013),
  275205.

\bibitem{AOSV:2005}
G.~Akemann, J.~C. Osborn, K.~Splittorff, and J.~J.~M. Verbaarschot,
  \emph{Unquenched {QCD} {D}irac operator spectra at nonzero baryon chemical
  potential}, Nucl. Phys. B \textbf{712} (2005), 287.

\bibitem{APS:2009gap}
G.~Akemann, M.~J. Phillips, and L.~Shifrin, \emph{Gap probabilities in
  non-{H}ermitian random matrix theory}, J. Math. Phys. \textbf{50} (2009),
  no.~6, 063504.

\bibitem{APS:2009}
G.~Akemann, M.~J. Phillips, and H.-J. Sommers, \emph{Characteristic polynomials
  in real {G}inibre ensembles}, J. Phys. A \textbf{42} (2009), 012001.

\bibitem{APS:2010}
\bysame, \emph{The chiral {G}aussian two-matrix ensemble of real asymmetric
  matrices}, J. Phys. A \textbf{43} (2010), 085211.

\bibitem{AP:2004}
G.~Akemann and A.~Pottier, \emph{Ratios of characteristic polynomials in
  complex matrix models}, J. Phys. A \textbf{37} (2004).

\bibitem{AS:2013}
G.~Akemann and E.~Strahov, \emph{Hole probabilities and overcrowding estimates
  for products of complex {G}aussian matrices}, J. Stat. Phys. \textbf{151}
  (2013).

\bibitem{AV:2003}
G.~Akemann and G.~Vernizzi, \emph{Characteristic polynomials of complex random
  matrix models}, Nucl. Phys. B \textbf{660} (2003), 532.

\bibitem{AZ:1997}
A.~Altland and M.~R. Zirnbauer, \emph{Nonstandard symmetry classes in
  mesoscopic normal-superconducting hybrid structures}, Phys. Rev. B
  \textbf{55} (1997), 1142.

\bibitem{AJM:1990}
J.~Ambj{\o}rn, J.~Jurkiewicz, and Yu.~M. Makeenko, \emph{Multiloop correlators
  for two-dimensional quantum gravity}, Phys. Lett. B \textbf{251} (1990), 517.

\bibitem{AHM:2011}
Y.~Ameur, H.~Hedenmalm, and N.~Makarov, \emph{Fluctuations of eigenvalues of
  random normal matrices}, Duke Math. J \textbf{159} (2011), 31.

\bibitem{AGZ:2010}
G.~W. Anderson, A.~Guionnet, and O.~Zeitouni, \emph{An introduction to random
  matrices}, Cambridge University Press, 2010.

\bibitem{Andreief:1883}
C.~Andr{\'e}ief, \emph{Note sur une relation les int{\'e}grales d{\'e}finies
  des produits des fonctions}, M{\'e}m. de la Soc. Sci. Bordeaux \textbf{2}
  (1883), 1.

\bibitem{Bai:2007}
Z.-Q. Bai, \emph{On the cycle expansion for the {L}yapunov exponent of a
  product of random matrices}, J. Phys. A \textbf{40} (2007), 8315.

\bibitem{BDS:2003}
J.~Baik, P.~Deift, and E.~Strahov, \emph{Products and ratios of characteristic
  polynomials of random {H}ermitian matrices}, J. Math. Phys. \textbf{44}
  (2003), 3657.

\bibitem{BBTCC:2011}
T.~Banica, Serban~T. Belinschi, M.~Capitaine, and B.~Collins, \emph{Free
  {B}essel laws}, Canad. J. Math \textbf{63} (2011), 3.

\bibitem{BT:1993}
E.~L. Basor and C.~A. Tracy, \emph{Variance calculations and the {B}essel
  kernel}, J. Stat. Phys. \textbf{73} (1993), 415.

\bibitem{Beenakker:1993}
C.~W.~J. Beenakker, \emph{Universality in the random-matrix theory of quantum
  transport}, Phys. Rev. Lett. \textbf{70} (1993), 1155.

\bibitem{Beenakker:1994}
\bysame, \emph{Universality of {B}r{\'e}zin and {Z}ee's spectral correlator},
  Nucl. Phys. B \textbf{422} (1994), no.~3, 515--520.

\bibitem{Beenakker:1997}
\bysame, \emph{Random-matrix theory of quantum transport}, Rev. Mod. Phys.
  \textbf{69} (1997), 731.

\bibitem{Bellman:1954}
R.~Bellman, \emph{Limit theorems for non-commutative operations {I}.}, Duke
  Math. J. \textbf{21} (1954), 491.

\bibitem{BP:2005}
G.~Ben~Arous and S.~P{\'e}ch{\'e}, \emph{Universality of local eigenvalue
  statistics for some sample covariance matrices}, Comm. Pure Appl. Math.
  \textbf{58} (2005), no.~10, 1316.

\bibitem{BG:2010}
F.~Benaych-Georges, \emph{On a surprising relation between the
  {M}archenko--{P}astur law, rectangular and square free convolutions}, Ann.
  Inst. Henri Poincar{\'e} Probab. Stat \textbf{46} (2010), 644.

\bibitem{Benettin:1984}
G.~Benettin, \emph{Power-law behavior of {L}yapunov exponents in some
  conservative dynamical systems}, Physica D \textbf{13} (1984), 211.

\bibitem{Berman:2008}
R.~J. Berman, \emph{Determinantal point processes and fermions on complex
  manifolds: bulk universality}, arXiv:0811.3341 (2008).

\bibitem{BLC:2002}
D.~Bernard and A.~LeClair, \emph{A classification of non-{H}ermitian random
  matrices}, Statistical Field Theories (A.~Cappelli and G.~Mussardo, eds.),
  NATO Science Series, vol.~73, Springer Netherlands, 2002, p.~207.

\bibitem{BB:2015}
M.~Bertola and T.~Bothner, \emph{Universality conjecture and results for a
  model of several coupled positive-definite matrices}, Comm. Math. Phys.
  \textbf{337} (2015), 1077.

\bibitem{BGS:2009}
M.~Bertola, M.~Gekhtman, and J.~Szmigielski, \emph{The {C}auchy two-matrix
  model}, Comm. Math. Phys. \textbf{287} (2009), 983.

\bibitem{BGS:2014}
\bysame, \emph{Cauchy--{L}aguerre two-matrix model and the {M}eijer-{G} random
  point field}, Comm. Math. Phys. \textbf{326} (2014), 111.

\bibitem{BN:2008}
J.-P. Blaizot and M.~A. Nowak, \emph{Large-{$N_c$} confinement and turbulence},
  Phys. Rev. Lett. \textbf{101} (2008), 102001.

\bibitem{BC:2012}
C.~Bordenave and D.~Chafa{\"\i}, \emph{Around the circular law}, Probab. Surv.
  \textbf{9} (2012).

\bibitem{Borodin:1998}
A.~Borodin, \emph{Biorthogonal ensembles}, Nucl. Phys. B \textbf{536} (1998),
  704.

\bibitem{BS:2009}
A.~Borodin and C.~D. Sinclair, \emph{The {G}inibre ensemble of real random
  matrices and its scaling limits}, Comm. Math. Phys. \textbf{291} (2009), 177.

\bibitem{BH:2000}
E.~Br{\'e}zin and S.~Hikami, \emph{Characteristic polynomials of random
  matrices}, Comm. Math. Phys. \textbf{214} (2000), 111.

\bibitem{BIPZ:1978}
E.~Br{\'e}zin, C.~Itzykson, G.~Parisi, and J.-B. Zuber, \emph{Planar diagrams},
  Comm. Math. Phys. \textbf{59} (1978), 35.

\bibitem{BZ:1993}
E.~Br{\'e}zin and A.~Zee, \emph{Universality of the correlations between
  eigenvalues of large random matrices}, Nucl. Phys. B \textbf{402} (1993),
  no.~3, 613.

\bibitem{Burda:2013}
Z.~Burda, \emph{Free products of large random matrices--a short review of
  recent developments}, J. Phys: Conference Series, vol. 473, IOP Publishing,
  2013, p.~012002.

\bibitem{BJW:2010}
Z.~Burda, R.~A. Janik, and B.~Waclaw, \emph{Spectrum of the product of
  independent random {G}aussian matrices}, Phys. Rev. E \textbf{81} (2010),
  041132.

\bibitem{BJLNS:2010}
Z.~Burda, A.~Jarosz, G.~Livan, M.~A. Nowak, and A.~Swiech, \emph{Eigenvalues
  and singular values of products of rectangular {G}aussian random matrices},
  Phys. Rev. E \textbf{82} (2010), 061114.

\bibitem{CKN:1986}
J.~E. Cohen, H.~Kesten, and C.~M. Newman (eds.), \emph{Random matrices and
  their applications}, Amer. Math. Soc., 1986.

\bibitem{CN:1984}
J.~E. Cohen and C.~M. Newman, \emph{The stability of large random matrices and
  their products}, Ann. Probab. (1984), 283.

\bibitem{CPV:1993}
A.~Crisanti, G.~Paladin, and A.~Vulpiani, \emph{Products of random matrices},
  Springer, 1993.

\bibitem{deBruijn:1955}
N.~G. de~Bruijn, \emph{On some multiple integrals involving determinants}, J.
  Indian Math. Soc \textbf{19} (1955), 133.

\bibitem{DKMVZ:1999}
P.~Deift, T.~Kriecherbauer, K.~T.-R. McLaughlin, S.~Venakides, and X.~Zhou,
  \emph{Uniform asymptotics for polynomials orthogonal with respect to varying
  exponential weights and applications to universality questions in random
  matrix theory}, Comm. Pure Appl. Math. \textbf{52} (1999), 1335.

\bibitem{DF:2006}
P.~Desrosiers and P.~J. Forrester, \emph{Hermite and {L}aguerre
  $\beta$-ensembles: Asymptotic corrections to the eigenvalue density}, Nucl.
  Phys. B \textbf{743} (2006), no.~3, 307.

\bibitem{Dyson:1962}
F.~J. Dyson, \emph{The threefold way. algebraic structure of symmetry groups
  and ensembles in quantum mechanics}, J. Math. Phys. \textbf{3} (1962), 1199.

\bibitem{DZ:1996}
J.~D’Anna and A.~Zee, \emph{Correlations between eigenvalues of large random
  matrices with independent entries}, Phys Rev E \textbf{53} (1996), 1399.

\bibitem{Edelman:1997}
A.~Edelman, \emph{The probability that a random real {G}aussian matrix has k
  real eigenvalues, related distributions, and the circular law}, J.
  Multivariate Anal. \textbf{60} (1997), 203.

\bibitem{Edelman:2005}
\bysame, \emph{Finite random matrix theory}, MIT handouts (2005).

\bibitem{EKS:1994}
A.~Edelman, E.~Kostlan, and M.~Shub, \emph{How many eigenvalues of a random
  matrix are real?}, J. Amer. Math. Soc. \textbf{7} (1994), 247.

\bibitem{ER:2005}
A.~Edelman and N.~R. Rao, \emph{Random matrix theory}, Acta Numer. \textbf{14}
  (2005), 233.

\bibitem{ESW:2014}
A.~Edelman, B.~D. Sutton, and Y.~Wang, \emph{Random matrix theory, numerical
  computation and applications}, Proc. Sympos. Appl. Math. (2014), 53.

\bibitem{Epstein:1948}
B.~Epstein, \emph{Some applications of the {M}ellin transform in statistics},
  Ann. Math. Stat. \textbf{19} (1948), 370.

\bibitem{Bateman:1955function}
A.~Erd{\'e}lyi (ed.), \emph{Higher transcendental functions [volumes {I}, {II}
  \& {III}]}, McGraw-Hill, 1953-1955.

\bibitem{Erdos:2011}
L.~Erd{\"o}s, \emph{Universality of {W}igner random matrices: a survey of
  recent results}, Russian Math. Surveys \textbf{66} (2011), 507.

\bibitem{EK:1995}
B.~Eynard and C.~Kristjansen, \emph{Exact solution of the {O}(n) model on a
  random lattice}, Nucl. Phys. B \textbf{455} (1995), 577.

\bibitem{EM:1998}
B.~Eynard and M.~L. Mehta, \emph{Matrices coupled in a chain {I}. {E}igenvalue
  correlations}, J. Phys. A \textbf{31} (1998), 4449.

\bibitem{EZ:1992}
B.~Eynard and J.~Zinn-Justin, \emph{The {O}(n) model on a random surface:
  critical points and large-order behaviour}, Nucl. Phys. B \textbf{386}
  (1992), no.~3, 558.

\bibitem{FZ:1997}
J.~Feinberg and A.~Zee, \emph{Non-{G}aussian non-{H}ermitian random matrix
  theory: phase transition and addition formalism}, Nucl. Phys. B \textbf{501}
  (1997), 643.

\bibitem{Feller:1968}
W.~Feller, \emph{An introduction to probability theory and its applications}, 3
  ed., John Wiley and Sons, 1968.

\bibitem{Fields:1972}
J.~L. Fields, \emph{The asymptotic expansion of the {M}eijer {G}-function},
  Math. Comp. (1972), 757.

\bibitem{Fischmann:2013}
J.~Fischmann, \emph{Eigenvalue distributions on a single ring}, Ph.D. thesis,
  Queen Mary University of London, 2013.

\bibitem{FBKSZ:2012}
J.~Fischmann, W.~Bruzda, B.~A. Khoruzhenko, H.-J. Sommers, and
  K.~{\.Z}yczkowski, \emph{Induced {G}inibre ensemble of random matrices and
  quantum operations}, J. Phys. A \textbf{45} (2012), 075203.

\bibitem{Forrester:1993}
P.~J. Forrester, \emph{The spectrum edge of random matrix ensembles}, Nucl.
  Phys. B \textbf{402} (1993), 709.

\bibitem{Forrester:2010}
\bysame, \emph{Log-gases and random matrices}, Princeton University Press,
  2010.

\bibitem{Forrester:2013}
\bysame, \emph{Lyapunov exponents for products of complex {G}aussian random
  matrices}, J. Stat. Phys. \textbf{151} (2013), 796.

\bibitem{Forrester:2014b}
\bysame, \emph{Eigenvalue statistics for product complex {W}ishart matrices},
  J. Phys. A \textbf{47} (2014), 345202.

\bibitem{Forrester:2014a}
\bysame, \emph{Probability of all eigenvalues real for products of standard
  {G}aussian matrices}, J. Phys. A \textbf{47} (2014), 065202.

\bibitem{Forrester:2015}
\bysame, \emph{Asymptotics of finite system {L}yapunov exponents for some
  random matrix ensembles}, arXiv:1501.05702 (2015).

\bibitem{FK:2014}
P.~J. Forrester and M.~Kieburg, \emph{Relating the {B}ures measure to the
  {C}auchy two-matrix model}, arXiv:1410.6883 (2014).

\bibitem{FL:2014}
P.~J. Forrester and D.-Z. Liu, \emph{Raney distributions and random matrix
  theory}, J. Stat. Phys. (2014), 1.

\bibitem{FM:2012}
P.~J. Forrester and A.~Mays, \emph{Pfaffian point process for the {G}aussian
  real generalised eigenvalue problem}, Probab. Theory Related Fields
  \textbf{154} (2012), 1.

\bibitem{FN:2007}
P.~J. Forrester and T.~Nagao, \emph{Eigenvalue statistics of the real {G}inibre
  ensemble}, Phys. Rev. Lett. \textbf{99} (2007), 050603.

\bibitem{FW:2015}
P.~J. Forrester and D.~Wang, \emph{Muttalib--{B}orodin ensembles in random
  matrix theory---realisations and correlation functions}, arXiv:1502.07147
  (2015).

\bibitem{FK:1960}
H.~Furstenberg and H.~Kesten, \emph{Products of random matrices}, Ann. Math.
  Stat. \textbf{31} (1960), 457.

\bibitem{FS:2003}
Y.~V. Fyodorov and E.~Strahov, \emph{An exact formula for general spectral
  correlation function of random {H}ermitian matrices}, J. Phys. A \textbf{36}
  (2003), 3203.

\bibitem{GA:1970}
M.~R. Gardner and W.~R. Ashby, \emph{Connectance of large dynamic (cybernetic)
  systems: critical values for stability}, Nature \textbf{228} (1970), 784.

\bibitem{Ginibre:1965}
J.~Ginibre, \emph{Statistical ensembles of complex, quaternion, and real
  matrices}, J. Math. Phys. \textbf{6} (1965), 440.

\bibitem{Glasser:1994}
M.~L. Glasser, \emph{The quadratic formula made hard: A less radical approach
  to solving equations}, arXiv:math/9411224 (1994).

\bibitem{GSO:1987}
I.~Goldhirsch, P.-I. Sulem, and S.~A. Orszag, \emph{Stability and {L}yapunov
  stability of dynamical systems: {A} differential approach and a numerical
  method}, Physica D \textbf{27} (1987), 311.

\bibitem{GL:1996}
G.~H. Golub and Ch.~F. van Loan, \emph{Matrix computations}, 3 ed., JHU Press,
  2012.

\bibitem{GW:2009}
R.~Goodman and N.~R. Wallach, \emph{Symmetry, representations, and invariants},
  vol.~66, Springer, 2009.

\bibitem{GKT:2014}
F.~G{\"o}tze, H.~K{\"o}sters, and A.~Tikhomirov, \emph{Asymptotic spectra of
  matrix-valued functions of independent random matrices and free probability},
  Random Matrices: Theory Appl. \textbf{4} (2015), 1550005.

\bibitem{GNT:2014}
F.~G{\"o}tze, A.~Naumov, and A.~Tikhomirov, \emph{Distribution of linear
  statistics of singular values of the product of random matrices},
  arXiv:1412.3314 (2014).

\bibitem{GNT:2014elliptic}
\bysame, \emph{On one generalization of the elliptic law for random matrices},
  arXiv:1404.7013 (2014).

\bibitem{GT:2010}
F.~G{\"o}tze and A.~Tikhomirov, \emph{On the asymptotic spectrum of products of
  independent random matrices}, arXiv:1012.2710 (2010).

\bibitem{GR:1987}
K.~I. Gross and D.~St.~P. Richards, \emph{Special functions of matrix argument
  {I}. {A}lgebraic induction, zonal polynomials, and hypergeometric functions},
  Trans. Amer. Math. Soc. \textbf{301} (1987), 781.

\bibitem{HM:2013}
U.~Haagerup and S.~M{\"o}ller, \emph{The law of large numbers for the free
  multiplicative convolution}, Operator algebra and dynamics, Springer, 2013,
  p.~157.

\bibitem{HJL:2015}
S.~Hameed, K.~Jain, and A.~Lakshminarayan, \emph{Real eigenvalues of
  non-{G}aussian random matrices and their products}, J. Phys. A \textbf{48}
  (2015), no.~38, 385204.

\bibitem{HC:1957}
Harish-Chandra, \emph{Differential operators on a semisimple {L}ie algebra},
  Am. J. Math. (1957), 87.

\bibitem{HJ:2012}
R.~A. Horn and Ch.~R. Johnson, \emph{Matrix analysis}, 2nd ed., Cambridge
  University Press, 2012.

\bibitem{HKPV:2009}
J.~B. Hough, M.~Krishnapur, Y.~Peres, and B.~Vir{\'a}g, \emph{Zeros of
  {G}aussian analytic functions and determinantal point processes}, vol.~51,
  Amer. Math. Soc., 2009.

\bibitem{IWZ:1990}
S.~Iida, H.~A. Weidenm{\"u}ller, and J.~A. Zuk, \emph{Statistical scattering
  theory, the supersymmetry method and universal conductance fluctuations},
  Ann. Phys. \textbf{200} (1990), 219.

\bibitem{Ipsen:2013}
J.~R. Ipsen, \emph{Products of independent quaternion {G}inibre matrices and
  their correlation functions}, J. Phys. A \textbf{46} (2013), 265201.

\bibitem{Ipsen:2015}
\bysame, \emph{Lyapunov exponents for products of rectangular real, complex and
  quaternionic {G}inibre matrices}, J. Phys. A \textbf{48} (2015), 155204.

\bibitem{IK:2014}
J.~R. Ipsen and M.~Kieburg, \emph{Weak commutation relations and eigenvalue
  statistics for products of rectangular random matrices}, Phys. Rev. E
  \textbf{89} (2014), 032106.

\bibitem{IS:2012}
J.~R. Ipsen and K.~Splittorff, \emph{Baryon number {D}irac spectrum in {QCD}},
  Phys. Rev. D \textbf{86} (2012), 014508.

\bibitem{Ishitani:1977}
H.~Ishitani, \emph{A central limit theorem for the subadditive process and its
  application to products of random matrices}, Publ. Res. Inst. Math. Sci.
  \textbf{12} (1977), 565.

\bibitem{Ismail:2005}
M.~Ismail, \emph{Classical and quantum orthogonal polynomials in one variable},
  Cambridge University Press, 2005.

\bibitem{IN:1992}
M.~Isopi and C.~M. Newman, \emph{The triangle law for {L}yapunov exponents of
  large random matrices}, Comm. Math. Phys. \textbf{143} (1992), 591--598
  (English).

\bibitem{Itoi:1997}
C.~Itoi, \emph{Universal wide correlators in non-{G}aussian orthogonal, unitary
  and symplectic random matrix ensembles}, Nucl. Phys. B \textbf{493} (1997),
  651.

\bibitem{IZ:1980}
C.~Itzykson and J.-B. Zuber, \emph{The planar approximation. {II}}, J. Math.
  Phys. \textbf{21} (1980), 411.

\bibitem{JW:2004}
R.~A. Janik and W.~Wieczorek, \emph{Multiplying unitary random
  matrices---universality and spectral properties}, J. Phys. A \textbf{37}
  (2004), 6521.

\bibitem{JQ:2014}
T.~Jiang and Y.~Qi, \emph{Spectral radii of large non-{H}ermitian random
  matrices}, arXiv:1411.1833 (2014).

\bibitem{Kallenberg:1997}
O.~Kallenberg, \emph{Foundations of modern probability}, Springer Science {\&}
  Business Media, 1997.

\bibitem{Kanzieper:2002}
E.~Kanzieper, \emph{Eigenvalue correlations in non-{H}ermitean symplectic
  random matrices}, J. Phys. A \textbf{35} (2002), 6631.

\bibitem{AK:2005}
E.~Kanzieper and G.~Akemann, \emph{Statistics of real eigenvalues in
  {G}inibre’s ensemble of random real matrices}, Phys. Rev. Lett. \textbf{95}
  (2005), 230201.

\bibitem{KF:1998}
E.~Kanzieper and V.~Freilikher, \emph{Random-matrix models with the
  logarithmic-singular level confinement: method of fictitious fermions},
  Philos. Mag. B \textbf{77} (1998), 1161.

\bibitem{KS:2010}
E.~Kanzieper and N.~Singh, \emph{Non-{H}ermitean {W}ishart random matrices
  ({I})}, J. Math. Phys. \textbf{51} (2010), 103510.

\bibitem{Kargin:2008}
V.~Kargin, \emph{Lyapunov exponents of free operators}, J. Funct. Anal.
  \textbf{255} (2008), 1874.

\bibitem{Kargin:2014}
\bysame, \emph{On the largest {L}yapunov exponent for products of {G}aussian
  matrices}, J. Stat. Phys. \textbf{157} (2014), 70.

\bibitem{KKP:1995}
A.~Khorunzhy, B.~Khoruzhenko, and L.~Pastur, \emph{On the 1/{N} corrections to
  the {G}reen functions of random matrices with independent entries}, J. Phys.
  A \textbf{28} (1995), L31.

\bibitem{Kieburg:2015}
M.~Kieburg, \emph{Supersymmetry for products of random matrices}, Acta Pys.
  Pol. B \textbf{46} (2015), 1709.

\bibitem{KKS:2015}
M.~Kieburg, A.~B.~J. Kuijlaars, and D.~Stivigny, \emph{Singular value
  statistics of matrix products with truncated unitary matrices},
  arXiv:1501.03910 (2015).

\bibitem{Kostlan:1992}
E.~Kostlan, \emph{On the spectra of {G}aussian matrices}, Lin. Alg. Appl.
  \textbf{162} (1992), 385.

\bibitem{Krishnapur:2009}
M.~Krishnapur, \emph{From random matrices to random analytic functions}, Ann.
  Probab. (2009), 314.

\bibitem{Kuijlaars:2010}
A.~B.~J. Kuijlaars, \emph{Multiple orthogonal polynomial ensembles}, Contemp.
  Math. \textbf{507} (2010), 155.

\bibitem{Kuijlaars:2011}
\bysame, \emph{Universality, {C}hapter~6 in "{T}he {O}xford handbook of random
  matrix theory"}, Oxford University Press, 2011.

\bibitem{Kuijlaars:2015}
\bysame, \emph{Transformations of polynomial ensembles}, arXiv:1501.05506
  (2015).

\bibitem{KS:2014}
A.~B.~J. Kuijlaars and D.~Stivigny, \emph{Singular values of products of random
  matrices and polynomial ensembles}, Random Matrices \textbf{3} (2014).

\bibitem{KV:2002}
A.~B.~J. Kuijlaars and M.~Vanlessen, \emph{Universality for eigenvalue
  correlations from the modified {J}acobi unitary ensemble}, Int. Math. Res.
  Notices \textbf{2002} (2002), 1575.

\bibitem{KV:2003}
\bysame, \emph{Universality for eigenvalue correlations at the origin of the
  spectrum}, Comm. Math. Phys. \textbf{243} (2003), 163.

\bibitem{KZ:2014}
A.~B.~J. Kuijlaars and L.~Zhang, \emph{Singular values of products of {G}inibre
  random matrices, multiple orthogonal polynomials and hard edge scaling
  limits}, Comm. Math. Phys. \textbf{332} (2014), 759.

\bibitem{Kumar:2015}
S.~Kumar, \emph{Exact evaluations of some {M}eijer {$G$}-functions and
  probability of all eigenvalues real for product of two {G}aussian matrices},
  arXiv:1507.05571 (2015).

\bibitem{Lakshminarayan:2013}
A.~Lakshminarayan, \emph{On the number of real eigenvalues of products of
  random matrices and an application to quantum entanglement}, J. Phys. A
  \textbf{46} (2013), 152003.

\bibitem{lePage:1982}
{\'E}.~Le~Page, \emph{Th{\'e}oremes limites pour les produits de matrices
  al{\'e}atoires}, Probability measures on groups, Springer, 1982, p.~258.

\bibitem{LS:1991}
N.~Lehmann and H.-J. Sommers, \emph{Eigenvalue statistics of random real
  matrices}, Phys. Rev. Lett. \textbf{67} (1991), 941.

\bibitem{LWZ:2014}
D.-Z. {Liu}, D.~{Wang}, and L.~{Zhang}, \emph{{Bulk and soft-edge universality
  for singular values of products of {G}inibre random matrices}},
  arXiv:1412.6777 (2014).

\bibitem{LW:2014}
D.-Z. Liu and Y.~Wang, \emph{Universality for products of random matrices {I}:
  {G}inibre and truncated unitary cases}, arXiv:1411.2787 (2014).

\bibitem{Lubinsky:2008}
D.~S. Lubinsky, \emph{Universality limits at the hard edge of the spectrum for
  measures with compact support}, Int. Math. Res. Notices \textbf{2008} (2008),
  rnn099.

\bibitem{LSZ:2006}
T.~L{\"u}ck, H.-J. Sommers, and M.~R. Zirnbauer, \emph{Energy correlations for
  a random matrix model of disordered bosons}, J. Math. Phys. \textbf{47}
  (2006), 103304.

\bibitem{Luke:1975}
Y.~L. Luke, \emph{Mathematical functions and their approximations}, Academic
  Press, 2014.

\bibitem{Magnea:2008}
U.~Magnea, \emph{Random matrices beyond the {C}artan classification}, J. Phys.
  A \textbf{41} (2008), 045203.

\bibitem{MS:2014}
S.~N. Majumdar and G.~Schehr, \emph{Top eigenvalue of a random matrix: large
  deviations and third order phase transition}, J. Stat. Mech. \textbf{2014}
  (2014), P01012.

\bibitem{MP:1967}
V.~A. Mar{\v{c}}enko and L.~A. Pastur, \emph{Distribution of eigenvalues for
  some sets of random matrices}, Sbornik: Mathematics \textbf{1} (1967), 457.

\bibitem{Mathai:1997}
A.~M. Mathai, \emph{Jacobians of matrix transformations and functions of matrix
  argument}, World Scientific, 1997.

\bibitem{MSH:2009}
A.~M. Mathai, R.~K. Saxena, and H.~J. Haubold, \emph{The {H}-function: theory
  and applications}, Springer Science \& Business Media, 2009.

\bibitem{May:1972}
R.~M. May, \emph{Will a large complex system be stable?}, Nature \textbf{238}
  (1972), 413.

\bibitem{Mays:2013}
A.~Mays, \emph{A real quaternion spherical ensemble of random matrices}, J.
  Stat. Phys. \textbf{153} (2013), 48.

\bibitem{Mehta:2004}
M.~L. Mehta, \emph{Random matrices}, Elsevier, 2004.

\bibitem{MS:1966}
M.~L. Mehta and P.~K. Srivastava, \emph{Correlation functions for eigenvalues
  of real quaternian matrices}, J. Math. Phys. \textbf{7} (1966), 341.

\bibitem{MPK:1988}
P.~A. Mello, P.~Pereyra, and N.~Kumar, \emph{Macroscopic approach to
  multichannel disordered conductors}, Ann. Phys. \textbf{181} (1988), 290.

\bibitem{Muirhead:2009}
R.~J. Muirhead, \emph{Aspects of multivariate statistical theory}, John Wiley
  \& Sons, 2009.

\bibitem{Muller:2002}
R.~R. M{\"u}ller, \emph{On the asymptotic eigenvalue distribution of
  concatenated vector-valued fading channels}, IEEE T. Inform. Theory
  \textbf{48} (2002), 2086.

\bibitem{Muttalib:1995}
K.~A. Muttalib, \emph{Random matrix models with additional interactions}, J.
  Phys. A \textbf{28} (1995), L159.

\bibitem{NR:2007}
R.~Narayanan and H.~Neuberger, \emph{Universality of large {N} phase
  transitions in {W}ilson loop operators in two and three dimensions}, JHEP
  \textbf{2007} (2007), 066.

\bibitem{Neuschel:2014}
T.~Neuschel, \emph{Plancherel--{R}otach formulae for average characteristic
  polynomials of products of {G}inibre random matrices and the
  {F}uss--{C}atalan distribution}, Random Matrices: Theor. Appl. \textbf{3}
  (2014), 1450003.

\bibitem{Newman:1986}
C.~M. Newman, \emph{The distribution of {L}yapunov exponents: {E}xact results
  for random matrices}, Comm. Math. Phys. \textbf{103} (1986), 121.

\bibitem{NS:2006}
A.~Nica and R.~Speicher, \emph{Lectures on the combinatorics of free
  probability}, Cambridge University Press, 2006.

\bibitem{Olkin:2002}
Ingram Olkin, \emph{The 70th anniversary of the distribution of random
  matrices: a survey}, Linear Algebra Appl. \textbf{354} (2002), 231.

\bibitem{NIST:2010}
F.~W.~J. Olver, D.~W. Lozier, R.~F. Boisvert, and C.~W. Clark (eds.),
  \emph{{{NIST} Handbook of Mathematical Functions}}, Cambridge University
  Press, 2010.

\bibitem{Osborn:2004}
J.~C. Osborn, \emph{{Universal results from an alternate random matrix model
  for {QCD} with a baryon chemical potential}}, Phys. Rev. Lett. \textbf{93}
  (2004), 222001.

\bibitem{OSV:2005}
J.~C. Osborn, K.~Splittorff, and J.~J.~M. Verbaarschot, \emph{Chiral symmetry
  breaking and the {D}irac spectrum at nonzero chemical potential}, Phys. Rev.
  Lett. \textbf{94} (2005), 202001.

\bibitem{Oseledec:1968}
V.~I. Oseledec, \emph{A multiplicative ergodic theorem. {L}yapunov
  characteristic numbers for dynamical systems}, Trans. Moscow Math. Soc.
  \textbf{19} (1968), 197--231.

\bibitem{ORSV:2014}
S.~O’Rourke, D.~Renfrew, A.~Soshnikov, and V.~Vu, \emph{Products of
  independent elliptic random matrices}, J. Stat. Phys. (2014), 1.

\bibitem{OS:2011}
S.~O’Rourke and A.~Soshnikov, \emph{Products of independent non-{H}ermitian
  random matrices}, Electron. J. Probab \textbf{16} (2011), 2219.

\bibitem{PV:1986}
G.~Paladin and A.~Vulpiani, \emph{Scaling law and asymptotic distribution of
  {L}yapunov exponents in conservative dynamical systems with many degrees of
  freedom}, J. Phys. A \textbf{19} (1986), 1881.

\bibitem{PZ:2011}
K.~A. Penson and K.~{\.Z}yczkowski, \emph{Product of {G}inibre matrices:
  {F}uss--{C}atalan and {R}aney distributions}, Phys. Rev. E \textbf{83}
  (2011), 061118.

\bibitem{Pollicott:2010}
M.~Pollicott, \emph{Maximal {L}yapunov exponents for random matrix products},
  Invent. Math. \textbf{181} (2010), 209.

\bibitem{PBM:1998}
A.~P. Prudnikov, I.~U.~A. Brychkov, and O.~I. Marichev, \emph{Integrals and
  series: special functions}, vol. 1-5, CRC Press, 1998.

\bibitem{Raghunathan:1979}
M.~S. Raghunathan, \emph{A proof of {O}seledec's multiplicative ergodic
  theorem}, Israel J. Math. \textbf{32} (1979), 356.

\bibitem{Rider:2004}
B.~Rider, \emph{Deviations from the circular law}, Probab. Theory Rel.
  \textbf{130} (2004).

\bibitem{Schmidt:2006}
Christian Schmidt, \emph{{Lattice {QCD} at finite density}}, PoS
  \textbf{LAT2006} (2006), 021.

\bibitem{Simon:1998}
B.~Simon, \emph{The classical moment problem as a self-adjoint finite
  difference operator}, Adv. Math. \textbf{137} (1998), 82.

\bibitem{Sommers:2007}
H.-J. Sommers, \emph{Symplectic structure of the real ginibre ensemble},
  arXiv:0706.1671 (2007).

\bibitem{SCSS:1988}
H.-J. Sommers, A.~Crisanti, H.~Sompolinsky, and Y.~Stein, \emph{Spectrum of
  large random asymmetric matrices}, Phys. Rev. Lett. \textbf{60} (1988), 1895.

\bibitem{SW:2008}
H.-J. Sommers and W.~Wieczorek, \emph{General eigenvalue correlations for the
  real ginibre ensemble}, J. Phys. A \textbf{41} (2008), 405003.

\bibitem{Splittorff:2006}
K.~Splittorff, \emph{{The Sign problem in the {$\epsilon$}-regime of {QCD}}},
  PoS \textbf{LAT2006} (2006), 023.

\bibitem{Springer:1979}
M.~D. Springer, \emph{The algebra of random variables}, Wiley, 1979.

\bibitem{Strahov:2013}
E.~Strahov, Unpublished notes (2013).

\bibitem{Szego:1939}
G.~Szeg{\"o}, \emph{Orthogonal polynomials}, Amer. Math. Soc., 1939.

\bibitem{Tao:2012}
T.~Tao, \emph{Topics in random matrix theory}, Amer. Math. Soc., 2012.

\bibitem{TV:2010}
T.~Tao and V.~Vu, \emph{Random matrices: {T}he distribution of the smallest
  singular values}, Geom. Func. Anal. \textbf{20} (2010), 260.

\bibitem{TV:2012}
\bysame, \emph{Random matrices: {U}niversality of local spectral statistics of
  non-{H}ermitian matrices}, Ann. Probab. \textbf{43} (2015), 782.

\bibitem{Telatar:1999}
I.~E. Telatar, \emph{Capacity of multi-antenna {G}aussian channels}, Euro.
  Trans. Tel. \textbf{10} (1999), 585.

\bibitem{TW:1994}
C.~A. Tracy and H.~Widom, \emph{Level-spacing distributions and the {A}iry
  kernel}, Comm. Math. Phys. \textbf{159} (1994), 151.

\bibitem{Tucci:2010}
G.~H. Tucci, \emph{Limits laws for geometric means of free random variables},
  Indiana Univ. Math. J. \textbf{59} (2010), 1.

\bibitem{TV:2004}
A.~M. Tulino and S.~Verd{\'u}, \emph{Random matrix theory and wireless
  communications}, vol.~1, Now Publishers Inc, 2004.

\bibitem{vanLoan:2000}
Ch.~F. van Loan, \emph{The ubiquitous {K}ronecker product}, J. Comput. Appl.
  Math. \textbf{123} (2000), 85.

\bibitem{Vanneste:2010}
J.~Vanneste, \emph{Estimating generalized {L}yapunov exponents for products of
  random matrices}, Phys. Rev. E \textbf{81} (2010), 036701.

\bibitem{Verbaarschot:1994}
J.~J.~M. Verbaarschot, \emph{Spectrum of the {QCD} {D}irac operator and chiral
  random matrix theory}, Phys. Rev. Lett. \textbf{72} (1994), 2531.

\bibitem{VW:2000}
J.~J.~M. Verbaarschot and T.~Wettig, \emph{{Random matrix theory and chiral
  symmetry in {QCD}}}, Ann. Rev. Nucl. Part. Sci. \textbf{50} (2000), 343.

\bibitem{WW:1996}
E.~T. Whittaker and G.~N. Watson, \emph{A course of modern analysis}, Cambridge
  university press, 1996.

\bibitem{Wishart:1928}
J.~Wishart, \emph{The generalised product moment distribution in samples from a
  normal multivariate population}, Biometrika \textbf{20A} (1928), 32.

\bibitem{WT:2001}
T.~G. Wright and L.~N. Trefethen, \emph{Computing {L}yapunov constants for
  random recurrences with smooth coefficients}, J. Comp. Appl. Math.
  \textbf{132} (2001), 331.

\bibitem{Yeomans:1992}
J.~M. Yeomans, \emph{Statistical mechanics of phase transitions}, Oxford
  University Press, 1992.

\bibitem{Zabrodin:2006}
A.~Zabrodin, \emph{Matrix models and growth processes: from viscous flows to
  the quantum {H}all effect}, Applications of random matrices in physics,
  Springer, 2006, p.~261.

\bibitem{Zhang:1997}
F.~Zhang, \emph{Quaternions and matrices of quaternions}, Linear Algebra Appl.
  \textbf{251} (1997), 21.

\bibitem{Zhang:2015}
L.~Zhang, \emph{Local universality in biorthogonal {L}aguerre ensembles},
  arXiv:1502.03160 (2015).

\end{thebibliography}

% \newpage
% \begin{multicols}{3}
% \printauthorindex
% \end{multicols}

\end{document}